\documentclass[reqno,oneside,10pt]{amsart}

\usepackage{amssymb,mathptmx,eucal,array,enumitem,setspace,geometry,color,multirow}
\usepackage{mathtools} % needed for cases*, dcases, bsmallmatrix, etc...
\usepackage[noadjust]{cite}

%%%% Graphical Macros for Tikz %%%%

\usepackage{tikz}
\usetikzlibrary{decorations.pathmorphing}
\tikzset{>=to} % Sets arrows so that they look like \rightarrow etc...
\tikzset{squig/.style={->, line join=round, decorate,
                       decoration={zigzag, segment length=4, amplitude=.9, post=lineto, post length=2pt}}} % like \rightsquigarrow

\newcommand{\dodu}[1]{\fill [black] (#1) circle (4pt);}

%%%% Editing Macros %%%%

\definecolor{rouge}{rgb}{0.95,0.1,0.15}
\definecolor{forestgreen}{rgb}{0.13,0.54,0.13}
\definecolor{bleu}{rgb}{0.1,0.1,0.9}

 %%%
\newcommand{\invisible}[1]{} %%%

%%%% other stuff %%%%

\makeatletter % extends mathtools to make cases with left and right braces
\newcases{dlrcases}{\quad}{%
  $\m@th\displaystyle{##}$\hfil}{$\m@th\displaystyle{##}$\hfil}{\lbrace}{\rbrace}
\newcases{lrcases}{\quad}{%
  $\m@th{##}$\hfil}{$\m@th{##}$\hfil}{\lbrace}{\rbrace}
\newcases{dlrcases*}{\quad}{%
  $\m@th\displaystyle{##}$\hfil}{{##}\hfil}{\lbrace}{\rbrace}
\newcases{lrcases*}{\quad}{%
  $\m@th{##}$\hfil}{{##}\hfil}{\lbrace}{\rbrace}
\makeatother

\usepackage[pdfpagemode=UseNone, pdfstartview={XYZ null null null}]{hyperref}
\hypersetup{
colorlinks=true,
citecolor=red,
linkcolor=blue,
urlcolor=blue
}

\usepackage[capitalise,noabbrev]{cleveref} % now this is a package to be proud of!

\setlist[itemize]{leftmargin=*}                  % Removes margin from itemized lists
\setlist[enumerate]{leftmargin=*,                % Removes margin for enumerate lists   (enumitem package)
                    label=\textup{(\roman*)}}    % makes labels (i) in lists and references

\DeclareSymbolFont{largesymbols}{OMX}{zplm}{m}{n} %Replaces summation symbol in times by the palatino one...

\let\originalleft\left     % removes spurious spacing around \left and \right brackets
\let\originalright\right
\renewcommand{\left}{\mathopen{}\mathclose\bgroup\originalleft}
\renewcommand{\right}{\aftergroup\egroup\originalright}

\geometry{inner=24mm, outer=24mm, top=24mm, bottom=24mm, head=10mm, foot=10mm}

\numberwithin{equation}{section}

\newcolumntype{C}{>{$}c<{$}}              % Defines math mode in tabular (array package)...
\newcolumntype{D}{>{$\displaystyle}l<{$}} % Defines math mode with display big-operators and left-justification
\setcounter{MaxMatrixCols}{16}

\allowdisplaybreaks

%%%% Miscellaneous macros %%%%

\newcommand{\func}[2]{#1 \left( #2 \right)}

\newcommand{\brac}[1]{\left( #1 \right)}
\newcommand{\sqbrac}[1]{\left[ #1 \right]}
\newcommand{\set}[1]{\left\{ #1 \right\}}
\newcommand{\st}{\mspace{5mu} : \mspace{5mu}}

\newcommand{\abs}[1]{\left\lvert #1 \right\rvert}

\newcommand{\inner}[2]{\bigl\langle #1 , #2 \bigr\rangle}
\newcommand{\bilin}[2]{\inner{#1}{#2}}

\newcommand{\fld}[1]{\mathbb{#1}} % font for fields and related things
\newcommand{\ZZ}{\fld{Z}}

\newcommand{\CC}{\fld{C}}
\newcommand{\KK}{\fld{K}}

\newcommand{\ii}{\mathfrak{i}}
\newcommand{\ee}{\mathsf{e}}

% some crazy quiver macros I found in the text!

\newcommand*{\petitr}{0.25}%
\newcommand*{\rr}{1.5}%
\newcommand*{\rrr}{2.0}%
\newcommand*{\rrrr}{2.5}%
\newcommand*{\rrrrr}{3.0}%
%
%

% algebras
\newcommand{\alg}[1]{\mathcal{#1}}  % font for algebras
\newcommand{\Alg}[1]{\alg{A}_{#1}}
\newcommand{\tl}[1]{\mathsf{TL}_{#1}}
\newcommand{\dtl}[1]{\mathsf{dTL}_{#1}}

% groups
\newcommand{\grp}[1]{\mathsf{#1}}   % font for groups

% modules
\newcommand{\Mod}[1]{\mathsf{#1}}   % font for modules
\newcommand{\Stan}[1]{\Mod{S}_{#1}} % standard
\newcommand{\Radi}[1]{\Mod{R}_{#1}} % radicals
\newcommand{\Irre}[1]{\Mod{I}_{#1}} % irreducible
\newcommand{\Proj}[1]{\Mod{P}_{#1}} % projective
\newcommand{\Inje}[1]{\Mod{J}_{#1}} % injective
\newcommand{\Cost}[1]{\Mod{C}_{#1}} % costandard = dual of standard
\newcommand{\TheA}[1]{\Mod{A}_{#1}} %
\newcommand{\TheV}[1]{\Mod{V}_{#1}} %
\newcommand{\TheB}[2]{\Mod{B}_{#1}^{#2}} %
\newcommand{\TheT}[2]{\Mod{T}_{#1}^{#2}} %

\DeclareMathOperator{\rdsymb}{rd} % the reflection-deletion trick
\newcommand{\rd}[1]{\rdsymb \bigl[ #1 \bigr]} %D changed to square brackets for visual appeal

\newcommand{\Res}[1]{#1 \mbox{$\downarrow$} {}}
\newcommand{\Ind}[1]{#1 \raisebox{0.18em}{$\uparrow$} {}}

% homological algebra

\newcommand{\ra}{\rightarrow}
\newcommand{\lra}{\longrightarrow}
\newcommand{\llra}{\longleftrightarrow}
\newcommand{\ira}{\hookrightarrow}    % for injections
\newcommand{\sra}{\twoheadrightarrow} % for surjections
\newcommand{\lira}{\ensuremath{\lhook\joinrel\relbar\joinrel\rightarrow}} % long injection
\newcommand{\lsra}{\ensuremath{\relbar\joinrel\twoheadrightarrow}}        % long surjection

\newcommand{\tauarrow}{\rightsquigarrow} %{\leadsto}      % for Auslander-Reiten translation
 %{\leadsto}      % for Auslander-Reiten translation

\newcommand{\ses}[3]{0 \ra #1 \ra #2 \ra #3 \ra 0}

\newcommand{\dses}[5]{0 \lra #1 \overset{#2}{\lra} #3 \overset{#4}{\lra} #5 \lra 0}

\newcommand{\rdses}[5]{#1 \overset{#2}{\lra} #3 \overset{#4}{\lra} #5 \lra 0}

\newcommand{\blank}{{-}} % used eg in Hom(A,-)

\newcommand{\dual}[1]{#1^*}                % dual (left <-> right)
\newcommand{\twdu}[1]{#1^{\vee}}           % twisted dual (left <-> left)
 % algebra dual (if we need it)
\newcommand{\delr}[1]{\delta_R^{#1}}       % correction for the right boundary in coker's

\newcommand{\VectD}[1]{\dual{#1}}
\newcommand{\AlgD}[1]{#1^{t}}
\DeclareMathOperator{\ARTr}{Tr}
\newcommand{\ARtransp}[1]{\ARTr #1}
\newcommand{\ARTauSymbol}{\tau}
\newcommand{\ARTau}[1]{\ARTauSymbol #1}
\newcommand{\PowerARTau}[2]{\ARTauSymbol^{#1} #2}
\newcommand{\iARTau}[1]{\PowerARTau{-1}{#1}}

\DeclareMathOperator{\id}{id}

\newcommand{\eid}{\mathsf{eid}}
\newcommand{\oid}{\mathsf{oid}}
     % this is like \mod without the large space before and after (modified by DR)
\newcommand{\lmod}{\Alg{}\textup{-mod}} % left module category
\newcommand{\rmod}{\textup{mod-}\Alg{}} % right module category

\DeclareMathOperator{\sign}{sign}
\DeclareMathOperator{\im}{im}
\DeclareMathOperator{\rad}{rad}
\DeclareMathOperator{\soc}{soc}
\DeclareMathOperator{\head}{hd}
\DeclareMathOperator{\Coker}{coker}
\DeclareMathOperator{\coker}{coker}

\DeclareMathOperator{\Hom}{Hom}

\DeclareMathOperator{\Ext}{Ext}
\DeclareMathOperator{\Tor}{Tor}

% theorems and so forth
\theoremstyle{plain}
\newtheorem{theorem}{Theorem}[section]
\newtheorem{lemma}[theorem]{Lemma}
\newtheorem{proposition}[theorem]{Proposition}
\newtheorem{corollary}[theorem]{Corollary}
\newtheorem{definition}[theorem]{Definition}

 % restore the Halmos to its original glory

% lazy macros
\renewcommand{\cong}{\simeq}
\renewcommand{\ncong}{\not\simeq}

\newcommand{\AR}{Auslander-Reiten}

%%%% End Macros %%%%

\begin{document}

\title[Restriction and induction of TL indecomposables]{Restriction and induction \\ of indecomposable modules \\ over the Temperley-Lieb algebras}

\author[J Bellet\^ete]{Jonathan Bellet\^ete}

\address[Jonathan Bellet\^ete]{
D\'{e}partement de Physique \\
Universit\'{e} de Montr\'{e}al \\
Qu\'{e}bec, Canada, H3C~3J7.
}

\email{jonathan.belletete@umontreal.ca}

\author[D Ridout]{David Ridout}

\address[David Ridout]{
School of Mathematics and Statistics \\
University of Melbourne \\
Parkville, Australia, 3010.
}

\email{david.ridout@unimelb.edu.au}

\author[Y Saint-Aubin]{Yvan Saint-Aubin}

\address[Yvan Saint-Aubin]{
D\'{e}partement de Math\'{e}matiques et de Statistique \\
Universit\'{e} de Montr\'{e}al \\
Qu\'{e}bec, Canada, H3C~3J7.
}

\email{yvan.saint-aubin@umontreal.ca}

\date{\today}

\keywords{Temperley-Lieb algebra,
dilute Temperley-Lieb algebra,
indecomposable modules,
standard modules,
projective modules,
injective modules,
extension groups,
Auslander-Reiten theory,
non-semisimple associative algebras.}

\begin{abstract}
Both the original Temperley-Lieb algebras $\tl{n}$ and their dilute counterparts $\dtl{n}$ form families of filtered algebras: $\tl{n}\subset \tl{n+1}$ and $\dtl{n}\subset\dtl{n+1}$, for all $n\ge 0$. For each such inclusion, the restriction and induction of every finite-dimensional indecomposable module over $\tl{n}$ (or $\dtl{n}$) is computed. To accomplish this, a thorough description of each indecomposable is given, including its projective cover and injective hull, some short exact sequences in which it appears, its socle and head, and its extension groups with irreducible modules. These data are also used to prove the completeness of the list of indecomposable modules, up to isomorphism. In fact, two completeness proofs are given, the first is based on elementary homological methods and the second uses Auslander-Reiten theory. The latter proof offers a detailed example of this algebraic tool that may be of independent interest.
\end{abstract}

\maketitle

\tableofcontents

\onehalfspacing

%%%%%%%%%%%
%
% intro
%
%%%%%%%%%%%

\section{Introduction} \label{sec:Intro}

One of the two main goals of this paper is to construct a complete list of non-isomorphic indecomposable modules of the Temperley-Lieb algebras $\tl{n}(\beta)$ and their dilute counterparts $\dtl{n}(\beta)$.  Both families of algebras are parametrised by a positive integer $n$ and a complex number $\beta$. These indecomposable modules are characterised using Loewy diagrams, from which socles and heads may be deduced, and their projective covers and injective hulls are determined. We also compute some of their extension groups and detail some of the short exact sequences in which they appear. Many of these results are new, but a subset of these extensive data has appeared in the literature, most of the time without proofs. Due to the increasing role played by the representation theory of these two families of algebras, both in mathematics and physics, we believe that a self-contained treatment is called for.

The second main goal concerns the fact that both the Temperley-Lieb and dilute Temperley-Lieb families of algebras have a natural filtration: $\tl{n}(\beta)\subset \tl{n+1}(\beta)$ and $\dtl{n}(\beta) \subset \dtl{n+1}(\beta)$, for $n\ge 1$. This first inclusion, for example, is realised by using all the generators $e_i$, $1\le i\le n$, of $\tl{n+1}(\beta)$, except $e_n$, to generate a subalgebra isomorphic to $\tl{n}(\beta)$. The restriction and induction of $\tl{n}(\beta)$- and $\dtl{n}(\beta)$-modules along these filtrations has grown to be a useful tool and the paper identifies the results, up to isomorphism, for all indecomposable modules. To the best of our knowledge, these results are also new.

The study of the representation theory of the Temperley-Lieb algebra was launched by Jones \cite{JonInd83}, Martin \cite{Martin} and Goodman and Wenzl \cite{GoodWenzl93}. They provided descriptions of the projective and irreducible $\tl{n}(\beta)$-modules. Shortly thereafter, Graham and Lehrer \cite{GraCel96} introduced a large class of algebras, the ``cellular algebras", that included both $\tl{n}(\beta)$ and $\dtl{n}(\beta)$.  They proved many general properties and obtained a beautiful description of the role that their ``cell representations'' play as intermediates between the irreducible and projective modules. Many cellular algebras have diagrammatic descriptions that are closely related to their applications in statistical mechanics.  In this setting, the cell representations are often referred to as the ``standard modules''. More recent works on the representation theory of $\tl{n}(\beta)$ and $\dtl{n}(\beta)$ have profited by combining Graham and Lehrer's insights into these standard modules with diagrammatic presentations \cite{KauInv90,WesRep95,RSA,BSA}.

The recent literature attests to the necessity of having a thorough understanding of the indecomposable modules of the Temperley-Lieb algebras. We outline here three examples that have arisen in the mathematical physics literature.  In \cref{app:physics}, we discuss in more detail three models of significant physical interest in which ``exotic'' indecomposable representations (by which we mean neither standard nor irreducible) of $\tl{n}(\beta)$ naturally appear.

First, due to the close relationship between statistical lattice models and conformal field theories, it is believed that an operation analogous to the fusion product of conformal field theory should exist at the level of the lattice.  One proposal for such a ``lattice fusion product" appeared in the work of Read and Saleur \cite{ReadSaleur} and explicit computations were subsequently carried out over the Temperley-Lieb algebra by Gainutdinov and Vasseur \cite{GainutdinovVasseur}.  Their method exploited the well known relationship between the representation theories of $\tl{n}(\beta)$ and the quantum group $U_q(\mathfrak{sl}_2)$, leading them to the following simple but crucial observation: The lattice fusion product of a $\tl{n}(\beta)$-module with a $\tl{m}(\beta)$-module does not depend on $n$ or $m$, if both are sufficiently large.\footnote{The fact that lattice fusion operates between modules over different algebras indicates that it should be thought of as an operation on modules over the Temperley-Lieb categories.} This observation is welcome since these integers have no natural interpretation in the continuum limit where conformal field theory is believed to take over. Further examples of lattice fusion products were computed in \cite{BelFus15}, for both $\tl{n}(\beta)$- and $\dtl{n}(\beta)$-modules, using powerful arguments based on restriction and induction. The classification of indecomposables is important here because the lattice fusion of an irreducible and a standard module may result in indecomposables that are neither irreducible, standard nor projective. These computations assume some of the results that will be proved in \cref{sec:restrictionInduction} (and thereby provide one of the primary motivations for the work reported in this paper).

As a second example illustrating the importance of having a complete list of indecomposable modules, we mention the work \cite{dimer2016} of Morin-Duchesne, Rasmussen and Ruelle, who used the map between dimer and XXZ spin configurations to introduce a new representation of $\tl{n}(\beta)$, with $\beta=0$. This representation was crucial for their analysis of dimer models and their contention \cite{MorInt16} that the continuum limit is described by a (logarithmic) conformal field theory of central charge $c=-2$. In most models described by the XXZ spin-chain, the Temperley-Lieb generators act on two neighbouring sites. In their new representation, the generators act on three consecutive sites and the decomposition of this representation into a direct sum of indecomposable modules involves modules that are, generically, neither irreducible, standard nor projective. We shall discuss this example in more detail in \cref{app:physics}.

Our last example points even further down the path that the present paper starts to explore. The Temperley-Lieb algebras and their dilute counterparts are only two of the families of diagrammatic algebras encountered in mathematics and physics. Other important variants of the Temperley-Lieb algebras include the affine (or periodic) families and the one- and two-boundary families. The one-boundary Temperley-Lieb algebra (or blob algebra) was introduced by Martin and Saleur \cite{blob} to describe statistical models wherein the physical field takes on various states along the domain boundary. A discussion of the one-boundary representation theory may be found in \cite{MarStr00}. Recent work \cite{MDRR} has shown that in order to properly account for certain examples of integrable boundary conditions, called Kac boundary conditions, one needs to consider instead a quotient of the one-boundary Temperley-Lieb algebra. By examining the bilinear forms on the standard modules of these quotients and studying numerical data of the statistical models, the authors inferred the structures of the Virasoro modules appearing in their conformal field-theoretic limits, finding that the Virasoro modules were identified as certain finitely generated submodules of Feigin-Fuchs modules. Even if these observations remain at the level of conjecture, the complexity of the modules over the one-boundary Temperley-Lieb algebras and their quotients seems to go well beyond that of the indecomposable modules that will be classified in the present paper. Hopefully, the methods developed here will also be fruitful for these other families.

The paper is organised as follows. \cref{sec:standardProjective} recalls the basic representation theory of the original and dilute Temperley-Lieb algebras. It also introduces (twisted) dual modules, gives the restriction and induction of the standard, irreducible and projective modules, and computes their Hom- and Ext-groups. \cref{sec:newFamilies} constructs recursively two families of new indecomposable modules and computes their extension groups with all the irreducibles. The calculations show that, together with the projectives and irreducibles, these new families form a complete list of non-isomorphic indecomposable modules (\cref{thm:classification}). \cref{sec:restrictionInduction} gives the restriction and induction of these new families of modules, relative to the inclusions described above. Finally, \cref{sec:ARquiver} repeats the construction of all indecomposable modules and the proof of their completeness using a more advanced tool, namely Auslander-Reiten theory. We hope that reading both \cref{sec:newFamilies,sec:ARquiver} in parallel will provide a comparison ground of standard methods for classifying indecomposable modules over finite-dimensional associative algebras and show their relative advantages.

%%%%%%%%%%%
%
% many thanks
%
%%%%%%%%%%%

\section*{Acknowledgements}

We thank Ingo Runkel for a careful reading of the manuscript.  DR also thanks Gus Lehrer for a discussion concerning cellular approaches to classifying indecomposable modules and Tony Licata for helpful pointers on quivers and zigzag algebras. JB holds scholarships from Fonds de recherche Nature et technologies (Qu\'ebec) and from the Facult\'e des \'etudes sup\'erieures et postdoctorales de l'Universit\'e de Montr\'eal, DR's research is partially funded by the Australian Research Council Discovery Projects DP1093910 and DP160101520, and YSA holds a grant from the Canadian Natural Sciences and Engineering Research Council. This support is gratefully acknowledged.

%%%%%%%%%%%
%
% standard and projective
%
%%%%%%%%%%%

\section{Standard, irreducible and projective modules}\label{sec:standardProjective}

The definition of both the original and dilute Temperley-Lieb algebras, $\tl{n}(\beta)$ and $\dtl{n}(\beta)$, depends on a parameter $\beta$ taking values in a commutative ring. In what follows, this ring will always be the complex numbers $\CC$. Another parameter $q\in\mathbb C^\times$, related to the first by $\beta=q+q^{-1}$, is also used. The standard modules of these algebras may be introduced in several ways. In \cite{Martin}, bases for the standard modules are formed from walk diagrams, similar to Dyck paths. In \cite{Wenzl88,GoodWenzl93}, the ties between the Temperley-Lieb, Hecke and symmetric group algebras lead naturally to standard tableaux. Following early work \cite{KauInv90} on tangles and knots, both the algebra and the standard modules were given diagrammatic forms for which the action is simply concatenation of diagrams and the parameter $\beta$ appears when a closed loop is formed.  This resulted in simplified proofs of many structural results \cite{WesRep95} and led to the creation of the theory of cellular algebras \cite{GraCel96,GL-Aff}. This diagrammatic definition of standard modules was used in \cite{RSA,BSA} to construct the indecomposable projective modules --- these are also called the principal indecomposables --- and their irreducible quotients.  The present section summarises the properties upon which the construction of the remaining indecomposable modules and the computation of their restrictions and inductions will be based. (Many results are quoted in this section without proof; our presentation follows \cite{RSA,BSA} and we direct the reader to these works for the missing proofs.)

%
% sub : basic structure
%
\subsection{Standards, irreducibles and projectives}\label{sub:basics}

We introduce a set of integers $\Lambda_n$ for each algebra $\tl{n}$ or $\dtl{n}$.  This set naturally parametrises the standard modules $\Stan{n,k}$, with $k \in \Lambda_n$. For $\dtl{n}$, this set is simply $\{0,1,\dots, n\}$; for $\tl{n}$, it is the subset of $\{0,1,\dots, n\}$ whose elements have the same parity as that of $n$. When $q$ is not a root of unity or when it is $\pm 1$, both algebras $\tl{n}(\beta)$ and $\dtl{n}(\beta)$ are semisimple. Then, the standard modules $\Stan{n,k}$, for $k\in\Lambda_n$, form a complete set of non-isomorphic irreducible modules of $\tl{n}$ ($\dtl{n}$). Since the algebras are semisimple, these irreducible modules exhaust the list of indecomposable modules. We will thus assume throughout that $q$ is a root of unity other than $\pm 1$.
\begin{center}
\emph{Let $\ell\ge 2$ be the smallest positive integer such that $q^{2\ell}=1$.}
\end{center}
Unless otherwise specified, all modules will be complex finite-dimensional left modules.

The set $\Lambda_n$ is partitioned as follows. If an element $k$ satisfies $k\equiv \ell-1\textrm{ mod }\ell$, then $k$ is said to be \emph{critical} and it forms its own class in the partition. If the element $k$ is not critical, then its class $[k]$ consists of the images (in $\Lambda_n$) of $k$ generated by reflections with respect to the critical integers. Here, if $k_c$ is a critical integer, then $2k_c-k$ is the reflection of $k$ through $k_c$.  The class of a non-critical $k$ thus contains precisely one integer between each pair of consecutive critical ones. We shall often need to refer to neighbouring elements in a non-critical class $[k]$. They will be ordered as $k_L<\dots < k^{--}<k^-<k<k^+<k^{++}<\dots <k_R$, so that $k_L\ge 0$ and $k_R\le n$ are the smallest and largest elements in $[k]\subset\Lambda_n$. The notation $k^j$ ($k^{-j}$) is also used to refer to the $j$-th element to the right of $k$ (to its left) so that, for example, $k^{--}=k^{-2}$ and $k^{+++}=k^3$. We shall often refer to non-critical classes as \emph{orbits}.

As an example of a partition, we take $\ell=4$ and $n=12$ so that the critical classes for $\dtl{12}$ are $[3]=\{3\}$, $[7]=\{7\}$ and $[11]=\{11\}$, whereas the (non-critical) orbits are $\{0,6,8\}$, $\{1,5,9\}$ and $\{2,4,10,12\}$. Note that the partition for $\tl{12}$, with $\ell=4$, consists of just two non-critical classes, namely $\{0,6,8\}$ and $\{2,4,10,12\}$. These are easily obtained from the diagram
\begin{equation*}
\begin{tikzpicture}[baseline={(current bounding box.center)},every node/.style={fill=white,rectangle,inner sep=2pt},scale=1/2]
% Orbits
	\draw[rounded corners] (4,0) -- (4,-0.75) -- (8,-0.75) -- (8,0) ;
	\draw[rounded corners] (20,0) -- (20,-0.75) -- (8,-0.75) -- (8,0) ;
	\draw[rounded corners] (24,0) -- (24,-0.75) -- (20,-0.75) -- (20,0) ;
	\draw[rounded corners] (2,0) -- (2,-1) -- (10,-1) -- (10,0) ;
	\draw[rounded corners] (18,0) -- (18,-1) -- (10,-1) -- (10,0) ;
	\draw[rounded corners] (0,0) -- (0,-1.25) -- (12,-1.25) -- (12,0) ;
	\draw[rounded corners] (16,0) -- (16,-1.25) -- (12,-1.25) -- (12,0) ;
% Critical lines
	\draw[dashed] (6,-1.5) -- (6,0.9);
	\draw[dashed] (14,-1.5) -- (14,0.9);
	\draw[dashed] (22,-1.5) -- (22,0.9);
% j values
	\node at (0,0) {$0$};
	\node at (2,0) {$1$};
	\node at (4,0) {$2$};
	\node at (6,0) {$3$};
	\node at (8,0) {$4$};
	\node at (10,0) {$5$};
	\node at (12,0) {$6$};
	\node at (14,0) {$7$};
	\node at (16,0) {$8$};
	\node at (18,0) {$9$};
	\node at (20,0) {$10$};
	\node at (22,0) {$11$};
	\node at (24,0) {\phantom{,}$12$,};
\end{tikzpicture}
\end{equation*}
where the dashed lines indicate the critical $k$. In what follows, we shall also find it convenient to reflect about a critical $k$ when the result does not belong to $\Lambda_n$, extending the notation $k^-$, $k^+$, and so on, in the obvious fashion.  For instance, if $k=9$ in the above example, then $k^+=13 \notin \Lambda_n$.

The partition of the set $\Lambda_n$ into classes under reflection is intimately related to the existence of distinguished central elements. Both $\tl{n}$ and $\dtl{n}$ have a central element $F_n$ whose exact form will not be needed (see \cite{RSA} for its definition for $\tl{n}$ and \cite{BSA} for $\dtl{n}$). Its crucial property here is the following (\cite[Prop.~A.2]{RSA} and \cite[Prop.~B.3]{BSA}):
\begin{proposition}
The element $F_n$ acts as scalar multiplication by $\delta_k=q^{k+1}+q^{-k-1}$ on the standard module $\Stan{n,k}$.
\end{proposition}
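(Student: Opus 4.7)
The plan is to exploit centrality of $F_n$ together with the fact that each standard module $\Stan{n,k}$ is cyclic. Fix a link state $v_k\in\Stan{n,k}$ --- for concreteness, the one whose $k$ defects occupy the rightmost positions and whose remaining $n-k$ points are joined in nested arcs. Since $\Stan{n,k}$ is generated over the algebra by $v_k$, every element has the form $a v_k$, and centrality of $F_n$ gives $F_n (a v_k)=a(F_n v_k)$. So it suffices to establish $F_n v_k=\delta_k v_k$ for this single vector, since then $F_n$ automatically acts as $\delta_k\id$ on all of $\Stan{n,k}$.

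The whole content of the proposition is therefore the scalar computation of $F_n v_k$. Using the explicit definition of $F_n$ recalled from \cite[App.~A]{RSA} for $\tl n$ and from \cite[App.~B]{BSA} for $\dtl n$, I would expand $F_n v_k$ as a sum over the constituent diagrams of $F_n$. The Temperley-Lieb action on $v_k$ either creates new arcs --- reducing the defect count and hence returning zero in $\Stan{n,k}$ --- or reproduces $v_k$ itself, accompanied by scalars collected from closed loops (each contributing $\beta=q+q^{-1}$) and from the coefficients appearing in $F_n$. After cancellation, the total coefficient of $v_k$ should simplify to $q^{k+1}+q^{-k-1}$; this is a direct combinatorial check, and the dilute case is handled by the same strategy with additional bookkeeping for the ``empty point'' diagrams.

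As a conceptual sanity check, note that $\delta_k=q^{k+1}+q^{-k-1}$ is precisely the eigenvalue of the (suitably normalised) quantum Casimir of $U_q(sl_2)$ on its $(k+1)$-dimensional irreducible module. Under the Schur-Weyl-type duality between $\tl n$ (or $\dtl n$) and $U_q(sl_2)$, $F_n$ should arise as the image of this Casimir, so the defects in $v_k$ naturally correspond to the spin-$k/2$ representation. This gives both a representation-theoretic reason for the answer and an independent route to the eigenvalue, bypassing the combinatorial expansion.

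The main obstacle is the bookkeeping in the direct approach: tracking every diagram in $F_n$, the precise scalar contributed by each closed loop, and the vanishing imposed by the defect condition in $\Stan{n,k}$. Choosing $v_k$ as a ``highest-weight-like'' state kills the overwhelming majority of terms, and the survivors must then be recognised as giving the quantum-integer identity encoded by $\delta_k$. Once this combinatorial identity is in hand, the rest of the proof is purely formal.
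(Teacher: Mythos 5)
The reduction to a single vector is fine in principle, and since the paper gives no proof of its own — it cites \cite[Prop.~A.2]{RSA} and \cite[Prop.~B.3]{BSA} — your proposal has to stand on its own. The gap is in the dichotomy you invoke: it is \emph{not} true that the Temperley-Lieb action on $v_k$ either closes two defects (hence vanishes) or reproduces $v_k$ up to a scalar. When $e_i$ acts with $i$ a defect and $i+1$ an arc endpoint, or with $i$ and $i+1$ endpoints of two distinct arcs, the result is a \emph{different} link state with the same defect count, and for your nested $v_k$ most of the $e_i$ reshuffle the arc pattern in exactly this way. Expanding $F_n v_k$ over the constituent diagrams of $F_n$ therefore yields a linear combination of many link states, and showing that every coefficient other than that of $v_k$ cancels is precisely the content you have deferred; it does not follow from the defect-count argument alone. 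There is a cleaner route around this: $F_n$ is central, hence an $\Alg{}$-module endomorphism of $\Stan{n,k}$, and $\Hom(\Stan{k},\Stan{k}) \cong \CC$ by \cref{prop:Standard}\ref{it:StanIndec}; so $F_n$ acts by a scalar \emph{a priori}, and one only needs the coefficient of $v_k$ in $F_n v_k$, with no cancellation lemma and no need to verify that $v_k$ generates.

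Your Casimir remark is more than a sanity check and is, in fact, a genuinely different route. If $F_n$ is identified with the image of the (suitably normalised) quantum Casimir of $U_q(sl_2)$ under the duality with $V^{\otimes n}$, the eigenvalue on $\Stan{n,k}$ is forced to be the Casimir eigenvalue on the $(k+1)$-dimensional simple $U_q(sl_2)$-module, namely $q^{k+1}+q^{-k-1}$ in the relevant normalisation. Since both $\delta_k$ and the matrix of $F_n$ in a link-state basis are Laurent polynomials in $q$, it is enough to establish the identification at generic $q$, where the double-centraliser statement is clean, and then specialise; the root-of-unity cases follow for free. The price is the work of actually identifying $F_n$ with the Casimir, and a suitable dilute analogue of the duality for $\dtl{n}$, but this replaces the diagram-chasing with a structural argument and is worth writing out if you want a conceptual proof.
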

\noindent It is easily shown that $k$ and $k'$ belong to the same orbit in $\Lambda_n$ if and only if they have the same parity and $\delta_k=\delta_{k'}$. Thus, all irreducible and standard modules labelled by an element of $[k]$ have the same $F_n$-eigenvalue. Note that $F_n$ has only one eigenvalue when acting on an arbitrary indecomposable module, although it need not then act as a multiple of the identity.

The dilute algebra $\dtl{n}$ has other important central elements. In particular, there are central idempotents $\eid$ and $\oid$ for which $\eid+\oid$ is the unit of $\dtl{n}$. (The subalgebra $\eid\cdot\dtl{n}$ ($\oid\cdot\dtl{n}$) contains elements whose diagrammatic representations have an even (odd) number of vacancies on each of their sides; see \cite{BSA}.) Any indecomposable module $\Mod{M}$ can be given a parity. This is defined to be even if $\eid\cdot \Mod{M}=\Mod{M}$ and odd otherwise. The standard module $\Stan{n,k}$ over $\dtl{n}$ has the parity of $n-k$.
\begin{proposition} The composition factors of an indecomposable $\tl{n}$- or $\dtl{n}$-module are irreducible modules $\Irre{n,k_i}$ whose labels $k_i$ belong to a single orbit $[k]$ and, for $\dtl{n}$, have the same parity.
\end{proposition}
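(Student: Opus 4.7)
The approach is to exploit the central elements of the algebras: the element $F_n$ controls the orbit, and, for $\dtl n$, the central orthogonal idempotents $\eid, \oid$ control the parity. I would handle the two assertions in parallel.

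For the parity statement (relevant only to $\dtl n$): since $\eid$ and $\oid$ are central orthogonal idempotents with $\eid + \oid = 1$, the decomposition $M = \eid M \oplus \oid M$ is a direct sum of submodules of $M$. Indecomposability forces one summand to vanish, so $M$ lies in a single parity sector, and the same is therefore true of every composition factor.

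For the orbit statement I apply analogous reasoning to $F_n$. Since $F_n$ is central and $M$ is finite-dimensional over $\CC$, the generalized eigenspaces of $F_n$ on $M$ are submodules whose direct sum recovers $M$. Indecomposability collapses this decomposition to a single generalized eigenspace with some eigenvalue $\lambda$, so $F_n - \lambda$ acts nilpotently on $M$ and hence on each composition factor. On an irreducible module, any nilpotent central operator must vanish: by Schur's lemma it is scalar, and a nilpotent scalar is zero. Thus $F_n$ acts as $\lambda$ on every composition factor $\Irre{n,k_i}$. As $\Irre{n,k_i}$ is a quotient of $\Stan{n,k_i}$, the preceding proposition gives $\delta_{k_i} = \lambda$, so all labels $k_i$ share a common value of $\delta$. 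Combined with the common parity established above (or automatic in the $\tl n$ case, where $\Lambda_n$ is purely of one parity) and with the characterization quoted after the preceding proposition, this places all non-critical $k_i$ in a single orbit.

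The main technical obstacle is that two distinct critical classes can share the same $\delta$-value (for instance, $\delta_{\ell-1} = \delta_{3\ell-1} = -2$ when $q^{\ell} = -1$), so the $F_n$ argument alone does not separate them. To close this gap I would invoke the structural fact recorded in \cite{RSA,BSA} that a critical standard $\Stan{n,k_c}$ is irreducible and simultaneously projective and injective; it therefore admits no non-split extensions with any other irreducible, and any indecomposable module with a critical composition factor must itself be isomorphic to that critical irreducible, whose sole composition factor trivially constitutes a single orbit.
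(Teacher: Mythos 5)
Your proof is correct. The paper does not actually argue this proposition---it is one of the results quoted without proof, with the reader directed to the cited references---so there is no in-paper route to compare against, but the approach you take via the central elements $F_n$, $\eid$ and $\oid$ is the standard one. In particular, you were right to flag that the $F_n$-eigenvalue $\delta_k$ alone cannot separate distinct critical (singleton) classes: your example $\delta_{\ell-1}=\delta_{3\ell-1}$ when $q^\ell=-1$ is exactly the kind of collision that occurs, and this is easy to miss because the paper's remark that ``same orbit iff same parity and same $\delta$'' is implicitly restricted to non-critical $k$. Closing the gap with the fact that critical standards are irreducible, projective and injective is the appropriate move. One small refinement worth making explicit: the passage from ``$\Irre{n,k_c}$ is projective and injective'' to ``any indecomposable with $\Irre{n,k_c}$ as a composition factor is isomorphic to $\Irre{n,k_c}$'' is not quite a direct consequence of ``no non-split extensions with irreducibles,'' because extensions concern length-two modules while the composition factor can sit anywhere in a Loewy filtration. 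The clean argument is two-step: projectivity splits the relevant layer of a composition series, so $\Irre{n,k_c}$ embeds as a \emph{submodule}; then injectivity splits that inclusion, making $\Irre{n,k_c}$ a direct summand, and indecomposability kills the complement. With that sentence added, the argument is complete.
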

\noindent We will construct these irreducible modules shortly.

It follows that if $\Mod{M}$ and $\Mod{N}$ are indecomposable, then $\Hom(\Mod{M},\Mod{N})=0$ whenever $\Mod{M}$ and $\Mod{N}$ have distinct parities (for $\dtl{n}$) or their $F_n$-eigenvalues are distinct.\footnote{We remark that the elements of Hom-groups will be understood, unless otherwise specified, to be $\tl{n}$- or $\dtl{n}$-module homomorphisms, as appropriate.}  Similarly, $\Ext(\Mod{M},\Mod{N})=0$ under the same conditions, as otherwise there would exist indecomposable modules possessing more than one $F_n$-eigenvalue or parity (see \cref{sec:Ext} for a primer on these extension groups and \cref{lem:trivialExtensions} for additional conditions that imply $\Ext(\Mod{M},\Mod{N}) =0$). This is how these central elements will be used hereafter.

\medskip

We now recall the theorems describing the basic structure of the standard modules. When not stated explicitly, these results hold for the standards of both $\tl{n}$ and $\dtl{n}$, as long as the index $k$ on $\Stan{n,k}$ belongs to the set $\Lambda_n$ of the algebra.
\begin{center}
\emph{We will use the symbol $\Alg{n}$ to stand for either $\tl{n}$ or $\dtl{n}$.  Moreover, we will generally omit the label $n$ on \\ the algebra $\Alg{n}$, its modules, and the set $\Lambda_n$, except when two different values are needed in the same statement.}
\end{center}

The structure of the standard modules $\Stan{k} \equiv \Stan{n,k}$ is conveniently investigated by employing a symmetric bilinear form $\bilin{\cdot}{\cdot}$ naturally defined on each.  This form is invariant with respect to the algebra action in the sense that there exists an involutive antiautomorphism ${}^{\dag}$ of $\Alg{} \equiv \Alg{n}$ such that
\begin{equation}
\bilin{x}{u \cdot y} = \bilin{u^{\dag} \cdot x}{y}, \qquad \text{for all \(u \in \Alg{}\) and \(x,y \in \Stan{k}\).}
\end{equation}
We refer to \cite[Sec.~3]{RSA}, for $\Alg{} = \tl{}$, and \cite[Sec.~4.1]{BSA}, for $\Alg{} = \dtl{}$, for the definitions of $\bilin{\cdot}{\cdot}$ and ${}^{\dag}$, mentioning only that ${}^{\dag}$ corresponds diagrammatically to a reflection. The point is that the invariance of this bilinear form means that its radical $\Radi{k}$ is a submodule of $\Stan{k}$.  We denote the quotient by $\Irre{k} = \Stan{k} / \Radi{k}$.
\begin{proposition} \label{prop:Standard}
\leavevmode
\begin{enumerate}
\item The standard modules $\Stan{k}$ are indecomposable with $\Hom(\Stan{k},\Stan{k}) \cong \CC$. \label{it:StanIndec}
\item $\Radi{k}$ is the maximal proper submodule of $\Stan{k}$, for all $k \in \Lambda$, unless $\Alg{} = \tl{}$, $\beta = 0$ and $k=0$.  In this latter case, the form $\bilin{\cdot}{\cdot}$ is identically zero, so $\Radi{0} = \Stan{0}$ and $\Irre{0} = 0$. \label{it:RadMax}
\item If $k$ is critical, then the form $\bilin{\cdot}{\cdot}$ is non-degenerate, so $\Radi{k} = 0$ and $\Stan{k} = \Irre{k}$ is irreducible. \label{it:StanIrr}
\item If $k$ is non-critical, then $\Radi{k} \cong \Irre{k^+}$, for all $k<k_R$, and $\Radi{k_R} = 0$. The standard module $\Stan{k_R}$ is therefore irreducible:  $\Stan{k_R} \cong \Irre{k_R}$. \label{it:RadIrr}
\end{enumerate}
\end{proposition}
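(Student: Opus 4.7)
The plan is to exploit the cellular/diagrammatic structure of $\Alg$ and the invariance of the bilinear form $\bilin{\cdot}{\cdot}$, following the standard Graham--Lehrer strategy. First I would fix a diagrammatic cyclic generator $v_k \in \Stan{k}$ (the canonical ``top'' half-diagram with all $k$ defects propagating straight up) and note that $\Alg \cdot v_k = \Stan{k}$. For part (i), any $\phi \in \End(\Stan{k})$ is then determined by $\phi(v_k)$. Writing $\phi(v_k) = \sum_x c_x \cdot x$ in the diagram basis and using the invariance $\bilin{v_k}{\phi(u \cdot v_k)} = \bilin{u^{\dag} \cdot v_k}{\phi(v_k)}$ together with a direct computation showing that $v_k$ is the unique (up to scalar) basis element orthogonal to every other basis element, one concludes $\phi(v_k) \in \CC v_k$, hence $\phi = \lambda \id$. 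Since $\End(\Stan{k}) \cong \CC$ is local, $\Stan{k}$ is indecomposable.

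For part (ii), I would show that any submodule $\Mod{M} \not\subseteq \Radi{k}$ equals $\Stan{k}$. Pick $x \in \Mod{M}$ and $y \in \Stan{k}$ with $\bilin{x}{y} \neq 0$; the invariance lets one ``rotate'' $y$ against $x$ diagrammatically to produce a scalar multiple of $v_k$ sitting inside $\Mod{M}$, so $v_k \in \Mod{M}$ and hence $\Mod{M} = \Stan{k}$. In the exceptional case $\Alg = \tl{}$, $\beta = 0$, $k = 0$, the pairing of any two link diagrams closes into at least one loop, each contributing a factor of $\beta = 0$, so the form vanishes identically and $\Radi{0} = \Stan{0}$, $\Irre{0} = 0$.

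For part (iii), a critical $k$ is a singleton orbit $[k] = \{k\}$. By the composition-factor proposition stated just above, every composition factor of $\Stan{k}$ is isomorphic to $\Irre{k}$. Combined with (ii), this forces $\Stan{k} = \Irre{k}$ provided $\Radi{k} \neq \Stan{k}$, which is ensured by the explicit Gram determinant formulas of \cite{RSA,BSA}: these formulas evaluate to $\prod_j \delta_j$-type products which are nonzero precisely at critical $k$. So $\bilin{\cdot}{\cdot}$ is non-degenerate, $\Radi{k} = 0$, and $\Stan{k}$ is irreducible.

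For part (iv), I would construct, via the cellular/diagrammatic formalism of \cite{RSA,BSA}, an explicit intertwiner $\iota_k : \Stan{k^+} \to \Stan{k}$ obtained by gluing a cup-pattern that adds $(k^+ - k)/2$ arcs to a half-diagram with $k^+$ defects. A direct computation shows $\im \iota_k \subseteq \Radi{k}$ (the image pairs trivially with $v_k$) and, by (i), $\iota_k$ is nonzero, so $\ker \iota_k$ is a proper submodule of $\Stan{k^+}$; by part (ii) applied at $k^+$ it equals $\Radi{k^+}$, giving $\im \iota_k \cong \Irre{k^+}$. A dimension count against the Gram-determinant rank formula shows $\im \iota_k = \Radi{k}$, so $\Radi{k} \cong \Irre{k^+}$. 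For $k = k_R$ there is no next label to the right in the orbit, so the construction produces no nontrivial submodule and the Gram determinant is nonzero, giving $\Radi{k_R} = 0$ and $\Stan{k_R} = \Irre{k_R}$. The main obstacle is the combined input needed for (iv): the explicit existence of $\iota_k$ and the Gram-determinant calculation fixing $\dim \Radi{k}$; both are technical diagrammatic computations carried out in \cite{RSA,BSA}, and the clean statement of (iv) hides this machinery behind their results.
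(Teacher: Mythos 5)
The paper does not actually prove this proposition; it is quoted from \cite{RSA} (for $\tl{}$) and \cite{BSA} (for $\dtl{}$), both of which follow the Graham--Lehrer cellular strategy that you sketch. Your overall plan is therefore the right one, but two of your steps fail as stated.

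In part (i), the claim that ``$v_k$ is the unique (up to scalar) basis element orthogonal to every other basis element'' is false. For $\Stan{3,1}$, with link-state basis (arc $\{1,2\}$; defect $3$) and (arc $\{2,3\}$; defect $1$), the Gram matrix is
\begin{equation*}
\begin{pmatrix} \beta & 1 \\ 1 & \beta \end{pmatrix},
\end{equation*}
and for $\Stan{4,2}$, with basis ordered as (arcs $\{1,2\}$; defects $3,4$), (arcs $\{3,4\}$; defects $1,2$), (arcs $\{2,3\}$; defects $1,4$), it is
\begin{equation*}
\begin{pmatrix} \beta & 0 & 1 \\ 0 & \beta & 1 \\ 1 & 1 & \beta \end{pmatrix}.
\end{equation*}
In neither case is any basis vector orthogonal to all the others, whichever one you decide to call $v_k$, so the argument you outline for $\phi(v_k)\in\CC v_k$ does not close. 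The correct cellular mechanism is different: if $x,y,z$ are link states and $d_{x,y}\in\Alg{}$ denotes the diagram obtained by stacking $x$ over the reflection of $y$, then $d_{x,y}\cdot z = \bilin{y}{z}\,x$ in $\Stan{k}$. Applying $\phi$ to both sides and choosing $y,z$ with $\bilin{y}{z}\neq 0$ forces $\phi$ to act as a single scalar on every link state $x$, whence $\End(\Stan{k})\cong\CC$. Your invariance identity is correct, but orthogonality of $v_k$ is the wrong tool; this same identity is also what lies behind the vague ``rotate $y$ against $x$'' step in your part (ii), which is otherwise fine.

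In part (iv), the intertwiner $\iota_k\colon\Stan{k^+}\to\Stan{k}$ cannot simply be ``obtained by gluing a cup-pattern'': such a prescription would be defined for all $\beta$, whereas $\Hom(\Stan{k^+},\Stan{k})$ is generically zero and only becomes one-dimensional at the relevant roots of unity (cf.~\cref{prop:HomSkSk-}). The existence of these homomorphisms is a substantial theorem of \cite{RSA,BSA} that depends essentially on the root-of-unity hypothesis, and the identification $\Radi{k}\cong\Irre{k^+}$ additionally requires the detailed factorisation of the Gram determinant to pin down the rank of $\bilin{\cdot}{\cdot}$. You acknowledge this is technical, which is honest, but the diagram-gluing description materially understates where the real work of the cited proofs lies.
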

\noindent We remark that unless $\Alg{} = \tl{}$, $\beta = 0$ and $k=0$, the module $\Irre{k}$ is irreducible (and non-zero). In fact, all the irreducible modules, up to isomorphism, may be obtained in this fashion.
\begin{proposition} \label{prop:StComplete}
For all $n\ge 1$ and $\beta \in \CC$, the irreducible modules $\Irre{k}$, with $k\in\Lambda$, form a complete set of non-isomorphic irreducible modules, except in the case of $\tl{n}$ with $\beta=0$ and $n$ even. In this latter case, $\Irre{0} = 0$ and the set of irreducibles $\Irre{k}$ with $k\in\Lambda \setminus \set{0}$ is complete.
\end{proposition}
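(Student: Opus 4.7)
The plan is to deduce this classification from the cellular algebra machinery of Graham and Lehrer, combined with the structural results of \cref{prop:Standard}. Both $\tl n$ and $\dtl n$ are cellular with the standard modules $\Stan{k}$ serving as cell modules and the invariant bilinear form $\bilin{\cdot}{\cdot}$ as the cellular form; the submodule $\Radi{k}$ is the radical of this form, so that $\Irre{k} = \Stan{k}/\Radi{k}$ is precisely the irreducible quotient assigned to $k$ by the Graham-Lehrer construction.

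First I would invoke Graham-Lehrer's main classification theorem for cellular algebras: each quotient $\Stan{k}/\Radi{k}$ is either zero or an absolutely irreducible module, and the non-zero such quotients form a complete set of pairwise non-isomorphic simple modules. This single theorem handles simplicity, pairwise non-isomorphism, and completeness in one stroke. It then remains to identify the parameters for which $\Irre{k}$ vanishes. By \cref{prop:Standard}(ii), $\Radi{k}$ is the maximal proper submodule of $\Stan{k}$---equivalently, $\Irre{k} \neq 0$---in all cases other than $\Alg{} = \tl{}$ with $\beta = 0$ and $k = 0$, which can only arise when $n$ is even (otherwise $0 \notin \Lambda$). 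In that single excluded case the bilinear form is identically zero, forcing $\Irre{0} = 0$ and requiring us to drop $k=0$ from the complete list.

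The main obstacle is the reliance on Graham-Lehrer's general theorem. For a more self-contained argument, completeness could be established by noting that the regular module admits a cell filtration whose subquotients are direct sums of standards, so that every simple $\Alg{}$-module, being a composition factor of the regular representation, must occur as a composition factor of some $\Stan{k}$; by \cref{prop:Standard}(iii) and (iv) these composition factors lie in $\{\Irre{k'} : k' \in [k]\}$. Pairwise non-isomorphism across orbits then follows from the action of the central element $F_n$, which has distinct eigenvalues $\delta_k$ on different orbits (supplemented by the parity invariant for $\dtl n$). Within a single orbit, one argues inductively from the largest label: $\Irre{k_R} = \Stan{k_R}$ by \cref{prop:Standard}(iv), and then the short exact sequence $\ses{\Irre{k^+}}{\Stan{k}}{\Irre{k}}$ combined with the strictly decreasing dimensions of standards along the orbit distinguishes the dimensions of successive $\Irre{k}$, completing the classification independently.
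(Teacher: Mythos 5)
The paper itself does not prove this proposition; it is one of the results quoted from \cite{RSA,BSA} (and ultimately from the cellular algebra theory of Graham and Lehrer \cite{GraCel96}), as the preamble to \cref{sec:standardProjective} explains. Your primary route---invoking Graham--Lehrer's classification of simples for cellular algebras, then identifying exactly when $\Irre{k}$ vanishes via \cref{prop:Standard}\ref{it:RadMax}---is precisely the argument underlying the cited sources, so it is correct and in line with the paper's approach.

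Your self-contained backup, however, has a gap in the within-orbit non-isomorphism step. You claim to distinguish the $\Irre{k}$ for $k$ in a single orbit by ``the strictly decreasing dimensions of standards along the orbit,'' but this is false. For instance, take $\Alg{} = \tl{6}$ with $\ell = 3$ (so $\beta = \pm 1$): the non-critical orbit is $\set{0,4,6}$ and
\begin{equation*}
\dim \Stan{6,0} = 5, \qquad \dim \Stan{6,4} = 5, \qquad \dim \Stan{6,6} = 1,
\end{equation*}
so the standard dimensions are not strictly decreasing. Worse, the telescoping then gives $\dim \Irre{6,6} = 1$, $\dim \Irre{6,4} = 4$, $\dim \Irre{6,0} = 1$: the irreducibles $\Irre{0}$ and $\Irre{6}$ have the \emph{same} dimension, so no dimension count can separate them (they are distinguished instead by the scalar through which the $e_i$ act). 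To repair the backup argument you would need a genuinely module-theoretic invariant---the standard one being the cellular-order constraint that $\Irre{\lambda}$ can occur as a composition factor of $\Stan{\mu}$ only if $\mu$ precedes $\lambda$ in the cell order, with $[\Stan{\lambda} : \Irre{\lambda}] = 1$---which is exactly the content of the Graham--Lehrer theorem you invoked in the first place. So the backup does not actually buy you independence from that result.
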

\noindent We deal with the annoying possibility that $\Irre{k}$ may be zero by  introducing \cite{GraCel96} a set $\Lambda_{n,0} \equiv \Lambda_0$ which is defined to be $\Lambda \setminus \set{0}$, if $\Alg{} = \tl{}$, $\beta=0$ and $n$ is even, and $\Lambda$ otherwise.  The set $\set{\Irre{k} \st k \in \Lambda_0}$ is then a complete set of non-isomorphic irreducible modules.

We conclude our discussion of standard modules by noting that \cref{prop:Standard} implies that the following sequence is exact and non-split for $k$ non-critical and $k<k_R$:
\begin{equation}\label{eq:StanExact}
\dses{\Irre{k^+}}{}{\Stan{k}}{}{\Irre{k}}.
\end{equation}
This sequence is also exact for $k=k_R$ if we assume that any module with a label $k \notin \Lambda$ is zero. For $k=k_R$, so $k^+ \notin \Lambda$, it reads $\ses{0}{\Stan{k_R}}{\Irre{k_R}}$, which simply states that $\Stan{k_R}$ is its own irreducible quotient.  With this understanding, the composition factors of the non-critical standard module $\Stan{k}$ are therefore $\Irre{k}$ and $\Irre{k^+}$.  This leads to a quick proof of the following result, needed in \cref{sub:restriction}.
\begin{proposition} \label{prop:HomSkSk-}
For each non-critical $k \in \Lambda$ with $k \neq k_L$, $\Hom(\Stan{k},\Stan{k^-}) \cong \CC$.
\end{proposition}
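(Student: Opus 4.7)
The plan is to exploit the short exact sequence~\eqref{eq:StanExact} applied to both $\Stan{k}$ and $\Stan{k^-}$, to list the composition factors of each, and then to constrain any homomorphism $f\colon \Stan{k}\to\Stan{k^-}$ through its image.

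First I would note that, by \cref{prop:Standard}(iv), $\Stan{k}$ has $\Irre{k}$ as its head and $\Irre{k^+}$ as its socle (the socle being absent when $k=k_R$); moreover, since $k\ne k_L$ implies $k^-\in \Lambda$ with $k^-<k_R$, the same proposition gives $\Stan{k^-}$ the head $\Irre{k^-}$ and the socle $\Irre{k}$. Given any $f\colon \Stan{k}\to\Stan{k^-}$, the image $\im f$ is simultaneously a quotient of $\Stan{k}$ and a submodule of $\Stan{k^-}$, so its composition factors lie in
\[
\{\Irre{k},\Irre{k^+}\}\cap\{\Irre{k^-},\Irre{k}\}=\{\Irre{k}\},
\]
since the orbit elements $k^-<k<k^+$ label pairwise non-isomorphic irreducibles by \cref{prop:StComplete}. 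Because $\Irre{k}$ occurs with multiplicity one in $\Stan{k^-}$, either $\im f=0$ or $\im f\simeq \Irre{k}$.

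In the latter case, $f$ factorises as the canonical surjection $\Stan{k}\twoheadrightarrow \Stan{k}/\Radi{k}=\Irre{k}$ followed by an inclusion $\Irre{k}\hookrightarrow \soc \Stan{k^-}\simeq \Irre{k}$. Each factor is unique up to a scalar --- the first by the universal property of the quotient since $\Irre{k}$ is simple, the second by Schur's lemma applied to the simple socle --- so $\dim\Hom(\Stan{k},\Stan{k^-})\le 1$. The composition of the two canonical maps is patently nonzero, giving $\Hom(\Stan{k},\Stan{k^-})\simeq \CC$.

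The only subtle point is the special case $\Alg{}=\tl{}$, $\beta=0$, $k^-=0$, where $\Irre{0}=0$ collapses $\Stan{0}$ to $\Radi{0}\simeq \Irre{0^+}=\Irre{k}$ by \cref{prop:Standard}(ii) and (iv); the statement then reduces to $\Hom(\Stan{k},\Irre{k})\simeq\CC$, which is again immediate because $\Irre{k}$ is the simple head of $\Stan{k}$. I do not anticipate any real obstacle: the argument is essentially a count of composition factors combined with the universal properties of the simple head and the simple socle.
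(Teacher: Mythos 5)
Your proposal is correct and follows essentially the same route as the paper's proof: restrict $\im f$ using the composition factors supplied by \eqref{eq:StanExact} to conclude that a non-zero $f$ has image $\Irre{k}$ and is unique up to scaling, then exhibit the non-zero map as the composition $\Stan{k}\twoheadrightarrow\Irre{k}\simeq\Radi{k^-}\hookrightarrow\Stan{k^-}$. The only difference is that you spell out the factorization and treat the degenerate $\Alg{}=\tl{}$, $\beta=0$, $k^-=0$ case separately, but the paper's argument already covers it since $\Radi{0}\simeq\Irre{k}$ still embeds in (indeed equals) $\Stan{0}$ there.
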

\begin{proof}
By \eqref{eq:StanExact}, $\Stan{k}$ and $\Stan{k^-}$ only have one composition factor in common:  $\Irre{k}$.  It follows that if $f \in \Hom(\Stan{k},\Stan{k^-})$, then $\Stan{k} / \ker f \cong \im f$ is either $0$ or isomorphic to $\Irre{k}$.  Any non-zero $f$ is therefore unique up to rescaling.  But, $\Irre{k}$ is a quotient of $\Stan{k}$ and a submodule of $\Stan{k^-}$, so there exists a non-zero $f$ given by the composition $\Stan{k} \sra \Irre{k} \cong \Radi{k^-} \ira \Stan{k^-}$.
\end{proof}

\medskip

We turn now to the structure of the projective modules.  For each $k \in \Lambda_0$, let $\Proj{k} \equiv \Proj{n,k}$ be the projective cover of the irreducible module $\Irre{k}$. (The module $\Proj{k}$ is thus indecomposable.) When $\Irre{k} = 0$, thus $\Alg{} = \tl{}$ with $n$ even and $\beta = k = 0$, we could define $\Proj{k} = 0$ as well, but this is more trouble than it is worth.
\begin{proposition} \label{prop:Proj}
\leavevmode
\begin{enumerate}
\item If $k \in \Lambda_0$ is critical, then the irreducible $\Irre{k}$ is projective: $\Irre{k}\simeq\Stan{k}\simeq\Proj{k}$.
\item If $k \in \Lambda_0$ is non-critical, then the following sequence is exact:
\begin{equation}\label{eq:ProjExact}
\dses{\Stan{k^{-}}}{}{\Proj{k}}{}{\Stan{k}}.
\end{equation}
\end{enumerate}
\end{proposition}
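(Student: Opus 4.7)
My approach is to exploit the cellular structure of $\Alg{n}$, which guarantees that every indecomposable projective admits a filtration by standard modules, together with the reciprocity
\[
(\Proj{k} : \Stan{j}) = [\Stan{j} : \Irre{k}],
\]
relating standard-filtration multiplicities on the left to composition multiplicities on the right.

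First I would read off the right-hand side from \cref{prop:Standard}. The exact sequence \eqref{eq:StanExact}, combined with the irreducibility of $\Stan{k_R}$ and of every critical standard, shows that the composition factors of a standard $\Stan{j}$ are $\Irre{j}$ together with $\Irre{j^+}$ whenever $j$ is non-critical with $j<j_R$, each of multiplicity one. Hence $[\Stan{j} : \Irre{k}] = 1$ precisely when $j = k$ or $j = k^-$ (the latter contributing only when $k^- \in \Lambda_0$), and vanishes otherwise.

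For part (i), a critical $k$ has singleton orbit $[k] = \set{k}$, so $k^- \notin \Lambda_0$ and only $j=k$ contributes. The standard filtration of $\Proj{k}$ therefore collapses to a single factor $\Stan{k}$, giving $\Proj{k} \simeq \Stan{k} \simeq \Irre{k}$ by \cref{prop:Standard}\,(iii). For part (ii), a non-critical $k$ with $k\neq k_L$ yields a standard filtration of $\Proj{k}$ with exactly two factors $\Stan{k}$ and $\Stan{k^-}$, each of multiplicity one. To fix their ordering, I note that $\head \Proj{k} = \Irre{k} = \head \Stan{k}$: the canonical surjection $\Proj{k} \sra \Irre{k}$ factors as $\Proj{k} \sra \Stan{k} \sra \Irre{k}$, placing $\Stan{k}$ at the top of the filtration. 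The kernel then inherits a standard filtration whose unique factor must be $\Stan{k^-}$, so it is isomorphic to $\Stan{k^-}$, producing the advertised exact sequence. The boundary case $k = k_L$ reduces to $\Proj{k} \simeq \Stan{k}$, consistent with reading $\Stan{k^-} = 0$ in the sequence.

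The main obstacle is the reciprocity step: one must invoke (or reprove) that in this cellular, non-quasi-hereditary setting the identity $(\Proj{k} : \Stan{j}) = [\Stan{j} : \Irre{k}]$ holds on the nose, and special care is needed in the exceptional case $\Alg{} = \tl{}$, $\beta=0$ with $n$ even, where $\Irre{0}=0$ distorts the bookkeeping. A more concrete alternative, the one actually adopted in \cite{RSA,BSA}, is to realise $\Proj{k}$ as $\Alg{n}\cdot e$ for a suitable primitive idempotent $e$ and to identify the submodule $\Stan{k^-} \subseteq \Alg{n} e$ and quotient $\Stan{k}$ by direct diagrammatic inspection, with non-splitness of the sequence following from the indecomposability of $\Proj{k}$.
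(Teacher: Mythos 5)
Your argument is correct, and it runs along a genuinely different track from the paper's: \cref{prop:Proj} is quoted without proof, the paper deferring to the explicit diagrammatic constructions in \cite{RSA,BSA}, which realise $\Proj{k}$ inside $\Alg{n}$ and identify the submodule $\Stan{k^-}$ and quotient $\Stan{k}$ concretely. Your route is the abstract cellular one, and the reciprocity $(\Proj{k}:\Stan{j})=[\Stan{j}:\Irre{k}]$ that you flag as the main obstacle does hold for any cellular algebra, quasi-hereditary or not: the cell ideal chain $J^{\leq\lambda}$, intersected with $\Alg{}e = \Proj{k}$, always produces a cell filtration of $\Proj{k}$, and comparing the resulting multiplicities with Graham--Lehrer's factorisation of the Cartan matrix through the (unitriangular) decomposition matrix forces the reciprocity you want. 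Your ordering argument is also right: $\Proj{k}$ has simple head $\Irre{k}$, and among the two cell factors only $\Stan{k}$ has this head, so it must sit on top; the boundary case $k=k_L$ collapses to $\Proj{k_L}\simeq\Stan{k_L}$, and the exceptional $\tl{}$, $n$ even, $\beta=0$ case is absorbed since $\Stan{0}\cong\Irre{2}$ remains a nonzero cell module while $\Irre{0}=0$, matching the paper's footnote. In short, the cellular route buys a uniform, basis-free proof valid simultaneously for $\tl n$ and $\dtl n$; the diagrammatic route of \cite{RSA,BSA} buys explicit generators for $\Proj{k}$, which the paper's later restriction and induction arguments actually use.
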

\noindent If $k=k_L$, then $k^- \notin \Lambda$, so $\Stan{k^-}$ is understood to be zero. The sequence, in this case, reads $\ses{0}{\Proj{k_L}}{\Stan{k_L}}$, indicating that $\Stan{k_L}$ is projective and, due to \eqref{eq:StanExact}, has (at most) two composition factors, $\Irre{k}$ and $\Irre{k^+}$. If $k=k_R \neq k_L$, then $\Stan{k_R}\simeq\Irre{k_R}$, by \cref{prop:Standard}\ref{it:RadIrr}, and $\Proj{k_R}$ has exactly three composition factors, $\Irre{k^-}$ and two copies of $\Irre{k}$. For all other non-critical $k\in\Lambda_0$, the projective $\Proj{k}$ has precisely four composition factors, namely $\Irre{k^-}$, $\Irre{k}$ (twice) and $\Irre{k^+}$.\footnote{When $\Alg{} = \tl{}$ with $n$ even and $\beta=0$, these two statements must be modified for $\Proj{2}$ because $\Irre{0} = 0$:  $\Proj{2,2}$ has only two composition factors and, for $n>2$, $\Proj{n,2}$ has three. The sequence \eqref{eq:ProjExact} remains exact and non-split in both these cases.} We will discuss this further in \cref{sub:Loewy}.

Finally, we record the following result for future use.
\begin{proposition} \label{prop:HomSP}
For all $k \in \Lambda_0$, $\Hom(\Stan{k},\Proj{k}) \cong \CC$.
\end{proposition}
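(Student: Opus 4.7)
My plan is to argue case by case according to the structural description of $\Proj{k}$ in \cref{prop:Proj}.

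If $k$ is critical, then $\Proj{k} \simeq \Stan{k}$, so the claim is immediate from \cref{prop:Standard}\ref{it:StanIndec}. If $k$ is non-critical with $k = k_L$, then $\Stan{k^-} = 0$, so the exact sequence \eqref{eq:ProjExact} forces $\Proj{k_L} \simeq \Stan{k_L}$ and the claim again follows from \cref{prop:Standard}\ref{it:StanIndec}.

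The substantive case is $k$ non-critical with $k \neq k_L$. Here I would apply the left-exact functor $\Hom(\Stan{k}, \blank)$ to the short exact sequence \eqref{eq:ProjExact} to obtain
\begin{equation*}
0 \lra \Hom(\Stan{k}, \Stan{k^-}) \lra \Hom(\Stan{k}, \Proj{k}) \overset{\pi_*}{\lra} \Hom(\Stan{k}, \Stan{k}),
\end{equation*}
where $\pi: \Proj{k} \sra \Stan{k}$ is the surjection from \eqref{eq:ProjExact}. By \cref{prop:HomSkSk-} (applicable because $k \neq k_L$) the leftmost term is $\CC$, and by \cref{prop:Standard}\ref{it:StanIndec} the rightmost term is $\CC \cdot \id$. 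So $\dim \Hom(\Stan{k}, \Proj{k}) \in \set{1,2}$ and equals $1$ precisely when $\pi_* = 0$.

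The key step is showing that $\pi_* = 0$. I would argue by contradiction: if some $f: \Stan{k} \ra \Proj{k}$ satisfies $\pi_* f = \pi \circ f \neq 0$, then since $\Hom(\Stan{k}, \Stan{k}) = \CC \cdot \id$ we must have $\pi \circ f = c \cdot \id_{\Stan{k}}$ for some $c \in \CC^{\times}$. Then $c^{-1} f$ is a section of $\pi$, so \eqref{eq:ProjExact} splits and $\Proj{k} \simeq \Stan{k^-} \oplus \Stan{k}$. Both summands are nonzero (standard modules are always nonzero on $\Lambda$), so this contradicts the indecomposability of the projective cover $\Proj{k}$. Hence $\pi_* = 0$ and $\Hom(\Stan{k}, \Proj{k}) \simeq \Hom(\Stan{k}, \Stan{k^-}) \simeq \CC$.

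I expect no serious obstacle: the entire proof rests on the indecomposability of $\Proj{k}$ together with the one-dimensionality of $\Hom(\Stan{k}, \Stan{k})$ and $\Hom(\Stan{k}, \Stan{k^-})$, both already in hand. The only delicate point worth a sentence is verifying that the exceptional case ($\Alg{} = \tl{}$, $\beta = 0$, $n$ even) causes no trouble: although $\Irre{0}$ may vanish, the standard module $\Stan{0}$ itself is still nonzero, so the non-split argument above goes through unchanged for $k = 2$.
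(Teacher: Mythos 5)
Your proof is correct and takes a genuinely different route from the paper's.  Both arguments dispose of the easy cases ($k$ critical, $k = k_L$) identically, and both aim to show that any $f \in \Hom(\Stan{k},\Proj{k})$ factors through the submodule $\Stan{k^-}$ when $k > k_L$.  The mechanisms differ.  The paper composes $f$ with the canonical surjection $\pi\colon\Proj{k}\sra\Irre{k}$ onto the head and then invokes the \emph{lifting property} of the projective $\Proj{k}$: if $\pi f \neq 0$, projectivity supplies $g\colon\Proj{k}\to\Stan{k}$ with $\pi f g = \pi$, forcing $f$ to be surjective and contradicting a count of composition factors.  You instead compose with the surjection $\Proj{k}\sra\Stan{k}$ from \eqref{eq:ProjExact}, apply left-exactness of $\Hom(\Stan{k},\blank)$ to that sequence to obtain the a priori bound $\dim\Hom(\Stan{k},\Proj{k}) \leq 2$ with equality iff $\pi_*\neq 0$, and then rule out $\pi_*\neq 0$ via the \emph{non-splitting} of \eqref{eq:ProjExact}, which follows at once from the indecomposability of the projective cover.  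Your version is a touch more streamlined: the exact Hom sequence delivers both the bound and the identification $\Hom(\Stan{k},\Proj{k}) \cong \Hom(\Stan{k},\Stan{k^-}) \cong \CC$ in one stroke, whereas the paper must separately analyze $\ker\pi = \rad\Proj{k}$ to locate its unique $\Irre{k}$ subfactor and exhibit the (unique) nonzero homomorphism by hand.  Your parenthetical on the exceptional case is also sound: for $\tl{n}$ with $n$ even and $\beta=0$, the only $k$ in play is $k=2 > k_L = 0$, for which $\Stan{0} \cong \Irre{2} \neq 0$ and \cref{prop:HomSkSk-} still gives $\Hom(\Stan{2},\Stan{0}) \cong \CC$, so the argument needs no modification.
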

\begin{proof}
If $k$ is critical or $k=k_L$, then $\Proj{k} \cong \Stan{k}$, so $\Hom(\Stan{k},\Proj{k}) \cong \CC$ by \cref{prop:Standard}\ref{it:StanIndec}.  So, we may assume that $k$ is non-critical with $k>k_L$.  Then, $\Proj{k}$ has precisely two composition factors isomorphic to $\Irre{k}$.

Let $f \in \Hom(\Stan{k},\Proj{k})$ and consider the composition $\Stan{k} \xrightarrow{f} \Proj{k} \overset{\pi}{\sra} \Irre{k}$, where $\pi$ is the canonical quotient map.  If $\pi f \neq 0$, then it is a surjection, hence there exists a homomorphism $g$ making
\begin{equation}
\begin{tikzpicture}[baseline={(current bounding box.center)},scale=1.5]
\node (S) at (0,0) [] {\(\Stan{k}\)};
\node (I) at (1,0) [] {\(\Irre{k}\)};
\node (O) at (2,0) [] {\(0\)};
\node (P) at (0,1) [] {\(\Proj{k}\)};
\draw [->] (S) -- node [below] {\(\scriptstyle \pi f\)} (I);
\draw [->] (I) -- (O);
\draw [->] (P) -- node [right] {\(\scriptstyle \pi\)} (I);
\draw [dashed,->] (P) -- node [left] {\(\scriptstyle g\)} (S);
\end{tikzpicture}
\end{equation}
commute, by the projectivity of $\Proj{k}$.  Now, $\pi fg = \pi$ requires that $f$ be surjective, as $\ker \pi$ is the maximal proper submodule of $\Proj{k}$.  However, this is impossible as $\Proj{k}$ has more composition factors than $\Stan{k}$.

We must therefore have $\pi f = 0$, hence that $\im f \subseteq \ker \pi$.  Now, $\ker \pi$ has only one composition factor isomorphic to $\Irre{k}$ and it is a submodule:  $\Irre{k} \cong \Radi{k^-} \subseteq \Stan{k^-} \subseteq \ker \pi$.  Thus, $\im f$ is either $0$ or isomorphic to $\Irre{k}$, by \eqref{eq:StanExact}.  $f$ is therefore unique up to rescaling and the (non-zero) composition $\Stan{k} \sra \Irre{k} \ira \Stan{k^-} \ira \Proj{k}$ gives the result.
\end{proof}

%
% sub : restriction & induction
%
\subsection{Their restriction and induction}\label{sub:restriction}

This subsection describes how the families of modules introduced so far behave under restriction and induction. We first fix an inclusion of algebras $\Alg{n} \ira \Alg{n+1}$ whose image is the subalgebra of diagrams whose $(n+1)$-th nodes are connected by an identity strand (see \cite[Sec.~4]{RSA} and \cite[Sec.~3.4]{BSA} for details).  If $\Mod{M}$ is an $\Alg{n}$-module, its restriction to an $\Alg{n-1}$-module will be denoted by $\Res{\Mod{M}}$ and its induction to an $\Alg{n+1}$-module by $\Ind{\Mod{M}}$.
\begin{proposition}\label{prop:StanRest}
\leavevmode
\begin{enumerate}
\item The restriction of the standard module $\Stan{n+1,k}$ satisfies the exact sequence \label{it:SResES}
\begin{subequations}
\begin{gather}
\dses{\Stan{n,k-1}}{}{\Res{\Stan{n+1,k}}}{}{\Stan{n,k+1}} \phantom{\oplus \Stan{n,k}} \qquad \text{(\(\Alg{n}=\tl{n}\)),} \label{eq:StanRestdTLn} \\
\dses{\Stan{n,k-1}\oplus\Stan{n,k}}{}{\Res{\Stan{n+1,k}}}{}{\Stan{n,k+1}} \qquad \text{(\(\Alg{n}=\dtl{n}\)).} \label{eq:StanRestTLn}
\end{gather}
\end{subequations}
\item For all non-critical $k$, these sequences split: \label{it:SResSplit}
\begin{equation}
\Res{\Stan{n+1,k}} \cong
\begin{cases*}
\Stan{n,k-1}\oplus\Stan{n,k+1}, & if \(\Alg{n} = \tl{n}\), \\
\Stan{n,k-1}\oplus\Stan{n,k}\oplus\Stan{n,k+1}, & if \(\Alg{n} = \dtl{n}\).
\end{cases*}
\end{equation}
\item For critical $k$, the result is almost always projective:
\begin{equation}
\Res{\Stan{n+1,k}} \cong
\begin{cases*}
\Proj{n,k+1}, & if \(\Alg{n} = \tl{n}\) and \(k \neq n+1\), \\
\Proj{n,k} \oplus \Proj{n,k+1}, & if \(\Alg{n} = \dtl{n}\) and \(k \neq n, n+1\).
\end{cases*}
\end{equation}
The exceptions are $\Res{\Stan{n+1,n+1}} \cong \Stan{n,n}$ for $\tl{n}$ and $\dtl{n}$, as well as $\Res{\Stan{n+1,n}} \cong \Stan{n,n-1} \oplus \Stan{n,n}$ for $\dtl{n}$. \label{it:SResCrit}
\item The induction and restriction of standard modules are related by \label{it:SResInd}
\begin{equation}\label{eq:RestIndu}
\Ind{\Stan{n-1,k}}\simeq\Res{\Stan{n+1,k}}, \qquad \text{for all }n\ge 2,
\end{equation}
except when $\beta=0$ for the module $\Ind{\Stan{2,0}}$ over $\tl{3}$. In that case,
\begin{equation}
\Ind{\Stan{2,0}} \cong \Stan{3,1} \oplus \Stan{3,3}, \qquad \text{but} \qquad
\Res{\Stan{4,0}} \cong \Stan{3,1}.
\end{equation}
\end{enumerate}
\end{proposition}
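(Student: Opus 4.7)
My plan is to handle the four parts with progressively richer tools. For (i), I would argue diagrammatically: $\Stan{n+1, k}$ has a basis of link states on $n+1$ nodes with $k$ defects (and, in the dilute case, vacancies), which I would stratify by the configuration at node $n+1$---defect, vacancy (dTL only), or arc endpoint. The key observation is that generators of $\Alg{n}$ act only on nodes $1, \dots, n$, so the subspace where node $n+1$ is a defect is stable: a cascading reconnection triggered by some $e_i$ can only convert this defect into an arc endpoint by pairing it with a second defect, which reduces the total defect count and so kills the state. Removing the defect at $n+1$ identifies the submodule with $\Stan{n, k-1}$; analogously, the vacancy subspace yields an intermediate $\Stan{n, k}$ in the dilute case, and opening the arc attached to node $n+1$ converts its other endpoint into a defect, identifying the final quotient with $\Stan{n, k+1}$.

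For (ii), I would invoke the central element $F_n \in \Alg{n}$, which acts as $\delta_j = q^{j+1} + q^{-j-1}$ on $\Stan{n, j}$. A direct calculation shows that $\delta_i = \delta_j$ forces either $i \equiv j$ or $i + j + 2 \equiv 0 \pmod{2\ell}$, so the three scalars $\delta_{k-1}, \delta_k, \delta_{k+1}$ are pairwise distinct whenever $k$ is non-critical (the coincidence $\delta_{k-1} = \delta_{k+1}$ is equivalent to $k \equiv \ell - 1 \pmod{\ell}$, while $\delta_k = \delta_{k \pm 1}$ is excluded on parity grounds). Since every composition factor of $\Res{\Stan{n+1, k}}$ lies in one of the three disjoint orbits $[k-1], [k], [k+1] \subset \Lambda_n$ with distinct $F_n$-eigenvalues, the generalized eigenspace decomposition of $F_n$ splits the module into the claimed direct summands.

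For (iii), when $k$ is critical the scalars $\delta_{k-1}, \delta_{k+1}$ coincide and reflection through $k$ gives $(k+1)^- = k - 1$ in $\Lambda_n$, so \cref{prop:Proj} supplies an exact sequence for $\Proj{n, k+1}$ of the same shape as the one produced by (i). To identify $\Res{\Stan{n+1, k}}$ with $\Proj{n, k+1}$, I would lift the canonical surjection $\Res{\Stan{n+1, k}} \sra \Stan{n, k+1}$ through the projective cover to obtain $\varphi: \Proj{n, k+1} \to \Res{\Stan{n+1, k}}$, and then show $\varphi$ is an isomorphism, reducing the problem to proving $\Res{\Stan{n+1, k}}$ is indecomposable. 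For the latter, I would compute $\dim \Hom(\Res{\Stan{n+1, k}}, \Stan{n, k-1})$: in the split case this is two-dimensional, whereas for the non-split extension $\Proj{n, k+1}$ it is one-dimensional (since any map $\Proj{n, k+1} \to \Stan{n, k-1}$ must land in $\rad \Stan{n, k-1} \cong \Irre{n, k+1}$), and a Frobenius-reciprocity or direct diagrammatic count shows that the restricted module falls in the latter case. The $\dtl{}$ case uses the idempotents $\eid, \oid$ to split off the $\Stan{n, k}$ summand first, and the edge cases ($k = n+1$, and $k = n$ for $\dtl{}$) are handled by direct inspection of the one-dimensional $\Stan{n+1, n+1}$.

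For (iv), I would construct an explicit $\Alg{n}$-module map $\Ind{\Stan{n-1, k}} \to \Res{\Stan{n+1, k}}$ via the inclusion $\Alg{n-1} \ira \Alg{n+1}$ (obtained by appending two identity strands at positions $n$ and $n+1$): the $\Alg{n+1}$-action on a distinguished lift of a cyclic generator of $\Stan{n-1, k}$ to $\Stan{n+1, k}$ descends through $\Alg{n} \otimes_{\Alg{n-1}} \Stan{n-1, k}$ by the universal property. Non-triviality together with a dimension count (immediate from (i)--(iii) or from the Bratteli diagram of the tower) yields the isomorphism. The $\beta = 0$ exception over $\tl 3$ is handled by a direct calculation: $\Stan{2, 0}$ is one-dimensional with $e_1$ acting as $\beta = 0$, so $\Ind{\Stan{2, 0}} \cong \tl{3} / \tl{3} e_1$ has dimension $5 - 2 = 3$ and must equal $\Stan{3, 1} \oplus \Stan{3, 3}$ (both critical irreducibles at $\ell = 2$), whereas $\Stan{4, 0}$ has dimension $2$ and $\Res{\Stan{4, 0}} \cong \Stan{3, 1}$ by (i). The main obstacles I foresee are the combinatorial verification in (i) that the defect-at-$n+1$ subspace is genuinely an $\Alg{n}$-submodule, and the non-splitting argument in (iii); the $\Hom$-dimension computation above is the cleanest route I see to the latter.
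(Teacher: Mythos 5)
The paper itself does not prove Proposition~\ref{prop:StanRest}: the paragraph immediately following the statement notes that it appeared in \cite{RSA} for $\tl n$ and \cite{BSA} for $\dtl n$ (and partially in \cite{WesRep95}), so there is no in-paper argument against which to compare your proposal.

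Taken on its own merits, your plan for (i), (ii) and (iv) is sound and fits the diagrammatic and central-character style of the cited works. The genuine gap is in (iii). Your reduction to the indecomposability of $\Res{\Stan{n+1,k}}$ is correct: given $\Ext(\Stan{n,k+1},\Stan{n,k-1}) \cong \CC$ and $\Hom(\Stan{n,k-1},\Stan{n,k+1}) = 0$, \cref{lem:indecExtension} forces the unique non-split extension of $\Stan{n,k+1}$ by $\Stan{n,k-1}$ to be $\Proj{n,k+1}$, matching \eqref{eq:ProjExact} since $(k+1)^- = k-1$ for critical $k$. However, the step you defer to ``a Frobenius-reciprocity or direct diagrammatic count''---that $\dim\Hom(\Res{\Stan{n+1,k}},\Stan{n,k-1})$ equals $1$ rather than $2$---is the entire substance of (iii), and neither suggested route closes it as stated. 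Frobenius reciprocity in the form $\Hom_{\Alg{n}}(\Res{M},N) \cong \Hom_{\Alg{n+1}}(M,N')$, where $N'$ is the coinduced module, requires control of coinduction, which need not agree with induction here; the other form, $\Hom_{\Alg{n}}(N,\Res{M}) \cong \Hom_{\Alg{n+1}}(\Ind{N},M)$, would require knowing $\Ind{\Stan{n,k+1}}$, which you obtain only via (iv) and only for the index $k$ rather than $k+1$, making the argument circular. A concrete non-splitting argument---for instance a diagrammatic calculation showing that the defect-at-$(n+1)$ submodule admits no $\Alg{n}$-module complement, or an appeal to the behaviour of the cellular bilinear form of $\Stan{n+1,k}$ under restriction---is actually needed here and is not supplied by your sketch.
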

\noindent Again, if one of the indices of the direct summands does not belong to $\Lambda_n$, then this summand is understood to be $0$.  We remark that the submodule $\Stan{n,k}$ of $\Res{\Stan{n+1,k}}$, for $\Alg{n} = \dtl{n}$, is always a direct summand, because its parity differs from that of $\Stan{n,k-1}$ and $\Stan{n,k+1}$.  This proposition appeared in \cite{RSA} for $\tl{n}$ and in \cite{BSA} for $\dtl{n}$ (see also \cite{WesRep95} where a part of the proposition was first stated). Analogous results were proved in \cite{BSA} for the restriction and induction of the irreducibles of $\dtl{n}$. The proofs can be extended straightforwardly to $\tl{n}$.
\begin{proposition}\label{prop:IrreRest}
Suppose that $k \in \Lambda_{n+1,0}$.
\begin{enumerate}
\item If $\Alg{n}=\dtl{n}$ and $\Radi{n+1,k} \neq 0$, then
\begin{equation} \label{eq:IrreRest}
\Res{\Irre{n+1,k}} \simeq
\begin{cases*}
\Irre{n,k-1} \oplus \Irre{n,k}, & if \(k+1\) is critical, \\
\Irre{n,k-1} \oplus \Irre{n,k} \oplus \Irre{n,k+1}, & otherwise.
\end{cases*}
\end{equation}
If, moreover, $\Radi{n-1,k} \neq 0$, then $\Ind{\Irre{n-1,k}}\simeq\Res{\Irre{n+1,k}}$.
\item If $\Alg{n}=\tl{n}$, then the same statements hold if one deletes the $\Irre{n,k}$ appearing on the right-hand side of \eqref{eq:IrreRest}.
\end{enumerate}
\end{proposition}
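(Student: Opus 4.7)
The approach is to apply the exact functor of restriction to the short exact sequence
\[
\dses{\Irre{n+1,k^+}}{}{\Stan{n+1,k}}{}{\Irre{n+1,k}}
\]
from \cref{prop:Standard}\ref{it:RadIrr}, which holds for all non-critical $k\in\Lambda_{n+1,0}$ with $\Radi{n+1,k}\neq 0$, and to proceed by descending induction on the position of $k$ in its orbit, working inward from $k_R$. Exactness of restriction yields
\[
\dses{\Res{\Irre{n+1,k^+}}}{}{\Res{\Stan{n+1,k}}}{}{\Res{\Irre{n+1,k}}},
\]
exhibiting $\Res{\Irre{n+1,k}}$ as the cokernel of a known embedding. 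The middle term is resolved by \cref{prop:StanRest}\ref{it:SResSplit}: in the dilute case it splits as $\Stan{n,k-1}\oplus\Stan{n,k}\oplus\Stan{n,k+1}$, while in the Temperley-Lieb case the central summand is absent since $k\notin\Lambda_n$ for parity reasons. Summands whose labels lie outside $\Lambda_n$ are taken to be zero.

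The base case is $k=k_R^-$, for which $k^+=k_R$ and $\Irre{n+1,k_R}\simeq\Stan{n+1,k_R}$ by \cref{prop:Standard}\ref{it:RadIrr}, so that $\Res{\Irre{n+1,k_R}}$ is already described by \cref{prop:StanRest}. For the inductive step, the induction hypothesis identifies $\Res{\Irre{n+1,k^+}}$ with the prescribed direct sum of irreducibles, and each irreducible summand $\Irre{n,j}$ is recognised, via \cref{prop:Standard}\ref{it:RadIrr}, as the radical $\Radi{n,j^-}\simeq\Irre{n,j}$ of a unique standard summand $\Stan{n,j^-}$ of $\Res{\Stan{n+1,k}}$. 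Quotienting therefore replaces that standard by its irreducible head, leaving a direct sum of irreducibles with labels $k-1$, $k$ and $k+1$ (omitting those outside $\Lambda_n$; when $k+1$ is critical, the would-be summand $\Irre{n,k+1}$ coincides with a term already accounted for in $\Res{\Irre{n+1,k^+}}$ and is absent from the cokernel, giving exactly the two-summand form of the formula).

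The direct-sum decomposition survives the quotient because the standard summands $\Stan{n,k-1}$, $\Stan{n,k}$, $\Stan{n,k+1}$ lie in pairwise distinct orbits of $\Lambda_n$: parity separates $\Stan{n,k}$ from the other two (in the dilute case), while $\Stan{n,k-1}$ and $\Stan{n,k+1}$ cannot share an orbit, since reflecting $k-1$ to $k+1$ about any critical integer would force $k$ itself to be critical. The remarks following \cref{prop:StComplete} then show that $\Hom$ and $\Ext$ vanish between composition factors of different summands, forcing both the embedding of $\Res{\Irre{n+1,k^+}}$ and the splitting of the cokernel to respect this direct-sum structure.

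For the second statement, $\Ind{\Irre{n-1,k}}\simeq\Res{\Irre{n+1,k}}$, the analogous argument is applied to the exact sequence for $\Stan{n-1,k}$, using \cref{prop:StanRest}\ref{it:SResInd} to identify $\Ind{\Stan{n-1,k}}\simeq\Res{\Stan{n+1,k}}$ (outside the isolated $\tl 3$, $\beta=0$ exception, handled separately), so the middle terms of the two short exact sequences agree. The same orbit-separation argument forces the two cokernels to coincide. The main obstacle throughout is the sheer volume of case analysis required for the boundary and degenerate configurations: critical $k\pm 1$, labels falling outside $\Lambda_n$, the parity restriction in the $\tl n$ case, the exceptions of \cref{prop:StanRest}\ref{it:SResCrit} and \ref{it:SResInd}, and confirming exactness of induction where invoked. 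None of these cases introduces a new idea, but together they make up most of the write-up.
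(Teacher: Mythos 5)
The paper does not prove this proposition at all: the paragraph preceding it attributes the $\dtl{n}$ case to \cite{BSA} and says the $\tl{n}$ extension is ``straightforward''.  So there is no internal argument to compare against, and your task was to supply one.

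Your treatment of the \emph{restriction} half is essentially correct and is the natural approach.  Restriction is exact, so restricting the defining sequence $\ses{\Irre{n+1,k^+}}{\Stan{n+1,k}}{\Irre{n+1,k}}$ gives another short exact sequence, the middle term is resolved by \cref{prop:StanRest}\ref{it:SResSplit}, and the parity/$F_n$-eigenvalue argument (exactly as in the proof of \cref{prop:ProjRest}) shows that the embedding $\Res{\Irre{n+1,k^+}}\ira\Res{\Stan{n+1,k}}$ decomposes and that each irreducible summand maps onto the radical of (or onto the whole of) the appropriate standard summand, which forces the advertised cokernel.  The descending induction from $k=k_R^-$ is the right scaffold, and you are correct that the remaining work is bookkeeping: the critical $k\pm 1$ cases dovetail with the critical $k^+\mp 1$ cases appearing in $\Res{\Irre{n+1,k^+}}$, and summands whose labels leave $\Lambda_{n}$ drop out.

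The \emph{induction} half, however, has a genuine gap that you have mislabelled as a boundary check.  The sequence $\rses{\Ind{\Radi{n-1,k}}}{\Ind{\Stan{n-1,k}}}{\Ind{\Irre{n-1,k}}}$ is only right exact, and the paper expressly exhibits, at the end of \cref{sub:restriction}, a failure of left-exactness for a sequence of \emph{precisely this shape} (with $k^+$ near the top of its orbit — the same regime that your descending induction enters at its base case).  Your ``same orbit-separation argument'' shows that $\Ind{\iota}$, where $\iota\colon\Radi{n-1,k}\ira\Stan{n-1,k}$, respects the direct-sum decomposition by parity and $F_n$-eigenvalue.  It does \emph{not} show that the component $\Irre{n,j}\to\Stan{n,j'}$ of $\Ind{\iota}$ is non-zero on each summand, which is exactly what you need for the image to be $\Radi{n,j'}$ rather than $0$ and hence for the cokernel to agree with $\Res{\Irre{n+1,k}}$.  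Put differently, knowing $\Ind{\Stan{n-1,k}}\cong\Res{\Stan{n+1,k}}$ and $\Ind{\Radi{n-1,k}}\cong\Res{\Radi{n+1,k}}$ does not determine $\Ind{\Irre{n-1,k}}$ unless you also know the map.  Closing this gap requires an actual computation: either the $\Tor$-group machinery of \cref{sec:CBTTor}, applied to show that the relevant $\Tor(\Res{\Alg{n}},\Irre{n-1,k})$ vanishes or that the connecting homomorphism into $\Ind{\Radi{n-1,k}}$ is zero, or a separate argument (e.g.\ by Frobenius reciprocity applied to specific $\Hom$-groups, or by appealing to the explicit constructions in \cite{RSA} and \cite{BSA}).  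This is where the hypothesis $\Radi{n-1,k}\neq 0$ must do real work, and it is the one place where a genuinely new idea — not just more case analysis — is needed.
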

\noindent Of course, $\Radi{n \pm 1,k} = 0$ implies that $\Irre{n \pm 1,k} = \Stan{n \pm 1,k}$, so these results were already given in \cref{prop:StanRest}.  This proposition also gives the behaviour of the critical principal indecomposables under restriction and induction, because then $\Proj{n,k}\simeq\Stan{n,k}$. The non-critical cases are rather more delicate and are covered by the following result.
%
% restriction of projectives
%
\begin{proposition}\label{prop:ProjRest}
\leavevmode
\begin{enumerate}
\item For non-critical $k \in \Lambda_{n+1,0}$, the principal indecomposables satisfy
\begin{equation}\label{eq:ProjRest}
\Res{\Proj{n+1,k}}\simeq
\begin{cases*}
\Proj{n,k-1}\oplus\Proj{n,k+1}, & if \(\Alg{n}=\tl{n}\), \\
\Proj{n,k-1}\oplus\Proj{n,k}\oplus\Proj{n,k+1}, & if \(\Alg{n}=\dtl{n}\).
\end{cases*}
\end{equation}
Here, any modules with indices not in $\Lambda_{n,0}$ are not set immediately to zero. We first make the following corrections to the right-hand side of the above formula:
\begin{itemize}
\item If $k+1$ is critical, then $\Proj{n,k+1}$ is replaced by $\Proj{n,k+1}\oplus \Proj{n,k^--1} \cong \Stan{n,k+1}\oplus \Stan{n,k^--1}$.
\item If $k-1$ is critical, then $\Proj{n,k-1}$ is replaced by $2 \: \Proj{n,k-1} \cong 2 \: \Stan{n,k-1}$.
\item Any remaining $\Proj{n,k'}$ with $k'>n$ non-critical is replaced by $\Stan{n,k'^{-}}$.
\end{itemize}
Now, any modules with indices not in $\Lambda_{n,0}$ are set to zero. \label{it:ProjRest}
\item In all cases (with $n\ge 2$), $\Ind{\Proj{n-1,k}}\simeq\Res{\Proj{n+1,k}}$. \label{it:ProjRestInd}
\end{enumerate}
\end{proposition}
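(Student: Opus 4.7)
The plan is to prove part (i) by restricting the defining short exact sequence of $\Proj{n+1,k}$ from \cref{prop:Proj} and decomposing the outer terms via \cref{prop:StanRest}, then re-assembling the resulting standard $\Alg{n}$-modules into projective indecomposables; part (ii) will then follow from the analogous calculation for the induction of $\Proj{n-1,k}$. Specifically, restriction is exact, so applying the restriction functor to
\begin{equation*}
\dses{\Stan{n+1,k^-}}{}{\Proj{n+1,k}}{}{\Stan{n+1,k}}
\end{equation*}
gives a short exact sequence whose outer terms are described by \cref{prop:StanRest}. In the generic case---where each of $k\pm 1$ and $k^-\pm 1$ is non-critical and lies in $\Lambda_{n,0}$---one obtains four (or six, for $\dtl{n}$) standard $\Alg{n}$-modules in total. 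Since the critical integers do not depend on $n$, the reflection identities $(k-1)^- = k^-+1$ and $(k+1)^- = k^--1$ hold in $\Lambda_n$, so these four standards pair naturally into the filtrations $\Stan{n,j^-} \hookrightarrow \Proj{n,j} \twoheadrightarrow \Stan{n,j}$ of the projective indecomposables $\Proj{n,k-1}$ and $\Proj{n,k+1}$ (plus $\Proj{n,k}$ for $\dtl{n}$). To upgrade this composition-factor tally into a module isomorphism, it suffices to verify that $\Res{\Proj{n+1,k}}$ is itself projective: its decomposition into indecomposable projectives is then uniquely determined by the composition factors. Projectivity of the restriction follows from $\Alg{n+1}$ being projective as a right $\Alg{n}$-module, a standard consequence of the cellular structure of the Temperley-Lieb tower.

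The exceptional cases---when a neighbor of $k$ or $k^-$ is critical, or lies outside $\Lambda_{n,0}$---are handled by substituting the modified decompositions from items (iii) and (iv) of \cref{prop:StanRest} (together with the vanishing conventions) into the outer terms, and repeating the pairing argument. In each case, the stated correction (for instance, replacing $\Proj{n,k+1}$ by $\Proj{n,k+1}\oplus\Proj{n,k^--1}$ when $k+1$ is critical, or a non-existent $\Proj{n,k'}$ with $k'>n$ by $\Stan{n,k'^{-}}$) is exactly what the revised pairing produces. For part (ii), induction is also exact for these algebra inclusions, so applying the induction functor to the analogous sequence for $\Proj{n-1,k}$ yields an exact sequence whose outer terms coincide with those of the restricted sequence by item (iv) of \cref{prop:StanRest}. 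Since $\Ind{\Proj{n-1,k}}$ is automatically projective (the induction of a projective being projective), the same identify-by-composition-factors argument applies, yielding $\Ind{\Proj{n-1,k}}\simeq\Res{\Proj{n+1,k}}$.

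The principal obstacle is combinatorial rather than conceptual: one must enumerate the boundary and criticality cases arising when $k$ is close to $0$ or $n+1$, when a neighbor of $k$ or $k^-$ is critical, or when the low-rank $\beta=0$ exception to item (iv) of \cref{prop:StanRest} is in effect, and verify that the pairing argument reproduces exactly the correction stated in the proposition. Once this case analysis is organised, the projectivity-plus-composition-factors identification of the indecomposable summands runs uniformly across all cases.
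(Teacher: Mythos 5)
Your proposed argument hinges on two claims that the paper itself shows to be false for these algebras. First, the assertion that $\Alg{n+1}$ is projective as an $\Alg{n}$-module (so that $\Res{\Proj{n+1,k}}$ is automatically projective) is contradicted by the explicit computation in \cref{sec:CBTTor}: equation \eqref{eq:AlgeRest} shows that $\Res{\Alg{n+1}}$ has a non-projective summand whenever $\Alg{n+1}$ is not semisimple (and, for $\tl{n}$, $n+2$ is not critical). Concretely, $\Res{\Proj{n+1,k}}$ fails to be projective precisely for $k \geq n$, which is why the third replacement rule of the statement produces \emph{standard} modules $\Stan{n,k'^{-}}$ rather than projectives. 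Your ``identify-by-composition-factors'' step therefore cannot run uniformly; it would, at best, handle the generic cases $k<n$, and even then you would need to justify that the Cartan matrix is invertible (true for cellular algebras, but not stated). Second, your argument for part \ref{it:ProjRestInd} invokes exactness of the induction functor, but the end of \cref{sub:restriction} gives an explicit example showing $\Ind{}$ fails to be left-exact. Right-exactness alone only bounds the composition factors of $\Ind{\Proj{n-1,k}}$ from above, so the comparison does not close.

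The paper's proof avoids both of these pitfalls. It never asserts global projectivity of the restriction: instead it uses the central element $F_n$ (and the parity idempotents for $\dtl{}$) to split $\Res{\Proj{n+1,k}}$ into summands $\Mod{M}_{-1}\oplus\Mod{M}_0\oplus\Mod{M}_{+1}$, handles the exceptional/replacement cases directly from the decomposed short exact sequences \eqref{es:ProjRest}, and in the remaining cases sets up a map of short exact sequences from the projective presentation of $\Proj{n,k+i}$ to the $\Mod{M}_i$-sequence. The crux is then a Hom-group computation (\cref{prop:HomSkSk-,prop:HomSP} together with Frobenius reciprocity) that rules out the degenerate case, so the diagram forces $\Mod{M}_i \cong \Proj{n,k+i}$ without ever knowing in advance that $\Res{\Proj{n+1,k}}$ is projective. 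If you wish to salvage your approach, you would need to prove projectivity of the relevant summands as an output of the argument, not assume it as input, and you would need a separate argument for part \ref{it:ProjRestInd} that does not rely on left-exactness of induction.
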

\begin{proof}
We work with $\dtl{n}$ for definiteness, the argument for $\tl{n}$ being identical after removing modules whose indices are not in $\Lambda_n$.  We will moreover omit the routine checks that nothing untoward happens for $k = k_L$, $k_R$, as long as $k+1 \le n$.

Since restriction is an exact functor, \eqref{eq:ProjExact} gives the exactness of
\begin{equation}
\dses{\Res{\Stan{n+1,k^-}}}{}{\Res{\Proj{n+1,k}}}{}{\Res{\Stan{n+1,k}}},
\end{equation}
hence, by \eqref{eq:StanExact}, that of
\begin{equation}\label{eq:star}\dses{\Stan{n,k^--1}\oplus\Stan{n,k^-}\oplus\Stan{n,k^-+1}}{}{\Res{\Proj{n+1,k}}}{}{\Stan{n,k-1}\oplus\Stan{n,k}\oplus\Stan{n,k+1}}.
\end{equation}
Considerations of parity and $F_n$-eigenvalues now force $\Res{\Proj{n+1,k}}$ to decompose as ${\Mod{M}}_{-1}\oplus {\Mod{M}}_{0}\oplus {\Mod{M}}_{+1}$, where $F_n$ has eigenvalue $\delta_{k+i}$ on ${\Mod{M}}_i$, $i\in\{-1,0,+1\}$. The exact sequence \eqref{eq:star} therefore decomposes into three exact sequences:
\begin{equation} \label{es:ProjRest}
\dses{\Stan{n,k^--i}}{}{{\Mod{M}}_{i}}{}{\Stan{n,k+i}},\qquad i \in \set{-1,0,+1}.
\end{equation}
Note that $(k+i)^-=k^--i$ since neighbouring elements of an orbit are obtained from one another by reflection (see the beginning of \cref{sub:basics}).  The goal is to prove that ${\Mod{M}}_i\simeq \Proj{n,k+i}$, for each $i\in\{-1,0,+1\}$, taking into account the replacements noted in the statement of the proposition. These replacements are easily dealt with:  Let $i = \pm 1$ and assume that $k+i$ is critical with $\ell\neq 2$. Then, the exact sequence \eqref{es:ProjRest} shows that $\Mod{M}_i$ is $\Stan{n,k+1}\oplus \Stan{n,k^--1}$, if $i=+1$, and a direct sum of two copies of the projective $\Stan{n,k-1} = \Stan{n,k^-+1}$, if $i=-1$. Furthermore, if $k+i > n$, then $\Stan{n,k+i} \simeq 0$ and \eqref{eq:star} tells us that $\Res{\Proj{n+1,k}}$ contains $\Stan{n-1,k^--i}$. In the case $\ell=2$, the four summands $\Stan{n,k^--1}$, $\Stan{n,k^-+1}\simeq \Stan{n,k-1}$ and $\Stan{n,k+1}$ of \eqref{eq:star} are all projective and they will thus be direct summands of ${\Res{\Proj{n+1,k}}}$.  This takes care of the replacements.  From now on, we will therefore fix $i \in \set{-1,0,+1}$ and assume that $k+i\le n$ is not critical.

Consider the following diagram in which the rows are exact by \eqref{eq:ProjExact} and \eqref{es:ProjRest}:
\begin{equation}
\begin{tikzpicture}[baseline={(current bounding box.center)},scale=1/3]
\node (t1) at (5,5) [] {\(0\)};
\node (t2) at (10,5) [] {\(\Stan{n,k^--i}\)};
\node (t3) at (15,5) [] {\(\Proj{n,k+i}\)};
\node (t4) at (20,5) [] {\(\Stan{n,k+i}\)};
\node (t5) at (25,5) [] {\(0\)\phantom{.}};
\node (b1) at (5,0) [] {\(0\)};
\node (b2) at (10,0) [] {\(\Stan{n,k^--i}\)};
\node (b3) at (15,0) [] {\(\Mod{M}_i\)};
\node (b4) at (20,0) [] {\(\Stan{n,k+i}\)};
\node (b5) at (25,0) [] {\(0\).};
\draw [->] (t1) -- (t2);
\draw [->] (t2) to node [above] {\(\scriptstyle \alpha\)} (t3);
\draw [->] (t3) to node [above] {\(\scriptstyle \beta\)}  (t4);
\draw [->] (t4) -- (t5);
\draw [->] (b1) -- (b2);
\draw [->] (b2) to node [above] {\(\scriptstyle \gamma\)}  (b3);
\draw [->] (b3) to node [above] {\(\scriptstyle \delta\)}  (b4);
\draw [->] (b4) -- (b5);
\draw [dashed,->] (t2) to node [left] {\(\scriptstyle g\)} (b2);
\draw [dashed,->] (t3) to node [left] {\(\scriptstyle f\)} (b3);
\draw [->] (t4) to node [right] {\(\scriptstyle \id\)} (b4);
\draw [dashed,->] (t4) to node [left] {\(\scriptstyle h\)} (b3);
\end{tikzpicture}
\end{equation}
A map $f$ must exist making the right square commute, $\delta f=\beta$, as $\Proj{n,k+i}$ is projective.  Moreover, $f$ then satisfies $\delta f\alpha=\beta\alpha=0$. As $\im f\alpha \subseteq \ker \delta = \im \gamma$ and $\gamma$ is injective, one may construct a unique homomorphism $g$ that also makes the left square commute. Because of \cref{prop:Standard}\ref{it:StanIndec}, the map $g$ is either $0$ or an isomorphism. If it is the latter, then the short five lemma implies that $f$ is also an isomorphism, hence that $\Mod{M}_i\simeq \Proj{n,k+i}$ as desired. The proof of \ref{it:ProjRest} will thus be complete if the case $g=0$ can be ruled out.

Suppose then that $g=0$. Because $0=\gamma g=f\alpha$, we have $\ker f\supseteq\im \alpha=\ker\beta$, so there exists a homomorphism $h$ such that $h\beta=f$. The commutativity of the right square then gives $\delta h\beta=\delta f=\beta$, so that $\delta h$ acts as the identity on $\im \beta = \Stan{n,k+i}$.  Therefore, $h$ splits the bottom row so that $\Mod{M}_i\cong\Stan{n,k^--i}\oplus\Stan{n,k+i}$.  It follows that
\begin{equation} \label{eq:ProjRestCon}
\Hom(\Stan{n,k+i},\Mod{M}_i)\simeq\Hom(\Stan{n,k+i},\Stan{n,k^--i})\oplus\Hom(\Stan{n,k+i},\Stan{n,k+i})=\mathbb C\oplus \mathbb C,
\end{equation}
the first Hom-group following from \cref{prop:HomSkSk-} and the second following from \cref{prop:Standard}\ref{it:StanIndec}.  However,
\begin{align}
\Hom(\Stan{n,k+i},{\Mod{M}}_i)&\overset{(1)}\simeq\Hom(\Stan{n,k+i},{\Mod{M}}_{-1}\oplus{\Mod{M}}_0\oplus{\Mod{M}}_1)\simeq\Hom(\Stan{n,k+i},\Res{\Proj{n+1,k}})\notag\\
&\overset{(2)}\simeq\Hom(\Ind{\Stan{n,k+i}},{\Proj{n+1,k}})\simeq\Hom(\Stan{n+1,k+i-1}\oplus\Stan{n+1,k+i}\oplus\Stan{n+1,k+i+1},{\Proj{n+1,k}})\notag\\
&\overset{(3)}\simeq \Hom(\Stan{n+1,k},{\Proj{n+1,k}}),
\end{align}
where, for $(1)$, adding the two other modules does not change the Hom-group as their parities or $F_{n+1}$-eigenvalues are different, $(2)$ is Frobenius reciprocity, and $(3)$ again follows from parity and $F_n$-eigenvalue considerations. But, \cref{prop:HomSP} gives $\Hom(\Stan{n,k+i},{\Mod{M}}_i)\simeq\Hom(\Stan{n+1,k},{\Proj{n+1,k}})\simeq\mathbb C$, contradicting \eqref{eq:ProjRestCon}.  This rules out $g=0$.

For \ref{it:ProjRestInd}, the isomorphism of $\Ind{\Proj{n-1,k}}$ and $\Res{\Proj{n+1,k}}$ follows by comparing with the induction results of \cite{RSA,BSA}.
\end{proof}
Recall that induction functors are always right-exact.  Here, we show that the functor $\Ind{}$ is not left-exact in general.  Choose $k\in \Lambda_n$ such that $k^{++} = n+1$ or $n+2$, as parity dictates, so that $\Stan{n,k^{+}} \simeq \Irre{n,k^{+}}$ and $\ses{\Stan{n,k^{+}}}{\Stan{n,k}}{\Irre{n,k}}$ is exact.  Inducing results in the following exact sequence:
\begin{equation} \label{es:IndNotExact}
\rdses{\Ind{\Stan{n,k^{+}}}}{}{\Ind{\Stan{n,k}}}{}{\Ind{\Irre{n,k}}}.
\end{equation}
By \cref{prop:StanRest}, $\Ind{\Stan{n,k^{+}}}$ has a direct summand isomorphic to $\Stan{n+1,k^{+}+1}$ which has a submodule isomorphic to $\Irre{n+1,k^{++}-1}$.  Because $k^{++} = n+1$ or $n+2$, this submodule is $\Irre{n+1,n}$ or $\Irre{n+1,n+1}$, hence is non-zero. However, the same proposition shows that $\Ind{\Stan{n,k}}$ does not have any composition factor isomorphic to $\Irre{n+1,n}$ or $\Irre{n+1,n+1}$, so the left-most map of \eqref{es:IndNotExact} cannot be an inclusion and we conclude that $\Ind{}$ is not left-exact.

%
% duality is good!
%
\subsection{Their duals} \label{sec:Dual}

Given a left module $\Mod{M}$ over $\Alg{} \equiv \Alg{n}$, the dual vector space $\Hom_{\CC} (\Mod{M},\CC)$ is naturally equipped with the structure of a right $\Alg{}$-module which we shall denote by $\dual{\Mod{M}}$.  However, the involutive antiautomorphism ${}^{\dag}$ may be used to twist this action to obtain a left module structure on $\Hom (\Mod{M},\CC)$; the corresponding twisted dual will be denoted by $\twdu{\Mod{M}}$.  Explicitly, this twisted action is
\begin{equation}
\func{\brac{uf}}{m} = \func{f}{u^{\dag} m}; \qquad u \in \Alg{}, \quad f \in \twdu{\Mod{M}}, \quad m \in \Mod{M}.
\end{equation}
We will have occasion to consider (untwisted) dual modules $\dual{\Mod{M}}$ in \cref{sec:restrictionInduction,sec:ARquiver}.  Until then, it is understood that when we speak of a dual module, it is the twisted dual that we are referring to. As above, all modules are therefore assumed to be complex finite-dimensional left modules, unless otherwise specified.

\begin{proposition} \label{prop:IrreDual}
Every irreducible $\Alg{}$-module is self-dual:  $\twdu{\Irre{k}} \cong \Irre{k}$, for all $k \in \Lambda_0$.
\end{proposition}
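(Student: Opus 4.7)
The plan is to exploit the invariant bilinear form $\bilin{\cdot}{\cdot}$ on $\Stan{k}$. By construction, $\Radi{k}$ is the left kernel of $\bilin{\cdot}{\cdot}$, and symmetry of the form implies that it is the right kernel as well. Hence $\bilin{\cdot}{\cdot}$ descends to a well-defined symmetric bilinear form on the quotient $\Irre{k} = \Stan{k}/\Radi{k}$, which I will still denote by $\bilin{\cdot}{\cdot}$. This descended form is non-degenerate by the very definition of $\Radi{k}$, and it remains invariant in the sense that $\bilin{\bar x}{u\bar y} = \bilin{u^{\dag}\bar x}{\bar y}$ for all $u \in \Alg{}$ and all $\bar x, \bar y \in \Irre{k}$.

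Next, I define a candidate isomorphism $\Phi : \Irre{k} \to \twdu{\Irre{k}}$ by $\Phi(\bar x)(\bar y) = \bilin{\bar x}{\bar y}$. Using the definition $(uf)(m) = f(u^{\dag} m)$ of the twisted left action on $\twdu{\Irre{k}}$, together with the involutivity of $\dag$, a single line of computation
\begin{equation*}
\brac{u \Phi(\bar x)}(\bar y) = \Phi(\bar x)(u^{\dag}\bar y) = \bilin{\bar x}{u^{\dag}\bar y} = \bilin{u\bar x}{\bar y} = \Phi(u\bar x)(\bar y)
\end{equation*}
shows that $\Phi$ is a homomorphism of left $\Alg{}$-modules. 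Injectivity is immediate from the non-degeneracy of the descended form, and the equality of $\CC$-dimensions $\dim_{\CC} \Irre{k} = \dim_{\CC} \twdu{\Irre{k}}$ then upgrades $\Phi$ to an isomorphism.

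There is no serious obstacle in this argument: the only genuine content is the descent of the form to $\Irre{k}$ (which is immediate from symmetry and the defining property of $\Radi{k}$) and the single compatibility check above. The hypothesis $k \in \Lambda_0$ is precisely what ensures $\Irre{k} \neq 0$, ruling out the exceptional case $\Alg{} = \tl{}$, $\beta = 0$, $k = 0$ with $n$ even, in which $\bilin{\cdot}{\cdot}$ vanishes identically on $\Stan{0}$ and for which the statement would not even make sense.
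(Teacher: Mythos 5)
Your proof is correct and follows essentially the same route as the paper: descend the invariant bilinear form to a non-degenerate form on $\Irre{k}$ and use $\bar x \mapsto \bilin{\bar x}{\cdot}$ as the isomorphism, with injectivity from non-degeneracy and the isomorphism completed by a dimension count. You spell out the descent to the quotient a bit more explicitly than the paper does, which is a minor plus; the only small quibble is that when $\Irre{0}=0$ the statement is vacuously true rather than meaningless, but the restriction to $\Lambda_0$ makes this moot.
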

\begin{proof}
By \cref{prop:Standard,prop:StComplete}, every irreducible module may be constructed as the quotient of a standard module by the radical of its invariant bilinear form.  Therefore, every irreducible module carries an invariant \emph{non-degenerate} bilinear form $\bilin{\cdot}{\cdot}$.

For each $x \in \Irre{}$, define $f_x \in \twdu{\Irre{}}$ by $\func{f_x}{y} = \bilin{x}{y}$.  Then, $x \mapsto f_x$ is a module homomorphism:
\begin{equation}
\func{\brac{u f_x}}{y} = \func{f_x}{u^{\dag} y} = \bilin{x}{u^{\dag} y} = \bilin{ux}{y} = \func{f_{ux}}{y}.
\end{equation}
Moreover, it is injective as $f_x = 0$ implies that $0 = \func{f_x}{y} = \bilin{x}{y}$ for all $y \in \Irre{}$, hence $x=0$ by non-degeneracy.  Since $\dim \twdu{\Irre{}} = \dim \Irre{}$, this map is an isomorphism.
\end{proof}

\begin{proposition} \label{prop:DoubleDual}
Duality is reflexive:  $\Mod{M} \cong \twdu{(\twdu{\Mod{M}})}$.
\end{proposition}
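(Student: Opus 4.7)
The plan is to produce an explicit natural isomorphism, namely the classical ``evaluation map'' from linear algebra, and to check that the twisted action is precisely what is needed to make it a module homomorphism. Concretely, I would define
\begin{equation*}
\phi : \Mod{M} \lra \twdu{(\twdu{\Mod{M}})}, \qquad \phi(m)(f) = f(m), \quad m \in \Mod{M},\ f \in \twdu{\Mod{M}}.
\end{equation*}
This is a well-defined $\CC$-linear map because $\Mod{M}$ is finite-dimensional, and the usual argument shows that the canonical evaluation $V \to V^{**}$ is a $\CC$-linear isomorphism of vector spaces.

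The main content is to verify that $\phi$ intertwines the $\Alg{}$-actions. Unpacking the definitions, the left action on $\twdu{(\twdu{\Mod{M}})}$ satisfies $(u \Phi)(f) = \Phi(u^{\dag} f)$ for $\Phi \in \twdu{(\twdu{\Mod{M}})}$, $f \in \twdu{\Mod{M}}$, and the left action on $\twdu{\Mod{M}}$ satisfies $(u^{\dag} f)(m) = f((u^{\dag})^{\dag} m)$. Since ${}^{\dag}$ is involutive, $(u^{\dag})^{\dag} = u$, and therefore
\begin{equation*}
\bigl(u \phi(m)\bigr)(f) = \phi(m)(u^{\dag} f) = (u^{\dag} f)(m) = f(u m) = \phi(u m)(f),
\end{equation*}
for all $u \in \Alg{}$, $m \in \Mod{M}$ and $f \in \twdu{\Mod{M}}$, so $u \phi(m) = \phi(u m)$. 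Thus $\phi$ is an $\Alg{}$-module homomorphism.

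Finally, $\phi$ is injective because if $\phi(m) = 0$, then $f(m) = 0$ for every $f \in \twdu{\Mod{M}} = \Hom_{\CC}(\Mod{M},\CC)$; since $\Mod{M}$ is finite-dimensional, this forces $m = 0$. Equality of dimensions, $\dim \twdu{(\twdu{\Mod{M}})} = \dim \twdu{\Mod{M}} = \dim \Mod{M}$, then promotes $\phi$ to an isomorphism. There is no serious obstacle here; the only subtlety worth flagging is the involutivity of ${}^{\dag}$, which is exactly what turns the canonical evaluation map into a module homomorphism in the twisted setting (whereas without the twist one would only get an isomorphism of left modules between $\Mod{M}$ and the double of the right-dual).
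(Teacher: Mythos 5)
Your proof is correct and is essentially identical to the paper's: both define the canonical evaluation map $\phi(m)(f)=f(m)$, verify it is an $\Alg{}$-module homomorphism by unwinding the twisted action and using ${}^{\dag\dag}=\id$, then conclude by injectivity and equality of dimensions. The only cosmetic difference is that you explicitly highlight why the involutivity of ${}^{\dag}$ is the key ingredient, which the paper leaves implicit.
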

\begin{proof}
The proof is quite similar.  Define the map $x \in \Mod{M} \mapsto \phi_x \in \twdu{(\twdu{\Mod{M}})}$ by $\func{\phi_x}{f} = \func{f}{x}$, for all $f \in \twdu{\Mod{M}}$.  This is a module homomorphism,
\begin{equation}
\func{\brac{u \phi_x}}{f} = \func{\phi_x}{u^{\dag} f} = \func{\brac{u^{\dag} f}}{x} = \func{f}{u^{\dag \dag} x} = \func{f}{ux} = \func{\phi_{ux}}{f},
\end{equation}
and it is injective,
\begin{equation}
\phi_x = 0 \quad \Rightarrow \quad \func{\phi_x}{f} = 0 \quad \text{for all \(f \in \twdu{\Mod{M}}\)} \quad \Rightarrow \quad \func{f}{x} = 0 \quad \text{for all \(f \in \twdu{\Mod{M}}\)} \quad \Rightarrow \quad x=0,
\end{equation}
hence it is an isomorphism for dimensional reasons.
\end{proof}

\begin{proposition} \label{prop:DualExact}
Duality is an exact contravariant functor:  The sequence
\begin{equation} \label{es:Generic}
\dses{\Mod{L}}{\iota}{\Mod{M}}{\pi}{\Mod{N}}
\end{equation}
is exact if and only if the sequence
\begin{equation}
\dses{\twdu{\Mod{N}}}{\twdu{\pi}}{\twdu{\Mod{M}}}{\twdu{\iota}}{\twdu{\Mod{L}}}
\end{equation}
is exact, where $\twdu{\iota}$ and $\twdu{\pi}$ are defined by
\begin{equation}
\func{\func{\twdu{\iota}}{g}}{\ell} = \func{g}{\func{\iota}{\ell}}, \quad \text{for all \(\ell \in \Mod{L}\) and \(g \in \twdu{\Mod{M}}\);} \qquad
\func{\func{\twdu{\pi}}{h}}{m} = \func{h}{\func{\pi}{m}}, \quad \text{for all \(m \in \Mod{M}\) and \(h \in \twdu{\Mod{N}}\).}
\end{equation}
\end{proposition}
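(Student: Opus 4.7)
The plan is to reduce the claim to the standard fact that linear duality $\Hom_{\CC}(\blank,\CC)$ is an exact contravariant functor on finite-dimensional $\CC$-vector spaces, and then verify that the twist by the antiautomorphism $^\dag$ interacts correctly with the module structure. Since the twisted dual $\twdu{\Mod{M}}$ and the untwisted dual $\dual{\Mod{M}}$ coincide as vector spaces, and since $\twdu{\iota}$ and $\twdu{\pi}$ are \emph{the same linear maps} as the ordinary linear duals of $\iota$ and $\pi$, the exactness of the dualised sequence at the $\CC$-linear level is automatic once we check that these linear maps really are $\Alg{}$-homomorphisms.

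First, I would verify that $\twdu{\iota}$ and $\twdu{\pi}$ are $\Alg{}$-module maps. For $\twdu{\iota}$, using the $\Alg{}$-linearity of $\iota$ and the defining relation of $^\dag$, a direct computation gives
\begin{equation*}
\tfunc{\twdu{\iota}(ug)}{\ell} = \tfunc{ug}{\iota(\ell)} = \tfunc{g}{u^\dag \iota(\ell)} = \tfunc{g}{\iota(u^\dag \ell)} = \tfunc{\twdu{\iota}(g)}{u^\dag \ell} = \tfunc{u \cdot \twdu{\iota}(g)}{\ell},
\end{equation*}
so $\twdu{\iota}(ug) = u \cdot \twdu{\iota}(g)$, and the same style of manipulation handles $\twdu{\pi}$. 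For the forward (``only if'') direction, I would then invoke exactness of $\Hom_{\CC}(\blank,\CC)$ on finite-dimensional vector spaces: from \eqref{es:Generic} one deduces that $\twdu{\pi}$ is injective (since $\pi$ is surjective), that $\twdu{\iota}$ is surjective (since $\iota$ is injective), and that $\im \twdu{\pi} = \ker \twdu{\iota}$. Combined with the previous step, this exhibits $\dses{\twdu{\Mod{N}}}{\twdu{\pi}}{\twdu{\Mod{M}}}{\twdu{\iota}}{\twdu{\Mod{L}}}$ as an exact sequence of $\Alg{}$-modules.

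For the converse (``if'') direction, I would apply \cref{prop:DoubleDual} together with the naturality of the isomorphism $x \mapsto \phi_x$ defined by $\phi_x(f) = f(x)$; a quick check shows $\twdu{(\twdu{f})} \circ \phi = \phi \circ f$ for any homomorphism $f$. Consequently, dualising the dualised sequence returns \eqref{es:Generic} up to the natural isomorphisms $\Mod{L} \cong \twdu{(\twdu{\Mod{L}})}$, $\Mod{M} \cong \twdu{(\twdu{\Mod{M}})}$, $\Mod{N} \cong \twdu{(\twdu{\Mod{N}})}$, and exactness of \eqref{es:Generic} follows from the forward direction applied to the dualised sequence. No step presents a genuine obstacle; the only point requiring mild care is the sign and order conventions in checking that the twist by $^\dag$ produces a left action, for which the involutivity $^{\dag\dag} = \id$ is exactly what is needed.
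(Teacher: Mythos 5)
Your proof is correct and follows essentially the same strategy as the paper: verify $\Alg{}$-linearity of $\twdu{\iota}$ and $\twdu{\pi}$, establish exactness of the dualised sequence, and deduce the converse from reflexivity (\cref{prop:DoubleDual}). The only stylistic difference is that you delegate the exactness of $\Hom_{\CC}(\blank,\CC)$ on finite-dimensional vector spaces to a known fact, whereas the paper reproves it from scratch (injectivity of $\twdu{\pi}$, $\ker\twdu{\iota}=\im\twdu{\pi}$, and surjectivity of $\twdu{\iota}$ via rank-nullity); the underlying computations are identical, so this amounts to the same argument packaged more economically.
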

\begin{proof}
Assume that \eqref{es:Generic} is exact.  We first check that $\twdu{\pi}$ is a module homomorphism.  This follows because
\begin{equation}
\func{\brac{u \func{\twdu{\pi}}{h}}}{m} = \func{\func{\twdu{\pi}}{h}}{u^{\dag} m} = \func{h}{\func{\pi}{u^{\dag} m}} = \func{h}{u^{\dag} \func{\pi}{m}} = \func{\brac{uh}}{\func{\pi}{m}} = \func{\func{\twdu{\pi}}{uh}}{m},
\end{equation}
for all $m \in \Mod{M}$ and $h \in \twdu{\Mod{N}}$.  The check for $\twdu{\iota}$ is similar.

Second, we prove that $\twdu{\pi}$ is injective.  Take $h \in \ker \twdu{\pi}$, so that $0 = \func{\func{\twdu{\pi}}{h}}{m} = \func{h}{\func{\pi}{m}}$ for all $m \in \Mod{M}$.  This implies that $h=0$ because $\pi$ is a surjection.

Third, we show that $\ker \twdu{\iota} = \im \twdu{\pi}$.  Exactness gives $\func{\func{\twdu{\iota}}{\func{\twdu{\pi}}{h}}}{\ell} = \func{\func{\twdu{\pi}}{h}}{\func{\iota}{\ell}} = \func{h}{\func{\pi}{\func{\iota}{\ell}}} = 0$, for all $\ell \in \Mod{L}$ and $h \in \twdu{\Mod{N}}$.  Thus, $\im \twdu{\pi} \subseteq \ker \twdu{\iota}$.  To prove the reverse inclusion, we take $g \in \ker \twdu{\iota}$, so that $0 = \func{\func{\twdu{\iota}}{g}}{\ell} = \func{g}{\func{\iota}{\ell}}$ for all $\ell \in \Mod{L}$.  Thus, $g$ annihilates $\im \iota = \ker \pi$.  Now, define a functional $h \in \twdu{\Mod{N}}$ by $\func{h}{n} = \func{g}{m}$, where $\func{\pi}{m} = n$.  This is well defined, because $\func{\pi}{m} = \func{\pi}{m'}$ implies that $m-m' \in \ker \pi$, hence that $\func{g}{m-m'} = 0$, thus $\func{g}{m} = \func{g}{m'}$.  But, $\func{\func{\twdu{\pi}}{h}}{m} = \func{h}{\func{\pi}{m}} = \func{g}{m}$ for all $m \in \Mod{M}$, so we see that $g \in \im \twdu{\pi}$.  This proves that $\ker \twdu{\iota} \subseteq \im \twdu{\pi}$.

Last, $\twdu{\iota}$ is surjective by the rank-nullity theorem of linear algebra:
\begin{align}
\dim \im \twdu{\iota} &= \dim \twdu{\Mod{M}} - \dim \ker \twdu{\iota} = \dim \Mod{M} - \dim \im \twdu{\pi} = \dim \Mod{M} - \dim \twdu{\Mod{N}} = \dim \Mod{M} - \dim \Mod{N} = \dim \Mod{L} \notag \\
&= \dim \twdu{\Mod{L}}.
\end{align}
This completes the proof, \cref{prop:DoubleDual} providing the converse.
\end{proof}
\noindent Indeed, taking (twisted) duals defines an autoequivalence of the category of finite-dimensional $\Alg{}$-modules.  Such equivalences are always exact.

\medskip

We will refer to the duals of the standard modules as \emph{costandard modules}, denoting them by $\Cost{k} = \twdu{\Stan{k}}$. From \cref{prop:IrreDual,prop:DualExact}, we learn that the dual of the exact sequence \eqref{eq:StanExact} is
\begin{equation} \label{es:ICI}
\dses{\Irre{k}}{}{\Cost{k}}{}{\Irre{k^+}}
\end{equation}
and that this sequence is also exact and non-split, by \cref{prop:DoubleDual}.  Note that this failure to split uses the basic fact that $\twdu{(\Mod{M} \oplus \Mod{N})} \simeq \twdu{\Mod{M}} \oplus \twdu{\Mod{N}}$ as modules.  It follows that a standard module $\Stan{k}$ is self-dual if and only if it is irreducible.  The corresponding result for the principal indecomposables is as follows:
\begin{proposition} \label{prop:ProjDual}
The critical principal indecomposables are self-dual, as are the non-critical $\Proj{k}$ with $k>k_L$.  The non-critical $\Proj{k_L}$ are self-dual if and only if they are irreducible.
\end{proposition}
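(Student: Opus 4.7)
The proof plan is to split by the position of $k$ within its orbit. For critical $k$, \cref{prop:Proj} gives $\Proj{k} \cong \Irre{k}$ and self-duality follows from \cref{prop:IrreDual}. For non-critical $k = k_L$, the exact sequence of \cref{prop:Proj} degenerates to the isomorphism $\Proj{k_L} \cong \Stan{k_L}$, and the remark following \eqref{es:ICI} that standards are self-dual if and only if irreducible directly gives both directions of the stated equivalence, since $\Stan{k_L}$ is irreducible precisely when $k_L = k_R$.

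The substantive case is $k_L < k \leq k_R$ with $k$ non-critical. I would first apply the exact contravariant duality functor of \cref{prop:DualExact} to \eqref{eq:ProjExact} to obtain
\begin{equation*}
\dses{\Cost{k}}{}{\twdu{\Proj{k}}}{}{\Cost{k^-}}.
\end{equation*}
Dualizing \eqref{es:ICI} shows that $\Cost{k^-}$ has head $\Irre{(k^-)^+} = \Irre{k}$, yielding a surjection $\pi \colon \twdu{\Proj{k}} \sra \Irre{k}$ obtained by composing with the quotient to the head. By the projectivity of $\Proj{k}$, I would lift the head quotient $\Proj{k} \sra \Irre{k}$ through $\pi$ to produce a morphism $\phi \colon \Proj{k} \to \twdu{\Proj{k}}$. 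Since duality preserves dimension, it suffices to prove that $\phi$ is surjective, whereupon $\phi$ is automatically an isomorphism.

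The key to surjectivity is to show that $\soc(\Proj{k}) = \Irre{k}$ is simple, since by duality this yields $\head(\twdu{\Proj{k}}) = \Irre{k}$; then $\pi$ factors through this simple head, and Nakayama's lemma promotes the surjectivity of $\pi\phi$ to $\im\phi = \twdu{\Proj{k}}$. The submodule $\Stan{k^-} \subseteq \Proj{k}$ contributes $\soc(\Stan{k^-}) = \Irre{k}$ to the socle, and since $[\Proj{k}:\Irre{k}] = 2$ with one copy already used by the head, the multiplicity of $\Irre{k}$ in $\soc(\Proj{k})$ is exactly one. Any other simple submodule of $\Proj{k}$ cannot lie in $\Stan{k^-}$ (whose socle is just $\Irre{k}$) and hence injects into $\Stan{k}$ via the quotient map, which for $k = k_R$ (where $\Stan{k_R} \cong \Irre{k_R}$) already forces the desired simplicity. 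The main obstacle I anticipate is the case $k < k_R$, where one must rule out $\Irre{k^+} \subseteq \soc(\Proj{k})$; the plan is to apply $\Hom(\Irre{k^+}, \blank)$ to \eqref{eq:ProjExact} and show that the connecting map $\Hom(\Irre{k^+}, \Stan{k}) \to \Ext^1(\Irre{k^+}, \Stan{k^-})$ is injective, using the structure of the standard modules together with parity and $F_n$-eigenvalue constraints on the admissible extensions, so that $\Hom(\Irre{k^+}, \Proj{k}) = 0$.
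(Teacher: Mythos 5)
Your decomposition into critical, $k=k_L$, and $k>k_L$ cases is the same as the paper's, and the first two cases are handled correctly and in essentially the same way. For $k > k_L$, however, you take a genuinely different route from the paper: you attempt a direct construction of an isomorphism $\phi \colon \Proj{k} \to \twdu{\Proj{k}}$ by lifting the head quotient through the projectivity of $\Proj{k}$ and invoking Nakayama, whereas the paper proves self-duality by an induction along the orbit that exploits the fact that duality commutes with the \emph{restriction} functor, together with \cref{prop:StanRest}\ref{it:SResCrit} and \cref{prop:ProjRest}. Your route is appealing because it stays entirely within a fixed $n$; the paper's route sidesteps the need for any fine socle analysis by importing structural information from $\Alg{n+1}$.

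The difficulty is that your route has a genuine gap precisely at the socle analysis, and the proposed patch does not close it. Your Nakayama step requires $\soc \Proj{k}$ to be simple, equivalently $\Hom(\Irre{k^+},\Proj{k}) = 0$ for $k_L < k < k_R$ (your exclusion of $\Irre{k^-}$ via the quotient onto $\Stan{k}$ is fine, and $k=k_R$ is unproblematic). You propose to show the connecting map $\delta \colon \Hom(\Irre{k^+},\Stan{k}) \to \Ext(\Irre{k^+},\Stan{k^-})$ from \eqref{eq:ProjExact} is injective. But $\delta$ sends the generator of $\Hom(\Irre{k^+},\Stan{k})$ to the class of the pullback extension $\ses{\Stan{k^-}}{\rad\Proj{k}}{\Irre{k^+}}$, so showing $\delta$ injective is \emph{exactly} the statement that this extension is non-split, i.e.\ that $\Irre{k^+}$ is not a direct summand of $\rad\Proj{k}$ --- which is what you set out to prove. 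The parity and $F_n$-eigenvalue constraints you invoke are vacuous here, since all the modules involved lie in the same block. And reading $\Ext(\Irre{k^+},\Stan{k^-}) \cong \CC$ off \cref{prop:SIExte} and then arguing that $\delta$ must be non-zero is circular: that table, via the conventions in \cref{sub:extension} and the sequence \eqref{es:CPC}, presupposes the self-duality and injectivity of $\Proj{k}$. As it stands, the substantive case is unproved; to rescue your approach you would need an independent argument that $\rad\Proj{k}$ is indecomposable (or that $\soc\Proj{k}$ is simple), and the paper's restriction-based induction is one way to supply the missing input.
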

\begin{proof}
The case where $k$ is critical or where $k$ is non-critical with $k=k_L$, hence $\Proj{k} \cong \Stan{k}$, has already been dealt with.  We therefore assume that $k$ is critical and prove that $\Proj{k+i}$ is self-dual, for all $n\ge k+i$, where $i=1, \dots, \ell-1$.  The proof is by induction and we shall detail it for $\Alg{} = \dtl{}$, omitting the simple modifications required for $\tl{}$.

The base cases of this induction are $i=0$ and $1$.  The former is the critical case already dealt with, so we turn to $i=1$.  The key tool for this case, and indeed for the extension to higher $i$, is the realisation that duality commutes with restriction: $\Res{(\twdu{\Mod{M}})}$ and $\twdu{(\Res{\Mod{M}})}$ are the same vector space with the same algebra action.  For $i=1$, we use the criticality of $\Proj{k}$ and \cref{prop:StanRest}\ref{it:SResCrit} twice to arrive at
\begin{equation}
\twdu{\Proj{n,k+1}} \oplus \Proj{n,k} \cong \twdu{\Proj{n,k+1}} \oplus \twdu{\Proj{n,k}} \cong \twdu{(\Res{\Stan{n+1,k}})} = \Res{(\twdu{\Stan{n+1,k}})} = \Res{\Stan{n+1,k}} = \Proj{n,k+1} \oplus \Proj{n,k},
\end{equation}
hence the self-duality of $\Proj{k+1}$, for all $n \ge k+1$ (noting that $k \neq n, n+1$).  Assuming that self-duality holds for $\Proj{k+i}$ and $\Proj{k+i-1}$, where $i \le \ell-2$, a similar calculation using \cref{prop:ProjRest} now proves that $\Proj{k+i+1}$ is self-dual.
\end{proof}

For $k$ non-critical with $k > k_L$, the dual of the non-split exact sequence \eqref{eq:ProjExact} is therefore
\begin{equation} \label{es:CPC}
\dses{\Cost{k}}{}{\Proj{k}}{}{\Cost{k^-}},
\end{equation}
which is likewise exact and non-split.  For $k = k_L$, $\Cost{k^-} = 0$ but $\Cost{k_L}$ is not projective, in general, because $\Proj{k_L}$ need not be self-dual.  Rather, $\Cost{k_L} \cong \twdu{\Proj{k_L}}$ is \emph{injective}.
\begin{corollary} \label{coro:Inje}
\leavevmode
\begin{enumerate}
\item When $k \in \Lambda$ is critical, the standard module $\Stan{k}$ is projective and injective. \label{it:InjCrit}
\item When $k \in \Lambda_0$ is non-critical and $k>k_L$, the projective module $\Proj{k}$ is injective. \label{it:InjProj}
\item When $k \in \Lambda_0$ is non-critical and $k=k_L$, the costandard module $\Cost{k}$ is injective. \label{it:InjCost}
\end{enumerate}
\end{corollary}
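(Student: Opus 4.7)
The plan is to deduce all three claims from the single principle that the twisted duality functor $\twdu{(\blank)}$, being an exact contravariant autoequivalence of the category of finite-dimensional $\Alg{}$-modules (by \cref{prop:DoubleDual,prop:DualExact}), interchanges projective and injective modules. Combined with the self-duality results of \cref{prop:ProjDual}, the corollary should then fall out with essentially no further computation.

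The only preliminary I would record is the following standard observation: an $\Alg{}$-module $\Mod{P}$ is projective if and only if $\twdu{\Mod{P}}$ is injective. One direction is immediate from splitting: $\Mod{P}$ is projective iff every short exact sequence $\ses{\Mod{L}}{\Mod{M}}{\Mod{P}}$ splits; applying $\twdu{(\blank)}$, which preserves exactness (\cref{prop:DualExact}) and sends direct sums to direct sums, turns this into the statement that every short exact sequence of the form $\ses{\twdu{\Mod{P}}}{\twdu{\Mod{M}}}{\twdu{\Mod{L}}}$ splits, i.e.\ that $\twdu{\Mod{P}}$ is injective. Reflexivity (\cref{prop:DoubleDual}) supplies the converse. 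In particular, the dual of a projective module is injective.

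With this in hand, the three parts follow at once. For \ref{it:InjCrit}, \cref{prop:Proj} gives $\Stan{k} \cong \Proj{k}$ for critical $k$, so $\Stan{k}$ is projective; and by \cref{prop:ProjDual}, $\twdu{\Stan{k}} = \twdu{\Proj{k}} \cong \Proj{k} \cong \Stan{k}$, so $\Stan{k}$ is also the dual of a projective and hence injective. For \ref{it:InjProj}, \cref{prop:ProjDual} states that $\Proj{k}$ is self-dual for non-critical $k>k_L$, so $\Proj{k} \cong \twdu{\Proj{k}}$ is injective by the preliminary. For \ref{it:InjCost}, \cref{prop:Proj} gives $\Stan{k_L} \cong \Proj{k_L}$ projective when $k_L$ is non-critical, whence $\Cost{k_L} = \twdu{\Stan{k_L}} = \twdu{\Proj{k_L}}$ is injective.

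If there is any obstacle, it is purely the preliminary exchange of projective and injective under $\twdu{(\blank)}$, which is elementary once \cref{prop:DualExact,prop:DoubleDual} are in place. Everything else in the corollary is a direct bookkeeping match between the self-dualities catalogued in \cref{prop:ProjDual} and the definitions $\Cost{k} = \twdu{\Stan{k}}$ and $\Stan{k} = \Proj{k}$ (critical case).
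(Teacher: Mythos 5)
Your proof is correct and follows essentially the same route as the paper's: establish the preliminary fact that $\twdu{(\blank)}$ exchanges projectives and injectives (via the splitting characterisation plus exactness and reflexivity of duality), then read off each part from the self-duality statements in \cref{prop:ProjDual} together with $\Cost{k_L} = \twdu{\Stan{k_L}} = \twdu{\Proj{k_L}}$. The paper's proof is just a terser version of the same argument.
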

\begin{proof}
These follow from the exactness of duality, the modules being self-dual for \ref{it:InjCrit} and \ref{it:InjProj}, and the reflexivity of duality for \ref{it:InjCost} (\cref{prop:DualExact,prop:ProjDual,prop:DoubleDual}).  For $\Proj{}$ being projective is equivalent to $\Mod{M} \ra \Proj{} \ra 0$ splitting for all $\Mod{M}$, which is equivalent to $0 \ra \twdu{\Proj{}} \ra \Mod{N}$ splitting for all $\Mod{N}$, which means that $\twdu{\Proj{}}$ is injective.
\end{proof}
\noindent We remark that for $\Alg{} = \tl{}$ with $n$ even and $\beta = 0$, the costandard module $\Cost{0}$ coincides with $\Irre{2}$, by \eqref{eq:StanExact}, which is not injective.

For each $k \in \Lambda_0$, we let $\Inje{k}$ denote the \emph{injective hull} of $\Irre{k}$.  The critical standard modules are irreducible, projective and injective:  $\Irre{k} = \Stan{k} = \Proj{k} = \Inje{k}$.  Each critical $\Stan{k}$ is therefore the unique indecomposable object in its block, this block being semisimple.  These statements continue to hold for non-critical $k$ if the class $[k] \subset \Lambda$ contains a single element:  $[k]=\set{k}$.  Otherwise, the non-critical blocks are non-semisimple and we identify the injective hulls and projective covers of the modules introduced so far in the following proposition.
\begin{proposition} \label{prop:ProjInj}
If $k \in \Lambda_0$ is critical, then $\Irre{k} = \Stan{k} = \Proj{k} = \Inje{k}$. If $k \in \Lambda_0$ is non-critical, then:
\begin{enumerate}
\item The module $\Inje{k}$ is $\Proj{k}$, if $k>k_L$, and $\Cost{k_L}$, if $k=k_L$.
\item The projective cover of $\Irre{k}$, $\Stan{k}$ and $\Proj{k}$ is $\Proj{k}$.  That of $\Cost{k}$ is $\Proj{k}$, if $k=k_R$, and otherwise it is $\Proj{k^+}$.  That of $\Inje{k}$ is $\Proj{k^+}$, if $k=k_L$, and otherwise it is $\Proj{k}$. \label{it:IdentProj}
\item The injective hull of $\Irre{k}$, $\Cost{k}$ and $\Inje{k}$ is $\Inje{k}$.  That of $\Stan{k}$ is $\Inje{k}$, if $k=k_R$, and otherwise it is $\Inje{k^+}$.  That of $\Proj{k}$ is $\Inje{k^+}$, if $k=k_L$, and otherwise it is $\Inje{k}$. \label{it:IdentInj}
\end{enumerate}
\end{proposition}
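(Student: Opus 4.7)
The plan is to reduce parts (ii) and (iii) to head and socle computations, using the standard criterion that for a finite-dimensional algebra a projective $P$ together with a surjection $P \twoheadrightarrow \Mod{M}$ is the projective cover of $\Mod{M}$ exactly when the induced map $P/\rad P \to \head \Mod{M}$ is an isomorphism, and dually that an injective $I$ with an injection $\Mod{M} \hookrightarrow I$ is the injective hull exactly when it induces an isomorphism on socles. The critical case is immediate: \cref{prop:Standard}\ref{it:StanIrr}, \cref{prop:Proj} and \cref{coro:Inje}\ref{it:InjCrit} together give $\Irre{k} = \Stan{k} = \Proj{k} = \Inje{k}$ in one step. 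For part (i) with $k$ non-critical, since $\Proj{k}$ is by definition a projective cover of $\Irre{k}$, applying the exact contravariant duality functor of \cref{prop:DualExact} sends projective covers to injective hulls, while \cref{prop:IrreDual} gives $\twdu{\Irre{k}} \cong \Irre{k}$, so $\Inje{k} \cong \twdu{\Proj{k}}$. For $k > k_L$ this is $\Proj{k}$ by \cref{prop:ProjDual}, and for $k = k_L$ it is $\Cost{k_L}$ by the very definition of the costandard module.

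Next I would tabulate heads and socles. The non-split exact sequence \eqref{eq:StanExact} yields $\head \Stan{k} = \Irre{k}$ and $\soc \Stan{k} = \Irre{k^+}$ for $k < k_R$; dualising via \eqref{es:ICI} gives $\soc \Cost{k} = \Irre{k}$ and $\head \Cost{k} = \Irre{k^+}$ in the same range. At the endpoint $k = k_R$, \cref{prop:Standard}\ref{it:RadIrr} collapses $\Stan{k_R}$, and hence $\Cost{k_R}$ by self-duality of irreducibles, down to $\Irre{k_R}$. The head of $\Proj{k}$ is $\Irre{k}$ by definition of projective cover, and self-duality (\cref{prop:ProjDual}) forces $\soc \Proj{k} \cong \Irre{k}$ whenever $k > k_L$; for $k = k_L$ the isomorphism $\Proj{k_L} \cong \Stan{k_L}$ from \cref{prop:Proj} instead gives $\soc \Proj{k_L} = \Irre{k_L^+}$. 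Finally, part (i) identifies $\Inje{k}$ as either $\Proj{k}$ or $\Cost{k_L}$, from which its head and socle can be read off directly.

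With these tables in place, parts (ii) and (iii) reduce to matching labels: the projective cover of any module on the list is indexed by the label of its head, and its injective hull by the label of its socle. For example, $\head \Cost{k} = \Irre{k^+}$ for $k < k_R$ gives projective cover $\Proj{k^+}$, while $\Cost{k_R} \cong \Irre{k_R}$ has projective cover $\Proj{k_R}$; symmetrically $\soc \Proj{k_L} = \Irre{k_L^+}$ gives injective hull $\Inje{k^+}$, while for $k > k_L$ we have $\Proj{k} = \Inje{k}$ by part (i). The only real work is bookkeeping at the orbit endpoints $k_L$ and $k_R$, where standard and costandard modules degenerate to irreducibles and the coincidence of projective with injective breaks down. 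I expect this boundary case-work, rather than any conceptual difficulty, to be the main obstacle; the real content lies in part (i) and in the socle computations provided by the duality machinery of \cref{sec:Dual}.
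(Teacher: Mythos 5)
Your proposal is correct and follows essentially the same route as the paper: compute heads and socles (or equivalently, read them off from the known exact sequences exhibiting each module as a quotient of a projective and a submodule of an injective), then use duality to transfer between projective covers and injective hulls. The paper simply packages the head/socle bookkeeping more tersely — observing that an indecomposable projective (injective) is automatically the projective cover (injective hull) of any of its non-zero quotients (submodules), so the known surjections $\Proj{} \sra \Mod{M}$ and inclusions $\Mod{M} \ira \Inje{}$ already do all the work — but the underlying computation you carry out is identical.
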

\begin{proof} All injective and projective modules appearing in \cref{coro:Inje} are indecomposable. Therefore, these injectives (projectives) will automatically be the injective hull (projective cover) of their submodules (quotients). Moreover, the proof of the previous corollary shows that the injective hull of a module $\Mod{M}$ is the projective cover of its dual $\twdu{\Mod{M}}$; the statement \ref{it:IdentInj} is thus a reformulation of \ref{it:IdentProj}.
\end{proof}
\noindent We remark that the definition of the projective cover (injective hull) of $\Mod{M}$ should include the surjective (injective) homomorphism:  $\Proj{} \sra \Mod{M}$ ($\Mod{M} \ira \Inje{}$). However, we shall see in \cref{prop:SIPHomo} that these homomorphisms are unique, up to rescaling, in each case addressed by \cref{prop:ProjInj}, hence we shall usually leave them implicit.

%
% sub : a primer on Ext-groups
%
\subsection{Interlude:  Ext-groups} \label{sec:Ext}

In order to classify indecomposable modules, we need to know when we can ``stitch'' existing modules together to build bigger ones. This knowledge is encoded in the extension group $\Ext^{1}(\Mod{M},\Mod{N})$, where $\Mod{M}$ and $\Mod{N}$ are the modules being stitched. We begin with a quick definition of these groups, usually referred to as $\Ext$-groups for short, before showing how they may be computed and what their relation is to indecomposable modules. In this section, we assume that $\Alg{}$ is an arbitrary finite-dimensional associative algebra over some field $\KK$.

We have noted in the previous sections that the operations of restricting and taking the dual of a module are exact. This means that, if $\ses{\Mod{K}}{\Mod{L}}{\Mod{M}}$ is a short exact sequence, applying these operations (functors) to each of the constituent modules gives another short exact sequence (see \cref{prop:DualExact} for an example). Not all functors are exact however. At the end of \cref{sub:restriction}, we observed that induction is right-exact, but not left-exact, by giving an explicit example. The functors under study in the present section are the Hom-functors $\Hom_{\Alg{}}(\Mod{N},\blank)$ and $\Hom_{\Alg{}}(\blank,\Mod{N})$. The first is covariant and the second contravariant.  Both are left-exact, but neither need be right-exact in general. In a sense, extension groups measure the failure of these functors to be right-exact. (A complete discussion of $\Ext$-groups, covering what is needed here, can be found, for example, in chapter III of \cite{HiltonStamm} and chapter IX of \cite{Assem-Fr}.)

Let $\Mod{N}$ be an $\Alg{}$-module. The $n$-th extension functors  $\Ext^n_{\Alg{}}(\Mod{N}, \blank)$ and $\Ext^n_{\Alg{}}(\blank,\Mod{N},)$ are the $n$-th right derived functors of $\Hom_{\Alg{}}(\Mod{N},\blank)$ and  $\Hom_{\Alg{}}(\blank,\Mod{N})$, respectively. This means that for any short exact sequence of $\Alg{}$-modules
\begin{equation}
\dses{\Mod{K}}{}{\Mod{L}}{}{\Mod{M}},
\end{equation}
there exist two long exact sequences of $\KK$-vector spaces called the \emph{Hom-Ext long exact sequences}:\footnote{As we shall see, these Hom- and Ext-groups are actually $\KK$-vector spaces because we work over a field.  If we were instead working over a general commutative ring, then the Hom- and Ext-groups would only be abelian groups, whence their names.}
\begin{subequations} \label{es:HomExt}
\begin{equation} \label{es:lcohomology}
\begin{tikzpicture}[baseline={(current bounding box.center)},xscale=2.5,every node/.style={text depth=0}]
\node (h0) at (0.25,1) [] {\(0\)};
\node (h1) at (1,1) [] {\(\Hom_{\Alg{}}(\Mod{N},\Mod{K})\)};
\node (h2) at (2,1) [] {\(\Hom_{\Alg{}}(\Mod{N},\Mod{L})\)};
\node (h3) at (3,1) [] {\(\Hom_{\Alg{}}(\Mod{N},\Mod{M})\)};
\node (e1) at (1,0) [] {\(\Ext_{\Alg{}}^1(\Mod{N},\Mod{K})\)};
\node (e2) at (2,0) [] {\(\Ext_{\Alg{}}^1(\Mod{N},\Mod{L})\)};
\node (e3) at (3,0) [] {\(\Ext_{\Alg{}}^1(\Mod{N},\Mod{M})\)};
\node (e4) at (3.75,0) [] {\(\cdots\)};
\node (E0) at (0.35,-1) [] {\(\cdots\)};
\node (E1) at (1,-1) [] {\(\Ext_{\Alg{}}^n(\Mod{N},\Mod{K})\)};
\node (E2) at (2,-1) [] {\(\Ext_{\Alg{}}^n(\Mod{N},\Mod{L})\)};
\node (E3) at (3,-1) [] {\(\Ext_{\Alg{}}^n(\Mod{N},\Mod{M})\)};
\node (E4) at (3.75,-1) [] {\(\cdots\)};
\draw [->] (h0) -- (h1);
\draw [->] (h1) -- (h2);
\draw [->] (h2) -- (h3);
\draw[->] (h3.east) arc (90:-90:0.25) -- (e1.west |- 1,0.5) arc (90:270:0.25);
\draw [->] (e1) -- (e2);
\draw [->] (e2) -- (e3);
\draw [->] (e3) -- (e4);
\draw[->] (e4.east) arc (90:-90:0.25) -- (E0.west |- 1,-0.5) arc (90:270:0.25);
\draw [->] (E0) -- (E1);
\draw [->] (E1) -- (E2);
\draw [->] (E2) -- (E3);
\draw [->] (E3) -- (E4);
\end{tikzpicture}
\end{equation}
and
\begin{equation} \label{es:rcohomology}
\begin{tikzpicture}[baseline={(current bounding box.base)},xscale=2.5,every node/.style={text depth=0}]
\node (h0) at (0.25,1) [] {\(0\)};
\node (h1) at (1,1) [] {\(\Hom_{\Alg{}}(\Mod{M},\Mod{N})\)};
\node (h2) at (2,1) [] {\(\Hom_{\Alg{}}(\Mod{L},\Mod{N})\)};
\node (h3) at (3,1) [] {\(\Hom_{\Alg{}}(\Mod{K},\Mod{N})\)};
\node (e1) at (1,0) [] {\(\Ext_{\Alg{}}^1(\Mod{M},\Mod{N})\)};
\node (e2) at (2,0) [] {\(\Ext_{\Alg{}}^1(\Mod{L},\Mod{N})\)};
\node (e3) at (3,0) [] {\(\Ext_{\Alg{}}^1(\Mod{K},\Mod{N})\)};
\node (e4) at (3.75,0) [] {\(\cdots\)};
\node (E0) at (0.35,-1) [] {\(\cdots\)};
\node (E1) at (1,-1) [] {\(\Ext_{\Alg{}}^n(\Mod{M},\Mod{N})\)};
\node (E2) at (2,-1) [] {\(\Ext_{\Alg{}}^n(\Mod{L},\Mod{N})\)};
\node (E3) at (3,-1) [] {\(\Ext_{\Alg{}}^n(\Mod{K},\Mod{N})\)};
\node (E4) at (3.75,-1) [] {\(\cdots\)};
\node at (3.875,-1) {\(\vphantom{\cdots}\) .};
\draw [->] (h0) -- (h1);
\draw [->] (h1) -- (h2);
\draw [->] (h2) -- (h3);
\draw[->] (h3.east) arc (90:-90:0.25) -- (e1.west |- 1,0.5) arc (90:270:0.25);
\draw [->] (e1) -- (e2);
\draw [->] (e2) -- (e3);
\draw [->] (e3) -- (e4);
\draw[->] (e4.east) arc (90:-90:0.25) -- (E0.west |- 1,-0.5) arc (90:270:0.25);
\draw [->] (E0) -- (E1);
\draw [->] (E1) -- (E2);
\draw [->] (E2) -- (E3);
\draw [->] (E3) -- (E4);
\end{tikzpicture}
\end{equation}
\end{subequations}
In what follows, we shall only be concerned with the first extension groups $\Ext^{1}_{\Alg{}}(\blank,\blank)$; we therefore simply write $\Ext_{\Alg{}}$ for $\Ext^{1}_{\Alg{}}$. We shall also omit the subscript $\Alg{}$, as we do for Hom-groups, when it is clear from the context.

The following results are crucial tools for computing extension groups: $\Ext(\Mod{P},\blank) = 0$ if and only if $\Mod{P}$ is projective, while $\Ext(\blank, \Mod{J}) = 0$ if and only if $\Mod{J}$ is injective. To see how these are used, let $\Mod{K}$ and $\Mod{M}$ be $\Alg{}$-modules and let
\begin{equation}\label{es:projpres}
 \dses{\Mod{K}}{}{\Mod{J}}{}{\Mod{L}}\qquad\text{and}\qquad
 \dses{\Mod{R}}{}{\Mod{P}}{}{\Mod{M}}
 \end{equation}
be an injective presentation of $\Mod{K}$ and a projective presentation of $\Mod{M}$, respectively.\footnote{A injective (projective) presentation is a short exact sequence of the form \eqref{es:projpres}, where the middle term is injective (projective).} Once specialised to these cases, the exact sequences \eqref{es:lcohomology} and \eqref{es:rcohomology} truncate, becoming
\begin{subequations}
\begin{gather}
0 \lra \Hom(\Mod{N},\Mod{K}) \lra \Hom(\Mod{N},\Mod{J}) \lra \Hom(\Mod{N},\Mod{L}) \lra \Ext(\Mod{N},\Mod{K}) \lra 0, \label{es:lcohomology2} \\
0 \lra \Hom(\Mod{M},\Mod{N}) \lra \Hom(\Mod{P},\Mod{N}) \lra \Hom(\Mod{R},\Mod{N}) \lra \Ext(\Mod{M},\Mod{N}) \lra 0. \label{es:rcohomology2}
\end{gather}
\end{subequations}
If the relevant $\Hom$ groups are known, then $\Ext(\Mod{N},\Mod{K}) $ and $\Ext(\Mod{M},\Mod{N}) $ can be identified.

The reason to introduce these extension groups here is the following: The (first) extension group $\Ext(\Mod{M},\Mod{N})$ describes, roughly speaking, the inequivalent ways to ``stitch'' the modules $\Mod{M}$ and $\Mod{N}$ together to obtain a new module $\Mod{E}$ with $\Mod{N}$ isomorphic to a submodule of $\Mod{E}$ and $\Mod{M}$ isomorphic to the quotient $\Mod{E} / \Mod{N}$. In other words, $\Ext(\Mod{M},\Mod{N})$ characterises the (inequivalent) short exact sequences $\ses{\Mod{N}}{\Mod{E}}{\Mod{M}}$. We then say that $\Mod{E}$ is an extension of $\Mod{M}$ by $\Mod{N}$. To state what it means for two extensions to be equivalent, let $\Mod{E}$ and $\Mod{E'}$ be two extensions of $\Mod{M}$ by $\Mod{N}$:
\begin{equation} \label{es:ee'}
	e \colon \dses{\Mod{N}}{f}{\Mod{E}}{g}{\Mod{M}}\qquad \text{and} \qquad e' \colon \dses{\Mod{N}}{f'}{\Mod{E'}}{g'}{\Mod{M}}.
\end{equation}
Then, $e$ and $e'$ are said to be equivalent if there exists $h \colon \Mod{E} \to \Mod{E'}$ such that $f' = hf$ and $g = g'h$. It can be shown that this is an equivalence relation and that the set of inequivalent extensions of $\Mod{M}$ by $\Mod{N}$ is in one-to-one correspondence with $\Ext(\Mod{M},\Mod{N})$. In this correspondence, the origin of the vector space $\Ext(\Mod{M},\Mod{N})$ corresponds to the split extension $\Mod{E}=\Mod{N}\oplus\Mod{M}$. Moreover, multiplying $e$ by $\alpha \in \KK^{\times}$ yields the extensions
\begin{equation}
	\alpha e \colon \dses{\Mod{N}}{\alpha f}{\Mod{E}}{g}{\Mod{M}}\qquad \text{or} \qquad \dses{\Mod{N}}{f}{\Mod{E}}{\alpha g}{\Mod{M}},
\end{equation}
which are easily seen to be equivalent.  Finally, the sum $e+e'$ is defined through an operation sometimes known as the Baer sum, completing the $\KK$-vector space structure on the set of inequivalent extensions.  As we will have no need of this sum, we omit its definition and refer the reader to any standard text on homological algebra, for example \cite[Sec.~3.4]{Weibel}.

We remark that if the two extensions $e$ and $e'$ of \eqref{es:ee'} are equivalent, then the middle modules $\Mod{E}$ and $\Mod{E}'$ are isomorphic, by the short five lemma.  However, the converse is not true:  $e$ and $\alpha e$ are not equivalent, for $\alpha \neq 1$, despite both their middle modules being $\Mod{E}$. In particular, if $\Ext(\Mod{M}, \Mod{N}) \cong \KK$, then there are precisely two extensions, up to isomorphism, one split and one non-split.  We shall use this conclusion many times in what follows.

%
% sub : extension groups
%
\subsection{Their Hom- and Ext-groups}\label{sub:extension}

We now have enough information to compute the homomorphism groups between the modules $\Irre{k}$, $\Stan{k}$, $\Cost{k}$, $\Proj{k}$ and $\Inje{k}$. This is quite straightforward, but there are various cases that have to be considered for each pair of module types.  For example, $\Hom(\Irre{k},\Stan{k'}) = \delta_{k',k^-} \CC$ because the image can only be the submodule $\Radi{k'} \cong \Irre{{k'}^+}$ (or zero), unless $k'=k'_R$, in which case $\Stan{k'}=\Irre{k'}$ and $\Hom(\Irre{k},\Stan{k'_R}) = \delta_{k,k'_R} \CC$.  We can avoid much of this case analysis by agreeing to use the following conventions for the rest of this subsection:
\begin{itemize}
\item We will not consider $\Inje{k}$ explicitly because, for every $k$, $\Inje{k}$ is either $\Proj{k}$ or $\Cost{k}$.
\item The index $k$ in $\Stan{k}$, $\Cost{k}$ and $\Proj{k}$ will be assumed to be non-critical because $\Proj{k} \cong \Cost{k} \cong \Stan{k} \cong \Irre{k}$ for critical $k$.
\item The index $k$ in $\Stan{k}$ and $\Cost{k}$ will exclude $k=k_R$ because $\Cost{k_R} \cong \Stan{k_R} \cong \Irre{k_R}$.
\item The index $k$ in $\Proj{k}$ will exclude $k=k_L$ because $\Proj{k_L} \cong \Stan{k_L}$.
\item When $\Alg{} = \tl{}$, $\beta = 0$ and $n$ is even, the index $k$ in $\Stan{k}$ and $\Cost{k}$ will exclude $k=0$ because $\Cost{0} = \Stan{0} = \Irre{2}$.
\end{itemize}
In the case $\Alg{} = \tl{}$, $\beta = 0$ and $n$ even, we will also obviously exclude $\Irre{0} = 0$.  The computation of Hom-groups is then straightforward and uses the exact sequences \eqref{eq:StanExact}, \eqref{eq:ProjExact} and their duals.

\begin{proposition}\label{prop:SIPHomo}
With these conventions, the groups $\Hom(\Mod{M},\Mod{N})$ between the irreducible, standard, costandard and projective modules are summarised in the following table:
\begin{center}
\begin{tabular}{CC|CCCC}
\multicolumn{2}{C|}{\multirow{2}{*}{$\Hom(\Mod{M},\Mod{N})$}} & \multicolumn{4}{C}{\Mod{N}} \\
     &  & \ \qquad \Irre{k'} \qquad \ & \ \qquad \Stan{k'} \qquad \ & \ \qquad \Cost{k'} \qquad \ & \ \qquad \Proj{k'} \qquad \ \\
\hline
\multicolumn{1}{C}{\multirow{4}{*}{$\Mod{M}$}} & \Irre{k} & \delta_{k',k}\CC & \delta_{k',k^-}\CC & \delta_{k',k}\CC & \delta_{k',k}\CC \\
\multicolumn{1}{C}{}                           & \Stan{k} & \delta_{k',k}\CC & \brac{\delta_{k',k}+\delta_{k',k^-}}\CC & \delta_{k',k}\CC & \brac{\delta_{k',k}+\delta_{k',k^+}}\CC \\
\multicolumn{1}{C}{}                           & \Cost{k} & \delta_{k',k^+}\CC & \delta_{k',k}\CC & \brac{\delta_{k',k}+\delta_{k',k^+}}\CC & \brac{\delta_{k',k}+\delta_{k',k^+}}\CC \\
\multicolumn{1}{C}{}                           & \Proj{k} & \delta_{k',k}\CC & \brac{\delta_{k',k}+\delta_{k',k^-}}\CC & \brac{\delta_{k',k}+\delta_{k',k^-}}\CC & \brac{2\:\delta_{k',k}+\delta_{k',k^-}+\delta_{k',k^+}}\CC\ .
\end{tabular}
\end{center}
\end{proposition}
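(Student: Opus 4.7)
The plan is to organise the sixteen entries of the table into three overlapping groups that are attacked separately. For the first group, those entries in which one argument is irreducible, I would read off the heads and socles of the remaining module types from the exact sequences \eqref{eq:StanExact}, \eqref{es:ICI}, \eqref{eq:ProjExact} and \eqref{es:CPC}, together with the self-duality of $\Proj{k}$ for $k>k_L$ supplied by \cref{prop:ProjDual}: one obtains $\head(\Stan{k}) \cong \soc(\Cost{k}) \cong \Irre{k}$, $\soc(\Stan{k}) \cong \head(\Cost{k}) \cong \Irre{k^+}$, and $\head(\Proj{k}) \cong \soc(\Proj{k}) \cong \Irre{k}$. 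Since every nonzero map out of $\Mod{M}$ into an irreducible factors through $\head(\Mod{M})$, and every nonzero map from an irreducible into $\Mod{M}$ lands in $\soc(\Mod{M})$, Schur's lemma fills in the first row and the first column of the table immediately.

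For the second group, those involving a projective, I would invoke the standard identity
\begin{equation*}
\dim \Hom(\Proj{k},\Mod{M}) = [\Mod{M}:\Irre{k}] = \dim \Hom(\Mod{M},\Inje{k}),
\end{equation*}
which follows from the exactness of $\Hom(\Proj{k},\blank)$ and $\Hom(\blank,\Inje{k})$ combined with Schur's lemma and an induction on composition length. Under our running conventions one has $\Inje{k}=\Proj{k}$, so both formulas apply to every entry in the projective row and column. The needed multiplicities $[\Stan{k'}:\Irre{k}]$, $[\Cost{k'}:\Irre{k}]$ and $[\Proj{k'}:\Irre{k}]$ are already recorded in \cref{prop:Standard}, in \eqref{es:ICI} and in the count just after \cref{prop:Proj}, and substituting yields the claimed values.

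For the third group --- the four Hom groups between $\Stan{}$'s and $\Cost{}$'s --- I would use that each of $\Stan{k}$ and $\Cost{k}$ has exactly three submodules and three quotients, read off from \eqref{eq:StanExact} and \eqref{es:ICI}. The image of any nonzero $f \colon \Mod{M}\to\Mod{N}$ must be simultaneously a quotient of $\Mod{M}$ and a submodule of $\Mod{N}$, which forces a very short list of admissible index pairs. In each compatible case, a nonzero map is built by composing the canonical surjection onto the head of the source with the canonical inclusion of the matching socle of the target (reproducing, for $k'=k^-$ in the Stan-Stan case, the map of \cref{prop:HomSkSk-}), and uniqueness up to scalar follows because the relevant socle contains each simple with multiplicity one. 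To avoid redundant casework, the duality isomorphism $\Hom(X,Y) \cong \Hom(\twdu{Y},\twdu{X})$ furnished by \cref{prop:DualExact,prop:DoubleDual} pairs Cost-Cost with Stan-Stan and Stan-Cost with Cost-Stan. The main obstacle is not conceptual but bookkeeping: reproducing the exact pattern of Kronecker deltas in the table while respecting the conventions that exclude $k=k_L$, $k=k_R$ and the $\beta=0$ anomaly, so that no degenerate case is silently invoked.
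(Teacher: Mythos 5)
Your proposal is correct, and it fills in essentially the same argument the paper sketches (the paper itself only gives the head/socle example $\Hom(\Irre{k},\Stan{k'})$ and declares the rest straightforward). Your organisation into three overlapping groups is sound: the head/socle argument for the irreducible row and column, the multiplicity formula $\dim\Hom(\Proj{k},\Mod{M}) = [\Mod{M}:\Irre{k}] = \dim\Hom(\Mod{M},\Inje{k})$ for the projective row and column (which is valid precisely because the conventions force $\Proj{k'}\cong\Inje{k'}$, by \cref{coro:Inje}), and the image-constraint argument for the Stan/Cost block, with duality cutting the casework in half. The one small imprecision worth flagging is in the third group: your explicit construction "surjection onto the head, then inclusion of the matching socle" only produces the non-identity morphisms (e.g.\ $\Stan{k}\sra\Irre{k}\ira\Stan{k^-}$ for $k'=k^-$); for the diagonal case $k'=k$ the unique-up-to-scalar morphism is the identity, which does not factor through the socle, and the correct reference is simply $\Hom(\Stan{k},\Stan{k})\cong\CC$ from \cref{prop:Standard}\ref{it:StanIndec}. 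This is a cosmetic slip in the recipe, not a gap in the count.
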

\noindent We remark that the entries of this table are related by the vector space isomorphisms $\Hom(\twdu{\Mod{M}},\twdu{\Mod{N}}) \cong \Hom(\Mod{N},\Mod{M})$, implemented by sending $\phi \colon \Mod{N} \to \Mod{M}$ to $\twdu{\phi} \colon \twdu{\Mod{M}} \to \twdu{\Mod{N}}$, where $\twdu{\phi}(f)(n) = f(\phi(n))$ (see \cref{prop:DualExact}).

The extension groups involving these modules are now straightforward to compute.  We assume the same conventions on the index $k$ as for the Hom-groups and note that this means that each $\Proj{k}$ is projective and injective, hence any Ext-group involving a $\Proj{k}$ is zero.
\begin{proposition}\label{prop:SIExte}
With these conventions, the groups $\Ext(\Mod{M},\Mod{N})$ between the irreducible, standard and costandard modules may be summarised, with three exceptions, in the following table:
\begin{center}
\begin{tabular}{CC|CCC}
\multicolumn{2}{C|}{\multirow{2}{*}{$\Ext(\Mod{M},\Mod{N})$}} & \multicolumn{3}{C}{\Mod{N}} \\
     &  & \ \qquad \Irre{k'} \qquad \ & \ \qquad \Stan{k'} \qquad \ & \ \qquad \Cost{k'} \qquad \ \\
\hline
\multicolumn{1}{C}{\multirow{3}{*}{$\Mod{M}$}} & \Irre{k} & \brac{\delta_{k',k^-}+\delta_{k',k^+}}\CC & (\delta_{k',k^-} \delta_{k,k_R} + \delta_{k',k^{--}})\CC & \delta_{k',k^+}\CC \\
\multicolumn{1}{C}{}                           & \Stan{k} & \delta_{k',k^-}\CC & \brac{\delta_{k',k^-}+\delta_{k',k^{--}}}\CC & 0 \\
\multicolumn{1}{C}{}                           & \Cost{k} & (\delta_{k',k^+} \delta_{k',k'_R} + \delta_{k',k^{++}})\CC & 0 & (\delta_{k',k^+} + \delta_{k',k^{++}})\CC \ .
\end{tabular}
\end{center}
\smallskip
The exceptions occur for $\Alg{} = \tl{2}$ and $\beta=0$, for which $\Ext(\Irre{2,2},\Irre{2,2})\simeq \CC$ instead of $0$, and for $\Alg{} = \tl{}$ and $\beta=0$, for which $\Ext(\Irre{2},\Cost{2}) \cong \Ext(\Stan{2},\Irre{2}) \cong \CC$ instead of $0$.
\end{proposition}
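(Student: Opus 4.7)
The plan is to compute each row of the table by feeding a suitable projective or injective presentation into the truncated Hom-Ext sequences \eqref{es:lcohomology2}--\eqref{es:rcohomology2}, and to cut the work roughly in half by invoking the duality isomorphism $\Ext(\Mod{M},\Mod{N})\simeq\Ext(\twdu{\Mod{N}},\twdu{\Mod{M}})$ coming from \cref{prop:DualExact}. Since $\twdu{\Irre{k'}}\simeq\Irre{k'}$, $\twdu{\Stan{k'}}\simeq\Cost{k'}$ and $\twdu{\Cost{k'}}\simeq\Stan{k'}$, this duality exchanges the $\Stan{}$- and $\Cost{}$-rows with the $\Stan{}$- and $\Cost{}$-columns. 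All Hom-groups we need have been tabulated in \cref{prop:SIPHomo}, and \cref{coro:Inje} guarantees that the relevant $\Proj{k}$'s are both projective and injective.

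First I would handle the row $\Mod{M}=\Stan{k}$. The sequence \eqref{eq:ProjExact} is a projective presentation of $\Stan{k}$, so \eqref{es:rcohomology2} collapses to
\begin{equation*}
0\lra\Hom(\Stan{k},\Mod{N})\lra\Hom(\Proj{k},\Mod{N})\lra\Hom(\Stan{k^-},\Mod{N})\lra\Ext(\Stan{k},\Mod{N})\lra 0,
\end{equation*}
and $\Ext(\Stan{k},\Mod{N})$ is the cokernel of the map induced by the inclusion $\Stan{k^-}\hookrightarrow\Proj{k}$. Plugging in $\Mod{N}=\Irre{k'},\Stan{k'},\Cost{k'}$ and reading off dimensions from \cref{prop:SIPHomo}, one checks in each case that the induced map between Hom-groups is either zero (when the morphism $\Proj{k}\to\Mod{N}$ factors through the head $\Irre{k}$ and hence kills the submodule $\Stan{k^-}$) or an isomorphism (when it factors through the quotient $\Stan{k^-}$ of the radical of $\Proj{k}$). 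Dimension counting then yields the $\Stan{}$-row of the table, and the $\Cost{}$-row follows immediately by duality, using that $\Ext(\Cost{k},\Stan{k'})\simeq\Ext(\Cost{k'},\Stan{k})=0$ is inherited from the dual vanishing.

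Next, the $\Irre{k}$-row can be obtained in two equivalent ways. The direct method applies $\Hom(\blank,\Mod{N})$ to \eqref{eq:StanExact} to get a long exact sequence linking $\Ext(\Irre{k},\Mod{N})$ to $\Ext(\Stan{k},\Mod{N})$ and $\Ext(\Irre{k^+},\Mod{N})$, then runs a downward induction on $k$ inside the orbit $[k]$, with base case $k=k_R$ where $\Irre{k_R}\simeq\Stan{k_R}$ and the row is already known. Alternatively, $\Ext(\Irre{k},\Irre{k'})$ can be read off from the Gabriel quiver of $\Alg{}$, whose arrows are determined by the Loewy data of $\Proj{k}$ in \cref{prop:Proj}; the mixed entries $\Ext(\Irre{k},\Stan{k'})$ and $\Ext(\Irre{k},\Cost{k'})$ then come from the four-term sequences obtained by applying $\Hom(\Irre{k},\blank)$ to \eqref{eq:StanExact} and \eqref{es:ICI} and using the already computed Hom-data.

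Finally, the three exceptions at $\beta=0$ must be treated by hand, and this is where I expect the main care (rather than difficulty) to lie: the bookkeeping above presupposes the identifications $\Cost{0}\neq\Irre{2}$ and the four-factor structure of $\Proj{2}$, both of which break when $\Irre{0}=0$. For $\Alg{}=\tl{2}$ with $\beta=0$, the projective cover $\dses{\Irre{2,2}}{}{\Proj{2,2}}{}{\Irre{2,2}}$ is itself a non-split self-extension, forcing $\Ext(\Irre{2,2},\Irre{2,2})\simeq\CC$. For $\Alg{}=\tl{n}$, $n>2$, $\beta=0$, rerunning the cokernel computation for $\Ext(\Stan{2},\Irre{2})$ with $\Stan{0}$ replaced by $\Irre{2}$ in \eqref{eq:ProjExact} produces the stated extra $\CC$, and the dual statement $\Ext(\Irre{2},\Cost{2})\simeq\CC$ then follows from \cref{prop:DualExact}. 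Apart from these three cases, every other entry matches its generic value, completing the table.
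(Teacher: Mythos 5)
Your approach is essentially the same as the paper's: build projective/injective presentations from \eqref{eq:ProjExact} and its dual, plug them into the truncated Hom-Ext sequences, read off dimensions from \cref{prop:SIPHomo} via Euler--Poincar\'{e}, then treat the $\beta=0$ degenerations separately using \eqref{eq:unbearable}. The one organizational difference is that you invoke the duality isomorphism $\Ext(\Mod{M},\Mod{N})\simeq\Ext(\twdu{\Mod{N}},\twdu{\Mod{M}})$ up front to halve the case analysis; the paper instead computes each of the seven non-$\Irre{}$-$\Irre{}$ entries directly and only records this duality symmetry \emph{after} the proposition (using essentially the same presentation-swapping argument you allude to), so there is no circularity and this reorganization is harmless. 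For the $\Irre{k}$-row, your first suggested route (long exact sequence from \eqref{eq:StanExact} plus downward induction from $k=k_R$) is exactly what the paper does implicitly --- the induction is needed because the term $\Ext(\Irre{k^+},\Irre{k'})$ in that sequence must already be known to conclude $\Ext(\Irre{k},\Irre{k^-})\cong\CC$ rather than merely embedding into $\CC$; your second suggested route via the Gabriel quiver is cleaner conceptually but relies on quiver machinery the paper never sets up, so the induction is the safer choice in this context. One small caveat on the exceptional cases: exhibiting the projective cover $\dses{\Irre{2,2}}{}{\Proj{2,2}}{}{\Irre{2,2}}$ as a non-split self-extension only shows $\Ext(\Irre{2,2},\Irre{2,2})\neq 0$; to get exactly $\CC$ one still needs the four-term sequence with $\Hom(\Irre{2,2},\Irre{2,2})\cong\Hom(\Proj{2,2},\Irre{2,2})\cong\Hom(\Irre{2,2},\Irre{2,2})\cong\CC$, which is the paper's (briefly stated) route and which you do use for the other two exceptions.
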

\begin{proof}
We compute these extension groups using the Hom-Ext long exact sequences \eqref{es:HomExt}.  To see how this works, consider $\Ext(\Stan{k},\Stan{k'})$, for $k$ non-critical, $k \neq k_R$ and $k' \neq k'_R$.  We start from the short exact sequence \eqref{eq:ProjExact}, which is a projective presentation of $\Stan{k}$, and examine the contravariant Hom-Ext long exact sequence \eqref{es:rcohomology2}:
\begin{equation}
0 \lra \Hom(\Stan{k},\Stan{k'}) \lra \Hom(\Proj{k},\Stan{k'}) \lra \Hom(\Stan{k^-},\Stan{k'}) \lra \Ext(\Stan{k},\Stan{k'}) \lra 0.
\end{equation}
Here, we have noted that the rightmost term is $\Ext(\Proj{k},\Stan{k'}) = 0$ because of projectivity.  Substituting in the homomorphism groups from \cref{prop:SIPHomo} gives
\begin{equation}
0 \lra \brac{\delta_{k',k}+\delta_{k',k^-}}\CC \lra \brac{\delta_{k',k}+\delta_{k',k^-}}\CC \lra \brac{\delta_{k',k^-}+\delta_{k',k^{--}}}\CC \lra \Ext(\Stan{k},\Stan{k'}) \lra 0.
\end{equation}
The Ext-group is therefore zero unless $k'=k^-$ or $k^{--}$.  In these cases, the exact sequence becomes
\begin{equation}
0 \lra \CC \lra \CC \lra \CC \lra \Ext(\Stan{k},\Stan{k^-}) \lra 0 \quad \text{and} \quad
0 \lra 0 \lra 0 \lra \CC \lra \Ext(\Stan{k},\Stan{k^{--}}) \lra 0,
\end{equation}
respectively, and the Euler-Poincar\'{e} principle gives $\Ext(\Stan{k},\Stan{k^-}) \simeq \Ext(\Stan{k},\Stan{k^{--}}) \simeq \mathbb C$.

The computations for $\Ext(\Irre{k},\Stan{k'})$, $\Ext(\Irre{k},\Cost{k'})$, $\Ext(\Stan{k},\Irre{k'})$, $\Ext(\Stan{k},\Cost{k'})$, $\Ext(\Cost{k},\Irre{k'})$, $\Ext(\Cost{k},\Stan{k'})$ and $\Ext(\Cost{k},\Cost{k'})$ are almost identical, utilising \eqref{eq:StanExact} or its dual \eqref{es:ICI} (which is an injective presentation of $\Cost{k}$).  We only remark that, for $\Ext(\Irre{k},\Stan{k'})$, one has to consider the case $k'=k_R^-$ separately because then $\Hom(\Irre{k},\Stan{{k'}^+}) = \Hom(\Irre{k},\Irre{{k'}^+})$.  There is a similar case to consider for $\Ext(\Cost{k},\Irre{k'})$.

The computation for $\Ext(\Irre{k},\Irre{k'})$ is slightly different in that we instead start from the short exact sequence \eqref{eq:StanExact}, which is not a projective or injective presentation, to derive the Hom-Ext long exact sequence
\begin{equation}
\begin{tikzpicture}[baseline={(current bounding box.center)},xscale=2.5,yscale=2]
\node (h0) at (0.35,0.5) [] {\(0\)};
\node (h1) at (1,0.5) [] {\(\Hom(\Irre{k},\Irre{{k'}})\)};
\node (h2) at (2,0.5) [] {\(\Hom(\Stan{k},\Irre{{k'}})\)};
\node (h3) at (3,0.5) [] {\(\Hom(\Irre{k^+},\Irre{{k'}})\)};
\node (e1) at (1,0) [] {\(\Ext(\Irre{k},\Irre{{k'}})\)};
\node (e2) at (2,0) [] {\(\Ext(\Stan{k},\Irre{{k'}})\)};
\node (e3) at (3,0) [] {\(\Ext(\Irre{k^+},\Irre{{k'}})\)};
\node (e4) at (3.75,0) [] {\(\cdots \vphantom{\Hom(\Stan{k},\Irre{{k'}})}\)\ .};
\draw [->] (h0) -- (h1);
\draw [->] (h1) -- (h2);
\draw [->] (h2) -- (h3);
\draw[->] (h3.east) arc (90:-90:0.125) -- (e1.west |- 1,0.25) arc (90:270:0.125);
\draw [->] (e1) -- (e2);
\draw [->] (e2) -- (e3);
\draw [->] (e3) -- (e4);
\end{tikzpicture}
\end{equation}
Substituting for $\Hom(\Irre{k^+},\Irre{{k'}})$ and $\Ext(\Stan{k},\Irre{{k'}})$, we learn that $\Ext(\Irre{k},\Irre{k'}) = 0$, unless $k' = k^{\pm}$, and that the result is $\CC$ in either of the interesting cases.

Finally, the exceptions noted in the table are all related to the degenerate structure of the projective $\Proj{2}$ when $\Alg{} = \tl{}$, $\beta = 0$ and $n$ is even.  Specifically, the exact sequence \eqref{eq:ProjExact} becomes
\begin{equation}\label{eq:unbearable}
\dses{\Irre{2}}{}{\Proj{2}}{}{\Stan{2}},
\end{equation}
explaining the non-triviality of $\Ext(\Stan{2},\Irre{2})$ and $\Ext(\Irre{2},\Cost{2})$.  When $n=2$, $\Stan{2} \cong \Irre{2}$, explaining the non-triviality of $\Ext(\Irre{2},\Irre{2})$.  The dimensionality of these Ext-groups is easily verified using projective presentations and Hom-Ext long exact sequences, as above.
\end{proof}

It is easy to show that the entries of the table in \cref{prop:SIExte} are related by the vector space isomorphisms $\Ext(\twdu{\Mod{M}},\twdu{\Mod{N}}) \cong \Ext(\Mod{N},\Mod{M})$.  This follows by noting that a projective presentation $\ses{\Mod{R}}{\Mod{P}}{\Mod{M}}$ of $\Mod{M}$ gives an injective presentation $\ses{\twdu{\Mod{M}}}{\twdu{\Mod{P}}}{\twdu{\Mod{R}}}$ of $\twdu{\Mod{M}}$, hence the exact sequence \eqref{es:lcohomology2} becomes
\begin{equation}
0 \lra \Hom(\twdu{\Mod{N}},\twdu{\Mod{M}}) \lra \Hom(\twdu{\Mod{N}},\twdu{\Mod{P}}) \lra \Hom(\twdu{\Mod{N}},\twdu{\Mod{R}}) \lra \Ext(\twdu{\Mod{N}},\twdu{\Mod{M}}) \lra 0,
\end{equation}
upon replacing $\Mod{N}$ by $\twdu{\Mod{N}}$.  As $\Hom(\twdu{\Mod{N}},\twdu{\Mod{K}}) \cong \Hom(\Mod{K},\Mod{N})$, comparing with \eqref{es:rcohomology2} completes the proof.

An example of an application of these extension groups is the following result.
\begin{corollary}\label{coro:StanUnique}
If a module $\Mod{M}$ is indecomposable with exact sequence $\ses{\Irre{k^+}}{\Mod{M}}{\Irre{k}}$, then $\Mod{M}\simeq \Stan{k}$.  Similarly, if $\Mod{N}$ is indecomposable with exact sequence $\ses{\Irre{k}}{\Mod{N}}{\Irre{k^+}}$, then $\Mod{N}\simeq \Cost{k}$.
\end{corollary}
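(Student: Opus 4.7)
My plan rests entirely on the extension group computations already in hand from \cref{prop:SIExte}, combined with the structural remark at the close of \cref{sec:Ext}: whenever $\Ext(\Mod{A},\Mod{B})\cong \CC$, there are precisely two extensions of $\Mod{A}$ by $\Mod{B}$ up to isomorphism, a split one and a non-split one. This reduces the corollary to identifying $\Stan{k}$ (respectively $\Cost{k}$) as a non-split representative of the correct extension class.

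For the first statement, I would start from \cref{prop:SIExte} with $\Mod{M} = \Irre{k}$ and $\Mod{N} = \Irre{k^+}$, which gives $\Ext(\Irre{k},\Irre{k^+}) \cong \CC$. By the remark just recalled, any module fitting in $\ses{\Irre{k^+}}{\Mod{M}}{\Irre{k}}$ is therefore isomorphic to either the split extension $\Irre{k^+}\oplus\Irre{k}$ or to a unique non-split extension. Indecomposability of $\Mod{M}$ rules out the split one, so $\Mod{M}$ must be isomorphic to the non-split representative. But \eqref{eq:StanExact} together with the non-split assertion in \cref{prop:Standard} exhibits $\Stan{k}$ as such a non-split extension (and \cref{prop:Standard}\ref{it:StanIndec} confirms its indecomposability). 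Hence $\Mod{M}\simeq \Stan{k}$.

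For the second statement, the argument is identical after dualising. From the same table in \cref{prop:SIExte}, one reads $\Ext(\Irre{k^+},\Irre{k})\cong \CC$, so there are again exactly two extensions up to isomorphism with one split and one non-split. The dual sequence \eqref{es:ICI} provides a non-split extension $\ses{\Irre{k}}{\Cost{k}}{\Irre{k^+}}$ (non-split by \cref{prop:DualExact,prop:DoubleDual} applied to the non-split \eqref{eq:StanExact}), and $\Cost{k}$ is indecomposable because $\Stan{k}$ is and duality is an autoequivalence of the module category. The indecomposable $\Mod{N}$ must therefore be isomorphic to $\Cost{k}$.

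The only obstacle I anticipate is bookkeeping around the boundary/exceptional cases. If $k=k_R$ then $\Irre{k^+}=0$ and the sequence trivialises to $\Mod{M}\simeq \Irre{k_R} = \Stan{k_R} = \Cost{k_R}$, so the statement is vacuous. Similarly, when $\Alg{} = \tl{}$ with $\beta=0$ and $n$ even, the conventions of \cref{sub:extension} exclude the case where an $\Irre{k}$ in the sequence would be zero, so no genuinely new case arises. All remaining cases are uniformly covered by the two-extension argument above.
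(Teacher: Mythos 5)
Your proposal is correct and follows essentially the same route as the paper's own proof: identify $\Ext(\Irre{k},\Irre{k^+})\cong\CC$ (and its dual) from \cref{prop:SIExte}, observe that this leaves exactly one non-split extension up to isomorphism, and note that $\Stan{k}$ (respectively $\Cost{k}$) already realises it. The extra bookkeeping you add for the boundary cases is harmless and consistent with the conventions of \cref{sub:extension}.
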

\begin{proof}
Since $\Ext(\Irre{k},\Irre{k^+}) \cong \CC$, there is a single isomorphism class of non-trivial extensions of $\Irre{k}$ by $\Irre{k^+}$.  Since $\Stan{k}$ is indecomposable and is therefore one such extension, $\Mod{M}$ must be isomorphic to $\Stan{k}$. The second statement now follows by duality.
\end{proof}
\noindent It follows from \cref{prop:SIExte} and \cref{coro:StanUnique} that we have classified all $\Alg{}$-modules with two composition factors, up to isomorphism.  The complete list consists of the direct sums of two irreducibles, the (reducible) standard and costandard modules, and the projective $\tl{2}$-module $\Proj{2,2}$ at $\beta = 0$.

We end this subsection by proving two useful lemmas. The first limits the number of non-trivial extension groups.
\begin{lemma}\label{lem:trivialExtensions}
\leavevmode
\begin{enumerate}
\item Let $\Mod{M}$ and $\Mod{N}$ be $\Alg{}$-modules whose composition factors have indices $k^i$ and $k^j$, for some reference index $k$, where $i$ and $j$ run over (multi)sets $I$ and $J$, respectively, of integers. If $\abs{i-j}>1$ for all $i\in I$ and $j\in J$, then $\Ext(\Mod{M},\Mod{N})\simeq \Ext(\Mod{N},\Mod{M})\simeq 0$. \label{it:Ext=0}
\item Let $\Mod{M}$ be an  $\Alg{}$-module such that $\Ext(\Mod{M},\Mod{I})\simeq 0$ ($\Ext(\Mod{I},\Mod{M})\simeq 0$) for all semisimple modules $\Mod{I}$. Then $\Ext(\Mod{M}, \Mod{N})\simeq 0$ ($\Ext(\Mod{N},\Mod{M})\simeq 0$) for all modules $\Mod{N}$. \label{it:ExtByIrr}
\end{enumerate}
\end{lemma}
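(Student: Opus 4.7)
The plan is to prove both parts by induction on composition length, using the Hom-Ext long exact sequences \eqref{es:HomExt} together with the computation of extensions between irreducibles in \cref{prop:SIExte}.

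For \ref{it:Ext=0}, the strategy is a double induction reducing to the case where both $\Mod{M}$ and $\Mod{N}$ are irreducible. If the composition length of $\Mod{M}$ exceeds one, pick a nonzero simple submodule $\Mod{M}' \subseteq \Mod{M}$ and apply the contravariant Hom-Ext long exact sequence to $\ses{\Mod{M}'}{\Mod{M}}{\Mod{M}/\Mod{M}'}$; this yields
\begin{equation*}
\Ext(\Mod{M}/\Mod{M}',\Mod{N}) \lra \Ext(\Mod{M},\Mod{N}) \lra \Ext(\Mod{M}',\Mod{N}),
\end{equation*}
whose outer terms vanish by the inductive hypothesis (the composition factors of any subquotient of $\Mod{M}$ remain indexed by a subset of $I$, so the separation condition $|i-j|>1$ is preserved). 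An analogous induction on $\Mod{N}$ using the covariant long exact sequence reduces further to the base case $\Mod{M} = \Irre{k^i}$, $\Mod{N} = \Irre{k^j}$. \cref{prop:SIExte} then gives $\Ext(\Irre{k^i},\Irre{k^j}) \simeq 0$, since the only nonzero entries there are $\Ext(\Irre{k},\Irre{k^{\pm}}) \simeq \CC$ together with the exceptional cases for $\tl{}$ at $\beta=0$, all of which require $|i-j| \leq 1$. The same double induction, with the roles of $\Mod{M}$ and $\Mod{N}$ swapped, handles $\Ext(\Mod{N},\Mod{M})$.

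For \ref{it:ExtByIrr}, I would induct on the composition length of $\Mod{N}$. The base case (length zero or one) is immediate: $\Mod{N}$ is semisimple, so $\Ext(\Mod{M},\Mod{N}) \simeq 0$ by hypothesis. For greater length, select a nonzero simple submodule $\Mod{N}' \subset \Mod{N}$ and apply the covariant long exact sequence to $\ses{\Mod{N}'}{\Mod{N}}{\Mod{N}/\Mod{N}'}$: in
\begin{equation*}
\Ext(\Mod{M},\Mod{N}') \lra \Ext(\Mod{M},\Mod{N}) \lra \Ext(\Mod{M},\Mod{N}/\Mod{N}'),
\end{equation*}
the left term vanishes by hypothesis ($\Mod{N}'$ is semisimple) and the right by induction. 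The contravariant version, giving $\Ext(\Mod{N},\Mod{M}) \simeq 0$, is strictly dual, with a simple quotient of $\Mod{N}$ (for instance a simple summand of the head) playing the role of $\Mod{N}'$.

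The only delicate point is verifying that the exceptional rows of \cref{prop:SIExte} do not interfere with the base case of \ref{it:Ext=0}; but every such exception involves indices at distance at most one, which the hypothesis $|i-j|>1$ explicitly forbids. Beyond this, both proofs are routine long-exact-sequence arguments and I anticipate no substantive obstacle.
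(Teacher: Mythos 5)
Your proof is correct and follows essentially the same strategy as the paper: a double induction on composition lengths, reducing via Hom-Ext long exact sequences applied to short exact sequences arising from simple submodules (or quotients), with \cref{prop:SIExte} supplying the base case for part~\ref{it:Ext=0} and the hypothesis supplying it for part~\ref{it:ExtByIrr}. The paper presents the nested inductions in a slightly different order (first fixing $\Mod{N}$ irreducible, inducting on $\Mod{M}$, then extending to general $\Mod{N}$) and dispatches part~\ref{it:ExtByIrr} with the phrase ``a similar induction argument,'' but these are only cosmetic differences.
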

\begin{proof}
The proof of \ref{it:Ext=0} is by double induction, first on the length of $\Mod{M}$, that is the number of its composition factors (including multiplicities), then on the length of $\Mod{N}$.  First, suppose that $\Mod{N} \cong \Irre{k^j}$ is irreducible.  If the length of $\Mod{M}$ is $1$, then $\Mod{M}$ is also irreducible and \cref{prop:SIExte} gives the result. If its length is greater than $1$, let $\Irre{}$ be an irreducible submodule of $\Mod{M}$.  Then, $\ses{\Irre{}}{\Mod{M}}{\Mod{M}/\Mod{I}}$ is exact and the covariant Hom-Ext long exact sequence \eqref{es:lcohomology} says that so is $\Ext(\Irre{k^j},\Irre{})\rightarrow \Ext(\Irre{k^j},\Mod{M})\rightarrow\Ext(\Irre{k^j},\Mod{M}/\Irre{})$. The two extreme extension groups are $0$ by the induction hypothesis, hence $\Ext(\Irre{k^j},\Mod{M})=0$ as well. The contravariant Hom-Ext long exact sequence \eqref{es:rcohomology} similarly yields $\Ext(\Mod{M},\Irre{k^i}) = 0$.  This now provides the base case for a similar induction on the length of $\Mod{N}$ which completes the proof of \ref{it:Ext=0}.  A similar induction argument proves \ref{it:ExtByIrr}.
\end{proof}

A non-trivial extension group $\Ext(\Mod{N},\Mod{L})$ implies the existence of a module $\Mod{M}$  such that the exact sequence
\begin{equation} \label{es:LMN}
\dses{\Mod{L}}{}{\Mod{M}}{}{\Mod{N}}
\end{equation}
does not split. This, however, does not prove that $\Mod{M}$ is indecomposable. Indeed, \cref{prop:extensionRevealed} will give examples of non-trivial extensions that are decomposable.  The second lemma gives an easy criterion to prove the indecomposability of a given extension.
\begin{lemma}\label{lem:indecExtension}
Let $\Mod{L}$, $\Mod{M}$ and $\Mod{N}$ be the $\Alg{}$-modules appearing in the short exact sequence \eqref{es:LMN}. Suppose furthermore that $\Mod{L}$ and $\Mod{N}$ are indecomposable and that $\Hom(\Mod{L},\Mod{N})=0$. Then, $\Mod{M}$ is decomposable if and only if \eqref{es:LMN} splits.
\end{lemma}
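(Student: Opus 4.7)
The plan is to prove the two directions separately, the forward one being immediate: if \eqref{es:LMN} splits, then $\Mod{M}\cong \Mod{L}\oplus\Mod{N}$ is manifestly decomposable (both summands being non-zero as they are indecomposable). The substance of the lemma is therefore the converse.

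Assume $\Mod{M}=\Mod{M}_1\oplus\Mod{M}_2$ with both $\Mod{M}_i$ non-zero, and let $e_1,e_2\in\End(\Mod{M})$ denote the corresponding orthogonal idempotents, so that $e_1+e_2=\id_{\Mod{M}}$. Let $\iota\colon\Mod{L}\to\Mod{M}$ and $\pi\colon\Mod{M}\to\Mod{N}$ be the maps of \eqref{es:LMN}. The first key observation is that each $e_i$ preserves the submodule $\iota(\Mod{L})$. Indeed, $\pi\circ e_i\circ\iota$ lies in $\Hom(\Mod{L},\Mod{N})=0$, so $e_i(\iota(\Mod{L}))\subseteq\ker\pi=\iota(\Mod{L})$. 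Restricting therefore yields two orthogonal idempotents $\bar e_1,\bar e_2\in\End(\iota(\Mod{L}))\cong\End(\Mod{L})$ summing to the identity.

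The second key observation is that $\End(\Mod{L})$ is a local ring, since $\Mod{L}$ is a finite-dimensional indecomposable module over an associative algebra (Fitting's lemma). Hence its only idempotents are $0$ and $1$, and $\bar e_1+\bar e_2=\id_{\Mod{L}}$ forces one of them to be zero and the other the identity. Relabelling if necessary, we may assume $\bar e_1=\id$ and $\bar e_2=0$; equivalently, $\iota(\Mod{L})\subseteq \Mod{M}_1$.

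Since $\iota(\Mod{L})\subseteq \Mod{M}_1$, we have $\Mod{M}/\iota(\Mod{L})\cong \Mod{M}_1/\iota(\Mod{L})\oplus\Mod{M}_2$, and the left-hand side is isomorphic to $\Mod{N}$ via $\pi$. As $\Mod{N}$ is indecomposable, one of these summands vanishes; since $\Mod{M}_2\neq 0$ by assumption, we must have $\Mod{M}_1=\iota(\Mod{L})$. The restriction $\pi|_{\Mod{M}_2}\colon\Mod{M}_2\to\Mod{N}$ then has kernel $\Mod{M}_2\cap\iota(\Mod{L})=\Mod{M}_2\cap\Mod{M}_1=0$ and image $\pi(\Mod{M})=\Mod{N}$, hence is an isomorphism. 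Its inverse, composed with the inclusion $\Mod{M}_2\hookrightarrow\Mod{M}$, furnishes a splitting of \eqref{es:LMN}. The main obstacle in this argument is the first step: recognising that $\Hom(\Mod{L},\Mod{N})=0$ is exactly what is needed to make the decomposition idempotents of $\Mod{M}$ descend to idempotents of the submodule $\iota(\Mod{L})$, after which the locality of $\End(\Mod{L})$ and the indecomposability of $\Mod{N}$ finish the argument mechanically.
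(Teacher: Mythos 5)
Your proof is correct. The core observation is the same as the paper's: $\Hom(\Mod{L},\Mod{N})=0$ forces the decomposition idempotents of $\Mod{M}$ to restrict to idempotents on $\iota(\Mod{L})$, which are then trivial by indecomposability of $\Mod{L}$. The execution diverges from there. The paper sets up a morphism of short exact sequences, inducing idempotents on \emph{both} $\Mod{L}$ and $\Mod{N}$, and disposes of the four resulting cases by appealing to the short five lemma, exactness, and the snake lemma. You instead work asymmetrically: having placed $\iota(\Mod{L})$ inside one summand $\Mod{M}_1$, you pass to the quotient $\Mod{N}\cong\Mod{M}_1/\iota(\Mod{L})\oplus\Mod{M}_2$ and use indecomposability of $\Mod{N}$ at that level to force $\Mod{M}_1=\iota(\Mod{L})$, after which $\pi\vert_{\Mod{M}_2}$ is visibly an isomorphism. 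Your route avoids the diagram lemmas and the case analysis entirely and makes the role of each hypothesis transparent (indecomposability of $\Mod{L}$ gives the containment, that of $\Mod{N}$ gives the equality). One small stylistic point: you do not need Fitting's lemma or the locality of $\End(\Mod{L})$ — the triviality of idempotents in $\End(\Mod{L})$ for an indecomposable $\Mod{L}$ is immediate from the definition, since a non-trivial idempotent $e$ yields the decomposition $\Mod{L}=e\Mod{L}\oplus(1-e)\Mod{L}$ directly.
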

\begin{proof}
Since the definition of the short exact sequence splitting is that $\Mod{M}\simeq\Mod{L}\oplus\Mod{N}$, we only need show that decomposability implies splitting under these hypotheses. Suppose then that $\Mod{M}$ is decomposable. This implies that there exists a non-trivial projection $q \colon \Mod{M} \to \Mod{M}$, meaning that the morphism $q$ satisfies $q^2=q$ but is neither zero nor the identity.

Consider now the diagram
\begin{equation}\label{cd:IndecExt}
\begin{tikzpicture}[baseline={(current bounding box.center)},scale=1/3]
\node (t1) at (5,5) [] {$0$};
\node (t2) at (10,5) [] {$\Mod{L}$};
\node (t3) at (15,5) [] {$\Mod{M}$};
\node (t4) at (20,5) [] {$\Mod{N}$};
\node (t5) at (25,5) [] {$0$\phantom{,}};
\node (b1) at (5,0) [] {$0$};
\node (b2) at (10,0) [] {$\Mod{L}$};
\node (b3) at (15,0) [] {$\Mod{M}$};
\node (b4) at (20,0) [] {$\Mod{N}$};
\node (b5) at (25,0) [] {$0$,};
\draw [->] (t1) -- (t2);
\draw [->] (t2) to node [above] {$f$} (t3);
\draw [->] (t3) to node [above] {$g$}  (t4);
\draw [->] (t4) -- (t5);
\draw [->] (b1) -- (b2);
\draw [->] (b2) to node [above] {$f$}  (b3);
\draw [->] (b3) to node [above] {$g$}  (b4);
\draw [->] (b4) -- (b5);
\draw [dashed,->] (t2) to node [left] {$p$} (b2);
\draw [->] (t3) to node [left] {$q$} (b3);
\draw [dashed,->] (t4) to node [right] {$r$} (b4);
\end{tikzpicture}
\end{equation}%$
in which both rows are exact. Because $\Hom(\Mod{L},\Mod{N})\simeq 0$, the composition $gqf$ must vanish, implying that there exist unique morphisms $p$ and $r$ that make \eqref{cd:IndecExt} commute.  Now, $f$ is injective and $fpp=qfp=qqf=qf=fp$, so it follows that $p^2=p$. Similarly, $g$ being surjective implies that $r^2=r$. The indecomposability of $\Mod{L}$ and $\Mod{N}$ therefore requires that $p$ and $r$ must be either the zero morphism or the identity morphism. There are thus four subcases to study.

If both $p$ and $r$ are the identity, then so is $q$ by the short five lemma. If, contrarily, both $p$ and $r$ are zero, then we have $qf=0$, hence $\im f \subseteq \ker q$, and $gq=0$, hence $\im q \subseteq \ker g$.  Combining these with exactness now gives $\im q \subseteq \ker q$, which yields $q=q^2=0$, a contradiction.

The remaining two cases are more interesting. If $p$ is the identity and $r$ is zero, then the snake lemma gives $\ker q\simeq \Mod{N}$. On the other hand, the commuting left square of \eqref{cd:IndecExt} implies that $q$ acts as the identity on $\im f\simeq \Mod{L}$. The two eigenspaces of $q$ are therefore isomorphic to $\Mod{L}$ and $\Mod{N}$, hence $\Mod{M}\simeq \Mod{L}\oplus \Mod{N}$ and the sequence \eqref{es:LMN} splits. A similar argument shows that \eqref{es:LMN} also splits when $p$ is zero and $r$ is the identity.
\end{proof}
We remark that the conclusion of \cref{lem:indecExtension} is also true under the hypotheses that $\Mod{L}$ and $\Mod{N}$ are indecomposable and that $\Mod{M}$ has no subquotient\footnote{We recall that a subquotient of a module $\Mod{M}$ is a submodule of a quotient of $\Mod{M}$ or, equivalently, a quotient of a submodule of $\Mod{M}$.} isomorphic to a direct sum of two isomorphic irreducibles.  For then the submodules of $\Mod{M}$ obey the distributive laws \cite{BenRep91}
\begin{equation}
(\Mod{A} + \Mod{B}) \cap \Mod{C} \cong (\Mod{A} \cap \Mod{C}) + (\Mod{B} \cap \Mod{C}), \qquad
(\Mod{A} \cap \Mod{B}) + \Mod{C} \cong (\Mod{A} + \Mod{C}) \cap (\Mod{B} + \Mod{C}),
\end{equation}
from which the lemma follows rather trivially.  However, the hypotheses of \cref{lem:indecExtension} given above have the advantage that they only require knowledge of $\Mod{L}$ and $\Mod{N}$, and not of the subquotient structure of the extension $\Mod{M}$ itself.

We will generally use \cref{lem:indecExtension} when the two indecomposable modules $\Mod{N}$ and $\Mod{L}$ have no composition factors in common, so the Hom-groups between them necessarily vanish. Then the extension \eqref{es:LMN} will be non-trivial if and only if $\Mod{M}$ is indecomposable.

To illustrate a typical usage of this result, consider the non-trivial extension group $\Ext(\Stan{k^+},\Stan{k^-})\simeq \mathbb C$ of \cref{prop:SIExte} (for $k$ non-critical).  As $\Stan{k^+}$ and $\Stan{k^-}$ are indecomposable modules with no composition factors in common, $\Hom(\Stan{k^-}, \Stan{k^+}) = 0$ and thus any non-trivial extension will be indecomposable by \cref{lem:indecExtension}.  This example demonstrates the existence of indecomposable $\Alg{}$-modules with (three or) four composition factors $\Irre{k^-}$, $\Irre{k}$, $\Irre{k^+}$ and $\Irre{k^{++}}$. We remark that the only indecomposable modules with four composition factors that have been encountered thus far are projective and that these projectives always have a composition factor of multiplicity two. The indecomposable extension described above cannot be one of these projectives, hence must be a new indecomposable $\Alg{}$-module. This observation will be the starting point of \cref{sec:newFamilies}.

%
% sub : Loewy
%
\subsection{Their Loewy diagrams}\label{sub:Loewy}

It is sometimes convenient to visualise the structure of non-semisimple modules diagrammatically, particularly when computing Hom-groups.  A convention popular in the mathematical physics community is to represent the structure as a graph in which the vertices are the composition factors of the module $\Mod{M}$ and the arrows represent the ``action of the algebra''.  More precisely, an arrow is drawn from the factor $\Mod{I}$ to the factor $\Mod{I}'$ if $\Mod{M}$ has a subquotient isomorphic to a non-trivial extension of $\Mod{I}$ by $\Mod{I}'$.  In principle, one can also decorate the arrow with an extra label if $\dim \Ext (\Mod{I},\Mod{I}') > 1$, but \cref{prop:SIExte} ensures that this never happens for $\Alg{} = \tl{}$ or $\dtl{}$.

The utility of this arrow notation is evidently limited.  For example, suppose that $\Mod{M}$ has a composition factor $\Mod{I}$ appearing as a submodule with multiplicity $2$:  $\Mod{I} \overset{\iota_1}{\ira} \Mod{M}$ and $\Mod{I} \overset{\iota_2}{\ira} \Mod{M}$.  Then, it has an infinite number of submodules isomorphic to $\Mod{I}$, corresponding to linear combinations of $\iota_1$ and $\iota_2$ (modulo rescalings).  An arrow indicating a submodule $\Mod{L} \subset \Mod{M}$ that is isomorphic to a non-trivial extension in $\Ext(\Mod{I}',\Mod{I})$ will then also require a label to identify which linear combination of $\iota_1$ and $\iota_2$ describes the submodule $\Mod{I} \subset \Mod{L}$.  The labelling of the arrows can therefore be unpleasantly complicated in general.  However, this issue also turns out to not be a problem for the indecomposable modules of $\tl{}$ or $\dtl{}$, so we will always decorate our diagrams with arrows in order to maximise the information conveyed.

We will refer to these structure graphs as \emph{Loewy diagrams}.  For the reasons already mentioned, the Loewy diagrams defined by mathematicians tend not to have arrows; instead, the composition factors are arranged in horizontal layers that have structural meaning.  To make this precise, one introduces the \emph{radical} $\rad\Mod{M}$ and the \emph{socle} $\soc\Mod{M}$ of a module $\Mod{M}$ as the intersection of its maximal proper submodules and the sum of its simple submodules, respectively.\footnote{For the algebras $\tl{}$ and $\dtl{}$, this notion of radical generalises that which was introduced in \cref{sub:basics} for the standard modules $\Stan{k}$.}  The \emph{head} $\head \Mod{M}$ of $\Mod{M}$ is then the quotient by the radical:  $\head\Mod{M} = \Mod{M} / \rad\Mod{M}$.  Just as $\soc\Mod{M}$ is the (unique) maximal semisimple submodule of $\Mod{M}$, $\head\Mod{M}$ is the (unique) maximal semisimple quotient of $\Mod{M}$.

Radicals and socles lead to important examples of filtrations.  Given a module $\Mod{M}$, define its \emph{radical series} and \emph{socle series} to be the following strictly descending and strictly ascending chains of submodules:
\begin{equation}
\begin{gathered}
\Mod{M}=\rad^0\Mod{M} \supset \rad^1\Mod{M} \supset \rad^2\Mod{M} \supset \dots \supset \rad^{n-1}\Mod{M} \supset \rad^n\Mod{M}=0, \\
0=\soc^0\Mod{M} \subset \soc^1\Mod{M} \subset \soc^2\Mod{M} \subset \dots \subset \soc^{n-1}\Mod{M} \subset \soc^n\Mod{M}=\Mod{M}.
\end{gathered}
\end{equation}
Here, $\rad^j\Mod{M}$ and $\soc^j\Mod{M}$ are defined recursively, for $j \ge 1$, to be $\rad(\rad^{j-1}\Mod{M})$ and the unique submodule satisfying $\soc^j\Mod{M} / \soc^{j-1}\Mod{M} = \soc(\Mod{M} / \soc^{j-1}\Mod{M})$, respectively.  Note that both chains contain the same number $n$ of non-zero submodules (this number is called the \emph{Loewy length} of the module $\Mod{M}$) and that the successive quotients $\rad^j\Mod{M} / \rad^{j+1}\Mod{M}$ and $\soc^j\Mod{M} / \soc^{j-1}\Mod{M}$ are maximal semisimple.

For $\Alg{} = \tl{}$ or $\dtl{}$, the radical and socle series of any given $\Alg{}$-module coincide: $\rad^j\Mod{M} = \soc^{n-j}\Mod{M}$.  We may therefore draw \emph{the} Loewy diagram so that its composition factors are partitioned (uniquely) into horizontal layers according to the following convention:  the $j$-th layer (counting from bottom to top) indicates the composition factors that appear in the maximal semisimple quotient $\soc^j\Mod{M} / \soc^{j-1}\Mod{M} = \rad^{n-j}\Mod{M} / \rad^{n-j+1}\Mod{M}$.  In addition to arranging our composition factors thusly, we shall also use arrows as a means to indicate further refinements to the substructure. It moreover proves convenient to arrange the composition factors so that their indices increase from left to right, as in the non-critical orbits of \cref{sub:basics}.

As an example of the ``annotated'' Loewy diagrams that we shall use, we present the diagrams for the standard modules:
\begin{equation} \label{ld:Stan}
\begin{matrix}\Stan{k} \\ \text{(\(k\) critical)}\end{matrix}:\ \ \ \Irre k
\qquad\qquad \begin{matrix}\Stan{k}\\ \text{(\(k\) non-critical)}\end{matrix} :\
\begin{tikzpicture}[baseline={(current bounding box.center)},scale=1/2.5]
\node (l1) at (0,2) [] {\(\Irre{k}\)};
\node (l2) at (2,0) [] {\(\Irre{k^+}\)};
\draw [->] (l1) -- (l2);
\end{tikzpicture}
.
\end{equation}
The left diagram obviously reflects the fact that critical standard modules are irreducible; the arrow in the right diagram means that the action of $\Alg{}$ can map any element of $\Stan{k}$ associated with the composition factor $\Irre{k}$ to an element of the factor $\Irre{k^+}$, but not vice versa.  The factor $\Irre{k^+}$ is then a submodule of $\Stan{k}$ and the factor $\Irre{k}$ represents the quotient $\Stan{k} / \Irre{k^+}$, as in the exact sequence \eqref{eq:StanExact}.  Note that we are employing the convention that modules with $k \notin \Lambda$ are zero: when $k=k_R$, the diagram on the right degenerates to that of the left because $\Irre{k_R^+} = 0$. In the language introduced above, the standard modules $\Stan{k}$, with $k$ non-critical and $k \neq k_R$, have $\soc\Stan{k} = \rad\Stan{k} = \Radi{k} \cong \Irre{k^+}$ and $\head\Stan{k} \cong \Irre{k}$.

The short exact sequence \eqref{es:ICI} then gives the Loewy diagrams of the costandard modules:
\begin{equation} \label{ld:Cost}
\begin{matrix}\Cost{k} \\ \text{(\(k\) critical)}\end{matrix}:\ \ \ \Irre k
\qquad\qquad \begin{matrix}\Cost{k}\\ \text{(\(k\) non-critical)}\end{matrix} :\
\begin{tikzpicture}[baseline={(current bounding box.center)},scale=1/2.5]
\node (l1) at (0,0) [] {\(\Irre{k}\)};
\node (l2) at (2,2) [] {\(\Irre{k^+}\)};
\draw [->] (l2) -- (l1);
\end{tikzpicture}
.
\end{equation}
This illustrates the general rule that the Loewy diagram for $\twdu{\Mod{M}}$ is obtained from that of $\Mod{M}$ by reversing all arrows and flipping the diagram upside-down.  In principle, one should also replace each composition factor by its dual as well, but for $\Alg{} = \tl{}$ and $\dtl{}$, every irreducible is self-dual (\cref{prop:IrreDual}). Moreover, because duality is an exact contravariant functor (\cref{prop:DualExact}), it exchanges a module's radical and socle series.  In particular, it swaps the socle with the head:  $\soc(\twdu{\Mod{M}}) \cong \head\Mod{M}$.

The Loewy diagrams for the projective modules are also easily constructed. The following cases are easy:
\begin{subequations} \label{ld:ProjEasy}
\begin{equation}
\begin{matrix}\Proj{k}\\ \text{(\(k\) critical)}\end{matrix}:\ \ \ \Irre k
\qquad\qquad \begin{matrix}\Proj{k}\\ \text{(\(k\) non-critical with \(k=k_L\))}\end{matrix} :\
\begin{tikzpicture}[baseline={(current bounding box.center)},scale=1/2.5]
\node (l1) at (0,2) [] {\(\Irre{k}\)};
\node (l2) at (2,0) [] {\(\Irre{k^+}\)};
\draw [->] (l1) -- (l2);
\end{tikzpicture}
.
\end{equation}
The diagram for $k$ non-critical and larger than $k_L$ follows from the exact sequence \eqref{eq:ProjExact} and its dual \eqref{es:CPC}, recalling that these projectives are self-dual (\cref{prop:ProjDual}).  The submodules and quotients from these exact sequences lead to the following Loewy diagram:
\begin{equation} \label{ld:ProjDiamond}
\begin{matrix}\Proj{k}\\ \text{(\(k\) non-critical with \(k>k_L\))}\end{matrix}\ :\
\begin{tikzpicture}[baseline={(current bounding box.center)},scale=1/2.5]
\node (l1) at (0,2) [] {\(\Irre{k^-}\)};
\node (m1) at (2,4) [] {\(\Irre{k}\)};
\node (m2) at (2,0) [] {\(\Irre{k}\)};
\node (r1) at (4,2) [] {\(\Irre{k^+}\)};
\draw [->] (m1) -- (l1);
\draw [->] (l1) -- (m2);
\draw [->] (m1) -- (r1);
\draw [->] (r1) -- (m2);
\end{tikzpicture}
.
\end{equation}
There is no arrow between the two $\Irre{k}$ factors because such a self-extension would be a direct sum: $\Ext(\Irre{k},\Irre{k}) = 0$ (the single exception for $\tl{2}$, $\beta=0$ and $k=2$ is not relevant here as $\Proj{2,2}$ has only two composition factors) and any arrows between the $\Irre{k^{\pm}}$ are likewise ruled out by extension groups.  More fundamentally, such an arrow would contradict the fact that $\Proj{k}$ has submodules isomorphic to $\Stan{k^-}$ and $\Cost{k}$. For example, an arrow from $\Irre{k^-}$ to $\Irre{k^+}$ in \eqref{ld:ProjDiamond} would mean that the submodule generated by the $\Irre{k^-}$ factor is not isomorphic to $\Stan{k^-}$.

We mention the degenerate case $k=k_R$, for which $\Irre{k_R^+} = 0$.  This factor and its incident arrows are therefore removed from the Loewy diagram \eqref{ld:ProjDiamond} to obtain that of $\Proj{k_R}$.  A different degeneration occurs when $\Alg{} = \tl{}$, with $n$ even and $\beta = 0$, as then $\Irre{k^-} = 0$ for $k=2$.  Moreover, both degenerations occur simultaneously if, in addition, $n=2$.  For completeness, we draw the Loewy diagrams for each of these cases:
\begin{equation} \label{ld:ProjDegenerate}
\begin{matrix}\Proj{k}\\ \text{(\(k\) non-critical with} \\ \text{\(k=k_R\) and \(n \neq 2\))}\end{matrix}\ :\
\begin{tikzpicture}[baseline={(current bounding box.center)},scale=1/2.5]
\node (l1) at (0,2) [] {\(\Irre{k^-}\)};
\node (m1) at (2,4) [] {\(\Irre{k}\)};
\node (m2) at (2,0) [] {\(\Irre{k}\)};
\draw [->] (m1) -- (l1);
\draw [->] (l1) -- (m2);
\end{tikzpicture}
\qquad\quad\begin{matrix}\Proj{2} \\ \text{(\(n \neq 2\) even,}\\ \text{\(\Alg{} = \tl{}\), \(\beta=0\))}\end{matrix}\ :\
\begin{tikzpicture}[baseline={(current bounding box.center)},scale=1/2.5]
\node (m1) at (2,4) [] {\(\Irre{2}\)};
\node (m2) at (2,0) [] {\(\Irre{2}\)};
\node (r1) at (4,2) [] {\(\Irre{4}\)};
\draw [->] (m1) -- (r1);
\draw [->] (r1) -- (m2);
\end{tikzpicture}
\qquad\quad\begin{matrix}\Proj{2,2}\\ \text{(\(\Alg{} = \tl{2}\), \(\beta=0\))}\end{matrix}\ :\
\begin{tikzpicture}[baseline={(current bounding box.center)},scale=1/2]
\node (m1) at (2,3) [] {\(\Irre{2,2}\)};
\node (m2) at (2,1) [] {\(\Irre{2,2}\)};
\draw [->] (m1) -- (m2);
\end{tikzpicture}
.
\end{equation}
\end{subequations}

It may be useful to translate the Loewy diagrams \eqref{ld:ProjDiamond} and \eqref{ld:ProjDegenerate} of the $\Proj{k}$, with $k$ non-critical and $k \neq k_L$, into the language of radicals and socles.  The Loewy length of these modules is $3$ (except for $\Proj{2,2}$, for $\Alg{} = \tl{}$ and $\beta = 0$).  The (unique) Loewy series for $\Proj{k}$ takes the form
\begin{equation}
0 \subset \Irre{k} \subset \TheV{k^-} \subset \Proj{k},
\end{equation}
where $\Irre{k}$ is (isomorphic to) the socle of $\Proj{k}$ and the, as yet undescribed, module $\TheV{k^-}$ is its radical.\footnote{The module $\TheV{k^-}$ will be studied in \cref{sec:ProjInjPres} where the reason for the chosen notation will become apparent.} The semisimple quotient $\TheV{k} / \Irre{k} = \rad\Proj{k} / \rad^2\Proj{k} = \soc^2\Proj{k} / \soc\Proj{k}$ is (isomorphic to) $\Irre{k^-} \oplus \Irre{k^+}$.  Finally, the head is also (isomorphic to) $\Irre{k}$, consistent with the self-duality of $\Proj{k}$.  We summarise this as follows:
\begin{equation}
\begin{tikzpicture}[baseline={(current bounding box.center)},scale=1/2.0]
\node (l1) at (0.2,2) [] {\(\Irre{k^-}\)};
\node (m1) at (2,4) [] {\(\Irre{k}\)};
\node (m2) at (2,0.2) [] {\(\Irre{k}\)};
\node (r1) at (4.1,2) [] {\(\Irre{k^+}\)};
\draw [->] (m1) -- (l1);
\draw [->] (l1) -- (m2);
\draw [->] (m1) -- (r1);
\draw [->] (r1) -- (m2);
\draw[bleu,densely dotted,thick,rounded corners] (2,4) circle (0.5); %the head
\draw[rouge,densely dashed,thick,rounded corners] (1,0.25) -- (0,1.25) -- (-0.75,2) -- (0,2.75) -- (2,0.75) -- (4,2.75) -- (4.75,2) -- (2,-0.75) -- (1,0.25); %the radical
\draw[black,thick,rounded corners] (2,0.2) circle (0.5); %the socle
\node (theR) at (-1.5,1.) [] {\(\rad \Proj k\)}; % label for rad
\node (flecheR) at (-0.5,2.) [] {};
\draw [->] (theR) to [bend right=-25] (flecheR);
\node (theH) at (-0.3,3.5) [] {\(\head \Proj k\)}; % label for head
\node (flecheH) at (1.7,4.2) [] {};
\draw [->] (theH) to [bend right=-15] (flecheH);
\node (theS) at (-0.3,-0.5) [] {\(\soc \Proj k\)}; % label for socle
\node (flecheS) at (1.7,0.2) [] {};
\draw [->] (theS) to [bend right=15] (flecheS);
\end{tikzpicture}
\ .
\end{equation}

We close this section by describing a suggestive use for Loewy diagrams: they help in determining the structure of non-trivial extensions. In \cref{sub:extension}, we observed that \cref{prop:SIExte} predicted the existence of indecomposable modules with three and four composition factors. For example, the extension groups $\Ext(\Stan{k},\Irre{k^-})$, $\Ext(\Irre{k^+},\Stan{k^-})$ and $\Ext(\Stan{k^+},\Stan{k^-})$ are each isomorphic to $\CC$. We use dashed arrows in these drawings to indicate the extension itself; solid arrows describing the submodule and quotient:
\begin{equation} \label{ld:ProjSubquotients}
\begin{tikzpicture}[baseline={(current bounding box.center)},scale=0.9]
\node (k-) at (0,0) [] {\(\Irre{k^-}\)};
\node (k) at (1,1) [] {\(\Irre{k}\)};
\node (k+) at (2,0) [] {\(\Irre{k^+}\)};
\draw [->,dashed] (k) -- (k-);
\draw [->] (k) -- (k+);
\node at (1,-1) [] {\(\Ext(\Stan{k},\Irre{k^-})\)};
\end{tikzpicture}
\qquad\qquad
\begin{tikzpicture}[baseline={(current bounding box.center)},scale=0.9]
\node (k--) at (0,1) [] {\(\Irre{k^-}\)};
\node (k-) at (1,0) [] {\(\Irre{k}\)};
\node (k) at (2,1) [] {\(\Irre{k^+}\)};
\draw [->,dashed] (k) -- (k-);
\draw [->] (k--) -- (k-);
\node at (1,-1) [] {\(\Ext(\Irre{k^+},\Stan{k^-})\)};
\end{tikzpicture}
\qquad\qquad
\begin{tikzpicture}[baseline={(current bounding box.center)},scale=0.9]
\node (k) at (0,1) [] {\(\Irre{k^-}\)};
\node (k+) at (1,0) [] {\(\Irre{k}\)};
\node (k++) at (2,1) [] {\(\Irre{k^+}\)};
\node (k+++) at (3,0) [] {\(\Irre{k^{++}}\)};
\draw [->,dashed] (k++) -- (k+);
\draw [->] (k) -- (k+);
\draw [->] (k++) -- (k+++);
\node at (1.5,-1) [] {\(\Ext(\Stan{k^+},\Stan{k^-})\)};
\end{tikzpicture}
.
\end{equation}
We shall prove in \cref{sub:projectiveBT} that the Loewy diagrams of these extensions are obtained from these diagrams by replacing the dashed arrows with solid ones.  Moreover, representatives for the non-trivial isomorphism classes of the first two $\Ext$-groups will be constructed, in \cref{sec:ProjInjPres}, as subquotients of the principal indecomposables, as their Loewy diagrams suggest.  In contrast, the third Loewy diagram is new, indicating that there are more indecomposables to be discovered beyond the subquotients of the projectives and injectives.

%%%%%%%%%%%
%
% a complete set of indecomposable modules I
%
%%%%%%%%%%%

\section{A complete set of indecomposable modules}\label{sec:newFamilies}

This section constructs a complete set of classes of indecomposable modules for $\tl{n}$ and $\dtl{n}$, up to isomorphism, using relatively elementary theory. \cref{sec:ARquiver} obtains the same set using a more advanced tool, namely Auslander-Reiten theory. There, the main results of this theory will be reviewed (without proof) and then applied to $\tl{n}$ and $\dtl{n}$. It will turn out that both of these algebras are representation-finite, meaning that their inequivalent indecomposable (finite-dimensional) modules are finite in number.

Here, we pick up from the observation that closed \cref{sub:extension}. The table of extension groups of \cref{prop:SIExte} proves the existence of non-trivial extensions with three and four composition factors that, by \cref{lem:indecExtension}, are indecomposable. Our first step will exploit this observation further. It will define, recursively, a family of indecomposable modules, the existence of the next member being granted by a non-trivial extension group involving the present one. The second step will reveal the structure of these modules which we will summarise by computing their socles, radicals and heads.  This information suffices to draw their Loewy diagrams. The third step uses this information to construct projective and injective presentations of the new indecomposables.  These presentations allow us to compute their extension groups with irreducible modules in the fourth step. \cref{lem:trivialExtensions} tells us that these groups will detect if \emph{any} further extensions are possible. The fifth and last step will prove that the modules introduced in the first step form, together with the projective and critical standard modules, a complete set of inequivalent indecomposable modules.

%
% recursive construction of the Bees and the Tees
%

\subsection{The modules $\TheB{n,k}l$ and $\TheT{n,k}l$} \label{sub:definitionBT}

Let $k\in\Lambda_0$ be non-critical and define $\TheB k0\equiv \Irre k$. We state the obvious fact that this module is irreducible, hence indecomposable, with a single composition factor $\Irre k$. From $\TheB k0$, we construct recursively a family of indecomposable modules $\TheB k{2j}$, $j=0, 1, \dots$, as follows (the range of $j$ will be clarified below). The $j$-th step constructs the module $\TheB k{2j}$ which has $2j+1$ composition factors $\Irre{k}$, $\Irre{k^+}$, $\Irre{k^{++}}$, \dots, $\Irre{k^{2j}}$. The following step of the recursion then shows that the extension group $\Ext(\Stan{k^{2j+1}},\TheB k{2j})$ is isomorphic to $\mathbb C$ and we name the non-trivial extension $\TheB k{2(j+1)}$.

Note that the first step of this recursive definition follows easily from \cref{prop:SIExte}. Indeed, it shows that the extension group $\Ext(\Stan{k^+},\TheB{k}{0}) = \Ext(\Stan{k^+},\Irre k)$ is $\mathbb C$, hence that the short exact sequence $\ses{\Irre k}{\TheB k{2}}{\Stan{k^+}}$ completely characterises the non-trivial extension $\TheB k2$, up to isomorphism.  Moreover, since the composition factors of $\Irre k$ and $\Stan{k^+}$ are distinct, \cref{lem:indecExtension} proves that $\TheB k{2}$ is indecomposable.  Finally, its composition factors are clearly $\Irre k$, $\Irre{k^+}$ and $\Irre{k^{++}}$, as required.

Suppose then that the $j$-th recursive step has been completed, so that there exists an indecomposable module $\TheB k{2j}$ with composition factors $\Irre{k}$, $\Irre{k^+}$, \dots, $\Irre{k^{2j}}$ and non-split short exact sequence
\begin{equation}\label{eq:exactBk2j}
\dses{\TheB k{2(j-1)}}{}{\TheB k{2j}}{}{\Stan{k^{2j-1}}}.
\end{equation}
Our goal is to compute $\Ext(\Stan{k^{2j+1}},\TheB k{2j})$.  To do this, we first show that $\Hom(\Stan{k^{2j}}, \TheB k{2j}) \cong \CC$.  This follows from the Hom-Ext long exact sequence \eqref{es:lcohomology}, derived from \eqref{eq:exactBk2j}:
\begin{equation}
0\lra
\Hom(\Stan{k^{2j}},\TheB k{2(j-1)})\lra
\Hom(\Stan{k^{2j}},\TheB k{2j})\lra
\Hom(\Stan{k^{2j}},\Stan{k^{2j-1}})\lra
\Ext(\Stan{k^{2j}},\TheB k{2(j-1)})\lra \cdots.
\end{equation}
Indeed, $\Stan{k^{2j}}$ and $\TheB k{2(j-1)}$ have no common composition factor, hence their Hom-group is zero, and their Ext-group is also zero because their composition factors are sufficiently separated (see \cref{lem:trivialExtensions}\ref{it:Ext=0}).  As $\Hom(\Stan{k^{2j}},\Stan{k^{2j-1}}) \cong \CC$, by \cref{prop:SIPHomo}, we obtain $\Hom(\Stan{k^{2j}}, \TheB k{2j})\simeq \CC$, as claimed.

The desired Ext-group is now computed from the exact sequence
\begin{equation}
0=\Hom(\Proj{k^{2j+1}},\TheB k{2j})\lra
\Hom(\Stan{k^{2j}},\TheB k{2j})\lra
\Ext(\Stan{k^{2j+1}},\TheB k{2j})\lra
\Ext(\Proj{k^{2j+1}},\TheB k{2j})=0.
\end{equation}
This follows from the Hom-Ext sequence \eqref{es:rcohomology}, based on \eqref{eq:ProjExact}, by noting that the first Hom-group is zero, because $\Irre{k^{2j+1}}$ is not a composition factor of $\TheB k{2j}$, and that the last Ext-group is zero, because $\Proj{k^{2j+1}}$ is projective.  It follows that $\Ext(\Stan{k^{2j+1}},\TheB k{2j}) \cong \Hom(\Stan{k^{2j}}, \TheB k{2j}) \cong \CC$. This result leads to the definition of $\TheB{k}{2(j+1)}$ as any representative of the corresponding non-trivial extension. Its composition factors are clearly $\Irre{k}$, $\Irre{k^+}$, \dots, $\Irre{k^{2j}}$, $\Irre{k^{2j+1}}$, $\Irre{k^{2(j+1)}}$.  Indecomposability follows from \cref{lem:indecExtension} as usual, hence the $(j+1)$-th step of the recursion is complete.

We remark that this recursion may be continued as long as there are integers larger than $k^{2j}$ in the (non-critical) orbit of $k$, that is, as long as $k^{2j+1}\in\Lambda$. In the case where $k^{2j+1} = k_R$, so that $k^{2(j+1)}\notin\Lambda$, the above computation remains valid, but we shall denote the resulting module by $\TheB k{2j+1}$ to underline the fact that it only contains $2j+2$ composition factors, instead of the $2j+3$ factors possessed by the other $\TheB k{2(j+1)}$.

A similar recursive construction can be used to construct a second family of modules $\TheB{k}{2j+1}$, $j=0, 1, \dots$, starting now from $\TheB k1\equiv \Cost k$, for any non-critical $k$ in $\Lambda_0$ smaller than $k_R^-$. The module $\TheB {k}{2j+1}$ is then defined, if $k^{2j+1}<k_R$, to be a non-trivial extension described by $\Ext(\Cost{k}, \TheB{k^{++}}{2j-1})\simeq \mathbb C$ (when $k \in \Lambda_0$ of course). The computation of this Ext-group is again done recursively, uses now the sequence \eqref{es:CPC}, and copies otherwise the previous argument.  Note that, in the first family, composition factors are added to the right of existing ones, but in this new family they are added to the left. The composition factors of $\TheB{k}{2j+1}$ are thus $\Irre{k},\Irre{k^{+}},\dots,\Irre{k^{2j}},\Irre{k^{2j+1}}$ and are even in number. For this second family, the process stops whenever the index of the costandard module $\Cost{k}$ that would be used to extend $\TheB{k^{++}}{2j-1}$ falls outside $\Lambda_0$. The constraints $k<k_R^-$ for $\TheB k1$ and $k^{2j+1}<k_R$ for $\TheB {k}{2j+1}$ ensure that $\Irre{k_R}$ is never a composition factor of an indecomposable module of this second family. Note that the first family contains modules $\TheB{k}{2j+1}$ with an even number of composition factors, but that they all have $\Irre{k_R}$ as composition factor. In this way, these constructions never produce modules with the same labels using different means:  the notation $\TheB{k}{l}$ is well defined.

Finally, the duals of the modules that we have constructed above will be denoted by $\TheT{k}{j} \equiv \twdu{(\TheB{k}{j})}$.  As duality is exact contravariant (\cref{prop:DualExact}), this is equivalent to dualising the above inductive definitions:  $\TheT{k}{2j}$ is realised through the non-trivial extensions of $\Ext(\TheT{k}{2(j-1)},\Cost{k^{2j-1}})$ and $\TheT{k}{2j+1}$ through those of $\Ext(\TheT{k^{++}}{2j-1},\Stan{k})$.
\begin{proposition}\label{prop:3point1}
Given the recursive constructions above, the $\TheB kl$ and $\TheT kl$ are indecomposable and appear in the following non-split exact sequences:
\begin{subequations} \label{es:DefBandT}
\begin{align}
&\dses{\TheB k{2(j-1)}}{}{\TheB k{2j}}{}{\Stan{k^{2j-1}}}, & &\dses{\Cost{k^{2j-1}}}{}{\TheT k{2j}}{}{\TheT k{2(j-1)}}, \\
&\dses{\TheB{k^{++}}{2j-1}}{}{\TheB k{2j+1}}{}{\Cost k}, & &\dses{\Stan k}{}{\TheT k{2j+1}}{}{\TheT{k^{++}}{2j-1}} & &\text{(\(k^{2j+1} \neq k_R\)).}
\shortintertext{Moreover, the non-split exact sequences for $\TheB{k}{2j+1}$ and $\TheT{k}{2j+1}$, when $k^{2j+1}=k_R$, are instead}
&\dses{\TheB k{2j}}{}{\TheB k{2j+1}}{}{\Irre{k^{2j+1}}}, & &\dses{\Irre{k^{2j+1}}}{}{\TheT k{2j+1}}{}{\TheT k{2j}} & &\text{(\(k^{2j+1} = k_R\)).}
\end{align}
\end{subequations}
The $\TheB kl$ and $\TheT kl$ have $l+1$ composition factors, namely $\Irre k$, $\Irre{k^+}$, \dots, $\Irre{k^l}$, and they represent mutually non-isomorphic classes of indecomposable modules, except for $\TheB k0=\TheT k0=\Irre k$.
\end{proposition}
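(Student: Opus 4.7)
The plan is to split the claim into three parts. First, for the existence of the stated non-split exact sequences, indecomposability, and the composition factor count, I would appeal directly to the recursive construction in the paragraphs preceding the proposition: each SES is tautological, the composition factor count follows by a trivial induction on $l$ from the SES, and indecomposability of $\TheB{k}{l}$ follows by induction via \cref{lem:indecExtension}, since at each step the submodule and quotient in the defining SES are indecomposable (using \cref{prop:Standard}\ref{it:StanIndec} for the standard/costandard factors), have disjoint composition factor labels so that the Hom between them vanishes (any morphism would factor through a simple head among the labels of one module into a simple socle among the labels of the other, an impossibility), and the extension is nonsplit by construction. Indecomposability of each $\TheT{k}{l}$ then follows from that of $\TheB{k}{l}$ because duality is an exact autoequivalence on the category of finite-dimensional modules (\cref{prop:DualExact,prop:DoubleDual}); the degenerate equality $\TheT{k}{0}=\Irre{k}$ uses \cref{prop:IrreDual}.

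Second, to rule out isomorphisms among the $\TheB{k}{l}$ and $\TheT{k}{l}$ with different $(k,l)$, I would simply observe that the composition factor multiset is an isomorphism invariant: since each label appears with multiplicity one and the labels $k, k^{+}, \dots, k^{l}$ form a strictly increasing chain in the orbit $[k]$, the smallest label recovers $k$ and the total count recovers $l$.

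The hard part, and the main obstacle, is distinguishing $\TheB{k}{l}$ from $\TheT{k}{l}$ for fixed $(k,l)$ with $l \geq 1$, since they share all composition factors. My plan is to prove, by simultaneous induction on $l$, that $\Hom(\Irre{k}, \TheB{k}{l}) \cong \CC$ for all $l \geq 0$ while $\Hom(\Irre{k}, \TheT{k}{l}) = 0$ for all $l \geq 1$. Because $\TheT{k}{l} = \twdu{\TheB{k}{l}}$, any isomorphism between the two would identify their socles and thus these two $\Hom$-spaces, delivering the contradiction. The induction itself is straightforward: for each SES in the statement I apply $\Hom(\Irre{k}, \blank)$, use the long exact sequence \eqref{es:lcohomology}, and read off the result from the tables in \cref{prop:SIPHomo,prop:SIExte}, with \cref{lem:trivialExtensions}\ref{it:Ext=0} killing most Ext terms because the composition factors on the ``other end'' of each SES sit at indices at distance at least two from $k$. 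The one genuinely nontrivial step is the base case $l=2$ on the $\TheT$ side: the long exact sequence derived from $\ses{\Cost{k^{+}}}{\TheT{k}{2}}{\Irre{k}}$ collapses to
\begin{equation*}
0 \lra \Hom(\Irre{k},\TheT{k}{2}) \lra \CC \overset{\partial}{\lra} \Ext(\Irre{k},\Cost{k^{+}}) \cong \CC,
\end{equation*}
and the Yoneda description of the connecting map $\partial$ sends $\id_{\Irre{k}}$ to the (nonzero) extension class defining $\TheT{k}{2}$, so $\partial$ is an isomorphism and $\Hom(\Irre{k},\TheT{k}{2}) = 0$. The degenerate cases $k^{2j+1}=k_{R}$ use the alternative SES in the statement and proceed identically.
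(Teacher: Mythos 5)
Your proof is correct and follows the same essential strategy as the paper: establish the exact sequences, composition factors, and indecomposability from the recursive construction; rule out isomorphisms between modules with distinct $(k,l)$ by counting/labelling composition factors; and finally distinguish $\TheB{k}{l}$ from $\TheT{k}{l}$ by observing that $\Irre{k}$ lives in the socle of the former but not of the latter. Where you diverge is in \emph{how} you establish that last, genuinely nontrivial fact. You compute $\Hom(\Irre{k},\TheT{k}{l}) = 0$ by a long-exact-sequence induction, whose base case requires identifying the connecting map $\partial$ in \eqref{es:lcohomology} with the Yoneda pullback to see that it carries $\id_{\Irre{k}}$ to the non-trivial class in $\Ext(\Irre{k},\Cost{k^+})$. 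The paper's argument is leaner: since $\TheB{k}{0}=\Irre{k}$ sits inside $\TheB{k}{l}$ by construction, $\Irre{k}$ is a submodule of $\TheB{k}{l}$, hence a quotient of $\TheT{k}{l}=\twdu{(\TheB{k}{l})}$; but a composition factor of multiplicity one cannot be both a submodule and a quotient of an indecomposable reducible module, so $\TheB{k}{l} \ncong \TheT{k}{l}$. This avoids the Ext-computation and the connecting-map identification entirely — the positive fact (submodule of one, quotient of the other, multiplicity one) suffices, and it is a nice economy worth noting. In essence, your base-case computation is the same multiplicity-one argument in disguise: $\Irre{k}\hookrightarrow\TheT{k}{2}$ would, upon composing with the surjection $\TheT{k}{2}\sra\Irre{k}$, either split $\TheT{k}{2}$ or land in $\Cost{k^+}$ whose socle is $\Irre{k^+}$, both impossible. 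Both proofs are valid; yours makes the homological machinery explicit, the paper's keeps the argument structural.
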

\begin{proof}
Only the last statement remains to be proved. The indices of the composition factors of the $\TheB{k}{l}$ and the $\TheT{k}{l}$ are consecutive integers, from $k$ to $k^l$, in the non-critical orbit $[k]$. If two of these modules have the same indices, then they must have the same values of $k$ and $l$. Thus, only $\TheB kl$ and $\TheT kl$ could be isomorphic. For $l>0$, $\TheB kl$ and $\TheT kl$ are both reducible, but indecomposable, so they do not coincide with their socles. Let us first study the case with $l$ even. Because $\TheB k0\simeq\Irre k$ and $\TheB k{2(j-1)}\subset \TheB k{2j}$, it follows that $\Irre k$ is in the socle of $\TheB k{2j}$. Since $\TheT k{2j} = \twdu{(\TheB k{2j})}$, $\Irre k$ is in its head (\cref{sub:Loewy}). Thus, $\TheB k{2j}$ and $\TheT k{2j}$ cannot be isomorphic for $j>0$. The same argument also takes care of the pair $\TheB k{2j+1}$ and $\TheT k{2j+1}$ when $k^{2j+1}=k_R$. Finally, the argument for $l$ odd copies the previous one, but uses the irreducible $\Irre{k^{2j+1}}$ that is in the head of $\TheB k{2j+1}$ but in the socle of $\TheT {k}{2j+1}$.
\end{proof}

%
% projective covers / injective hulls of the Bees and the Tees
%
\subsection{Their Loewy diagrams}\label{sub:LoewyBT}

This section clarifies the structure of the new modules $\TheB kl$ and $\TheT kl$ by identifying their socles, radicals and heads (see \cref{sub:Loewy}). The Loewy diagrams for the $\TheB kl$ and $\TheT kl$ are easily drawn from these data. Moreover, their injective hulls and projective covers will be obtained as immediate consequences of \cref{prop:ProjInj}.
\begin{proposition}\label{prop:socRadHdBT}
The modules $\TheB kl$ and $\TheT kl$, for $l\ge 1$, all have Loewy length $2$.  The head, socle and radical of each are given in the following table:
\begin{center}
\begin{tabular}{C|DDDD}
\Mod{M} & \TheB{k}{2j} & \TheB{k}{2j+1} & \TheT{k}{2j} & \TheT{k}{2j+1} \\
\hline
\head \Mod{M} & \bigoplus_{i=0}^{j-1} \Irre{k^{2i+1}} & \bigoplus_{i=0}^j \Irre{k^{2i+1}} & \bigoplus_{i=0}^j \Irre{k^{2i}} & \bigoplus_{i=0}^j \Irre{k^{2i}} \\
\soc \Mod{M} = \rad \Mod{M} & \bigoplus_{i=0}^j \Irre{k^{2i}} & \bigoplus_{i=0}^j \Irre{k^{2i}} & \bigoplus_{i=0}^{j-1} \Irre{k^{2i+1}} & \bigoplus_{i=0}^j \Irre{k^{2i+1}} \ .
\end{tabular}
\end{center}
Of course, $\TheB{k}{0} \cong \TheT{k}{0} \cong \Irre{k}$, hence in this case, the radical is $0$ and the socle and the head are $\Irre{k}$.
\end{proposition}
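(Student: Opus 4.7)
The plan is to argue by induction on $l$, reading off $\head$ and $\soc$ from the defining short exact sequences of \cref{prop:3point1} via the Hom-Ext long exact sequences of \cref{sec:Ext}. Recall that for any finite-dimensional module $\Mod{M}$, the multiplicity of an irreducible $\Irre{}$ in $\head\Mod{M}$ equals $\dim\Hom(\Mod{M},\Irre{})$ and its multiplicity in $\soc\Mod{M}$ equals $\dim\Hom(\Irre{},\Mod{M})$. I will focus on $\TheB{k}{2j}$; the odd cases $\TheB{k}{2j+1}$ (in each of their two variants) proceed identically, with the right-hand term $\Stan{k^{2j-1}}$ replaced by $\Irre{k^{2j+1}}$ or $\Cost{k}$, and the $\TheT{k}{l}$ cases reduce to the $\TheB{k}{l}$ ones via duality, which is exact, preserves irreducibles (\cref{prop:IrreDual}), exchanges $\soc$ with $\head$, and preserves Loewy length.

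For the base case $l=1$, \cref{coro:StanUnique} identifies $\TheB{k}{1}$ with $\Cost{k}$ in either variant, so \eqref{ld:Cost} supplies the asserted head and socle. For the inductive step I apply $\Hom(\blank,\Irre{k^m})$ and $\Hom(\Irre{k^m},\blank)$ to
\begin{equation*}
\dses{\TheB{k}{2(j-1)}}{}{\TheB{k}{2j}}{}{\Stan{k^{2j-1}}},
\end{equation*}
read the Hom- and Ext-values for the outer terms from \cref{prop:SIPHomo,prop:SIExte} and plug in the induction hypothesis for $\TheB{k}{2(j-1)}$. The core point is that the connecting maps vanish: every composition factor of $\TheB{k}{2(j-1)}$ has index at most $2j-2$, so for $m\in\set{2j-1,2j}$ the indices of $\Irre{k^m}$ and of each composition factor of $\TheB{k}{2(j-1)}$ differ by at least $2$, and \cref{lem:trivialExtensions}\ref{it:Ext=0} forces $\Ext(\Irre{k^m},\TheB{k}{2(j-1)})=0$. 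The remaining dimensions then assemble as $\head\TheB{k}{2j}=\bigoplus_{i=0}^{j-1}\Irre{k^{2i+1}}$ and $\soc\TheB{k}{2j}=\bigoplus_{i=0}^{j}\Irre{k^{2i}}$, matching the table.

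To obtain Loewy length $2$, observe that $\TheB{k}{2j}/\soc\TheB{k}{2j}$ has only odd-indexed composition factors, any two of which satisfy the separation hypothesis of \cref{lem:trivialExtensions}\ref{it:Ext=0} (their difference is an even integer at least $2$), so this quotient is semisimple. The identity $\rad(\Mod{M}/\Mod{S})=(\rad\Mod{M}+\Mod{S})/\Mod{S}$ then gives $\rad\TheB{k}{2j}\subseteq\soc\TheB{k}{2j}$, and a composition-factor count (head has $j$ factors, so the radical has $2j+1-j=j+1$, the same as the socle) forces equality, whence $\rad^2\TheB{k}{2j}=0$ and Loewy length is exactly $2$.

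The most delicate piece will be the first-family odd case $\TheB{k}{2j+1}$ with $k^{2j+1}=k_R$, where the right-hand term of the defining sequence is the simple $\Irre{k^{2j+1}}$ and the Ext-vanishing argument breaks down because $\Ext(\Irre{k^{2j+1}},\Irre{k^{2j}})\ne 0$; here I would instead observe directly that any lift of $\Irre{k^{2j+1}}$ to a simple submodule of $\TheB{k}{2j+1}$ would split the sequence, contradicting the non-splitness guaranteed by \cref{prop:3point1}, so $\soc\TheB{k}{2j+1}\subseteq\TheB{k}{2j}$ and hence $\soc\TheB{k}{2j+1}=\soc\TheB{k}{2j}$, after which the rest of the argument proceeds as above. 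One should also confirm that the sporadic exceptions of \cref{prop:SIExte} arising for $\Alg{}=\tl{}$ with $\beta=0$ do not interfere; these can be dispatched case by case once the generic picture is clear.
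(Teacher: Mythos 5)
Your proposal is correct and takes a genuinely different route from the paper's own proof.  The paper establishes $\soc\TheB{k}{2j}$ by a structural induction (using the facts that the image of a semisimple module is semisimple and that $\soc\Mod{L} = \Mod{L} \cap \soc\Mod{M}$ for $\Mod{L} \subseteq \Mod{M}$, together with an injective-hull argument to detect $\Irre{k^{2j}}$), then proves $\rad = \soc$ by constructing an explicit injection of $\TheB{k}{2j}/\soc\TheB{k}{2j}$ into $\rad\Inje{}[\TheB{k}{2j}]/\soc\Inje{}[\TheB{k}{2j}]$ and invoking the fact that injectives have Loewy length at most $3$.  Your version sidesteps the injective-hull embedding entirely: you read off multiplicities in $\head$ and $\soc$ as $\dim\Hom(\blank,\Irre{})$ and $\dim\Hom(\Irre{},\blank)$ from the Hom-Ext long exact sequences, and you get semisimplicity of $\TheB{k}{2j}/\soc\TheB{k}{2j}$ directly from the fact that its composition factors are all odd-indexed with multiplicity one, so all relevant extension groups vanish by \cref{lem:trivialExtensions}\ref{it:Ext=0}.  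This is arguably cleaner, since it avoids the delicate bookkeeping of which copy of $\Irre{k^{2i}}$ inside $\Inje{k^{2i}}$ receives the image.  Two small points of imprecision, neither fatal: first, the sentence ``the connecting maps vanish: \dots $\Ext(\Irre{k^m},\TheB{k}{2(j-1)})=0$'' only addresses the covariant sequence used for the socle; for the contravariant (head) sequence the relevant group is $\Ext(\Stan{k^{2j-1}},\Irre{k^m})$, and its only nonzero instance ($m=2j-2$) is killed instead because the source Hom-group $\Hom(\TheB{k}{2(j-1)},\Irre{k^{2j-2}})$ vanishes by the induction hypothesis.  Second, for $\TheB{k}{2j+1}$ in the second family (with $k^{2j+1}<k_R$) the defining sequence $\ses{\TheB{k^{++}}{2j-1}}{\TheB{k}{2j+1}}{\Cost{k}}$ extends on the \emph{left}, so the induction hypothesis is applied at the shifted base point $k^{++}$ rather than at $k$; ``proceed identically'' glosses over this shift, though the method does carry over without trouble.
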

\noindent Note that this table is consistent with the general result $\soc\Mod{M} \cong \head \twdu{\Mod{M}}$.

Before giving a proof, we use the proposition to draw the Loewy diagrams of the $\TheB kl$ and $\TheT kl$, with $l>0$, thus revealing their ``zigzag'' structure:
\begin{equation}
\begin{gathered}
\begin{tikzpicture}[baseline={(current bounding box.center)},scale=1/3]
\node (m3) at (6,2.5) [] {};
\node (m5) at (10,2.5) [] {};
\node (m7) at (14,2.5) [] {};
\node (m9) at (18,2.5) [] {};
\node (b2) at (4,0) [] {};
\node (b4) at (8,0) [] {};
\node (b6) at (12,0) [] {};
\node (b8) at (16,0) [] {};
\node (b10) at (20,0) [] {};
\dodu{b2}\dodu{b4}\dodu{b6}\dodu{b8}\dodu{b10}
\dodu{m3}\dodu{m5}\dodu{m7}\dodu{m9}
\draw [->] (m3) -- (b2);
\draw [->] (m3) -- (b4); \draw [->] (m5) -- (b4);
\draw [->] (m5) -- (b6); \draw [->] (m7) -- (b8);
\draw [->] (m9) -- (b8); \draw [->] (m9) -- (b10);
\node at (4.8,0) {$k$};
\node at (9.2,0) {$k^{++}$};
\node at (7.2,2.5) {$k^+$};
\node at (20.0,2.5) {$k^{2j-1}$};
\node at (21.6,0) {$k^{2j}$};
\node at (12,2.5) {$\cdots$};
\node at (14,0) {$\cdots$};
\node (theM) at (2.5,3) [] {$\head\TheB k{2j}$};
\node (flecheM) at (5.8,2.4) [] {};
\node (theA) at (0.5,0.0) [] {$\soc\TheB k{2j}$};
\node (flecheA) at (3.8,0) [] {};
\draw [->] (3.9,3) to [bend right=-25] (flecheM);
\draw [->] (2,-0.5) to [bend right=25] (flecheA);
\draw[bleu,densely dotted,thick,rounded corners] (8,1.75) -- (6,1.75) -- (5.25,2.5) -- (6,3.25) -- (18,3.25) -- (18.75,2.5) -- (18,1.75) -- (8.05,1.75) ;
\draw[black,thick,rounded corners] (6,-.75) -- (4,-.75) -- (3.25,0) -- (4,.75) -- (20,.75) -- (20.75,0) -- (20,-.75) -- (6,-.75) ;
\end{tikzpicture}
\quad\quad
\begin{tikzpicture}[baseline={(current bounding box.center)},scale=1/3]
\node (m3) at (6,2.5) [] {};
\node (m5) at (10,2.5) [] {};
\node (m7) at (14,2.5) [] {};
\node (m9) at (18,2.5) [] {};
\node (b2) at (4,0) [] {};
\node (b4) at (8,0) [] {};
\node (b6) at (12,0) [] {};
\node (b8) at (16,0) [] {};
\node (b10) at (20,0) [] {};
\dodu{b2}\dodu{b4}\dodu{b6}\dodu{b8}%\dodu{b10}
\dodu{m3}\dodu{m5}\dodu{m7}\dodu{m9}
\draw [->] (m3) -- (b2);
\draw [->] (m3) -- (b4); \draw [->] (m5) -- (b4);
\draw [->] (m5) -- (b6); \draw [->] (m7) -- (b8);
\draw [->] (m9) -- (b8); %\draw [->] (m9) -- (b10);
\node at (4.8,0) {$k$};
\node at (9.2,0) {$k^{++}$};
\node at (7.2,2.5) {$k^+$};
\node at (20.0,2.5) {$k^{2j+1}$};
\node at (17.8,0) {$k^{2j}$};
\node at (12,2.5) {$\cdots$};
\node at (14,0) {$\cdots$};
\node (theM) at (2.05,3) [] {$\head\TheB k{2j+1}$};
\node (flecheM) at (5.8,2.4) [] {};
\node (theA) at (0.5,0.0) [] {$\soc\TheB k{2j+1}$};
\node (flecheA) at (3.8,0) [] {};
\draw [->] (3.9,3) to [bend right=-25] (flecheM);
\draw [->] (2,-0.5) to [bend right=25] (flecheA);
\draw[bleu,densely dotted,thick,rounded corners] (8,1.75) -- (6,1.75) -- (5.25,2.5) -- (6,3.25) -- (18,3.25) -- (18.75,2.5) -- (18,1.75) -- (8.05,1.75) ;
\draw[black,thick,rounded corners] (6,-.75) -- (4,-.75) -- (3.25,0) -- (4,.75) -- (16.25,.75) -- (17.0,0) -- (16.25,-.75) -- (6,-.75) ;
\end{tikzpicture}
\\
\begin{tikzpicture}[baseline={(current bounding box.center)},scale=1/3]
\node (m1) at (2,2.5) [] {};
\node (m3) at (6,2.5) [] {};
\node (m5) at (10,2.5) [] {};
\node (m7) at (14,2.5) [] {};
\node (m9) at (18,2.5) [] {};
\node (b2) at (4,0) [] {};
\node (b4) at (8,0) [] {};
\node (b6) at (12,0) [] {};
\node (b8) at (16,0) [] {};
\dodu{b2}\dodu{b4}\dodu{b6}\dodu{b8}
\dodu{m3}\dodu{m1}\dodu{m5}\dodu{m7}\dodu{m9}
\draw [->] (m1) -- (b2);
\draw [->] (m3) -- (b2);
\draw [->] (m3) -- (b4); \draw [->] (m5) -- (b4);
\draw [->] (m7) -- (b6);
\draw [->] (m7) -- (b8); \draw [->] (m9) -- (b8);
\node at (3.0,2.5) {$k$};
\node at (5.0,0) {$k^{+}$};
\node at (7.6,2.5) {$k^{++}$};
\node at (19.75,2.5) {$k^{2j}$};
\node at (18.25,0) {$k^{2j-1}$};
\node at (12,2.5)  {$\cdots$};
\node at (10,0)  {$\cdots$};
\node (theM) at (-1.5,3) [] {$\head\TheT k{2j}$};
\node (flecheM) at (1.8,2.4) [] {};
\node (theA) at (0.25,0.0) [] {$\soc\TheT k{2j}$};
\node (flecheA) at (3.55,0) [] {};
\draw [->] (-0.1,3) to [bend right=-25] (flecheM);
\draw [->] (1.75,-0.5) to [bend right=25] (flecheA);
\draw[bleu,densely dotted,thick,rounded corners] (4,1.75) -- (2,1.75) -- (1.25,2.5) -- (2,3.25) -- (18,3.25) -- (18.75,2.5) -- (18,1.75) -- (4.05,1.75) ;
\draw[black,thick,rounded corners] (5.75,-.75) -- (3.75,-.75) -- (3.0,0) -- (3.75,.75) -- (16.25,.75) -- (17.0,0) -- (16.25,-.75) -- (5.75,-.75) ;
\end{tikzpicture}
\quad\quad
\begin{tikzpicture}[baseline={(current bounding box.center)},scale=1/3]
\node (m3) at (6,0) [] {};
\node (m5) at (10,0) [] {};
\node (m7) at (14,0) [] {};
\node (m9) at (18,0) [] {};
%\node (m11) at (22,0) [] {};
\node (b2) at (4,2.5) [] {};
\node (b4) at (8,2.5) [] {};
\node (b6) at (12,2.5) [] {};
\node (b8) at (16,2.5) [] {};
%\node (b10) at (20,2.5) [] {};
\dodu{b2}\dodu{b4}\dodu{b6}\dodu{b8}
\dodu{m3}\dodu{m5}\dodu{m7}\dodu{m9}
\draw [<-] (m3) -- (b2);
\draw [<-] (m3) -- (b4); \draw [<-] (m5) -- (b4);
\draw [<-] (m5) -- (b6); %\draw [<-] (m7) -- (b6);
\draw [<-] (m7) -- (b8); \draw [<-] (m9) -- (b8);
\node at (4.8,2.5) {$k$};
\node at (9.4,2.5) {$k^{++}$};
\node at (12,0) {$\cdots$};
\node at (14,2.5) {$\cdots$};
\node at (7.0,0) {$k^+$};
\node at (17.5,2.5) {$k^{2j}$};
\node at (20.0,0) {$k^{2j+1}$};
\node (theM) at (0.5,3) [] {$\head\TheT k{2j+1}$};
\node (flecheM) at (3.8,2.4) [] {};
\node (theA) at (2.5,0.0) [] {$\soc\TheT k{2j+1}$};
\node (flecheA) at (5.8,0) [] {};
\draw [->] (2.1,3) to [bend right=-25] (flecheM);
\draw [->] (4,-0.5) to [bend right=25] (flecheA);
\draw[bleu,densely dotted,thick,rounded corners] (8,1.75) -- (4,1.75) -- (3.25,2.5) -- (4,3.25) -- (16,3.25) -- (16.75,2.5) -- (16,1.75) -- (8.05,1.75) ;
\draw[black,thick,rounded corners] (8,-.75) -- (6,-.75) -- (5.25,0) -- (6,.75) -- (18,.75) -- (18.75,0) -- (18,-.75) -- (8,-.75) ;
\end{tikzpicture}
\ .
\end{gathered}
\end{equation}%$
To lighten the notation, we have replaced the composition factors $\Irre{k'}$ by dots, labelling (some of) them by the corresponding index $k'$ (the missing labels should be clear). As before, the socle forms the bottom row and it is circled by a solid line, while the head forms the top row and is circled by a dotted line. There can only be arrows from the head to the socle between neighbouring composition factors as the extension groups $\Ext(\Irre k, \Irre{k'})$ (\cref{prop:SIExte}) forbid other possibilities. Finally, none of the arrows indicated in these diagrams may be omitted as the result would indicate a decomposable module. We remark that the notation $\TheB kl$ ($\TheT kl$) was chosen as a reminder that the composition factor $\Irre k$ appears in the bottom (top) layer of the Loewy diagram.  The composition factor with the highest index is $\Irre{k^l}$.
\begin{proof}[Proof of \cref{prop:socRadHdBT}]
We provide details for the first family of modules $\TheB k{2j}$, those for the second family $\TheB{k}{2j+1}$ being similar and the results for the $\TheT{k}{l}$ then being obtained by duality.

The socle of $\TheB k{2j}$ is obtained by induction on $j$. For $j=0$, we have $\TheB k0\equiv \Irre k$, hence $\soc\TheB{k}{0} = \Irre{k}$ as required. For general $j$, we begin with the short exact sequence
\begin{equation}
\dses{\TheB k{2(j-1)}}{\iota}{\TheB k{2j}}{\pi}{\Stan{k^{2j-1}}}.
\end{equation}
Since the image of a semisimple module is semisimple, $\iota(\soc \TheB k{2(j-1)})\subseteq \soc\TheB k{2j}$ and so the latter must contain the composition factors $\Irre{k}$, $\Irre{k^{++}}$, \dots, $\Irre{k^{2(j-1)}}$. Moreover, $\Mod{L}\subseteq \Mod{M}$ implies that $\soc\Mod{L}=\Mod{L}\cap\soc \Mod{M}$, so $\soc\TheB k{2j}$ cannot contain any of the composition factors $\Irre{k^+}$, $\Irre{k^{+++}}$, \dots, $\Irre{k^{2j-3}}$. If $\Irre{k^{2j-1}}$ were in $\soc\TheB k{2j}$, then $\pi$ would map it into $\soc\Stan{k^{2j-1}} \cong \Irre{k^{2j}}$, hence $\Irre{k^{2j-1}}$ would be in $\ker \pi = \im \iota$.  This contradicts the fact that $\TheB{k}{2(j-1)}$ does not have $\Irre{k^{2j-1}}$ as a composition factor, so it follows that $\Irre{k^{2j-1}}$ is not in $\soc\TheB k{2j}$.  Suppose finally that $\Irre{k^{2j}}$ is not in $\soc\TheB k{2j}$.  Then, we would have
\begin{equation}
\soc\TheB k{2j}\simeq \soc\TheB k{2(j-1)}\simeq \Irre{k} \oplus \Irre{k^{++}} \oplus \dots \oplus \Irre{k^{2(j-1)}}.
\end{equation}
As the injective hull of a module coincides with that of its socle, that of $\TheB k{2j}$ would now be $\Inje{k} \oplus \Inje{k^{++}} \oplus \dots \oplus \Inje{k^{2(j-1)}}$. But, this hull does not have $\Irre{k^{2j}}$ as a composition factor, whereas $\TheB k{2j}$ does, another contradiction. We therefore conclude that the socle of $\TheB k{2j}$ is $\Irre{k} \oplus \Irre{k^{++}} \oplus \dots \oplus \Irre{k^{2j}}$, as required.

We now prove that the radical and socle of $\TheB k{2j}$ coincide, for $j>0$, a consequence of this being that the head is the direct sum of the composition factors that are not in the socle:  $\head\TheB k{2j} \cong \Irre{k^+} \oplus \dots \oplus \Irre{k^{2j-1}}$.  The proof will follow immediately upon constructing an injection
\begin{equation} \label{eq:l'injection}
\frac{\TheB{k}{2j}}{\soc\TheB{k}{2j}} \ira \frac{\rad\Inje{}[\TheB{k}{2j}]}{\soc\Inje{}[\TheB{k}{2j}]},
\end{equation}
where $\Inje{}[\TheB{k}{2j}] \cong \Inje{k} \oplus \Inje{k^{++}} \oplus \dots \oplus \Inje{k^{2j}}$ denotes the injective hull of $\TheB{k}{2j}$.  To see why, recall from \cref{prop:ProjInj} that the injective modules all have Loewy lengths at most $3$.  The quotient $\rad\Inje{}[\TheB{k}{2j}] / \soc\Inje{}[\TheB{k}{2j}]$ therefore has Loewy length at most $1$, meaning that it is semisimple.  The injection \eqref{eq:l'injection} will therefore establish that $\TheB{k}{2j} / \soc\TheB{k}{2j}$ is semisimple, hence that $\TheB{k}{2j}$ has Loewy length $2$ (it does not coincide with its socle, for $j>0$), so its radical equals its socle.

It remains to construct the injection \eqref{eq:l'injection}. Suppose first that $k\neq k_L$. Then, $\Inje{k^{2i}} \cong \Proj{k^{2i}}$ has two composition factors isomorphic to $\Irre{k^{2i}}$, one contributing to the socle and the other to the head.  As $\TheB k{2j}$ has a single composition factor isomorphic to $\Irre{k^{2i}}$, for each $0\le i\le j$, any morphism $f\colon\TheB k{2j}\to \Inje{}[\TheB{k}{2j}] \cong \Inje{k} \oplus \Inje{k^{++}} \oplus \dots \oplus \Inje{k^{2j}}$ will send this composition factor to that of the \emph{socle} of $\Inje{k^{2i}}$ (or to zero).  Indeed, this composition factor belongs to the socle of $\TheB{k}{2j}$, so it follows that $\soc\TheB{k}{2j}$ is mapped into $\soc\Inje{}[\TheB{k}{2j}]$.  Moreover, the image of $f$ thus never includes the composition factors corresponding to the heads of the $\Inje{k^{2i}}$, hence it will lie in $\rad\Inje{}[\TheB{k}{2j}]$.  This shows that any $f\colon\TheB k{2j}\to \Inje{}[\TheB{k}{2j}]$ will induce a map as in \eqref{eq:l'injection}.  To find an injective map and complete the proof, it suffices to take $f$ injective (which is always possible by the definition of injective hulls) because then $f$ maps $\soc\TheB{k}{2j}$ onto $\soc\Inje{}[\TheB{k}{2j}]$.

If $k=k_L$, the injective $\Inje{k_L}$ is isomorphic to $\Cost{k_L}$ whose Loewy length is $2$. In this case, the argument goes through if $\rad\Inje{}[\TheB{k}{2j}] \cong \rad(\Inje{k} \oplus \Inje{k^{++}} \oplus \dots \oplus \Inje{k^{2j}})$ is replaced throughout by $\Inje{k} \oplus \rad(\Inje{k^{++}} \oplus \dots \oplus \Inje{k^{2j}})$.
\end{proof}

As the projective cover $\Proj{}[\Mod{M}]$ (injective hull $\Inje{}[\Mod{M}]$) of a module $\Mod{M}$ is isomorphic to that of its head (socle), \cref{prop:socRadHdBT} immediately identifies these data for the $\TheB{k}{l}$ and $\TheT{k}{l}$.
\begin{corollary} \label{cor:BTProjInj}
The projective covers and injective hulls of the $\TheB{k}{l}$ and $\TheT{k}{l}$, with $l>0$, are as follows:
\begin{center}
\begin{tabular}{C|DDDD}
\Mod{M} & \TheB{k}{2j} & \TheB{k}{2j+1} & \TheT{k}{2j} & \TheT{k}{2j+1} \\
\hline
\Proj{}[\Mod{M}] & \bigoplus_{i=0}^{j-1} \Proj{k^{2i+1}} & \bigoplus_{i=0}^j \Proj{k^{2i+1}} & \bigoplus_{i=0}^j \Proj{k^{2i}} & \bigoplus_{i=0}^j \Proj{k^{2i}} \\
\Inje{}[\Mod{M}] & \bigoplus_{i=0}^j \Inje{k^{2i}} & \bigoplus_{i=0}^j \Inje{k^{2i}} & \bigoplus_{i=0}^{j-1} \Inje{k^{2i+1}} & \bigoplus_{i=0}^j \Inje{k^{2i+1}} \ .
\end{tabular}
\end{center}
As $\TheB{k}{0} \cong \TheT{k}{0} \cong \Irre{k}$, the projective covers and injective hulls, when $l=0$, are $\Proj{k}$ and $\Inje{k}$, respectively.
\end{corollary}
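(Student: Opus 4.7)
The plan is to combine two standard facts with the head/socle computations of \cref{prop:socRadHdBT}. The first fact is that the projective cover of any module $\Mod{M}$ is isomorphic to the projective cover of its head $\head\Mod{M}$; dually, the injective hull of $\Mod{M}$ is isomorphic to that of its socle $\soc\Mod{M}$. This is immediate from the definitions since, for the projective case, composing the essential surjection $\Proj{}[\Mod{M}] \sra \Mod{M}$ with the canonical quotient $\Mod{M} \sra \head\Mod{M}$ still has a superfluous kernel (its kernel is contained in $\rad \Proj{}[\Mod{M}]$), and an essential surjection onto a semisimple module from a projective module realises a projective cover.

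The second fact is that, for a semisimple module $\Mod{N} \cong \bigoplus_{i} \Irre{k_{i}}$, the projective cover and injective hull decompose as $\Proj{}[\Mod{N}] \cong \bigoplus_{i} \Proj{k_{i}}$ and $\Inje{}[\Mod{N}] \cong \bigoplus_{i} \Inje{k_{i}}$, respectively. This follows from the uniqueness (up to isomorphism) of projective covers and injective hulls, together with the observation that finite direct sums of projectives (resp.\ injectives) are projective (resp.\ injective), and that the direct sum of the surjections $\Proj{k_{i}} \sra \Irre{k_{i}}$ (resp.\ the injections $\Irre{k_{i}} \ira \Inje{k_{i}}$) is essential (resp.\ essential).

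Combining these two facts with the table of heads and socles from \cref{prop:socRadHdBT} yields the claimed table entry by entry. For instance, $\head \TheB{k}{2j} \cong \bigoplus_{i=0}^{j-1} \Irre{k^{2i+1}}$ gives $\Proj{}[\TheB{k}{2j}] \cong \bigoplus_{i=0}^{j-1} \Proj{k^{2i+1}}$, and $\soc \TheB{k}{2j} \cong \bigoplus_{i=0}^{j} \Irre{k^{2i}}$ gives $\Inje{}[\TheB{k}{2j}] \cong \bigoplus_{i=0}^{j} \Inje{k^{2i}}$; the other three columns are handled identically. No real obstacle arises: since $k^{2i+1} > k \geq k_L$ for the indices appearing in the heads, the principal indecomposables $\Proj{k^{2i+1}}$ are as described in \cref{prop:ProjInj}, and the indices $k^{2i}$ appearing in the socles are interpreted as $\Inje{k^{2i}} \cong \Proj{k^{2i}}$ when $k^{2i} > k_L$ and as $\Inje{k_L} \cong \Cost{k_L}$ in the one remaining case, exactly as in \cref{prop:ProjInj}. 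The $l=0$ case follows from $\TheB{k}{0} \cong \TheT{k}{0} \cong \Irre{k}$ by definition.
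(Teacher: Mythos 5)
Your proof is correct and takes essentially the same approach as the paper: the paper simply notes that the projective cover (injective hull) of a module is that of its head (socle), and then reads off the answer from the head/socle table of \cref{prop:socRadHdBT}. Your write-up supplies a bit more justification for these standard facts, but the argument is the same.
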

We remark that, as in \cref{sec:Dual}, we are again neglecting to specify the surjective (injective) morphisms that complete the description of these projective covers (injective hulls).  However, it is easy to check that these morphisms are unique, up to a scaling factor for each projective (injective) indecomposable appearing in the cover (hull).

\subsection{Their projective and injective presentations} \label{sec:ProjInjPres}

This section establishes projective and injective presentations of the $\TheB{k}{l}$ and $\TheT{k}{l}$.  More precisely, we determine the kernel (cokernel) of the projection (inclusion) from each of these modules to its projective cover (injective hull).  This will require precise relations between these families and the indecomposable projectives and injectives.  Define then, for each non-critical $k\neq k_L,k_R$, the $\Alg{}$-modules $\TheA{k^-}$ and $\TheV{k^-}$ to be $\Proj k/\soc\Proj k$ and $\rad\Proj k$, respectively. These definitions immediately yield the following short exact sequences
\begin{equation} \label{es:DefAV}
\dses{\Irre{k}}{}{\Proj{k}}{}{\TheA{k^-}}, \qquad \dses{\TheV{k^-}}{}{\Proj{k}}{}{\Irre{k}},
\end{equation}
because $\soc \Proj{k} \cong \Proj{k} / \rad \Proj{k} \cong \Irre{k}$ for all non-critical $k$ with $k \neq k_L$.
\begin{lemma}\label{lem:theAandV}
For $k\neq k_L$, $k_R$, we have the following non-split short exact sequences:
\begin{subequations}
\begin{align}
&\dses{\Irre{k^-}}{}{\TheA{k^-}}{}{\Stan{k}}, & &\dses{\Stan{k^-}}{}{\TheV{k^-}}{}{\Irre{k^+}},\label{es:AV}\\
&\dses{\Irre{k^+}}{}{\TheA{k^-}}{}{\Cost{k^-}}, & &\dses{\Cost{k}}{}{\TheV{k^-}}{}{\Irre{k^-}}\label{esdual:AV}.
\end{align}
\end{subequations}
In particular, we have the identifications $\TheA{k^-} \cong \TheB{k^-}{2}$ and $\TheV{k^-} \cong \TheT{k^-}{2}$.
\end{lemma}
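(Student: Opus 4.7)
The plan is to work directly with the defining exact sequences \eqref{eq:ProjExact} and \eqref{es:CPC} of $\Proj{k}$, using the submodules $\Stan{k^-}$ and $\Cost{k}$ of $\Proj{k}$ that they identify. The hypothesis $k_L < k < k_R$ guarantees that $k^{\pm}$ both lie in $\Lambda$ and that $\Stan{k^-}$ and $\Stan{k}$ each have exactly two composition factors, with $\Radi{k^-}\cong\Irre{k}$ and $\Radi{k}\cong\Irre{k^+}$.

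To establish the left sequence of \eqref{es:AV}, consider the composition $\Stan{k^-}\ira \Proj{k}\sra \TheA{k^-}$. Its kernel is $\Stan{k^-}\cap \soc\Proj{k}=\soc\Stan{k^-}=\Radi{k^-}\cong\Irre{k}$ by \cref{prop:Standard}\ref{it:RadIrr} (applicable since $(k^-)_R=k_R\neq k^-$), and its image is therefore $\Stan{k^-}/\Irre{k}\cong\Irre{k^-}$; the cokernel is $\Proj{k}/\Stan{k^-}\cong\Stan{k}$ by \eqref{eq:ProjExact}. For the right sequence of \eqref{es:AV}, first observe that $\Stan{k^-}\subseteq\rad\Proj{k}=\TheV{k^-}$: the induced map $\Stan{k^-}\to \head\Proj{k}\cong\Irre{k}$ must vanish because $\Hom(\Stan{k^-},\Irre{k})=0$ by \cref{prop:SIPHomo}. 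The quotient $\TheV{k^-}/\Stan{k^-}$ is then the image of $\rad\Proj{k}$ in $\Proj{k}/\Stan{k^-}\cong\Stan{k}$, which one identifies with $\rad\Stan{k}=\Radi{k}\cong\Irre{k^+}$ (the surjection $\Proj{k}\sra\Stan{k}$ induces an isomorphism on heads and hence carries $\rad\Proj{k}$ onto $\rad\Stan{k}$).

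Both sequences are automatically non-split once one observes that their middle terms are indecomposable: $\TheA{k^-}=\Proj{k}/\soc\Proj{k}$ has simple head $\Irre{k}$ (since $\soc\Proj{k}\subseteq\rad\Proj{k}$) and, dually, $\TheV{k^-}=\rad\Proj{k}$ has simple socle $\Irre{k}$. A splitting with two non-zero outer terms would contradict this. The dual sequences \eqref{esdual:AV} then follow by applying duality to \eqref{es:AV}: using \cref{prop:DualExact}, the self-duality of $\Proj{k}$ and of the irreducibles $\Irre{k^{\pm}}$ (\cref{prop:ProjDual,prop:IrreDual}), and the resulting isomorphism $\twdu{\TheA{k^-}}\cong\TheV{k^-}$ (obtained by dualising $\ses{\soc\Proj{k}}{\Proj{k}}{\TheA{k^-}}$ and identifying the kernel $\rad\Proj{k}$), the dualised sequences read precisely as \eqref{esdual:AV}.

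Finally, the identifications $\TheA{k^-}\cong\TheB{k^-}{2}$ and $\TheV{k^-}\cong\TheT{k^-}{2}$ follow from \cref{prop:SIExte}: both $\Ext(\Stan{k},\Irre{k^-})$ and $\Ext(\Irre{k^-},\Cost{k})$ are one-dimensional, so each admits exactly one non-trivial isomorphism class of extensions. Since $\TheA{k^-}$ and $\TheB{k^-}{2}$ are both non-trivial extensions of $\Stan{k}$ by $\Irre{k^-}$, and $\TheV{k^-}$ and $\TheT{k^-}{2}$ are both non-trivial extensions of $\Irre{k^-}$ by $\Cost{k}$ (by construction), the asserted isomorphisms follow. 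The main subtlety throughout is the bookkeeping needed to identify $\rad\Proj{k}/\Stan{k^-}$ with $\rad\Stan{k}$ in the second sequence of \eqref{es:AV}; everything else is a direct consequence of the structure of $\Proj{k}$ already established in \cref{sub:basics,sec:Dual}.
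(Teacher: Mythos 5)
Your proof is correct, and it takes a genuinely different route from the paper's. The paper establishes the first sequence of \eqref{es:AV} by setting up a commutative diagram with \eqref{eq:StanExact} (applied to $\Stan{k^-}$) as the top row and \eqref{es:DefAV} as the bottom row, then invokes the snake lemma to identify $\ker\phi$ and $\Coker\phi$; the other sequences are then obtained by adapting this diagram (and using the duals of \eqref{eq:StanExact} and \eqref{eq:ProjExact}). You instead exploit the explicit identifications $\TheA{k^-}=\Proj{k}/\soc\Proj{k}$ and $\TheV{k^-}=\rad\Proj{k}$ directly: compute the kernel of $\Stan{k^-}\to\TheA{k^-}$ as $\soc\Stan{k^-}\cong\Irre{k}$, use the third isomorphism theorem for the cokernel (implicitly using $\soc\Proj{k}=\soc\Stan{k^-}\subseteq\Stan{k^-}$, which is clear since both socles are isomorphic to $\Irre{k}$ and one contains the other), and observe that $\rad\Proj{k}\twoheadrightarrow\rad\Stan{k}$ under the covering map. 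What your approach buys is a cleaner non-splitting argument: the paper argues by contradiction via $\Hom(\Proj{k},\Irre{k^-})=0$, while you simply note that $\head\TheA{k^-}$ and $\soc\TheV{k^-}$ are simple (each $\cong\Irre{k}$), so the modules are indecomposable and no two-term splitting is possible. Both proofs converge on \cref{prop:SIExte} for the final identification with $\TheB{k^-}{2}$ and $\TheT{k^-}{2}$, though your observation $\twdu{\TheA{k^-}}\cong\TheV{k^-}$ gives a more immediate route to the second identification.
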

\begin{proof}
We first prove that the sequence
\begin{equation} \label{es:IAS}
\dses{\Irre{k^-}}{}{\TheA{k^-}}{}{\Stan{k}}
\end{equation}
is exact.
The top row of the diagram
\begin{equation} \label{cd:Ak}
\begin{tikzpicture}[scale=1/3,baseline=(current bounding box.center)]
\node (t1) at (5,5) [] {$0$};
\node (t2) at (10,5) [] {$\Irre{k}$};
\node (t3) at (15,5) [] {$\Stan{k^-}$};
\node (t4) at (20,5) [] {$\Irre{k^-}$};
\node (t5) at (25,5) [] {$0$};
\node (b1) at (5,0) [] {$0$};
\node (b2) at (10,0) [] {$\Irre{k}$};
\node (b3) at (15,0) [] {$\Proj{k}$};
\node (b4) at (20,0) [] {$\TheA{k^-}$};
\node (b5) at (25,0) [] {$0$};
\draw [->] (t1) -- (t2);
\draw [->] (t2) to node [above] {$\iota$} (t3);
\draw [->] (t3) to node [above] {$\pi$}  (t4);
\draw [->] (t4) -- (t5);
\draw [->] (b1) -- (b2);
\draw [->] (b2) to node [above] {$\bar{\iota}$}  (b3);
\draw [->] (b3) to node [above] {$\bar{\pi}$}  (b4);
\draw [->] (b4) -- (b5);
\draw [->] (t2) to node [left] {$\id$} (b2);
\draw [->] (t3) to node [left] {$i$} (b3);
\draw [->] ([xshift=-0.2cm]t4.south) to node [left] {$\phi$} ([xshift=-0.2cm]b4.north);
\draw [dashed,->] ([xshift=0.2cm]b4.north) to node [right] {$\bar{\phi}$} ([xshift=0.2cm]t4.south);
%(20.4,1.1) -- (20.4,4.1);
%\node at (21,2.7) [] {$\bar{\phi}$};
\end{tikzpicture}
\end{equation}%$
is the exact sequence \eqref{eq:StanExact}. Let $i$ denote the inclusion of \eqref{eq:ProjExact}, so that $\bar{\iota} \equiv i \iota$ is injective.  The left square thus commutes and we may choose $\bar{\pi}$ so that the bottom row is exact, because of $\Hom(\Irre{k},\Proj{k}) \cong \CC$ (\cref{prop:SIPHomo}) and \cref{es:DefAV}.  Because $\bar{\pi} i \iota = \bar{\pi} \bar{\iota} = 0$, one may now define $\phi$ so that the right square commutes. The snake lemma then gives
\begin{equation}
\ker i = 0 \lra \ker \phi \lra \Coker \id = 0 \lra \Coker i = \Stan{k} \lra \Coker \phi \lra 0,
\end{equation}
hence $\ker \phi \simeq 0$ and $\Coker\phi \simeq \Stan{k}$, which settles the exactness of \eqref{es:IAS}.

We next show that \eqref{es:IAS} does not split.  If it did, then there would exist $\bar\phi:\TheA{k^-}\rightarrow \Irre{k^-}$ such that $\bar\phi\phi=\id$ on $\Irre{k^-}$. But then $\bar{\phi}\bar{\pi}$ cannot be zero (both are surjective), contradicting $\Hom(\Proj{k},\Irre{k^-}) = 0$ (\cref{prop:SIPHomo}).  As non-split extensions of $\Stan{k}$ by $\Irre{k^-}$ are unique up to isomorphism (\cref{prop:SIExte}), it follows from the definitions in \cref{sub:definitionBT} that $\TheA{k^-} \cong \TheB{k^-}{2}$.

The method is easily adapted to prove the remaining short exact sequences. The only conceptual difference for the two in \eqref{esdual:AV} is that we use the duals of \eqref{eq:StanExact} and \eqref{eq:ProjExact}, remembering that duality is exact contravariant (\cref{prop:DualExact}), and that the dual of the first sequence becomes the bottom row of the diagram rather than the top.
\end{proof}

Before stating the main result of this section, \cref{prop:cokerBT}, we need another lemma. It indicates that the two recursively defined families, the $\TheB{k}{2j}$ and $\TheB{k}{2j+1}$, are intimately related by proving the exactness of two sequences.  The first sequence shows that the first family may be constructed recursively by extending a costandard module (as in the definition given for the second family in \cref{sub:definitionBT}).  The second sequence then shows that the first family may be constructed by extending members of the second family by an irreducible.

\begin{lemma}\label{lem:tiesBetweenFamilies}
For $j\ge 1$ such that $k^{2j}\in\Lambda$, the following short sequences are exact:
\begin{equation}\label{eq:BevenWithCostAndIrre}
\dses{\TheB{k^{++}}{2(j-1)}}{}{\TheB{k}{2j}}{}{\Cost k}\qquad\text{and}\qquad
\dses{\Irre{k^{2j}}}{}{\TheB k{2j}}{}{\TheB{k}{2j-1}}.
\end{equation}
\end{lemma}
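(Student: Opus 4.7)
The plan is to establish both exact sequences in \eqref{eq:BevenWithCostAndIrre} simultaneously by induction on $j\geq 1$. For the base case $j=1$, the first sequence reads $\dses{\Irre{k^{++}}}{}{\TheB{k}{2}}{}{\Cost{k}}$ (using $\TheB{k^{++}}{0}\cong\Irre{k^{++}}$) and the second reduces to the same sequence (using $\TheB{k}{1}\equiv\Cost{k}$ and noting $k^{2}=k^{++}$). This is precisely the short exact sequence $\dses{\Irre{k^{++}}}{}{\TheA{k}}{}{\Cost{k}}$ obtained from \cref{lem:theAandV} after shifting $k\mapsto k^+$, combined with the identification $\TheA{k}\cong\TheB{k}{2}$ from the same lemma.

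For the inductive step, assume both sequences hold at level $j-1$. I would prove the first sequence at level $j$ via a pushout construction. Let $P$ denote the pushout of $\TheB{k^{++}}{2(j-2)}\hookrightarrow\TheB{k^{++}}{2(j-1)}$ (from \cref{prop:3point1}) and $\TheB{k^{++}}{2(j-2)}\hookrightarrow\TheB{k}{2(j-1)}$ (from the induction hypothesis for the first sequence). Standard pushout properties in the abelian category of $\Alg{}$-modules produce an inclusion $\TheB{k^{++}}{2(j-1)}\hookrightarrow P$ with cokernel $\TheB{k}{2(j-1)}/\TheB{k^{++}}{2(j-2)}\cong\Cost{k}$, together with an exact sequence $\dses{\TheB{k}{2(j-1)}}{}{P}{}{\Stan{k^{2j-1}}}$. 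By functoriality of $\Ext$, the class of $P$ in $\Ext(\Stan{k^{2j-1}},\TheB{k}{2(j-1)})$ is the image of the non-zero class of $\TheB{k^{++}}{2(j-1)}$ in $\Ext(\Stan{k^{2j-1}},\TheB{k^{++}}{2(j-2)})$ under the map induced by the induction-hypothesis inclusion. The Hom-Ext long exact sequence derived from $\dses{\TheB{k^{++}}{2(j-2)}}{}{\TheB{k}{2(j-1)}}{}{\Cost{k}}$ shows this induced map is an isomorphism, since both flanking terms $\Hom(\Stan{k^{2j-1}},\Cost{k})$ and $\Ext(\Stan{k^{2j-1}},\Cost{k})$ vanish by \cref{prop:SIPHomo,prop:SIExte}. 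Hence $P$ is a non-split extension, and since $\Ext(\Stan{k^{2j-1}},\TheB{k}{2(j-1)})\cong\CC$, uniqueness gives $P\cong\TheB{k}{2j}$, establishing the first sequence at level $j$.

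For the second sequence at level $j$, I would combine the first sequence just proved with the induction hypothesis for the second sequence applied at $k^{++}$, level $j-1$, namely $\dses{\Irre{k^{2j}}}{}{\TheB{k^{++}}{2(j-1)}}{}{\TheB{k^{++}}{2j-3}}$. Nesting these inclusions gives $\Irre{k^{2j}}\hookrightarrow\TheB{k}{2j}$ with cokernel $Q$ fitting in $\dses{\TheB{k^{++}}{2j-3}}{}{Q}{}{\Cost{k}}$. Since $\Ext(\Cost{k},\TheB{k^{++}}{2j-3})\cong\CC$ (the very Ext-group used in the recursive construction of $\TheB{k}{2j-1}$ in \cref{sub:definitionBT}), $Q$ is either the split extension $\TheB{k^{++}}{2j-3}\oplus\Cost{k}$ or is isomorphic to $\TheB{k}{2j-1}$. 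The crux of the argument, and the main obstacle, is excluding the split case; I would do this by invoking the Loewy diagram of \cref{prop:socRadHdBT}: the arrow $\Irre{k^{2j-1}}\to\Irre{k^{2j}}$ exhibits a length-$2$ submodule of $\TheB{k}{2j}$ isomorphic to $\Stan{k^{2j-1}}$ with socle $\Irre{k^{2j}}$, so quotienting by $\Irre{k^{2j}}$ places $\Irre{k^{2j-1}}$ inside $\soc Q$. The socle of $\TheB{k^{++}}{2j-3}\oplus\Cost{k}$ is $\bigoplus_{i=0}^{j-1}\Irre{k^{2i}}$ by \cref{prop:socRadHdBT}, which contains no copy of $\Irre{k^{2j-1}}$, so $Q\cong\TheB{k}{2j-1}$ and the induction is complete.
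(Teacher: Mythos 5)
Your argument for the first sequence is correct, and it is a genuinely different (and rather slick) route from the paper's. The paper establishes the first sequence by a two-stage snake-lemma diagram chase; you instead form the pushout of the two inclusions out of $\TheB{k^{++}}{2(j-2)}$ and identify the result as $\TheB{k}{2j}$ by tracking the extension class under the pushforward map $\Ext(\Stan{k^{2j-1}},\TheB{k^{++}}{2(j-2)})\to\Ext(\Stan{k^{2j-1}},\TheB{k}{2(j-1)})$, which is an isomorphism because of the vanishing of $\Hom(\Stan{k^{2j-1}},\Cost k)$ and $\Ext(\Stan{k^{2j-1}},\Cost k)$. This replaces a fair amount of diagram chasing with one invocation of the functoriality of $\Ext$, and is a clean way to see that the pushout is non-split.

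However, the last step of your proof of the second sequence is wrong. You claim that the arrow $\Irre{k^{2j-1}}\to\Irre{k^{2j}}$ in the Loewy diagram exhibits a \emph{submodule} of $\TheB{k}{2j}$ isomorphic to $\Stan{k^{2j-1}}$. No such submodule exists. Indeed, suppose $N\subseteq\TheB{k}{2j}$ with $N\cong\Stan{k^{2j-1}}$. The defining exact sequence $\ses{\TheB{k}{2(j-1)}}{\TheB{k}{2j}}{\Stan{k^{2j-1}}}$ has $\ker$ equal to $\TheB{k}{2(j-1)}$, which has none of the composition factors $\Irre{k^{2j-1}}$, $\Irre{k^{2j}}$ of $N$; hence $N$ maps injectively to the quotient $\Stan{k^{2j-1}}$. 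By dimension this map is an isomorphism and splits the defining sequence, contradicting the indecomposability of $\TheB{k}{2j}$. (The arrow in the Loewy diagram encodes a subquotient $\Stan{k^{2j-1}}$, realised here as a \emph{quotient}, not a submodule.) Moreover, the socle of the module you want $Q$ to equal, namely $\TheB{k}{2j-1}$, is $\bigoplus_{i=0}^{j-1}\Irre{k^{2i}}$ by \cref{prop:socRadHdBT} and does not contain $\Irre{k^{2j-1}}$, so even if the claimed submodule existed, your conclusion would be inconsistent with the answer. In fact, the split and non-split candidates $\TheB{k^{++}}{2j-3}\oplus\Cost{k}$ and $\TheB{k}{2j-1}$ for $Q$ have \emph{identical} socles and heads, so no socle/head computation can distinguish them. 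The paper instead rules out the split case via a Hom-Ext argument: the long exact sequence derived from $\ses{\Irre{k^{2j}}}{\TheB{k}{2j}}{Q}$ shows $\Hom(\Cost k,Q)\cong\Hom(\Cost k,\TheB{k}{2j})$; if $Q$ split, this group would be non-zero, and any non-zero map $\Cost k\to\TheB{k}{2j}$ must be injective (since $\head\Cost k\cong\Irre{k^+}$ is not a submodule of $\TheB{k}{2j}$), which would force the decomposition $\TheB{k}{2j}\cong\Cost k\oplus\TheB{k^{++}}{2(j-1)}$, a contradiction. You would need to substitute an argument of this kind for your socle claim.
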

\begin{proof}
We consider the exactness of the first of the two sequences, proceeding by induction on $j$. The first sequence of \eqref{esdual:AV} establishes the case $j=1$. Assume then the exactness for $j$ and suppose that $k^{2(j+1)} \in \Lambda$.  In the diagram
\begin{equation} \label{cd:tiesBetweenFamilies1}
\begin{tikzpicture}[scale=1/3,baseline=(current bounding box.center)]
%\node (tt2) at (10,8) [] {$0$};
\node (tt3) at (17,8) [] {$0$};
\node (t1) at (5,5) [] {$0$};
\node (t2) at (10,5) [] {$\TheB{k^{++}}{2(j-1)}$};
\node (t3) at (17,5) [] {$\TheB{k}{2j}$};
\node (t4) at (24,5) [] {$\Cost k$};
\node (t5) at (29,5) [] {$0$\phantom{ ,}};
\node (m1) at (5,0) [] {$0$};
\node (m2) at (10,0) [] {$\TheB{k^{++}}{2(j-1)}$};
\node (m3) at (17,0) [] {$\TheB{k}{2(j+1)}$};
\node (m4) at (24,0) [] {$\coker \beta$};
\node (m5) at (29,0) [] {$0$ ,};
\node (b3) at (17,-5) [] {$\Stan{k^{2j+1}}$};
\node (bb3) at (17,-8) [] {$0$};
\draw [->] (tt3) -- (t3);
\draw [->] (t1) -- (t2);
\draw [->] (t2) to node [above] {$\gamma$} (t3);
\draw [->] (t3) to node [above] {$\overline\gamma$} (t4);
\draw [->] (t4) -- (t5);
\draw [->] (m1) -- (m2);
\draw [->] (m2) to node [above] {$\beta$}  (m3);
\draw [->] (m3) to node [above] {$\overline{\beta}$}  (m4);
\draw [->] (m4) -- (m5);
\draw [->] (t2) to node [left] {$\id$} (m2);
\draw [->] (t3) to node [left] {$\alpha$} (m3);
\draw [dashed,->] ([xshift=-0.2cm]t4.south) to node [left] {$\delta$} ([xshift=-0.2cm]m4.north);
\draw [dashed,->] ([xshift=0.2cm]m4.north) to node [right] {$\delta^*$} ([xshift=0.2cm]t4.south);
\draw [->] (m3) to node [left] {$\overline\alpha$} (b3);
\draw [->] (b3) -- (bb3);
\end{tikzpicture}
\end{equation}%$
the first row is thus assumed exact. We moreover define $\beta = \alpha\gamma$, so that the left square commutes, and then $\overline\beta$, so that the second row is exact.  Finally, the second column is exact, by the definition \eqref{es:DefBandT} of $\TheB k{2(j+1)}$. This setup guarantees that there exists a morphism $\delta$ making the right square commute. The snake lemma then gives $\ker\delta=0$ and $\coker\delta\simeq\Stan{k^{2j+1}}$, proving that $\ses{\Cost{k}}{\coker\beta}{\Stan{k^{2j+1}}}$ is exact. Since $\Ext(\Stan{k^{2j+1}},\Cost k)=0$ (\cref{prop:SIExte}), this sequence splits and there exists a surjection $\delta^*\colon\coker\beta\to \Cost k$ such that $\delta^*\delta$ is the identity on $\Cost{k}$.

Consider now the diagram
\begin{equation} \label{cd:tiesBetweenFamilies2}
\begin{tikzpicture}[scale=1/3,baseline=(A.base)]
\node (t1) at (5,5) [] {$0$};
\node (t2) at (10,5) [] {$\TheB{k^{++}}{2(j-1)}$};
\node (t3) at (15,5) [] {$\TheB{k}{2j}$};
\node (t4) at (20,5) [] {$\Cost k$};
\node (t5) at (25,5) [] {$0$};
\node (m1) at (5,0) [] {$0$};
\node (m2) at (10,0) [] {$\ker f$};
\node (m3) at (15,0) [] {$\TheB{k}{2(j+1)}$};
\node (m4) at (20,0) [] {$\Cost k$};
\node (m5) at (25,0) [] {$0$};
\draw [->] (t1) -- (t2);
\draw [->] (t2) to node [above] {$\gamma$} (t3);
\draw [->] (t3) to node [above] {$\overline\gamma$} (t4);
\draw [->] (t4) -- (t5);
\draw [->] (m1) -- (m2);
\draw [->] (m2) -- (m3);
\draw [->] (m3) to node [above] {$f$}  (m4);
\draw [->] (m4) -- (m5);
\draw [dashed,->] (t2) to node [left] {$\phi$} (m2);
\draw [->] (t3) to node [left] (A) {$\alpha$} (m3);
\draw [->] (t4) to node [right] {$\id$} (m4);
\end{tikzpicture}
\ ,
\end{equation}%$
in which $f = \delta^* \overline\beta$.  Both rows are thus exact and the right square commutes:  $f \alpha = \delta^* \overline\beta \alpha = \delta^* \delta \overline\gamma = \overline\gamma$. It now follows from $f \alpha \gamma = \gamma \overline\gamma = 0$ that there is a morphism $\phi$ making the left square commute. The snake lemma then gives the exactness of $\ses{\TheB{k^{++}}{2(j-1)}}{\ker f}{\Stan{k^{2j+1}}}$, as before.  \cref{lem:indecExtension} and $\Ext(\Stan{k^{2j+1}}, \TheB{k^{++}}{2(j-1)}) = \CC$ (\cref{sub:definitionBT}) now imply that there are only two possibilities:  either this sequence is non-split, in which case it gives $\ker f \cong \TheB{k^{++}}{2j}$, by the definition of the latter, or it splits and $\ker f \cong \TheB{k^{++}}{2(j-1)} \oplus \Stan{k^{2j+1}}$.  However, if $\ker f \cong \TheB{k^{++}}{2(j-1)} \oplus \Stan{k^{2j+1}}$, then $\Stan{k^{2j+1}}$ would be a submodule of $\ker f$ and thus also of $\TheB{k}{2(j+1)}$. It would then follow that $\TheB{k}{2(j+1)}$ is the sum of two submodules, $\TheB{k}{2j}$ and $\Stan{k^{2j+1}}$, whose intersection is zero, contradicting its indecomposability.  We conclude that $\ker f \cong \TheB{k^{++}}{2j}$, hence that the second row of \eqref{cd:tiesBetweenFamilies2} is the desired exact sequence.

The proof of the exactness of the second sequence proceeds in a similar inductive fashion.  As the two sequences of \eqref{eq:BevenWithCostAndIrre} coincide for $j=1$, the base case has already been established.  Assuming the exactness for $j-1$, we consider the following diagram, similar to \eqref{cd:tiesBetweenFamilies1}:
\begin{equation} \label{cd:tiesBetweenFamilies3}
\begin{tikzpicture}[scale=1/3,baseline=(current bounding box.center)]
\node (tt3) at (16,8) [] {$0$};
\node (t1) at (5,5) [] {$0$};
\node (t2) at (10,5) [] {$\Irre{k^{2j}}$};
\node (t3) at (16,5) [] {$\TheB{k^{++}}{2(j-1)}$};
\node (t4) at (22,5) [] {$\TheB{k^{++}}{2j-3}$};
\node (t5) at (28,5) [] {$0$\phantom{ .}};
\node (m1) at (5,0) [] {$0$};
\node (m2) at (10,0) [] {$\Irre{k^{2j}}$};
\node (m3) at (16,0) [] {$\TheB{k}{2j}$};
\node (m4) at (22,0) [] {$\coker\beta$};
\node (m5) at (28,0) [] {$0$ .};
\node (b3) at (16,-5) [] {$\Cost k$};
\node (bb3) at (16,-8) [] {$0$};
\draw [->] (tt3) -- (t3);
\draw [->] (t1) -- (t2);
\draw [->] (t2) to node [above] {$\gamma$} (t3);
\draw [->] (t3) to node [above] {$\overline\gamma$} (t4);
\draw [->] (t4) -- (t5);
\draw [->] (m1) -- (m2);
\draw [->] (m2) to node [above] {$\beta$}  (m3);
\draw [->] (m3) -- (m4);
\draw [->] (m4) -- (m5);
\draw [->] (t2) to node [left] {$\id$} (m2);
\draw [->] (t3) to node [left] {$\alpha$} (m3);
\draw [dashed,->] (t4) to node [right] {$\delta$} (m4);
\draw [->] (m3) to node [left] {$\overline\alpha$} (b3);
\draw [->] (b3) -- (bb3);
\end{tikzpicture}
\end{equation}%$
The first row is exact by assumption and the second column is the first sequence of \eqref{eq:BevenWithCostAndIrre} (whose exactness has just been proved). Since the left square commutes, there exists a morphism $\delta$ that makes the right square commute. The snake lemma computes its kernel and cokernel which give the exact sequence $\ses{\TheB{k^{++}}{2j-3}}{\coker\beta}{\Cost{k}}$.
This time, \cref{lem:indecExtension} and $\Ext(\Cost{k}, \TheB{k^{++}}{2j-3}) = \CC$ (\cref{sub:definitionBT}) show that either $\coker \beta \cong \TheB{k}{2j-1}$, by the definition of the latter, or $\coker \beta$ splits as $\Cost{k} \oplus \TheB{k^{++}}{2j-3}$.  Now, $\coker \beta$ splitting would entail the existence of a non-zero morphism from $\Cost{k}$ to $\coker \beta$.  However, \cref{prop:SIPHomo,prop:SIExte} imply that the second row of \eqref{cd:tiesBetweenFamilies3} yields the following exact sequence:
\begin{equation}
0=\Hom(\Cost{k}, \Irre{k^{2j}}) \lra \Hom(\Cost{k}, \TheB k{2j}) \lra \Hom(\Cost{k}, \coker\beta) \lra \Ext(\Cost{k}, \Irre{k^{2j}})=0.
\end{equation}
Thus, $\Hom(\Cost{k}, \TheB k{2j})\simeq \Hom(\Cost{k}, \coker\beta)$ could not be $0$.  But, any non-zero map of $\Hom(\Cost{k}, \TheB k{2j})$ must be injective since $\head \Cost{k} \cong \Irre{k^+}$ is not a submodule of $\TheB k{2j}$ (\cref{prop:socRadHdBT}).  Thus, $\coker \beta$ splitting would imply that the indecomposable $\TheB k{2j}$ is the sum of two submodules, $\Cost{k}$ and $\TheB{k^{++}}{2(j-1)}$, whose intersection is zero.  This contradiction means that $\coker \beta \cong \TheB{k}{2j-1}$, hence that the second row of \eqref{cd:tiesBetweenFamilies3} is the desired exact sequence.
\end{proof}

With these lemmas in hand, we now turn to projective (injective) presentations.  More specifically, we compute the kernels (cokernels) of the projections (inclusions) that define the projective covers (injective hulls) of the modules $\TheB{k}{l}$ and $\TheT{k}{l}$.  As usual, Loewy diagrams provide an intuitive description of the result to come.
\begin{equation}
\begin{tikzpicture}[baseline={(current bounding box.center)},scale=1/3]
\node (t2) at (4,4) [] {};
\node (t4) at (8,4) [] {};
\node (t6) at (12,4) [] {};
\node (t8) at (16,4) [] {};
\node (m1) at (2,2) [] {};
\node (m3) at (6,2) [] {};
\node (m5) at (10,2) [] {};
\node (m7) at (14,2) [] {};
\node (m9) at (18,2) [] {};
\node (b2) at (4,0) [] {};
\node (b4) at (8,0) [] {};
\node (b6) at (12,0) [] {};
\node (b8) at (16,0) [] {};
\dodu{t2}\dodu{t4}\dodu{t6}\dodu{t8}
\dodu{m1}\dodu{m9}
\dodu{b2}\dodu{b4}\dodu{b6}\dodu{b8}
\dodu{5.8,2}\dodu{6.2,2}
\dodu{9.8,2}\dodu{10.2,2}
\dodu{13.8,2}\dodu{14.2,2}
\draw [->] (t2) -- (m1); \draw [->] (t2) -- (m3);
\draw [->] (t4) -- (m3); \draw [->] (t4) -- (m5);
\draw [->] (t6) -- (m5); \draw [->] (t6) -- (m7);
\draw [->] (t8) -- (m7); \draw [->] (t8) -- (m9);
\draw [->] (m1) -- (b2); \draw [->] (m3) -- (b2);
\draw [->] (m3) -- (b4); \draw [->] (m5) -- (b4);
\draw [->] (m5) -- (b6); \draw [->] (m7) -- (b6);
\draw [->] (m7) -- (b8); \draw [->] (m9) -- (b8);
\node at (0.9,2) {$k$};
\node at (5.6,4.5) {$k^+$};
\node at (17.6,4.5) {$k^7$};
\node at (19.4,2) {$k^8$};
\node (theM) at (0.5,4) [] {$\TheB k{8}$};
\node (flecheM) at (3.25,3.4) [] {};
\node (theA) at (0.5,0.0) [] {$\TheB{k^{+}}{6}$};
\node (flecheA) at (3.8,0) [] {};
\draw [->] (theM) to [bend right=-25] (flecheM);
\draw [->] (1.5,-0.5) to [bend right=25] (flecheA);
\draw[rouge,densely dashed,thick,rounded corners] (3.25,4) -- (4,4.75) -- (6,2.75) --  (8,4.75) -- (10,2.75) -- (12,4.75) -- (14,2.75) -- (16,4.75) -- (18.75,2) -- (18,1.25)
-- (16,3.25) -- (14,1.25) -- (12,3.25) -- (10,1.25) -- (8,3.25) --  (6,1.25) -- (4,3.25) -- (2,1.25) -- (1.25,2) -- (3.25,4) ;
\draw[bleu,densely dotted,thick,rounded corners] (5.25,2) -- (6,2.75) -- (6.75,2) -- (8,0.75) -- (10,2.75) -- (12,0.75) -- (14,2.75) -- (16.75,0) -- (16,-0.75) -- (14,1.25) -- (12,-0.75) -- (10,1.25) -- (8,-0.75) -- (6,1.25) -- (4,-0.75) -- (3.25,0) -- (5.25,2) ;
\end{tikzpicture}
\qquad
\begin{tikzpicture}[baseline={(current bounding box.center)},scale=1/3]
\node (t2) at (4,4) [] {};
\node (t4) at (8,4) [] {};
\node (t6) at (12,4) [] {};
\node (t8) at (16,4) [] {};
\node (m1) at (2,2) [] {};
\node (m3) at (6,2) [] {};
\node (m5) at (10,2) [] {};
\node (m7) at (14,2) [] {};
\node (m9) at (18,2) [] {};
\node (b2) at (4,0) [] {};
\node (b4) at (8,0) [] {};
\node (b6) at (12,0) [] {};
\node (b8) at (16,0) [] {};
\dodu{t2}\dodu{t4}\dodu{t6}\dodu{t8}
\dodu{m1}\dodu{m9}
\dodu{b2}\dodu{b4}\dodu{b6}\dodu{b8}
\dodu{5.8,2}\dodu{6.2,2}
\dodu{9.8,2}\dodu{10.2,2}
\dodu{13.8,2}\dodu{14.2,2}
\draw [->] (t2) -- (m1); \draw [->] (t2) -- (m3);
\draw [->] (t4) -- (m3); \draw [->] (t4) -- (m5);
\draw [->] (t6) -- (m5); \draw [->] (t6) -- (m7);
\draw [->] (t8) -- (m7); \draw [->] (t8) -- (m9);
\draw [->] (m1) -- (b2); \draw [->] (m3) -- (b2);
\draw [->] (m3) -- (b4); \draw [->] (m5) -- (b4);
\draw [->] (m5) -- (b6); \draw [->] (m7) -- (b6);
\draw [->] (m7) -- (b8); \draw [->] (m9) -- (b8);
\node at (0.9,2) {$k$};
\node at (5.6,4.5) {$k^+$};
\node at (17.6,4.5) {$k^7$};
\node at (19.4,2) {$k^8$};
\node (theM) at (1.2,4) [] {$\TheT{k^+}{6}$};
\node (flecheM) at (3.65,4) [] {};
\node (theA) at (0.6,0.0) [] {$\TheT{k}{8}$};
\node (flecheA) at (3.8,0) [] {};
\draw [->] (2.0,4.4) to [bend right=-25] (flecheM);
\draw [->] (1.5,-0.5) to [bend right=25] (flecheA);
\draw[rouge,densely dashed,thick,rounded corners] (4.75,4) -- (6,2.75) --  (8,4.75) -- (10,2.75) -- (12,4.75) -- (14,2.75) -- (16,4.75) -- (16.75,4) -- (14,1.25) -- (12,3.25) -- (10,1.25) -- (8,3.25) --  (6,1.25) -- (3.25,4) -- (4,4.75) -- (4.75,4) ;
\draw[bleu,densely dotted,thick,rounded corners] (2,1.25) -- (1.25,2) -- (2,2.75) --  (4,0.75) -- (6,2.75) -- (6.75,2) -- (8,0.75) -- (10,2.75) -- (12,0.75) -- (14,2.75) -- (16,0.75) --(18,2.75) -- (18.75,2) -- (16,-0.75) -- (14,1.25) -- (12,-0.75) -- (10,1.25) -- (8,-0.75) -- (6,1.25) -- (4,-0.75) -- (2,1.25) ;
\end{tikzpicture}
\end{equation}%$
The diagram on the left depicts the inclusion of $\TheB{k^+}{6}$ in its injective hull $\Inje{}[\TheB{k^+}{6}]$, the one on the right that of $\TheT k{8}$ in $\Inje{}[\TheT k{8}]$. These injective hulls are direct sums of the indecomposable injectives $\Inje{k'}$, with indices increasing in steps of $2$, each bringing (generically) four composition factors to the hull. Some are repeated, for example the composition factor $\Irre{k^{++}}$ has multiplicity $2$ in both $\Inje{}[\TheB{k^+}{6}]$ and $\Inje{}[\TheT k{8}]$, and we indicate this above by drawing two dots close together. The images of the inclusion maps are depicted by dotted lines.  Where it passes through a pair of ``double dots'', this image will contain a proper subspace, equivalent to one dot, of the subspace represented by these dots.  The cokernels of these inclusions are depicted by dashed lines with the same proviso regarding their passing through double dots. We note that the cokernel $\Inje{}[\TheB{k^+}{6}] / \TheB{k^+}{6}$ has the same composition factors as $\TheB k8$ and that $\Inje{}[\TheT k{8}] / \TheT k8$ has the same composition factors as $\TheT{k^+}{6}$.

\begin{proposition}\label{prop:cokerBT}
For $k\in\Lambda_0$ and $k^l\in\Lambda$, let $f_k^l$ and $g_k^l$ denote the natural inclusions $\TheB kl \ira \Inje{}[\TheB kl]$ and $\TheT kl \ira \Inje{}[\TheT kl]$, respectively. With a few exceptions, the cokernels of these inclusions are
\begin{subequations} \label{eq:cokernels}
\begin{equation} \label{eq:cokernels1}
\begin{alignedat}{4} %D I really dislike this "tabulated" look, but oh well...
\ &\Coker f_k^{2j}&\simeq &\ \TheB{k^{2\delta_L-1}}{2(j+1-\delta_L)-\delr{2j}}, &\qquad&
\Coker g_k^{2j}&\simeq &\ \TheT{k^+}{2(j-1)},\\
\ &\Coker f_k^{2j+1}&\simeq &\ \TheB{k^{2\delta_L-1}}{2(j-\delta_L)+1}, &\qquad&
\Coker g_k^{2j+1}&\simeq &\ \TheT{k^+}{2j+1-\delr{2j+1}},
\end{alignedat}
\end{equation}
where $\delta_L \equiv \delta_{k,k_L}$ and $\delr{l} \equiv \delta_{k^{l},k_R}$. The exceptions occur for $\Alg{} = \tl{}$ with $n$ even and $\beta = 0$, specifically
\begin{equation} \label{eq:cokernels2}
\Coker f_2^{2j}\simeq \TheT{2}{2j+1-\delr{2j}},\qquad\Coker f_2^{2j+1}\simeq \TheT{2}{2j}.
\end{equation}
\end{subequations}
\end{proposition}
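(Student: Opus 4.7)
The plan is to proceed by induction on $l$, exploiting the recursive definitions of $\TheB{k}{l}$ and $\TheT{k}{l}$ from \cref{sub:definitionBT} together with the direct-sum decompositions of their injective hulls established in \cref{cor:BTProjInj}.

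For the base case $l=0$, I would compute $\Coker(\Irre{k}\ira \Inje{k})$ directly. For $k$ non-critical with $k\neq k_L, k_R$, $\Inje{k}\simeq \Proj{k}$ and \cref{lem:theAandV} immediately identifies the cokernel as $\TheA{k^-}\simeq \TheB{k^-}{2}$. For $k=k_L$, $\Inje{k_L}\simeq \Cost{k_L}$ and the exact sequence \eqref{es:ICI} gives $\Irre{k_L^+}\simeq \TheB{k_L^+}{0}$. For $k=k_R$, the degenerate Loewy diagram \eqref{ld:ProjDegenerate} of $\Proj{k_R}$ produces the two-layer module $\TheB{k_R^-}{1}$. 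In each case, the formulas in \eqref{eq:cokernels1} specialise correctly with the indicated values of $\delta_L$ and $\delr{0}$.

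For the inductive step with $\TheB{k}{2j}$, I would fit the defining exact sequence $\dses{\TheB{k}{2(j-1)}}{}{\TheB{k}{2j}}{}{\Stan{k^{2j-1}}}$ from \eqref{es:DefBandT} as the top row of a commutative diagram whose middle row is the split exact sequence $\ses{\Inje{}[\TheB{k}{2(j-1)}]}{\Inje{}[\TheB{k}{2j}]}{\Inje{k^{2j}}}$, the splitting being forced by the direct-sum decompositions of \cref{cor:BTProjInj}. The snake lemma then yields a bottom row $\ses{\Coker f_k^{2(j-1)}}{\Coker f_k^{2j}}{\Mod{Q}}$, where $\Mod{Q}$ is the cokernel of the induced inclusion $\Stan{k^{2j-1}}\ira \Inje{k^{2j}}$. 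Inspecting the diamond Loewy diagram \eqref{ld:ProjDiamond} of $\Inje{k^{2j}}=\Proj{k^{2j}}$, one sees that the image is the left half of the diamond, so $\Mod{Q}\simeq \Stan{k^{2j}}$. The induction hypothesis identifies $\Coker f_k^{2(j-1)}$ with $\TheB{k^-}{2j}$ (in the generic case), and \cref{lem:indecExtension} together with the fact that $\Ext(\Stan{k^{2j}},\TheB{k^-}{2j})\simeq \CC$ (established in \cref{sub:definitionBT}) and the indecomposability of the cokernel force $\Coker f_k^{2j}\simeq \TheB{k^-}{2(j+1)}$. The inductive steps for $\TheB{k}{2j+1}$, $\TheT{k}{2j}$ and $\TheT{k}{2j+1}$ proceed along identical lines, using the remaining sequences in \eqref{es:DefBandT} and \cref{lem:tiesBetweenFamilies}; alternatively, the $\TheT$ statements can be deduced from the $\TheB$ ones by observing that duality turns injective hulls into projective covers, so $\twdu{\Coker g_k^l}$ is the kernel of the projective cover of $\TheB{k}{l}$, computable by a dual diagram.

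The hard part will be the combinatorial bookkeeping for the boundary cases and for the $\Alg{}=\tl{}$, $\beta=0$, $n$ even exception. The $\delta_L$ corrections in \eqref{eq:cokernels1} arise because $\Inje{k_L}\simeq \Cost{k_L}$ contributes no left lobe to the hull, while the $\delr{l}$ corrections reflect the truncation of the rightmost diamond when $k^l=k_R$, as recorded in \eqref{ld:ProjDegenerate}. The exceptional formulas \eqref{eq:cokernels2} require a separate analysis: the degenerate Loewy diagram of $\Proj{2}$ lacks the factor $\Irre{0}=0$ entirely, so the snake-lemma argument delivers a cokernel whose head-socle pattern matches the $\TheT$ family rather than the $\TheB$ family, producing the crossover formulas. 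In every subcase the underlying diagram chase is the same; only the ambient injective hull and the identification of $\Mod{Q}$ change.
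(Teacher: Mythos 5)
Your setup matches the paper's closely: the same commutative diagram with the defining exact sequence of $\TheB{k}{2j}$ on top, the injective hulls in the middle row, the exact sequence \eqref{eq:ProjExact} on the right, and the snake lemma producing a bottom row $\ses{\Coker f_k^{2(j-1)}}{\Coker f_k^{2j}}{\Stan{k^{2j}}}$. The base case analysis is also correct. However, there is a genuine gap in your inductive step. You write that the induction hypothesis, $\Ext(\Stan{k^{2j}},\TheB{k^{2\delta_L-1}}{2(j-\delta_L)})\simeq\CC$, \cref{lem:indecExtension} and ``the indecomposability of the cokernel force $\Coker f_k^{2j}\simeq\TheB{k^-}{2(j+1)}$.'' But \cref{lem:indecExtension} only says that, under its hypotheses, the middle term of a short exact sequence is decomposable if and only if the sequence splits. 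You have not shown that the bottom row of your diagram is non-split, and invoking ``the indecomposability of the cokernel'' is exactly the conclusion you are trying to reach, so the argument is circular as written. The whole difficulty of this step is precisely in ruling out the split case, and that is where the paper spends the bulk of its argument: one assumes a splitting $\gamma^*$ of the bottom row, uses the (automatic) splitting $\beta^*$ of the middle row (all terms injective) to show $h_k^{2j}\beta^*\neq 0$, computes $\Hom(\Inje{k^{2j}},\Coker f_k^{2j})\simeq\CC$ so that $\gamma^*\pi$ and $h_k^{2j}\beta^*$ must be proportional, and then manufactures a splitting $\alpha^*$ of the \emph{top} row, contradicting the indecomposability of $\TheB{k}{2j}$. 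You need to supply this argument (or an equivalent obstruction to splitting); without it the inductive step is unjustified.

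A secondary, smaller point: you assert the middle row is the split exact sequence $\ses{\Inje{}[\TheB{k}{2(j-1)}]}{\Inje{}[\TheB{k}{2j}]}{\Inje{k^{2j}}}$ ``forced by'' \cref{cor:BTProjInj}, but the corollary only gives you isomorphism types of the hulls, not the commutativity of the two upper squares. One still has to construct the horizontal maps $b$ and $\beta$ (using injectivity of the hulls) and verify that $b$ is injective and $\beta$ surjective before the snake lemma can be applied; the paper does this carefully and it is worth including. Finally, your outline does not address the modifications needed when $k^{2j-1}=k_R$ (where the top row must be replaced by $\ses{\TheB{k}{2(j-1)}}{\TheB{k}{2j-1}}{\Irre{k^{2j-1}}}$ and the term $\Inje{k^{2j}}$ in the middle row by $0$); these boundary cases are where the $\delr{l}$ corrections in the statement come from and they need a separate argument, which in the paper also invokes \cref{lem:tiesBetweenFamilies}.
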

\begin{proof}
We prove, by induction on $j$, the result for $\Coker f_k^{2j}\simeq \Inje{}[\TheB{k}{2j}] / \TheB{k}{2j}$, ignoring the exceptional cases at first. When $j=0$, the goal is to identify the cokernel in
\begin{equation}
\dses{\Irre k}{f_k^0}{\Inje k}{}{\Coker f_k^0}.
\end{equation}
If $k=k_L$, then $\Inje{k} \cong \Cost{k}$ and $\Coker f_k^0\cong\Irre{k^+}=\TheB{k^+}0$, by \eqref{es:ICI}. If $k_L<k<k_R$, then $\Inje{k} \cong \Proj{k}$ and we obtain $\Coker f_k^0 \cong \TheB{k^-}{2}$ from \eqref{es:DefAV} and \cref{lem:theAandV}.  If $k=k_R$, then $\Inje{k} \cong \Proj{k}$ again, but $\Irre{k} \cong \Cost{k}$, thus \eqref{es:CPC} gives $\Coker f_k^0 \cong \Cost{k^-} = \TheB{k^-}{1}$.  Note that if $k_L = k = k_R$, then $\TheB{k^+}{0} = \TheB{k^-}{1} = 0$ which is the correct cokernel ($f_k^0$ is an isomorphism in this case).  It is easy to check that the result in \eqref{eq:cokernels1} for $\Coker f_k^0$ unifies all of these cases.  The exceptional case $\Alg{} = \tl{}$ with $n$ even and $\beta = 0$ uses instead the exact sequence \eqref{eq:unbearable} to conclude that $\Coker f_2^0\simeq \Stan 2\simeq \TheT 2{1-\delr{0}}$, if $n>2$, and $\Irre 2\simeq \TheT 20$, if $n=2$ (see also the Loewy diagrams \eqref{ld:ProjDegenerate}).

Suppose now that $j\ge1$. The top row of the diagram
\begin{equation} \label{cd:cokerBT}
\begin{tikzpicture}[scale=1/3,baseline=(current bounding box.center)]
\node (tt2) at (10,8) [] {$0$};
\node (tt3) at (18,8) [] {$0$};
\node (tt4) at (26,8) [] {$0$};
\node (t1) at (5,5) [] {$0$};
\node (t2) at (10,5) [] {$\TheB{k}{2(j-1)}$};
\node (t3) at (18,5) [] {$\TheB{k}{2j}$};
\node (t4) at (26,5) [] {$\Stan{k^{2j-1}}$};
\node (t5) at (31,5) [] {$0$};
\node (m1) at (5,0) [] {$0$};
\node (m2) at (10,0) [] {$\Inje{}[\TheB{k}{2(j-1)}]$};
\node (m3) at (18,0) [] {$\Inje{}[\TheB{k}{2j}]$};
\node (m4) at (26,0) [] {$\Inje{k^{2j}}$};
\node (m5) at (31,0) [] {$0$};
\node (b1) at (5,-5) [] {$0$};
\node (b2) at (10,-5) [] {$\Coker f_k^{2(j-1)}$};
\node (b3) at (18,-5) [] {$\Coker f_k^{2j}$};
\node (b4) at (26,-5) [] {$\Stan{k^{2j}}$};
\node (b5) at (31,-5) [] {$0$};
\node (bb2) at (10,-8) [] {$0$};
\node (bb3) at (18,-8) [] {$0$};
\node (bb4) at (26,-8) [] {$0$};
\draw [->] (tt2) -- (t2);
\draw [->] (tt3) -- (t3);
\draw [->] (tt4) -- (t4);
\draw [->] (t1) -- (t2);
\draw [->] (t2) to node [above] {$a$} (t3);
\draw [->] (t3) to node [above] {$\alpha$} (t4);
\draw [->] (t4) -- (t5);
\draw [->] (m1) -- (m2);
\draw [dashed,->] (m2) to node [above] {$b$}  (m3);
\draw [dashed,->] (m3) to node [above] {$\beta$}  (m4);
\draw [->] (m4) -- (m5);
\draw [->] (t2) to node [left] {$f_k^{2(j-1)}$} (m2);
\draw [->] (t3) to node [left] {$f_k^{2j}$} (m3);
\draw [->] (t4) to node [left] {$\iota$} (m4);
\draw [->] (b1) -- (b2);
\draw [dashed,->] (b2) to node [above] {$c$} (b3);
\draw [dashed,->] (b3) to node [above] {$\gamma$} (b4);
\draw [->] (b4) -- (b5);
\draw [->] (m2) to node [left] {$h_k^{2(j-1)}$} (b2);
\draw [->] (m3) to node [left] {$h_k^{2j}$} (b3);
\draw [->] (m4) to node [left] {$\pi$} (b4);
\draw [->] (b2) -- (bb2);
\draw [->] (b3) -- (bb3);
\draw [->] (b4) -- (bb4);
\end{tikzpicture}
\end{equation}%$
is the exact sequence \eqref{es:DefBandT} defining $\TheB k{2j}$, the two leftmost columns describe the inclusions of the appropriate $\TheB{k}{l}$ into their injective hulls (the projections $h_k^l$ being chosen to make these columns exact), and the rightmost column is the exact sequence \eqref{eq:ProjExact} (note that $k^{2j} \neq k_L$).  Given $\coker f_k^{2(j-1)}$, the goal is to identify $\coker f_k^{2j}$.

Because $f_k^{2(j-1)}$ is injective, the injectivity of $\Inje{}[\TheB{k}{2j}]$ ensures that there exists $b$ making the top-left square of \eqref{cd:cokerBT} commute.  If $b$ were not injective, then it would annihilate some composition factor in $\soc \Inje{}[\TheB{k}{2(j-1)}] \cong \soc \TheB{k}{2(j-1)}$.  But then, $f_k^{2j} a$ would annihilate this factor in $\TheB{k}{2(j-1)}$, contradicting the injectivity of the latter morphism.  Thus, $b$ is injective.  Similarly, the injectivity of $f_k^{2j}$ and $\Inje{k^{2j}}$ ensures that there exists $\beta$ making the top-right square commute. If $\beta$ were not surjective, then it must annihilate the composition factor of $\soc \Inje{}[\TheB{k}{2j}]$ that is isomorphic to $\head \Inje{k^{2j}} \cong \Irre{k^{2j}}$.  As $\iota \alpha$ does not annihilate this factor in $\TheB{k}{2j}$, this is a contradiction, hence $\beta$ is surjective.  Finally, we conclude that the middle row of the diagram \eqref{cd:cokerBT} is exact by comparing composition factors.

The snake lemma now gives the exactness of the bottom row by defining morphisms $c$ and $\gamma$ that make the bottom two squares commute. If the bottom row of the commutative diagram \eqref{cd:cokerBT} does not split, then $\Coker f_k^{2j}$ will be a non-trivial extension of $\Stan{k^{2j}}$ by $\Coker f_k^{2(j-1)}$. By the induction hypothesis, this latter cokernel will be isomorphic to $\TheB{k^{2\delta_L-1}}{2(j-\delta_L)}$ (as $k^{2(j-1)} \neq k_R$), so the bottom row will give $\coker f_k^{2j} = \TheB{k^{2\delta_L-1}}{2(j+1-\delta_L)-\delr{2j}}$, by definition, the correction $-\delr{2j}$ being necessary when $k^{2j} = k_R$, hence $\Stan{k^{2j}} \cong \Irre{k^{2j}}$. It thus remains to prove that the bottom row of \eqref{cd:cokerBT} does not split.

Suppose then that the bottom row does split, so that there exists an injection $\gamma^{*} \colon \Stan{k^{2j}} \to \Coker f_k^{2j}$ such that $\gamma \gamma^{*}$ is the identity on $\Stan{k^{2j}}$. Moreover, this splitting means that
\begin{equation} \label{eq:HomJCoker}
	\Hom(\Inje{k^{2j}},\Coker f_k^{2j}) \simeq \Hom(\Inje{k^{2j}},\Coker f_k^{2(j-1)}) \oplus \Hom(\Inje{k^{2j}},\Stan{k^{2j}}) \simeq \mathbb{C},
\end{equation}
by \cref{prop:SIPHomo} and the induction hypothesis ($\Coker f_k^{2(j-1)} \cong \TheB{k^{2\delta_L-1}}{2(j-\delta_L)}$).  Since the modules in the middle row are all injective, this row splits and there exists an injection $\beta^{*}\colon\Inje{k^{2j}} \to \Inje{}[\TheB{k}{2j}]$ such that $\beta \beta^{*}$ is the identity on $\Inje{k^{2j}}$.  If $h_k^{2j} \beta^*$ were identically zero, then we would have $\Inje{k^{2j}} \cong \im \beta^* \subseteq \ker h_k^{2j} = \im f_k^{2j} \cong \TheB{k}{2j}$.  However, $\Inje{k^{2j}}$ has two composition factors isomorphic to $\Irre{k^{2j}}$ whereas $\TheB{k}{2j}$ has but one, a contradiction.  It follows that $h_k^{2j} \beta^* \neq 0$.

As $\gamma^* \pi \neq 0$, by the surjectivity of $\pi$, \eqref{eq:HomJCoker} shows that $\gamma^* \pi$ and $h_k^{2j} \beta^*$ are equal, up to some non-zero multiplicative constant. In particular, $h_k^{2j} \beta^{*} \iota = 0$, so $\im (\beta^* \iota) \subseteq \im f_k^{2j}$ and there exists a morphism $\alpha^{*}\colon\Stan{k^{2j-1}} \to \TheB{k}{2j}$ such that $f_k^{2j} \alpha^{*} = \beta^{*} \iota$. But now, $\iota = \beta \beta^* \iota = \beta f_k^{2j} \alpha^* = \iota \alpha \alpha^*$ and, since $\iota$ is injective, $\alpha \alpha^*$ is the identity on $\Stan{k^{2j-1}}$.  The top row therefore splits, contradicting the indecomposability of $\TheB k{2j}$. This contradiction shows that the bottom row is not split, completing the identification of $\coker f_k^{2j}$.

This identification of $\coker f_k^{2j}$ proceeds inductively until $k^{2j}$ approaches $k_R$.  It is easy to check that the above argument requires no significant changes if $k^{2j} = k_R$; however, changes are required if $k^{2j-1} = k_R$.  In the latter case, the recursive construction of \cref{sub:definitionBT} produces $\TheB{k}{2j-1}$ from $\TheB{k}{2(j-1)}$ and so the first row of \eqref{cd:cokerBT} has to be replaced by the defining exact sequence $\ses{\TheB{k}{2(j-1)}}{\TheB{k}{2j-1}}{\Irre{k^{2j-1}}}$.  Unfortunately, the first two injective hulls of the second row then become $\Inje{}[\TheB{k}{2(j-1)}]$ and $\Inje{}[\TheB{k}{2j-1}]$, which are isomorphic by \cref{cor:BTProjInj}. We can still deduce that there exists an injection, hence an isomorphism, $b$ making the top left square commute.  However, to make the second row of \eqref{cd:cokerBT} exact, we must replace $\Inje{k^{2j}}$ by $0$.  Applying the snake lemma to these replaced rows yields the morphism $c$ that makes the bottom left square commute and the short exact sequence $\ses{\Irre{k^{2j-1}}}{\coker f_k^{2(j-1)}}{\coker f_k^{2j-1}}$. But, induction identifies $\coker f_k^{2(j-1)}$ as $\TheB{k^{2 \delta_L -1}}{2(j-\delta_L)}$.  As this module has a unique submodule isomorphic to $\Irre{k^{2j-1}}$, comparing this exact sequence with the second of \cref{lem:tiesBetweenFamilies} yields the desired conclusion: $\coker f_k^{2j-1}\simeq \TheB{k^{2\delta_L-1}}{2(j-\delta_L)-1}$.

Similar arguments and duality identify the other cokernels in \eqref{eq:cokernels}. We remark that the proof for $\coker f_k^{2j+1}$ is somewhat easier because the restriction put on $k$ in the definition of the remaining $\TheB{k}{2j+1}$ avoids the technicalities that would arise should $k^{2j+1}$ approach $k_R$.
\end{proof}

These cokernels give injective presentations for the $\TheB{k}{l}$, for example $\ses{\TheB{k}{2j}}{\Inje{}[\TheB{k}{2j}]}{\TheB{k^{2\delta_L-1}}{2(j+1-\delta_L)-\delr{2j}}}$.  Analogous projective presentations now follow by taking duals.
\begin{proposition}\label{prop:kerBT}
For $k\in\Lambda_0$ and $k^l\in\Lambda$, let $p_k^l$ and $q_k^l$ denote the natural projections $\Proj{}[\TheB kl] \sra \TheB kl$ and $\Proj{}[\TheT kl] \sra \TheT kl$, respectively. With a few exceptions, the kernels of these projections are
\begin{subequations} \label{eq:kernels}
\begin{equation}
\begin{alignedat}{4}
\ &\ker p_k^{2j}&\simeq &\ \TheB{k^+}{2(j-1)}, &\qquad&
\ker q_k^{2j}&\simeq &\ \TheT{k^{2\delta_L-1}}{2(j+1-\delta_L)-\delr{2j}},\\
\ &\ker p_k^{2j+1}&\simeq &\ \TheB{k^+}{2j+1-\delr{2j+1}}, &\qquad&
\ker q_k^{2j+1}&\simeq &\ \TheT{k^{2\delta_L-1}}{2(j-\delta_L)+1},
\end{alignedat}
\end{equation}
in the notation of \cref{prop:cokerBT}. The exceptions occur for $\Alg{} = \tl{}$ with $n$ even and $\beta = 0$, specifically
\begin{equation}
\ker q_2^{2j}\simeq \TheB{2}{2j+1-\delr{2j}},\qquad\ker q_2^{2j+1}\simeq \TheB{2}{2j}.
\end{equation}
\end{subequations}
\end{proposition}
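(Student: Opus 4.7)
The plan is to deduce \cref{prop:kerBT} directly from \cref{prop:cokerBT} by invoking the contravariant exact autoequivalence given by taking (twisted) duals. By \cref{prop:IrreDual,prop:DoubleDual,prop:DualExact}, duality sends indecomposables to indecomposables, injectives to projectives, preserves composition factor multiplicities, and exchanges heads with socles. In particular, since $\TheT{k}{l} = \twdu{(\TheB{k}{l})}$ and conversely $\TheB{k}{l} = \twdu{(\TheT{k}{l})}$ by reflexivity, one has $\twdu{\Inje{}[\TheB{k}{l}]} \cong \Proj{}[\TheT{k}{l}]$ and $\twdu{\Inje{}[\TheT{k}{l}]} \cong \Proj{}[\TheB{k}{l}]$; this is precisely the content of the relationship between projective covers and injective hulls established in the proof of \cref{coro:Inje} and summarised in \cref{prop:ProjInj}.

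Applied to the injective presentation
\begin{equation}
\dses{\TheB{k}{l}}{f_k^l}{\Inje{}[\TheB{k}{l}]}{h_k^l}{\Coker f_k^l}
\end{equation}
afforded by \cref{prop:cokerBT}, taking duals and using \cref{prop:DualExact} yields the exact sequence
\begin{equation}
\dses{\twdu{\Coker f_k^l}}{\twdu{h_k^l}}{\Proj{}[\TheT{k}{l}]}{\twdu{f_k^l}}{\TheT{k}{l}}.
\end{equation}
Because $\TheT{k}{l}$ is indecomposable, its projective cover projection is unique up to rescaling; in particular $\twdu{f_k^l}$ must agree with $q_k^l$ up to a non-zero scalar, so $\ker q_k^l \cong \twdu{\Coker f_k^l}$. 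An identical argument applied to the dual presentation $\ses{\TheT{k}{l}}{g_k^l}{\Inje{}[\TheT{k}{l}]}{}{\Coker g_k^l}$ gives $\ker p_k^l \cong \twdu{\Coker g_k^l}$.

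It then suffices to substitute the formulas from \cref{prop:cokerBT}, invoking $\twdu{\TheB{k}{l}} = \TheT{k}{l}$ and $\twdu{\TheT{k}{l}} = \TheB{k}{l}$ throughout. This produces each of the four generic identifications in \eqref{eq:kernels}; for instance,
\begin{equation}
\ker q_k^{2j} \cong \twdu{\Coker f_k^{2j}} \cong \twdu{\TheB{k^{2\delta_L-1}}{2(j+1-\delta_L)-\delr{2j}}} \cong \TheT{k^{2\delta_L-1}}{2(j+1-\delta_L)-\delr{2j}},
\end{equation}
and analogously in the remaining three cases. The exceptional formulas at $\Alg{}=\tl{}$, $n$ even, $\beta=0$ arise by dualising the exceptional entries \eqref{eq:cokernels2}, noting that duality exchanges the two families.

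The main obstacle is purely bookkeeping: one must check that the exceptional cases of \cref{prop:cokerBT} (for $k=2$) dualise correctly, and that the various edge cases ($k = k_L$, $k^l = k_R$, or both simultaneously) are respected by the duality, since in each of them the injective hulls degenerate in ways documented in \cref{cor:BTProjInj,prop:ProjInj}. This is however straightforward once one observes that the indicator functions $\delta_L$ and $\delr{l}$ in \eqref{eq:cokernels} switch roles between the two families in exactly the symmetric fashion dictated by duality, matching the formulas stated in \eqref{eq:kernels}.
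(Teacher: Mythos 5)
Your proposal is correct and takes precisely the approach the paper intends: the remark immediately preceding \cref{prop:kerBT} states that the "analogous projective presentations now follow by taking duals," and the paper gives no further proof. You have correctly spelled out the duality argument (twisted duals exchange injective hulls with projective covers, as noted in the proof of \cref{prop:ProjInj}, and $\twdu{\TheB kl}\cong\TheT kl$), and the substitutions into \cref{prop:cokerBT}, including the exceptional cases, check out.
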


%
% their extension by irreducibles
%
\subsection{Their extension groups with irreducible modules}\label{sub:projectiveBT}

\cref{lem:trivialExtensions}\ref{it:ExtByIrr} has shown that computing $\Ext(\Mod{M},\Irre{k})$ and $\Ext(\Irre{k},\Mod{M})$ is sufficient to see when a module $\Mod{M}$ can appear in any non-trivial extension. We shall therefore limit ourselves to these extension groups. With the presentations derived in the previous section, it is easy to compute them.
\begin{proposition}\label{prop:extensionsBT}
Let $k, k'\in\Lambda_0$. The extension groups of the $\TheB kl$, for $l\ge 2$, with an irreducible are given by
\begin{equation} \label{eq:ExtIB}
\begin{aligned}
&\textup{(a)} & \Ext(\Irre{k'},\TheB k{2j}) &\simeq \bigoplus_{i=0}^{j+1} \delta_{k',k^{2i-1}} \CC, \\
&\textup{(b)} & \Ext(\TheB k{2j},\Irre{k'}) &\simeq \bigoplus_{i=1}^{j-1} \delta_{k',k^{2i}} \CC,
\end{aligned}
\qquad \qquad
\begin{aligned}
&\textup{(c)} & \Ext(\Irre{k'},\TheB k{2j+1}) & \simeq \bigoplus_{i=0}^{j} \delta_{k',k^{2i-1}} \CC, \\
&\textup{(d)} & \Ext(\TheB k{2j+1},\Irre{k'}) & \simeq \bigoplus_{i=1}^{j+1} \delta_{k',k^{2i}} \CC.
\end{aligned}
\end{equation}
Those of the $\TheT kl$ with an irreducible are then given by $\Ext(\Irre{k'},\TheT kl) \simeq \Ext(\TheB kl,\Irre{k'})$ and $\Ext(\TheT kl,\Irre{k'})\simeq \Ext(\Irre{k'},\TheB kl)$.  The extension groups for $l=0$ and $1$ were given in \cref{prop:SIExte}.
\end{proposition}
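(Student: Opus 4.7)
The plan is to compute these extension groups by feeding the injective and projective presentations of \cref{prop:cokerBT,prop:kerBT} into the truncated Hom-Ext long exact sequences \eqref{es:lcohomology2} and \eqref{es:rcohomology2}. For part~(a), I would start from the short exact sequence
\begin{equation}
\dses{\TheB{k}{2j}}{f_k^{2j}}{\Inje{}[\TheB{k}{2j}]}{}{\Coker f_k^{2j}}
\end{equation}
and apply $\Hom(\Irre{k'},\blank)$. Since $\Inje{}[\TheB{k}{2j}]$ is injective, the four-term exact sequence that results is
\begin{equation}
0 \lra \Hom(\Irre{k'},\TheB{k}{2j}) \overset{\alpha}{\lra} \Hom(\Irre{k'},\Inje{}[\TheB{k}{2j}]) \lra \Hom(\Irre{k'},\Coker f_k^{2j}) \lra \Ext(\Irre{k'},\TheB{k}{2j}) \lra 0.
\end{equation}

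The crucial simplification is that $\alpha$ is an isomorphism: any non-zero morphism $\phi \colon \Irre{k'} \to \Inje{}[\TheB{k}{2j}]$ is injective (by simplicity), so $\im \phi$ is a non-zero submodule of the injective hull, and the essentiality of the hull forces $\im \phi \cap \TheB{k}{2j} \neq 0$; simplicity of $\im \phi$ then yields $\im \phi \subseteq \TheB{k}{2j}$, so $\phi$ factors through $f_k^{2j}$. Consequently $\Ext(\Irre{k'},\TheB{k}{2j}) \simeq \Hom(\Irre{k'},\Coker f_k^{2j})$, whose dimension equals the multiplicity of $\Irre{k'}$ in $\soc \Coker f_k^{2j}$. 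Since \cref{prop:cokerBT} identifies $\Coker f_k^{2j}$ as another module in the $\TheB{}{}$ family, its socle is tabulated in \cref{prop:socRadHdBT}; a direct reindexing then produces $\bigoplus_{i=0}^{j+1}\delta_{k',k^{2i-1}}\CC$. Part~(c) is handled by the same recipe, starting from the injective presentation of $\TheB{k}{2j+1}$.

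For parts~(b) and~(d), I would apply the dual argument to the projective presentation $\ses{\ker p_k^l}{\Proj{}[\TheB{k}{l}]}{\TheB{k}{l}}$ of \cref{prop:kerBT}. Because $\Hom(\blank,\Irre{k'})$ sees only the head and a projective cover induces an isomorphism on heads, the analogous four-term contravariant exact sequence collapses to $\Ext(\TheB{k}{l},\Irre{k'}) \simeq \Hom(\ker p_k^l,\Irre{k'})$, and reading off the head of $\ker p_k^l$ from \cref{prop:socRadHdBT} delivers the claimed formulas. The two identities for $\TheT{k}{l}$ then follow at once from the duality isomorphism $\Ext(\twdu{\Mod{M}},\twdu{\Mod{N}}) \simeq \Ext(\Mod{N},\Mod{M})$ recorded after \cref{prop:SIExte}, combined with the self-duality of irreducibles (\cref{prop:IrreDual}) and the definition $\TheT{k}{l} = \twdu{\TheB{k}{l}}$.

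The main obstacle is the boundary bookkeeping. The correction factors $\delta_L = \delta_{k,k_L}$ and $\delr{l} = \delta_{k^l,k_R}$ appearing in \cref{prop:cokerBT,prop:kerBT}, together with the exceptional cases for $\tl{}$ at $\beta = 0$ and $n$ even, must be tracked case by case to confirm that, after irreducibles with indices outside $\Lambda_0$ are set to zero, one recovers precisely the direct sums displayed in \eqref{eq:ExtIB}. These verifications are mechanical but unavoidable, and a similar analysis is needed to account for the index-$2j+1$ odd variants in parts (c) and (d).
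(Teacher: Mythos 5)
Your proposal is correct and follows essentially the same route as the paper: feed the injective (resp.\ projective) presentations of \cref{prop:cokerBT,prop:kerBT} into the truncated Hom-Ext sequence, note that the first map is an isomorphism because the injective hull shares its socle with the module (your essentiality argument is just an unpacking of the paper's observation that $\soc\TheB{k}{2j} \cong \soc\Inje{}[\TheB{k}{2j}]$ and $\Hom(\Irre{k'},\blank)$ only sees the socle), read off $\Ext \simeq \Hom(\Irre{k'},\Coker f)$ from \cref{prop:socRadHdBT}, and finish by duality for the $\TheT{k}{l}$. The boundary bookkeeping with $\delta_L$, $\delr{l}$ and the $\tl{}$, $\beta=0$, $n$ even exception that you flag as mechanical is precisely what the paper also defers to the reader.
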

\begin{proof}
The relationship between the extension groups involving the $\TheB{k}{l}$ and the $\TheT{k}{l}$ is just duality.  The computations for each of the four cases above are very similar, so we shall only present the details for (a), ignoring the exceptional case where $\Alg{} = \tl{}$ with $n$ even, $\beta = 0$ and $k=2$ (the results for these exceptional cases are also as given in \eqref{eq:ExtIB} above).

Recall that \cref{prop:cokerBT} gives an injective presentation of $\TheB{k}{2j}$:
\begin{equation}
\dses{\TheB k{2j}}{}{\Inje{}[\TheB{k}{2j}]}{}{\TheB{k^{2\delta_L-1}}{2(j+1-\delta_L)-\delr{2j}}}.
\end{equation}
This gives the long exact sequence
\begin{equation} \label{es:ExtIB}
0\lra \Hom(\Irre{k'},\TheB k{2j})
\lra \Hom(\Irre{k'},\Inje{}[\TheB{k}{2j}])
\lra \Hom(\Irre{k'},\TheB{k^{2\delta_L-1}}{2(j+1-\delta_L)-\delr{2j}})
\lra \Ext(\Irre{k'},\TheB k{2j})
\lra 0,
\end{equation}
where we recall that $\Ext(\Irre{k'},\Inje{}[\TheB{k}{2j}])=0$, by injectivity. Now, $\Hom(\Irre{k'}, \Mod{M}) \cong \Hom(\Irre{k'}, \soc \Mod{M})$, for any module $\Mod{M}$, and $\soc \TheB k{2j} \cong \soc \Inje{}[\TheB{k}{2j}]$. The two leftmost Hom-groups of \eqref{es:ExtIB} are therefore isomorphic, hence we have
\begin{equation}
\Ext(\Irre{k'}, \TheB k{2j})
\simeq \Hom(\Irre{k'}, \TheB{k^{2\delta_L-1}}{2(j+1-\delta_L)-\delr{2j}})
\simeq \Hom(\Irre{k'}, \soc \TheB{k^{2\delta_L-1}}{2(j+1-\delta_L)-\delr{2j}})
\simeq \bigoplus_{i=0}^{j+1-\delta_L-\delr{2j}} \delta_{k',k^{2i-1+2\delta_L}} \CC,
\end{equation}
by \cref{prop:SIPHomo,prop:socRadHdBT}.  It is easy to check that the condition $k' \in \Lambda_0$ allows this to be simplified to the statement of (a).  The proof of (c) follows the same argument whilst (b) and (d) instead use projective presentations.
\end{proof}

%
% the non-trivial extensions of the Bees and the Tees by an irreducible
%
\subsection{Identification of non-trivial extensions and completeness} \label{sub:completeness}

\cref{prop:extensionsBT} shows that the modules $\TheB{k}{l}$ and $\TheT{k}{l}$ have non-trivial extensions with irreducibles. The next proposition shows that, in fact, every one of these non-trivial extensions is isomorphic to one (or a direct sum) of the modules that have already been introduced.  This will then imply that we have identified a complete list of indecomposable $\Alg{}$-modules.

\begin{proposition}\label{prop:extensionRevealed}
Let $k,k'\in\Lambda_0$ and $k^{2j}\in\Lambda$. The following short exact sequences identify a representative, unique up to isomorphism, of the non-trivial extensions of the $\TheB{k}{l}$ and $\Irre{k'}$ defined by \textup{(a)}, \textup{(b)}, \textup{(c)} and \textup{(d)} of \cref{prop:extensionsBT}:
\begin{equation} \label{eq:AllBIExts}
\begin{alignedat}{2}
\textup{(a1)}\quad\ & \dses{\TheB k{2j}}{}{\TheB k{2i-1}\oplus\TheT {k'}{2(j-i)+1}}{}{\Irre{k'}},&\qquad & \text{for \(k'=k^{2i-1}\), \(1\le i\le j\),}\\
\textup{(a2)}\quad\ & \dses{\TheB k{2j}}{}{\TheT{k'}{2j+1}}{}{\Irre{k'}}, &\qquad & \text{for \(k'=k^-\),}\\
\textup{(a3)}\quad\ & \dses{\TheB k{2j}}{}{\TheB{k}{2j+1}}{}{\Irre{k'}}, &\qquad & \text{for \(k'=k^{2j+1}\),}\\
\textup{(b)}\quad\ & \dses{\Irre{k'}}{}{\TheB k{2i}\oplus \TheB{k'}{2(j-i)}}{}{\TheB k{2j}},&\qquad & \text{for \(k'=k^{2i}\), \(1\le i\le j-1\),}\\
\textup{(c1)}\quad\ & \dses{\TheB k{2j+1}}{}{\TheB k{2i-1}\oplus\TheT {k'}{2(j-i+1)}}{}{\Irre{k'}},&\qquad & \text{for \(k'=k^{2i-1}\), \(1\le i\le j\),}\\
\textup{(c2)}\quad\ & \dses{\TheB k{2j+1}}{}{\TheT{k'}{2(j+1)}}{}{\Irre{k'}}, &\qquad & \text{for \(k'=k^-\),}\\
\textup{(d1)}\quad\ & \dses{\Irre{k'}}{}{\TheB k{2i}\oplus \TheB{k'}{2(j-i)+1}}{}{\TheB k{2j+1}},&\qquad & \text{for \(k'=k^{2i}\), \(1\le i\le j\),}\\
\textup{(d2)}\quad\ & \dses{\Irre{k'}}{}{\TheB k{2(j+1)}}{}{\TheB k{2j+1}},&\qquad & \text{for \(k'=k^{2(j+1)}\).}
\end{alignedat}
\end{equation}
Dualising these sequences gives representatives for the non-trivial extensions of the $\TheT{k}{l}$ and $\Irre{k'}$.
\end{proposition}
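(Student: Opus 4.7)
By \cref{prop:extensionsBT}, each of the Ext groups appearing in the statement is isomorphic to $\CC$, so the set of equivalence classes of non-split short exact sequences is a singleton in each case. It therefore suffices to exhibit, for each of the eight cases, one short exact sequence of the claimed form and check that it does not split; the uniqueness of the non-trivial class then forces it to be the representative asserted in the proposition.

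The cases split naturally according to whether the middle module is a single indecomposable ((a2), (a3), (c2), (d2)) or a direct sum of two indecomposables ((a1), (b), (c1), (d1)). For the single-indecomposable cases, the sequences are read off from existing constructions: (d2) is the second exact sequence of \cref{lem:tiesBetweenFamilies} with $j$ replaced by $j+1$, and (a3) is the defining sequence of $\TheB{k}{2j+1}$ from \cref{prop:3point1} when $k^{2j+1}=k_R$; when $k^{2j+1}<k_R$, the sequence for (a3) is instead induced by the head projection $\TheB{k}{2j+1}\sra \Irre{k^{2j+1}}$, using \cref{prop:socRadHdBT} to identify $\Irre{k^{2j+1}}$ as a summand of $\head\TheB{k}{2j+1}$. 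Cases (a2) and (c2) are obtained similarly, by projecting $\TheT{k^-}{2j+1}$ and $\TheT{k^-}{2(j+1)}$ onto the $\Irre{k^-}$ summand of their respective heads. In each of these cases, non-splitness is immediate: the middle module is indecomposable (\cref{prop:3point1}), and so cannot split off the simple quotient $\Irre{k'}$.

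For the direct-sum cases we use a gluing construction. In (a1) and (c1), the simple module $\Irre{k^{2i-1}}$ sits in the head of both summands of the middle module (as the rightmost head factor of the first summand and the leftmost head factor of the second, by \cref{prop:socRadHdBT}), and we form the difference $\phi = \pi_1 - \pi_2$ of the two head projections onto $\Irre{k^{2i-1}}$. Dually, in (b) and (d1), $\Irre{k^{2i}}$ sits in the socle of both summands, and we form the antidiagonal inclusion $\iota = (\iota_1, -\iota_2)$. Non-splitness in these four cases follows from the Krull-Schmidt theorem: the middle is by construction a direct sum of two indecomposables, neither isomorphic to the simple module $\Irre{k'}$, and so a splitting would produce an additional $\Irre{k'}$ summand, violating uniqueness of the indecomposable decomposition.

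The main technical step is identifying the kernel of $\phi$ (in (a1), (c1)) or the cokernel of $\iota$ (in (b), (d1)) with the claimed indecomposable $\TheB{k}{2j}$ or $\TheB{k}{2j+1}$. For this, we read off its Loewy diagram directly from those of its constituents: the socle and head decompositions of \cref{prop:socRadHdBT} for the two summands combine, after removing the diagonal copy of $\Irre{k'}$ in the head (for (a1), (c1)) or the socle (for (b), (d1)), into the expected zigzag diagram. The key observation is that the gluing at $\Irre{k'}$ supplies the arrows bridging the left and right halves, inherited from the internal zigzag arrows of the two summands that end or begin at $\Irre{k'}$. Indecomposability of the resulting module follows from the connectedness of this zigzag: any direct-sum decomposition would have to sever an arrow whose existence is forced by the inherited structure. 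Finally, uniqueness up to isomorphism of the indecomposable module with a given connected zigzag Loewy diagram follows from the fact that all relevant Ext groups between neighbouring irreducibles in the orbit are 1-dimensional by \cref{prop:SIExte}, so there is no arrow-multiplicity ambiguity, and the resulting module coincides with $\TheB{k}{l}$ of \cref{prop:3point1}.
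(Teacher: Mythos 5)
Your proposal is correct in outline, but it takes a genuinely different route from the paper. The paper begins from a projective presentation of $\TheB{k}{2j}$ (from \cref{prop:kerBT}), lifts to a commutative diagram whose bottom row is the desired sequence, and invokes the snake lemma and Hom-group computations to show the induced map is an isomorphism; injective presentations handle the dual cases. You instead build the middle module first, as an explicit pullback of two head projections (or pushout of two socle inclusions), and identify the kernel/cokernel by reading off its Loewy diagram. The Krull--Schmidt argument for non-splitness in the direct-sum cases is also cleaner than the paper's (which reaches the same conclusion via Hom-group uniqueness at the end of the diagram chase). Your approach buys transparency about why the middle term looks the way it does, while the paper's buys a uniform mechanism that sidesteps a direct description of the kernel. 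The one place where you should be more careful is the identification step itself: you observe that $\ker\phi$ (resp.\ $\coker\iota$) has the socle and head of $\TheB{k}{2j}$ and that the diagonal copy of $\Irre{k'}$ inherits arrows to both of its neighbours, and then appeal to uniqueness of the module with a given connected zigzag Loewy diagram. That uniqueness is true here, because the socle and head are multiplicity-free and $\Ext(\Irre{k^i},\Irre{k^{i\pm 1}})\simeq\CC$ (\cref{prop:SIExte}), so a Loewy-length-two module in this block is determined up to isomorphism by its arrow set. But you should say explicitly that \emph{no} arrows beyond those of the zigzag can occur in the kernel (which follows since it is a submodule of $M_1\oplus M_2$, hence its arrows come from the two summands together with the bridging ones at the diagonal), and you should be aware that appealing to the abstract classification of \cref{thm:classification} at this point would be circular, since that theorem rests on the present proposition. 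Finally, you omit the exceptional case $\Alg{}=\tl{}$ with $n$ even, $\beta=0$ and $k=2$, which the paper also sets aside with a remark that the conclusion still holds; it costs nothing to say the same.
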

\noindent We remark that we have already established (d2) as the second exact sequence of \eqref{eq:BevenWithCostAndIrre}.

Before turning to the proof, we exemplify one of these exact sequences with Loewy diagrams. Consider (c1), with $i=2$, $j=3$ and $k'=k^3$.  It takes the form $\ses{\TheB k7}{\TheB k3\oplus\TheT{k^3}4}{\Irre{k^3}}$ and may be depicted thus:
\begin{equation} \label{es:(c1)}
0\ \longrightarrow\
\begin{tikzpicture}[baseline={(t.center)},scale=0.3]
\node (m3) at (6,2) [] {};
\node (m5) at (10,2) [] {};
\node (m7) at (14,2) [] {};
\node (m9) [label=above:$k^7$] at (18,2) [] {};
\node (b2) [label=below:$k$] at (4,0) [] {};
\node (b4) at (8,0) [] {};
\node (b6) at (12,0) [] {};
\node (b8) at (16,0) [] {};
\dodu{m3}\dodu{m5}\dodu{m7}\dodu{m9}
\dodu{b2}\dodu{b4}\dodu{b6}\dodu{b8}
\draw [->] (m3) -- (b2);
\draw [->] (m3) -- (b4); \draw [->] (m5) -- (b4);
\draw [->] (m5) -- (b6); \draw [->] (m7) -- (b6);
\draw [->] (m7) -- (b8); \draw [->] (m9) -- node (t) {} (b8);
\end{tikzpicture}
\ \longrightarrow\
\begin{tikzpicture}[baseline={(t.center)},scale=0.3]
\node (m3) at (6,2) [] {};
\node (m5) [label=above:$k^3$] at (10,2) [] {};
\node (b2) [label=below:$k$] at (4,0) [] {};
\node (b4) at (8,0) [] {};
\dodu{m3}\dodu{m5}
\dodu{b2}\dodu{b4}
\draw [->] (m3) -- (b2);
\draw [->] (m3) -- (b4); \draw [->] (m5) -- node (t) {} (b4);
\end{tikzpicture}
\ \oplus\
\begin{tikzpicture}[baseline={(t.center)},scale=0.3]
\node (m1) [label=above:$k^3$] at (2,2) [] {};
\node (m3) at (6,2) [] {};
\node (m5) [label=above:$k^7$] at (10,2) [] {};
\node (b2) at (4,0) [] {};
\node (b4) at (8,0) [] {};
\dodu{m1}\dodu{m3}\dodu{m5}
\dodu{b2}\dodu{b4}
\draw [->] (m1) -- (b2);
\draw [->] (m3) -- (b2);
\draw [->] (m3) -- (b4); \draw [->] (m5) -- node (t) {} (b4);
\end{tikzpicture}
\ \longrightarrow \
\begin{tikzpicture}[baseline={(m1.base)},scale=0.3]
\node (m1) [label=right:$k^3$] at (2,1) [] {\vphantom{$k^3$}};
\dodu{m1}
\end{tikzpicture}
\ \longrightarrow\ 0.
\end{equation}%$
The composition factors are depicted by dots, as usual, and only the extreme ones are labelled. It is easy to see that morphisms from $\TheB k7$ to $\TheB k3$ and $\TheT{k^3}{4}$ exist (the latter are quotients of the former).  If we can show that the direct sum ${\TheB k7}\to{\TheB k3\oplus\TheT{k^3}4}$ of these morphisms is injective, then its cokernel must be isomorphic to the irreducible $\Irre{k^3}$, by counting composition factors. There cannot be a non-zero morphism $\Irre{k^3}\to {\TheB k3\oplus\TheT{k^3}4}$ since $\soc(\TheB k3\oplus\TheT{k^3}4)$ has no submodule isomorphic to $\Irre{k^3}$. The sequence \eqref{es:(c1)} is therefore non-split and ${\TheB k3\oplus\TheT{k^3}4}$ represents the (isomorphism class of the) non-trivial extensions of $\Ext(\Irre{k^3},\TheB k7)\cong\mathbb C$, even though it is reducible.

\begin{proof}[Proof of \cref{prop:extensionRevealed}]
Each of these sequences are established using similar arguments, though the proofs differ slightly depending on whether the irreducible $\Irre{k'}$ is a submodule or a quotient. We will illustrate the method of  proof by detailing the arguments for (b), where the irreducible is a submodule, and only comment on the changes required when the irreducible is a quotient.

To prove (b), take $1 \le i \le j-1$ and consider the diagram
\begin{equation} \label{cd:extensionB1}
\begin{tikzpicture}[scale=1/3,baseline=(current bounding box.center)]
\node (t1) at (5,5) [] {$0$};
\node (t2) at (10,5) [] {$\TheB{k^+}{2(j-1)}$};
\node (t3) at (17,5) [] {$\Proj{}[\TheB{k}{2j}]$};
\node (t4) at (24,5) [] {$\TheB k{2j}$};
\node (t5) at (29,5) [] {$0$\phantom{,}};
\node (b1) at (5,0) [] {$0$};
\node (b2) at (10,0) [] {$\Irre{k^{2i}}$};
\node (b3) at (17,0) [] {$\TheB{k}{2i}\oplus\TheB{k^{2i}}{2(j-i)}$};
\node (b4) at (24,0) [] {$\Coker \iota$};
\node (b5) at (29,0) [] {$0$,};
\draw [->] (t1) -- (t2);
\draw [->] (t2) to node [above] {$r$} (t3);
\draw [->] (t3) to node [above] {$p$} (t4);
\draw [->] (t4) -- (t5);
\draw [->] (b1) -- (b2);
\draw [dashed,->] (b2) to node [above] {$\iota$}  (b3);
\draw [dashed,->] (b3) to node [above] {$\pi$}  (b4);
\draw [->] (b4) -- (b5);
\draw [->] (t2) to node [left] {$\alpha$} (b2);
\draw [->] (t3) to node [left] {$\beta$} (b3);
\draw [dashed,->] (t4) to node [left] {$\gamma$} (b4);
\end{tikzpicture}
\end{equation}%$
in which the top row is the projective presentation of $\TheB{k}{2j}$, given in \cref{prop:kerBT}, and is therefore exact.  As $\Irre{k^{2i}}$ is a composition factor of $\head \TheB{k^+}{2(j-1)}$, we may choose $\alpha$ to be non-zero, hence surjective.  Since $\TheB{k}{2i}\oplus\TheB{k^{2i}}{2(j-i)}$ and $\TheB{k}{2j}$ have isomorphic heads, their projective covers are isomorphic.  We may therefore choose $\beta$ to also be surjective in \eqref{cd:extensionB1}.  Indeed, we may choose the kernel so that
\begin{equation}
\TheB{k^+}{2(i-1)} \oplus \TheB{k^{2i+1}}{k^{2(j-i-1)}} \cong \ker \beta \subset \ker p \cong \TheB{k^+}{2(j-1)}
\end{equation}
and the kernels only differ in that $\ker p$ has one composition factor isomorphic to $\Irre{k^{2i}}$ while $\ker \beta$ has none.

As $\soc \TheB{k}{2i}\oplus\TheB{k^{2i}}{2(j-i)}$ has two composition factors isomorphic to $\Irre{k^{2i}}$, the inclusion $\iota$ belongs to a two-dimensional Hom-group.  We will choose $\iota$ so that the left square of \eqref{cd:extensionB1} commutes.  To see that this is possible, note that $\ker \beta \subset \ker p = \im r$, so that $\ker \beta r = \im r \cap \ker \beta = \ker \beta \cong \TheB{k^+}{2(i-1)} \oplus \TheB{k^{2i+1}}{k^{2(j-i-1)}}$, hence
\begin{equation}
\im \beta r \cong \frac{\TheB{k^+}{2(j-1)}}{\TheB{k^+}{2(i-1)} \oplus \TheB{k^{2i+1}}{k^{2(j-i-1)}}} \cong \Irre{k^{2i}}.
\end{equation}
It follows that there exists $\iota$ making the left square commute.  $\pi$ can now be chosen to make the bottom row of \eqref{cd:extensionB1} exact.  Since $\pi \beta r = \pi \iota \alpha = 0$, there is a map $\gamma$ that makes the right square of \eqref{cd:extensionB1} commute. It is surjective, as both $\beta$ and $\pi$ are, hence it is an isomorphism because the composition factors of $\TheB k{2j}$ and $\Coker \iota$ coincide.  The bottom row is thus the required short exact sequence (b).

When the irreducible $\Irre{k'}$ appearing in a sequence in \eqref{eq:AllBIExts} is a quotient, instead of a submodule, there is a minor change to the method of proof.  The diagram \eqref{cd:extensionB1} is replaced by one in which the bottom row is an injective presentation of the module that is extending $\Irre{k'}$ and the top row is the sequence whose exactness is to be established. For example the proof of the dual of (d2) would use the following diagram.
\begin{equation}\label{cd:extensionD2}
\begin{tikzpicture}[scale=1/3,baseline=(current bounding box.center)]
\node (t1) at (5,5) [] {$0$};
\node (t2) at (10,5) [] {$\ker\pi$};
\node (t3) at (17,5) [] {$\TheT k{2(j+1)}$};
\node (t4) at (24,5) [] {$\Irre{k^{2(j+1)}}$};
\node (t5) at (29,5) [] {$0$\phantom{.}};
\node (b1) at (5,0) [] {$0$};
\node (b2) at (10,0) [] {$\TheT k{2j+1}$};
\node (b3) at (17,0) [] {$\Inje{}[\TheT{k}{2j+1}]$};
\node (b4) at (24,0) [] {$\TheT{k^+}{2j+1}$};
\node (b5) at (29,0) [] {$0$.};
\draw [->] (t1) -- (t2);
\draw [dashed,->] (t2) to node [above] {$\iota$} (t3);
\draw [dashed,->] (t3) to node [above] {$\pi$} (t4);
\draw [->] (t4) -- (t5);
\draw [->] (b1) -- (b2);
\draw [->] (b2) to node [above] {$f$}  (b3);
\draw [->] (b3) to node [above] {$h$}  (b4);
\draw [->] (b4) -- (b5);
\draw [dashed,->] (t2) to node [left] {$\gamma$} (b2);
\draw [->] (t3) to node [left] {$\beta$} (b3);
\draw [->] (t4) to node [left] {$\alpha$} (b4);
\end{tikzpicture}
\end{equation}%$
Otherwise, one proves the commutativity of the diagram as before.
\end{proof}

\begin{theorem}\label{thm:classification}
Let $\Alg{}$ be $\tl{n}$ or $\dtl{n}$. Then, any finite-dimensional indecomposable module over $\Alg{}$ is isomorphic to one of the following:
\begin{enumerate}
\item $\Stan k$, for $k$ a critical integer in $\Lambda$; \label{it:IndecClassCrit}
\item $\Proj k$, for $k\in\Lambda$ non-critical and larger than $k_L$ in the orbit $[k]$; \label{it:IndecClassProj}
\item $\TheB k{l}$ or $\TheT k{l'}$, for $k$ non-critical in $\Lambda_0$ and $l\ge0$ and $l'>0$ such that $k^{l},\ k^{l'}\in\Lambda$. \label{it:IndecClassOther}
\end{enumerate}
These indecomposables are distinct in that there are no isomorphisms among different elements of the above list.
\end{theorem}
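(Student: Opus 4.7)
The plan is to split the statement into two independent claims: pairwise non-isomorphism of the modules in (i)--(iii), and the assertion that every finite-dimensional indecomposable is on the list. Non-isomorphism is largely bookkeeping: the central element $F_n$ (together with parity for $\dtl n$) separates indecomposables whose composition factors live in different non-critical orbits or are critical, by the proposition on $F_n$-eigenvalues together with \cref{lem:trivialExtensions}. Within a single non-critical orbit, the multisets of composition factors distinguish the critical standards, the projectives $\Proj k$, and the $\TheB k l$ with different $(k,l)$; the only possible collision is $\TheB k l \cong \TheT k l$, and this is ruled out for $l\ge 1$ by comparing socles and heads through \cref{prop:socRadHdBT}. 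Mutual non-isomorphism inside each of the $\TheB$ and $\TheT$ families is already recorded in \cref{prop:3point1}.

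For completeness I will induct on the length $\ell(\Mod{M})$ of an arbitrary finite-dimensional indecomposable $\Mod{M}$. By \cref{lem:trivialExtensions}\ref{it:Ext=0}, all composition factors of $\Mod{M}$ lie in a single orbit $[k]$ (and parity class when $\Alg{}=\dtl{n}$). If this orbit is the critical singleton $\set{k}$ then only the semisimple module $\Irre k^{\oplus m}$ can occur, and indecomposability forces $\Mod{M}\simeq \Irre k = \Stan k$, which is case (i). Otherwise the orbit is non-critical; the base case $\ell=1$ gives $\Mod{M}\simeq \Irre{k'} = \TheB{k'}{0}$, which is case (iii).

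For the inductive step with $\ell\ge 2$, I pick any simple quotient $\Mod{M}\sra\Irre{k'}$ with kernel $\Mod{L}$ of length $\ell-1$, obtaining the exact sequence $\ses{\Mod{L}}{\Mod{M}}{\Irre{k'}}$. The inductive hypothesis, applied to each indecomposable direct summand of $\Mod{L}$, yields a decomposition $\Mod{L}\simeq \Mod{L}_1\oplus\cdots\oplus\Mod{L}_r$ with each $\Mod{L}_i$ in the list. The extension class of $\Mod{M}$ lives in $\Ext(\Irre{k'},\Mod{L})=\bigoplus_{i=1}^r \Ext(\Irre{k'},\Mod{L}_i)$ with components $(\xi_1,\dots,\xi_r)$. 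If some $\xi_i=0$ then the corresponding $\Mod{L}_i$ splits off as a direct summand of $\Mod{M}$; indecomposability of $\Mod{M}$ together with $\ell\ge 2$ forces $r=1$ in that case, so we may assume either $r=1$, or $r\ge 2$ with all $\xi_i\ne 0$. \cref{prop:SIExte} and \cref{prop:extensionsBT} pin down which listed modules admit a nonzero $\xi_i$ for a given $k'$, and \cref{prop:extensionRevealed} provides an explicit representative for every non-trivial extension class of $\Irre{k'}$ with any such $\Mod{L}_i$.

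The main obstacle is to eliminate the case $r\ge 2$ with all $\xi_i$ non-zero. The key observation is that, among the extensions listed in \cref{prop:extensionRevealed}, the only representatives that are themselves indecomposable are those in cases (a3), (d2) and their duals, where the whole of $\Mod{L}$ is a single $\TheB{\cdot}{\cdot}$ or $\TheT{\cdot}{\cdot}$ and $\Mod{M}$ is the next member of the same family; all other representatives are displayed as direct sums of two listed modules. Combining this with \cref{lem:indecExtension} applied to suitable pairings of the $\Mod{L}_i$ shows that, whenever $r\ge 2$, one of the $\Mod{L}_i$ splits off from $\Mod{M}$ in spite of $\xi_i\ne 0$, contradicting indecomposability. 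Hence $r=1$, and \cref{prop:extensionRevealed} (or its dual) identifies $\Mod{M}$ with one of $\Proj k$, $\TheB k l$, or $\TheT k l$, completing the induction. The non-critical projectives $\Proj k$ with $k>k_L$ arise precisely from the extensions whose isomorphism type is forced by \cref{prop:SIExte} to coincide with one of the sequences in \eqref{eq:ProjExact} and \eqref{es:CPC}.
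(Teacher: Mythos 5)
Your treatment of pairwise non-isomorphism is correct and matches the paper's bookkeeping. The completeness argument, however, contains a genuine error: the claim that the case $r\ge 2$ with all $\xi_i\ne 0$ can be eliminated by showing one of the $\Mod{L}_i$ splits off from $\Mod{M}$ is false.

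Take $\Mod{M} = \TheB{k}{2}$ for $k$ non-critical with $k^{++}\in\Lambda_0$. Its Loewy length is $2$, with $\head\Mod{M}\cong\Irre{k^+}$ and $\soc\Mod{M}\cong\Irre{k}\oplus\Irre{k^{++}}$, so the unique simple quotient $\Mod{M}\sra\Irre{k^+}$ has kernel $\Mod{L}=\Irre{k}\oplus\Irre{k^{++}}$, giving $r=2$. The component $\xi_1$ is the class of $\ses{\Irre{k}}{\Mod{M}/\Irre{k^{++}}}{\Irre{k^+}}$, and $\Mod{M}/\Irre{k^{++}}\cong\Cost{k}$ is non-split, so $\xi_1\ne 0$; likewise $\Mod{M}/\Irre{k}\cong\Stan{k^+}$ gives $\xi_2\ne 0$. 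Yet $\Mod{M}$ is indecomposable and neither summand of $\Mod{L}$ splits off. The same phenomenon occurs for every $\TheB{k}{l}$ and $\TheT{k}{l}$ with $l\ge 2$: by \cref{prop:socRadHdBT} these have Loewy length $2$, their radical is a direct sum of several irreducibles, and the kernel of the surjection onto any single head factor is a proper direct sum whose extension components are all non-zero. \cref{lem:indecExtension} does not repair the argument: it concerns a single short exact sequence with indecomposable end terms, and applying it to ``pairings'' of the $\Mod{L}_i$ cannot manufacture a direct summand of $\Mod{M}$ where none exists.

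The paper's proof is structured precisely to avoid this trap. It does not fix an indecomposable $\Mod{M}$ and try to constrain its radical filtration to be indecomposable; rather, it establishes the closure property that for \emph{any} module $\Mod{N}$ isomorphic to a direct sum of listed modules, every extension of $\Mod{N}$ by an irreducible (in either direction) is again a direct sum of listed modules. Building an arbitrary finite-dimensional module layer by layer through its radical series and invoking this closure at each step then shows that every module is a direct sum from the list, and hence that every indecomposable module is a single listed module. In that framework $r\ge 2$ is not an obstruction to be contradicted but a case that \cref{prop:SIExte,prop:extensionRevealed} are meant to cover; your constraint $r=1$ is a condition that the listed modules themselves routinely violate.
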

\begin{proof}
The modules appearing in the above list have already been shown to be indecomposable and pairwise non-isomorphic.  To show that the list given is complete, note that any finite-dimensional indecomposable module $\Mod{M}$ may be constructed iteratively by adding one composition factor at a time. Indeed, one could start from the head of $\Mod{M}$, which is semisimple, and then add the composition factors of the head of its radical, and then add those of the head of the radical of the radical, and so on. Therefore, every finite-dimensional indecomposable $\Alg{}$-module may be constructed by adding, one at time, irreducible modules to a direct sum of modules in the above list. However, \cref{prop:SIExte,prop:extensionRevealed} show that any such extension of these modules is already a direct sum of modules in the list. This list therefore constitutes a complete set of finite-dimensional indecomposable $\Alg{}$-modules, up to isomorphism.

The above list is also complete for the exceptional case $\Alg{} = \tl{}$ with $n$ even and $\beta=0$. Two remarks are useful to reach this conclusion in this case. First, the sublist \ref{it:IndecClassCrit} is empty. Second, $k_L$ is omitted from \ref{it:IndecClassProj} not to avoid coincidence with $\TheB{k_L}1$, as in the generic case, but because $k_L$ is then $0$ and does not belong to $\Lambda_0$ ($\Proj{0}$ is not defined). The leftmost irreducible module of the orbit is then $\Irre{2}$ and its projective cover $\Proj{2}$ is distinct from $\TheB{2}1$.
\end{proof}

%%%%%%%%%%%
%
% restriction and induction of the Bees and the Tees
%
%%%%%%%%%%%

\section{The restriction and induction of the modules $\Cost{n,k}$, $\TheB{n,k}l$ and $\TheT{n,k}l$} \label{sec:restrictionInduction}

The action of the induction and restriction functors on the standard, irreducible and projective modules was obtained (or recalled) in \cref{sub:restriction}. This section extends those calculations to the remaining classes of indecomposable modules, namely the costandards $\Cost{k}$ and the $\TheB{k}{l}$ and $\TheT{k}{l}$, with $l>1$, thus completing the description of these functors on all indecomposable $\Alg{}$-modules. In this section, the algebra label $n$ will be made explicit, so we shall write, for example, $\TheB{n,k}{j}$ instead of $\TheB{k}{j}$. We also write $\Lambda_{n,0}$ for the set $\Lambda_{0}$ corresponding to $\Alg{n}$.

\subsection{Restriction} \label{sec:CBTRes}

We begin with the restriction of the $\Cost{n+1,k}$.  This follows immediately from \cref{prop:StanRest} and the fact that restriction commutes with duality.
\begin{proposition} \label{prop:CostRest}
For all non-critical $k \in \Lambda_{n+1,0}$, the restriction of the costandard modules is given by
\begin{equation}
\Res{\Cost{n+1,k}} \cong
\begin{cases*}
\Cost{n,k-1}\oplus\Cost{n,k+1}, & if \(\Alg{n} = \tl{n}\), \\
\Cost{n,k-1}\oplus\Cost{n,k}\oplus\Cost{n,k+1}, & if \(\Alg{n} = \dtl{n}\).
\end{cases*}
\end{equation}
For critical $k$, $\Cost{n+1,k} \cong \Stan{n+1,k}$ and the result was given in \cref{prop:StanRest}\ref{it:SResCrit}.
\end{proposition}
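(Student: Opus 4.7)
The plan is to deduce this immediately from the corresponding splitting for standard modules (\cref{prop:StanRest}\ref{it:SResSplit}) together with the fact, already used in the proof of \cref{prop:ProjDual}, that restriction commutes with twisted duality.

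First I would record the key observation that for any finite-dimensional $\Alg{n+1}$-module $\Mod{M}$, one has $\Res{(\twdu{\Mod{M}})} \cong \twdu{(\Res{\Mod{M}})}$ as left $\Alg{n}$-modules. Both sides are literally the same vector space $\Hom_{\CC}(\Mod{M},\CC)$ carrying the same twisted left action, because the antiautomorphism ${}^{\dag}$ of $\Alg{n+1}$ restricts to the corresponding antiautomorphism of the subalgebra $\Alg{n}$ (the diagrammatic reflection preserves the extra identity strand through which $\Alg{n}$ is embedded into $\Alg{n+1}$). Applied to $\Cost{n+1,k} = \twdu{\Stan{n+1,k}}$ this gives
\begin{equation*}
\Res{\Cost{n+1,k}} \cong \twdu{\bigl(\Res{\Stan{n+1,k}}\bigr)}.
\end{equation*}

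Next I would invoke \cref{prop:StanRest}\ref{it:SResSplit}: for non-critical $k$ the right-hand side of the above isomorphism splits as $\twdu{(\Stan{n,k-1} \oplus \Stan{n,k+1})}$ for $\Alg{n} = \tl{n}$ and as $\twdu{(\Stan{n,k-1} \oplus \Stan{n,k} \oplus \Stan{n,k+1})}$ for $\Alg{n} = \dtl{n}$. Using that twisted duality distributes over direct sums (noted in \cref{sec:Dual}) and that $\twdu{\Stan{n,k'}} = \Cost{n,k'}$ by definition yields exactly the two decompositions in the statement. Direct summands whose index falls outside $\Lambda_{n}$ are zero on both sides, so the convention on out-of-range indices is preserved by dualisation and needs no separate discussion.

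No step here is a genuine obstacle; the only ingredient requiring a moment of care is the compatibility of ${}^{\dag}$ with the inclusion $\Alg{n} \ira \Alg{n+1}$, which is routine from the diagrammatic descriptions cited in \cref{sub:restriction} and was already implicitly used in the proof of \cref{prop:ProjDual}. The critical case is not addressed by the statement because $\Cost{n+1,k} \cong \Stan{n+1,k}$ there, and is already covered by \cref{prop:StanRest}\ref{it:SResCrit}.
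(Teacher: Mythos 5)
Your proof is correct and follows exactly the route the paper indicates: the paper proves this proposition precisely by noting that restriction commutes with (twisted) duality and appealing to \cref{prop:StanRest}\ref{it:SResSplit}, which is what you do with slightly more detail. The compatibility of ${}^{\dag}$ with the inclusion $\Alg{n}\ira\Alg{n+1}$ that you flag is indeed the only point needing comment and was already noted in the paper's proof of \cref{prop:ProjDual}.
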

Identifying the restrictions of the $\TheB{n+1,k}{l}$ requires considerably more work.
\begin{proposition}\label{prop:res_bmod}
For $k$ non-critical and $k, k^{l} \in \Lambda_{n+1,0}$, the restriction of $\TheB{n+1,k}{l}$ is given by
\begin{align} \label{eq:BRes}
\Res{\TheB{n+1,k}{l}} \simeq &
\begin{dlrcases*}
\bigoplus_{j=0}^{\lfloor l/2 \rfloor} \Proj{n,k^{2j}-1}, & if $k-1$ is critical, \\
\TheB{n,k-1}{l}, & otherwise
\end{dlrcases*}
\oplus
\begin{lrcases*}
\TheB{n,k}{l}, & if \(\Alg{} = \dtl{}\), \\
0, & otherwise
\end{lrcases*}
\notag\\
&\oplus
\begin{dlrcases*}
\bigoplus_{j=0}^{\lfloor (l-1)/2 \rfloor} \Proj{n,k^{2j}+1}, & if $k+1$ is critical,\\
\TheB{n,k+1}{l}, & otherwise
\end{dlrcases*}
,
\end{align}
where it is understood that each summand of the form $\TheB{n,\kappa}{l}$ should be replaced by $\TheB{n,\kappa}{l-1}$ whenever ${\kappa}^{l} \notin \Lambda_{n,0}$.
\end{proposition}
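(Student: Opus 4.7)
The plan is an induction on $l$ that restricts the recursive defining sequences of the $\TheB{n+1,k}{l}$ given in \cref{prop:3point1}. The base case $l = 0$ reduces to \cref{prop:IrreRest}: one checks that the formula specialises correctly, noting that $\TheB{n,k'}{0} \cong \Irre{n,k'}$ and that the summand $\Proj{n, k\pm 1}$ prescribed when $k \pm 1$ is critical is itself irreducible (\cref{prop:Proj}), matching \cref{prop:IrreRest}.

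For the inductive step, apply the exact functor $\Res{}$ to the defining non-split short exact sequence of $\TheB{n+1,k}{l}$ to obtain
\begin{equation*}
\dses{\Res{\TheB{n+1,k}{l-2}}}{}{\Res{\TheB{n+1,k}{l}}}{}{\Res{\Mod{Q}}},
\end{equation*}
where $\Mod{Q}$ is one of $\Stan{n+1,k^{l-1}}$, $\Cost{n+1,k}$ or $\Irre{n+1,k_R}$ according to the parity of $l$ and whether $k^l$ has reached the right edge of its $\Alg{n+1}$-orbit. The leftmost term is supplied by the induction hypothesis and the rightmost by \cref{prop:StanRest,prop:CostRest,prop:IrreRest}. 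Since every indecomposable summand of $\Res{\TheB{n+1,k}{l}}$ carries a single $F_n$-eigenvalue and (for $\dtl{}$) a single parity, and since the three candidate eigenvalues $\delta_{k-1}, \delta_k, \delta_{k+1}$ (with $\delta_k$ absent in the non-dilute case) are pairwise distinct, $\Res{\TheB{n+1,k}{l}}$ splits as $\Mod{M}_{-1} \oplus \Mod{M}_0 \oplus \Mod{M}_{+1}$, and the restricted sequence breaks into three short exact sequences indexed by $i \in \{-1,0,+1\}$ whose outer terms match, summand-by-summand, the claim.

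Identifying each $\Mod{M}_i$ is the heart of the argument. In the generic situation where $k+i$ is non-critical and $(k+i)^l \in \Lambda_{n,0}$, the piece for $\Mod{M}_i$ is the extension
\begin{equation*}
\dses{\TheB{n,k+i}{l-2}}{}{\Mod{M}_i}{}{\Stan{n, k^{l-1}+i}},
\end{equation*}
and by \cref{lem:indecExtension} together with the uniqueness of non-trivial extensions established in \cref{sub:definitionBT}, it suffices to show that this sequence does not split. Non-splitting follows from comparing socles via \cref{prop:socRadHdBT} (a splitting would give $\soc \Mod{M}_i$ an extra irreducible inconsistent with the socle of $\Res{\TheB{n+1,k}{l}}$ forced by its embedding as a submodule of the injective hull of $\TheB{n+1,k}{l}$), or alternatively by a Frobenius reciprocity count of $\dim \Hom(\Mod{M}_i, \Irre{n,k'})$ that precludes the spurious irreducible quotient a split would introduce. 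In the critical situation where $k\pm 1$ is critical, the third part of \cref{prop:StanRest} converts the corresponding piece of $\Res{\Mod{Q}}$ into the projectives $\Proj{n,k^{l-1}\pm 1}$; combined with the analogous projective summands supplied by the induction hypothesis, the injectivity (and self-duality) of these projectives forces $\Mod{M}_i$ to split as the claimed $\bigoplus_j \Proj{n, k^{2j}\pm 1}$.

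The main obstacle is the non-splitting argument in the generic case, complicated by the boundary corrections $\TheB{n,\kappa}{l} \to \TheB{n,\kappa}{l-1}$ that appear when $\kappa^l \notin \Lambda_{n,0}$. These corrections arise precisely from the truncated final step of the recursive construction in \cref{prop:3point1}, and must be matched with the analogous edge forms of $\Res{\Mod{Q}}$ recorded in \cref{prop:StanRest,prop:CostRest,prop:IrreRest}; this bookkeeping is delicate but routine once the generic identification is settled.
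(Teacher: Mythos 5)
Your outline matches the paper's: induct on $l$, restrict the defining sequences of \cref{prop:3point1}, decompose by $F_n$-eigenvalue (and parity for $\dtl{}$), identify the critical-neighbour components as projectives via the split sequences, and identify the generic components via a non-splitting argument plus the characterisation of the $\TheB{}{}$ as unique non-trivial extensions. The paper also separates the inductions by the parity of $l$, so you should be careful that for $l$ odd the defining sequence has $\TheB{n+1,k^{++}}{l-2}$ (not $\TheB{n+1,k}{l-2}$) as submodule and $\Cost{n+1,k}$ as quotient --- your generic extension $\ses{\TheB{n,k+i}{l-2}}{\Mod{M}_i}{\Stan{n,k^{l-1}+i}}$ only describes the $l$-even case, and there is also a sign slip there: the $i=+1$ component has quotient $\Stan{n,k^{l-1}-1}$, not $\Stan{n,k^{l-1}+1}$.

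The genuine gap is your non-splitting step, where neither alternative you offer is airtight. The socle comparison presumes that $\soc\Res{\TheB{n+1,k}{l}}$ is controlled by the embedding of $\Res{\TheB{n+1,k}{l}}$ into $\Res{\Inje{}[\TheB{n+1,k}{l}]}$, but the restriction of an injective $\Alg{n+1}$-module need not be an injective $\Alg{n}$-module (see the boundary replacements in \cref{prop:ProjRest}), so this inclusion gives you only $\Res{\soc \TheB{n+1,k}{l}}\subseteq\soc\Res{\TheB{n+1,k}{l}}$ and no bound the other way --- it cannot rule out the extra socle factor that a split would introduce. Your alternative, a Frobenius reciprocity count of $\Hom(\Mod{M}_i,\Irre{n,k'})$, is aimed in the wrong direction: the adjunction for a Hom-group with a restricted module as \emph{source} would involve coinduction, not induction, and the paper never develops that. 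What the paper actually does is dualise the target: it shows
\begin{equation*}
\Hom\bigl(\Stan{n,k^{l-1}-1},\ \Res{\TheB{n+1,k}{l}}\bigr) \cong \Hom\bigl(\Ind{\Stan{n,k^{l-1}-1}},\ \TheB{n+1,k}{l}\bigr) \cong \Hom\bigl(\Stan{n+1,k^{l-1}},\ \TheB{n+1,k}{l}\bigr) = 0,
\end{equation*}
using the standard form of Frobenius reciprocity together with \cref{prop:StanRest} and $F$-eigenvalues; the last vanishing holds because any non-zero map would be an injection of $\Stan{n+1,k^{l-1}}$ splitting the defining sequence of $\TheB{n+1,k}{l}$. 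You should replace your sketch by this argument, which both lives entirely within the paper's Frobenius machinery and deals directly with the obstruction to splitting.
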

\noindent We remark that the projectives appearing in \eqref{eq:BRes} are all critical, hence irreducible.
\begin{proof}
The proofs for $l$ even and $l$ odd are (slightly) different; we first detail that for $l$ even and then explain how the arguments may be changed for $l$ odd.

The proof for $l$ even proceeds by induction on $l$. The case $l=0$ is covered by \cref{prop:IrreRest}, so assume that $l\ge 2$. Since restriction is an exact covariant functor, \eqref{eq:exactBk2j} gives the exact sequence
	\begin{equation}
		\dses{\Res{\TheB{n+1,k}{l-2}}}{}{\Res{\TheB{n+1,k}{l}}}{}{\Res{\Stan{n+1,k^{l-1}}}}.
	\end{equation}
As in the proof of \cref{prop:ProjRest}, this sequence can be decomposed into two or three exact sequences, according as to whether $\Alg{} = \tl{}$ or $\dtl{}$, respectively, by selecting a parity for the direct summands of the modules (for $\dtl{}$) and distinguishing their $F_{n}$-eigenvalues. These sequences are analysed using similar arguments, so we shall only focus on one of them, namely
	\begin{equation}\label{ses:resbeven2}
		\dses{(\Res{\TheB{n+1,k}{l-2}})_{+1}}{}{(\Res{\TheB{n+1,k}{l}})_{+1}}{}{\Stan{n,k^{l-1}-1}},
	\end{equation}
where $(\Mod{M})_{i}$, for $i\in\set{-1,0,+1}$, is the direct summand of $\Mod{M}$ on which $F_{n}$ has the same eigenvalue as on $\Stan{n,k+i}$ (see the proof of \cref{prop:ProjRest} where this notation was first introduced). This choice of $F_{n}$-eigenvalue will lead to the third direct summand (enclosed in braces) of \eqref{prop:res_bmod}.

The diagram below illustrates the argument. It assumes that $\ell=4$ and describes the restriction of the $\tl{n+1}$-module $\TheB{n+1,2}4$, hence $k=2$ and $l=4$. The indices of the composition factors of this module are typeset in bold and appear in the bottom line; those of $\Res{\TheB{n+1,2}4}$ appear in the top line.
\begin{equation*}
\begin{tikzpicture}[baseline={(current bounding box.center)},every node/.style={fill=white,circle,inner sep=2pt},scale=1/3]
% Critical lines
	\draw[dashed]  (6,-1.5) -- (6,1.0);
	\draw[dashed] (14,-1.5) -- (14,1.);
	\draw[dashed] (22,-1.5) -- (22,1.);
	\draw[dashed] (30,-1.5) -- (30,1.);
	\draw[dashed] (38,-1.5) -- (38,1.);
	\draw[dashed]  (6,1.8) -- (6,2.8);
	\draw[dashed] (14,1.8) -- (14,2.8);
	\draw[dashed] (22,1.8) -- (22,2.8);
	\draw[dashed] (30,1.8) -- (30,2.8);
	\draw[dashed] (38,1.8) -- (38,2.8);
% j values
	\node at (0,0) {$0$};
	\node at (2,1) {$1$};
	\node at (4,0) {$\mathbf2$};    \node at (4,-1.75) {$k$};
	\node at (6,1) {$3$};
	\node at (8,0) {$\mathbf4$};    \node at (8,-1.75) {$k^+$};
	\node at (10,1) {$5$};
	\node at (12,0) {$6$};
	\node at (14,1) {$7$};
	\node at (16,0) {$8$};
	\node at (18,1) {$9$};
	\node at (20,0) {$\mathbf{10}$};  \node at (20,-1.75) {$k^{l-2}$};
	\node at (22,1) {$11$};
	\node at (24,0) {$\mathbf{12}$};  \node at (24,-1.75) {$k^{l-1}$};
	\node at (26,1) {$13$};
	\node at (28,0) {$14$};
	\node at (30,1) {$15$};
	\node at (32,0) {$16$};
	\node at (34,1) {$17$};
	\node at (36,0) {$\mathbf{18}$};  \node at (36,-1.75) {$k^{l}$};
	\node at (38,1) {$19$};
	\node at (40,0) {$20$};
	\node at (42,1) {$\dots$};
	\node at (44,0) {$\dots$};
\end{tikzpicture}
\end{equation*}

We first discuss the subcase in which $k+1$ is critical (as in the diagram). Then so is $k^{l-1}-1 = k^{l-2}+1$, hence the standard module $\Stan{n,k^{l-1}-1}$ is projective and \eqref{ses:resbeven2} must split. The induction hypothesis therefore gives
\begin{equation}
(\Res{\TheB{n+1,k}{l}})_{+1} \simeq \Stan{n,k^{l-1}-1} \oplus (\Res{\TheB{n+1,k}{l-2}})_{+1}
\simeq \Proj{n,k^{l-2}+1} \oplus \bigoplus_{j=0}^{l/2-2} \Proj{n,k^{2j}+1} = \bigoplus_{j=0}^{l/2-1} \Proj{n,k^{2j}+1},
\end{equation}
as in \eqref{prop:res_bmod}.  If $k+1$ is not critical, then the sequence \eqref{ses:resbeven2} cannot split because
\begin{equation}
	\Hom({\Stan{n,k^{l-1}-1}},\Res{\TheB{n+1,k}{l}})
	\overset{(1)}{\simeq}  \Hom(\Ind{\Stan{n,k^{l-1}-1}},\TheB{n+1,k}{l})
	\overset{(2)}{\simeq}  \Hom(\Stan{n+1,k^{l-1}},\TheB{n+1,k}{l})
	\overset{(3)}{=} 0.
\end{equation}
Here, the isomorphism $(1)$ is Frobenius reciprocity, $(2)$ follows from \cref{prop:StanRest} and the eigenvalues of $F_{n}$, and (3) amounts to noting that any such non-zero morphism would have to be injective, contradicting the fact that $\TheB{n+1,k}{l}$ has no submodule isomorphic to $\Stan{n+1,k^{l-1}}$.  (If it did, then the inclusion would split the defining exact sequence \eqref{eq:exactBk2j} and $\TheB{n,k}{l}$ would be decomposable.)  The induction hypothesis gives $(\Res{\TheB{n+1,k}{l-2}})_{+1} \cong \TheB{n,k+1}{l-2}$, so the non-split exact sequence \eqref{ses:resbeven2} is then that defining  $\TheB{n,k+1}{l}$, whence we conclude that $(\Res{\TheB{n+1,k}{l}})_{+1} \simeq \TheB{n,k+1}{l}$. Note that this assumes that $(k+1)^{l} \le n$, for otherwise \eqref{ses:resbeven2} gives instead$(\Res{\TheB{n+1,k}{l}})_{+1} \simeq \TheB{n,k+1}{l-1}$, by \cref{prop:extensionRevealed}.  This completes the identification of $(\Res{\TheB{n+1,k}{l}})_{+1}$ and similar arguments identify $(\Res{\TheB{n+1,k}{l}})_{0}$ and $(\Res{\TheB{n+1,k}{l}})_{-1}$, completing the proof for $l$ even.

When $l$ is odd, the induction instead starts at $l=1$, for which the statement is obtained from \cref{prop:StanRest} by noting that
\begin{equation}
\Res{\Cost{n+1,k}} \simeq \Res{\left(\twdu{\Stan{n+1,k}}\right)} \simeq \twdu{\left(\Res{\Stan{n+1,k}}\right)},
\end{equation}
since restriction and duality commute. In the inductive step, the sequence $\ses{\TheB{n+1,k^{2}}{l-2}}{\TheB{n+1,k}{l}}{\Cost{n+1,k}}$ is used instead of \eqref{eq:exactBk2j} (see \cref{prop:3point1}) and the rest of the arguments proceed as before.
\end{proof}
The restrictions of the $\TheT{n+1,k}{l}$ are now obtained by duality.
\begin{proposition}\label{prop:res_tmod}
For $k$ non-critical and $k, k^{l} \in \Lambda_{n+1,0}$, the restriction of $\TheT{n+1,k}{l}$ is given by
\begin{align}
\Res{\TheT{n+1,k}{l}} \simeq &
\begin{dlrcases*}
\bigoplus_{j=0}^{\lfloor l/2 \rfloor } \Proj{n,k^{2j}-1}, & if $k-1$ is critical, \\
\TheT{n,k-1}{l}, & otherwise
\end{dlrcases*}
\oplus
\begin{lrcases*}
\TheT{n,k}{l}, & if \(\Alg{} = \dtl{}\), \\
0, & otherwise
\end{lrcases*}
\notag\\
&\oplus
\begin{dlrcases*}
\bigoplus_{j=0}^{\lfloor (l-1)/2 \rfloor} \Proj{n,k^{2j}+1}, & if $k+1$ is critical, \\
\TheT{n,k+1}{l}, & otherwise
\end{dlrcases*}
,
\end{align}
where it is understood that each summand of the form $\TheT{n,\kappa}{l}$ should be replaced by $\TheT{n,\kappa}{l-1}$ whenever $\kappa^{l} \notin \Lambda_{n,0}$.
\end{proposition}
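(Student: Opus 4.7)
The plan is to deduce this result directly from \cref{prop:res_bmod} by exploiting the fact that restriction and twisted duality commute. Recall that, by definition, $\TheT{n+1,k}{l} \cong \twdu{\TheB{n+1,k}{l}}$; so the first step will be to verify the natural isomorphism $\Res{(\twdu{\Mod{M}})} \simeq \twdu{(\Res{\Mod{M}})}$ of $\Alg{n}$-modules. This was already noted in the proof of \cref{prop:ProjDual}: since the image of the inclusion $\Alg{n} \ira \Alg{n+1}$ is stable under the antiautomorphism $^{\dag}$, the underlying vector spaces and actions of both sides coincide, and the isomorphism is tautological. Applying this gives
\begin{equation}
\Res{\TheT{n+1,k}{l}} \;\simeq\; \Res{\bigl(\twdu{\TheB{n+1,k}{l}}\bigr)} \;\simeq\; \twdu{\bigl(\Res{\TheB{n+1,k}{l}}\bigr)}.
\end{equation}

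The second step is to apply \cref{prop:res_bmod} to the right-hand side and dualise the result summand by summand, using that duality is an additive (exact contravariant) functor so $\twdu{(\Mod{M}\oplus\Mod{N})}\simeq\twdu{\Mod{M}}\oplus\twdu{\Mod{N}}$ (\cref{prop:DualExact}). For the summands of the form $\TheB{n,\kappa}{l}$ (or $\TheB{n,\kappa}{l-1}$), the definitions of \cref{sub:definitionBT} give $\twdu{\TheB{n,\kappa}{l}} \simeq \TheT{n,\kappa}{l}$, so these contribute exactly the $\TheT{n,\kappa}{l}$ summands appearing in the statement, together with the same replacement rule when $\kappa^l \notin \Lambda_{n,0}$. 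For the summands of the form $\Proj{n,k^{2j}\pm 1}$, the key observation is that if $k\pm 1$ is critical then $k^{2j}\pm 1$ is also critical (since $k$ and $k^{2j}$ lie in the same orbit and differ by an even number of reflections), so these $\Proj{n,k^{2j}\pm 1}$ are critical projectives, hence isomorphic to irreducibles by \cref{prop:Proj}, and therefore self-dual by \cref{prop:IrreDual}. In particular $\twdu{\Proj{n,k^{2j}\pm 1}}\simeq \Proj{n,k^{2j}\pm 1}$, and these summands reappear unchanged.

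Assembling the dualised summands produces exactly the decomposition claimed in the statement of \cref{prop:res_tmod}, including the $\TheT{n,k}{l}$ summand in the dilute case (the middle summand $\TheB{n,k}{l}$ being present for $\Alg{} = \dtl{}$ and absent otherwise). There is essentially no obstacle here beyond bookkeeping: the only point that must be checked with a little care is that the ``correction'' for $\kappa^l \notin \Lambda_{n,0}$ survives dualisation, which is immediate because dualising the exact sequences \eqref{es:DefBandT} gives precisely the parallel sequences for $\TheT{n,\kappa}{}$ with the same terminating behaviour at $\kappa^l = \kappa_R$. This completes the proof.
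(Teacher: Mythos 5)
Your proof is correct and follows exactly the route the paper itself uses: the paper simply remarks that ``the restrictions of the $\TheT{n+1,k}{l}$ are now obtained by duality'' from \cref{prop:res_bmod}, relying on the same facts that duality commutes with restriction, that $\twdu{\TheB{n,\kappa}{l}}\simeq\TheT{n,\kappa}{l}$, and that the critical projective summands are irreducible and hence self-dual. Your additional remark that $k\pm 1$ critical forces $k^{2j}\pm 1$ critical (since $k^{2j}=k+2j\ell$) is the only detail the paper leaves implicit, and you have justified it correctly.
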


\subsection{Interlude:  Tor-groups} \label{sec:CBTTor}

The method used to compute these restrictions, as well as those of \cref{sub:restriction}, relies on the fact that the restriction functor is exact, so restricting each module in a short exact sequence results in another short exact sequence. But, as we saw at the end of \cref{sub:restriction}, the induction functor is only right-exact, meaning that identifying induced modules will require more sophisticated arguments.  We also note that induction does not commute with duality, in general, hence \cref{prop:StanRest} does not immediately identify, for instance, the induced costandard modules.

Inducing a $\Alg{n}$-module $\Mod{M}$ to an $\Alg{n+1}$-module amounts to taking the tensor product $\Alg{n+1}\otimes_{\Alg n} \Mod{M}$, where $\Alg{n+1}$ is regarded as a left $\Alg{n+1}$-module and a right $\Alg n$-module.  Just as the failure of Hom-functors to be exact is measured by extension groups, the failure of tensor products to be exact is measured by \emph{torsion groups}.  In particular, a short exact sequence
\begin{equation}
\dses{\Mod{N}''}{}{\Mod{N}}{}{\Mod{N}'}
\end{equation}
of left $\Alg{n}$-modules gives rise, upon induction, to the long exact sequence
\begin{equation}\label{eq:TorES}
\cdots \lra
%\Tor_2(\Alg{n+1},\Mod{N}) \lra
\Tor(\Res{\Alg{n+1}},\Mod{N}'') \lra
\Tor(\Res{\Alg{n+1}},\Mod{N}) \lra
\Tor(\Res{\Alg{n+1}},\Mod{N}') \lra
\Ind{\Mod{N}''} \lra
\Ind{\Mod{N}} \lra
\Ind{\Mod{N}'} \lra 0,
\end{equation}
in which $\Alg{n+1}$ is viewed as a right $\Alg n$-module.  As with the Hom-Ext long exact sequences \eqref{es:lcohomology} and \eqref{es:rcohomology}, this sequence continues with higher torsion groups $\Tor_m(\Res{\Alg{n+1}},\blank)$.  Because we have no need for these higher groups, we omit the index $m$ and write $\Tor_1 \equiv \Tor$, for brevity.  We remark that as $\Alg{n+1}$ is also a left $\Alg{n+1}$-module, each of the torsion groups in \eqref{eq:TorES} is also a left $\Alg{n+1}$-module (as are the induced modules).

The following two facts about Tor-groups will be essential. First, Tor-groups are trivial when either of their arguments is flat. A projective module $\Proj{}$ is always flat, so $\Tor(\Proj{},-)\simeq\Tor(-,\Proj{})\simeq 0$. Second, if $\Mod{M}$ is a finite-dimensional right module and $\Mod{N}$ a left one over the same finite-dimensional algebra $\Alg{}$ (over a field), then the Tor-groups and Ext-groups are related by \cite[Cor.~IX.4.12]{Assem-Fr}
\begin{equation}\label{eq:TorExt}
\Tor({\Mod{M}}, \Mod{N})\simeq \VectD{\Ext(\Mod{M},\VectD{\Mod{N}})} \cong \Ext(\Mod{N}, \dual{\Mod{M}}),
\end{equation}
where $\VectD{\Mod{M}}$ denotes the vector space dual module of $\Mod{M}$ (see \cref{sec:Dual}).\footnote{With no other hypotheses, this is an isomorphism of vector spaces (over the field), hence the additional dual on the right-hand side is superfluous.  However, in the case \eqref{eq:TorES} of interest, $\Mod{M}$ is also a left module over a different algebra $\alg{B}$.  This means that $\Tor({\Mod{M}}, \Mod{N})$ is naturally a left $\alg{B}$-module, whilst $\Ext({\Mod{M}},\VectD{\Mod{N}})$ is a right $\alg{B}$-module, whence the required extra dual.}

Now, $\Alg{n+1}$ is free, hence projective, as a left $\Alg{n+1}$-module.  However, it might not be projective as a right $\Alg{n}$-module.  If it is, then the torsion groups in \eqref{eq:TorES} vanish and the induction functor is exact.  To analyse this, note that the left and right representation theories of $\tl{n}$ and $\dtl{n}$ are identical,\footnote{This is clear from the diagrammatic definitions of these algebras.} so we may combine \cref{prop:StanRest,prop:ProjRest} with the (left $\Alg{n}$-module) decomposition
\begin{equation} \label{eq:ResA}
\Res{\Alg{n+1}}=\bigoplus_{k\in\Lambda_{n+1,0}}\, (\dim \Irre{n+1,k})\, \Res{\Proj{n+1,k}}
\end{equation}
to explore the projectivity of the left $\Alg{n}$-module $\Alg{n+1}$ (the answer will be the same as a right $\Alg{n}$-module).  Introducing the symbol $\overset{\Proj{}}{\cong}$ to indicate an isomorphism up to projective direct summands, the results of this exploration may be summarised by noting that $\Res{\Proj{n+1,k}}$ is always projective for $k<n$ and is otherwise given by
\begin{equation}
\Res{\Proj{n+1,n}} \overset{\Proj{}}{\cong}
\begin{cases*}
0, & if \(n+1\) is critical, \\
\Irre{n,(n+1)^-}, & otherwise,
\end{cases*}
\qquad
\Res{\Proj{n+1,n+1}} \overset{\Proj{}}{\cong}
\begin{cases*}
\Irre{n,n}, & if \(n+1\) is critical, \\
\Irre{n,(n+1)^-}, & if \(n+2\) is critical, \\
\Irre{n,(n+1)^-} \oplus \Irre{n,(n+2)^-}, & otherwise.
\end{cases*}
\end{equation}
Here, we of course ignore modules whose indices have different parities if $\Alg{} = \tl{}$.  Since $\dim \Irre{n+1,n} = n+1$ and $\dim\Irre{n+1,n+1} = 1$, \eqref{eq:ResA} becomes
\begin{equation}\label{eq:AlgeRest}
\Res{\Alg{n+1}} \overset{\Proj{}}{\cong}
\begin{cases*}
\Irre{n,n}, & if \(n+1\) is critical, \\
(n+2) \, \Irre{n,(n+1)^-}, & if \(n+2\) is critical, \\
(n+2) \, \Irre{n,(n+1)^-} \oplus \Irre{n,(n+2)^-}, & otherwise.
\end{cases*}
\end{equation}
Thus, $\Res{\dtl{n+1}}$ is projective as a $\dtl{n}$-module, hence the induction functor for $\dtl{n}$-modules is exact, if and only if $\dtl{n}$ is semisimple.  On the other hand, this holds for $\Res{\tl{n+1}}$ if and only if $\tl{n+1}$ is semisimple \emph{or} $n+2$ is critical.

In any case, the key observation to take away from these computations is that whenever $\Res{\Alg{n+1}}$ has non-projective summands, they are irreducibles $\Irre{n,k}$ with $k = k_R$. This observation will be crucial when we identify the inductions of the $\Cost{n,k}$, $\TheB{k}{l}$ and $\TheT{k}{l}$, a task to which we now turn.

\subsection{Induction} \label{sec:CBTInd}
%
% induction of the C
%
We begin with the inductions of the costandard modules, recalling our convention that any module with an index $k$ not in $\Lambda_{n+1,0}$ is set to zero, as are those, for $\Alg{n+1}=\tl{n+1}$, whose indices $n+1$ and $k$ have different parities.

\begin{proposition}\label{prop:CostIndu}
If $k$ is non-critical and $k,k^+ \in \Lambda_{n,0}$, then the induction of $\Cost{n,k}$ is given by
\begin{align}\label{eq:CostIndu}
\Ind{\Cost{n,k}} \simeq &
\begin{lrcases*}
\Proj{n+1,k -1}, & if $k-1$ is critical, \\
\TheB{n+1,k-1}{2}, & if $k-1$ is non-critical and $k^{++}= n+1$ or $n+2$, \\
\Cost{n+1,k-1}, & otherwise
\end{lrcases*}
\notag \\
&\oplus
\begin{lrcases*}
\TheB{n+1,k}{2}, & if $k^{++} = n+1$, \\
\Cost{n+1,k}, & otherwise
\end{lrcases*}
\oplus
\begin{lrcases*}
\Proj{n+1,k+1}, & if $k+1$ is critical, \\
\Cost{n+1,k+1}, & otherwise
\end{lrcases*}
.
\end{align}
\end{proposition}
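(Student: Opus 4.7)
The plan is to apply the induction functor to the defining short exact sequence of the costandard, namely \eqref{es:ICI}
\[
\dses{\Irre{n,k}}{}{\Cost{n,k}}{}{\Irre{n,k^+}},
\]
and to exploit the long exact Tor sequence \eqref{eq:TorES}. This yields a six-term exact sequence whose outer terms, $\Ind\Irre{n,k}$ and $\Ind\Irre{n,k^+}$, are known by \cref{prop:IrreRest} (or \cref{prop:StanRest}, when the relevant radical vanishes), and whose first three terms are the Tor-groups of $\Res\Alg{n+1}$ against the irreducibles $\Irre{n,k}$ and $\Irre{n,k^+}$. The Tor-groups are computed via the isomorphism \eqref{eq:TorExt}, which converts them to $\Ext$-groups against $\dual{\Res\Alg{n+1}}$. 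By \eqref{eq:AlgeRest}, $\Res\Alg{n+1}$ is, up to projective summands which contribute nothing to Tor, a direct sum of a small number of irreducibles of the form $\Irre{n,k_R'}$; the required $\Ext$-groups are then read off from \cref{prop:SIExte}.

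Having assembled the six-term exact sequence, I would split it into three independent pieces according to the $F_{n+1}$-eigenvalues $\delta_{k-1}$, $\delta_k$, $\delta_{k+1}$ (and, for $\dtl{}$, the parity), as in the proof of \cref{prop:ProjRest}. Each of these three pieces then controls one of the three summands of \eqref{eq:CostIndu}. Generically, each piece simplifies to a short exact sequence $\dses{\Irre{n+1,k'}}{}{\Mod{M}}{}{\Irre{n+1,k'^{+}}}$, whose indecomposable middle term is $\Cost{n+1,k'}$ by \cref{coro:StanUnique}. In the critical cases, where $k \pm 1$ is critical, the piece degenerates and one reads off $\Proj{n+1,k\pm1}$ (which equals the corresponding standard), reproducing the critical summands of \eqref{eq:CostIndu}.

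The main obstacle is the boundary cases $k^{++}\in\set{n+1,n+2}$, where a non-trivial Tor contribution from the non-projective summand $\Irre{n,(n+1)^-}$ or $\Irre{n,(n+2)^-}$ of $\Res\Alg{n+1}$ produces a non-zero connecting map. This forces the corresponding piece to have four composition factors rather than two, and the identification of the middle term as $\TheB{n+1,k-1}{2}$ or $\TheB{n+1,k}{2}$ (rather than a direct sum) must be justified. Here I would combine the composition-factor count with \cref{lem:indecExtension}, using $\Hom(\Irre{n+1,k'},\Irre{n+1,k''})=0$ for $k'\ne k''$ in the same orbit, to conclude indecomposability, and then apply \cref{prop:extensionRevealed} (which identifies the non-trivial extensions of a $\TheB{}{}$ module by an irreducible) to pin down the isomorphism class.

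As an independent cross-check, the same answer can be obtained by inducing the short exact sequence \eqref{es:CPC}, $\dses{\Cost{n,k^+}}{}{\Proj{n,k^+}}{}{\Cost{n,k}}$. Since $\Proj{n,k^+}$ is projective, its Tor-group vanishes and $\Ind\Proj{n,k^+}$ is given by \cref{prop:ProjRest}; the only surviving Tor term is $\Tor(\Res\Alg{n+1},\Cost{n,k})$, and downward induction on $k$ starting from the base case $\Cost{n,k_R}\simeq\Irre{n,k_R}$ yields the same decomposition \eqref{eq:CostIndu}. This alternative derivation would be useful for verifying the delicate sign and boundary data in the boundary cases.
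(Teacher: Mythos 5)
Your plan follows the paper's proof closely in its overall architecture: induce the defining sequence \eqref{es:ICI}, write down the long exact Tor sequence \eqref{eq:TorES}, compute the Tor-groups via \eqref{eq:TorExt} and \eqref{eq:AlgeRest}, then split the resulting short exact sequence by $F_{n+1}$-eigenvalue (and parity) and identify the pieces. The generic and critical cases are handled exactly as the paper does.

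However, your explanation of the boundary cases $k^{++}\in\set{n+1,n+2}$ misattributes the source of the $\TheB{n+1,\cdot}{2}$ summands. You claim that a non-trivial Tor contribution produces a non-zero connecting map that forces a ``larger'' piece. That is not what happens, and the two conditions involved are in fact disjoint. On the one hand, $\Tor(\Res{\Alg{n+1}},\Irre{n,k^+}) \cong \Ext(\Irre{n,k^+},\Mod{I})$ is non-zero only when $k^{++}=k_R\le n$ (since $\Mod{I}$ is built from irreducibles $\Irre{n,k'_R}$), and the paper shows that in that case $\Tor(\Res{\Alg{n+1}},\Cost{n,k})\cong\Tor(\Res{\Alg{n+1}},\Irre{n,k^+})$ so the two Tor-terms cancel; in \emph{every} case one gets the short exact sequence $\ses{\Ind{\Irre{n,k}}}{\Ind{\Cost{n,k}}}{\Ind{\Irre{n,k^+}}}$ cleanly and the Tor terms contribute no composition factors to $\Ind{\Cost{n,k}}$. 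On the other hand, the $\TheB{}{2}$ summands occur precisely when $k^{++}=n+1$ or $n+2$, i.e.\ $k^{++}>n$, which excludes $k^{++}=k_R$. The actual mechanism is that when $k^+=k_R$, one has $\Irre{n,k^+}\cong\Stan{n,k^+}$, so $\Ind{\Irre{n,k^+}}$ is computed by \cref{prop:StanRest} rather than \cref{prop:IrreRest} and the resulting standards $\Stan{n+1,k^+\pm1}$, $\Stan{n+1,k^+}$ may be reducible; it is these extra composition factors on the quotient side that produce the $\TheB{n+1,\cdot}{2}$ pieces. (Your composition-factor count is also off: $\TheB{n+1,\cdot}{2}$ has three factors, not four.) Once this is corrected, the remainder of your identification of the non-split middle term via \cref{prop:SIExte}, \cref{lem:indecExtension} and \cref{prop:3point1} is what the paper does. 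Your proposed cross-check by inducing \eqref{es:CPC} is not in the paper and is a reasonable alternative, though it still requires controlling $\Tor(\Res{\Alg{n+1}},\Cost{n,k})$, so it does not dodge the Tor analysis.
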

\begin{proof}
The long exact sequence \eqref{eq:TorES} derived from the exact sequence \eqref{es:ICI} is
\begin{equation} \label{es:TorICI}
\cdots\lra\Tor(\Res{\Alg{n+1}},\Irre{n,k})
\lra\Tor(\Res{\Alg{n+1}},\Cost{n,k})
\lra\Tor(\Res{\Alg{n+1}},\Irre{n,k^{+}})
\lra \Ind{\Irre{n,k}}
\lra \Ind{\Cost{n,k}}
\lra \Ind{\Irre{n,k^{+}}}
\lra 0.
\end{equation}
Since Tor-groups involving projective modules vanish, we may replace $\Res{\Alg{n+1}}$ by the right-hand side of \eqref{eq:AlgeRest}, or rather the right module version of it, when calculating the Tor-groups in this sequence.  Let $\Irre{}^*$ denote the right module version, that is the vector space dual, of the right-hand side of \eqref{eq:AlgeRest}.  Then, \eqref{eq:TorExt} gives $\Tor(\Res{\Alg{n+1}},\Mod{M})\simeq \Ext({\Mod{M}},\Mod{I})$. We therefore have to compute $\Ext(\Irre{n,k},\Mod{I})$, $\Ext(\Cost{n,k},\Mod{I})$ and $\Ext(\Irre{n,k^+},\Mod{I})$.

Recall from \cref{sec:CBTTor} that $\Irre{}$ is a direct sum of irreducibles whose indices are always the rightmost in their orbit. Consulting \cref{prop:SIExte}, we see that $\Ext(\Irre{n,k^+},\Mod{I})$ may only be non-zero if $k=k_R^{--}$.  But then, $\Ext(\Irre{n,k},\Mod{I}) = 0$ and $\Ext(\Cost{n,k},\Mod{I}) \cong \Ext(\Irre{n,k^+},\Mod{I}) \cong \Mod{E}$ (say).  Thus, if $k \neq k_R^{--}$, then \eqref{es:TorICI} reduces to the short exact sequence
\begin{equation}\label{eq:111}
\dses{\Ind{\Irre{n,k}}}{}{\Ind{\Cost{n,k}}}{}{\Ind{\Irre{n,k^+}}}.
\end{equation}
However, if $k=k_R^{--}$, then \eqref{es:TorICI} reduces to
\begin{equation}
0 \lra \Mod{E} \lra \Mod{E} \lra \Ind{\Irre{n,k}} \lra \Ind{\Cost{n,k}} \lra \Ind{\Irre{n,k^+}} \lra 0,
\end{equation}
which also implies the exactness of \eqref{eq:111}.

From this point on, the proof follows familiar arguments.  The exact sequence \eqref{eq:111} is decomposed into (two or) three exact sequences corresponding to the eigenvalues of the central element $F_{n}$. Here is an example. Suppose first that neither $k-1$ nor $k+1$ are critical and that $k^+ \neq k_R$, so that $\Stan{n,k^+} \ncong \Irre{n,k^+}$. Then, all three short exact sequences will have the same form, namely
\begin{equation} \label{es:3ICI}
\dses{\Irre{n+1,k+i}}{}{(\Ind{\Cost{n,k}})_i}{}{\Irre{n+1,k^+-i}} \qquad \text{(\(i=0, \pm 1\)),}
\end{equation}
by \cref{prop:IrreRest}. None of these sequences can split because
\begin{equation}\label{eq:1234}
\Hom(\Ind{\Cost{n,k}}, \Irre{n+1,k+i}) \simeq \Hom(\Cost{n,k}, \Res{\Irre{n+1,k+i}}) \simeq \Hom(\Cost{n,k}, \Irre{n,k+i-1} \oplus \Irre{n,k+i} \oplus \Irre{n,k+i+1}) \simeq 0,
\end{equation}
by Frobenius reciprocity and \cref{prop:IrreRest,prop:SIPHomo}. \cref{prop:SIExte,coro:StanUnique} now give the conclusion: $(\Ind{\Cost{n,k}})_i \simeq \Cost{n,k+i}$, hence $\Ind{\Cost{n,k}} \cong \Cost{n,k-1} \oplus \Cost{n,k} \oplus \Cost{n,k+1}$.

When either $k-1$ or $k+1$ is critical, but still $k^+ \neq k_R$, some of the irreducibles in the three exact sequences \eqref{es:3ICI} are critical, hence projective, and some are replaced by $0$, according to \cref{prop:IrreRest}.  The results are as before, except that $(\Ind{\Cost{n,k}})_{-1} \cong \Proj{n+1,k-1}$ or $(\Ind{\Cost{n,k}})_{+1} \cong \Proj{n+1,k^+-1} = \Proj{n+1,k+1}$, respectively.

Finally, the case $k^+ = k_R$ leads to the novel summands $\TheB{n+1,k'}{2}$ in \eqref{eq:CostIndu}. \cref{prop:IrreRest} may be used only when $\Irre{n,k^+}\not\simeq\Stan{n,k^+}$; otherwise, \cref{prop:StanRest} applies instead and $\Ind{\Irre{n,k^+}} \simeq \Stan{n+1,k^+-1} \oplus \Stan{n+1,k^+} \oplus \Stan{n+1,k^++1}$. The question is whether these standard modules are irreducible or not, for if so, then the analysis proceeds as above. Now, $\Stan{n+1,k^++1}$ is only reducible when $k^++1$ is non-critical and $(k^++1)^{+}=k^{++}-1 \in \Lambda_{n+1,0}$, that is, when $k^{++}=n+1$ or $n+2$. Similarly, $\Stan{n+1,k^+}$ is only reducible when $k^{++}=n+1$ and $\Stan{n+1,k^+-1}$ is never reducible. In these few cases, some of the three short exact sequences, obtained by decomposing \eqref{eq:111}, describe non-split extensions of an irreducible by a reducible standard. \cref{prop:SIExte,lem:indecExtension,prop:3point1} identify the extensions as being isomorphic to $\TheB{n+1,k-1}{2}$ or $\TheB{n+1,k}{2}$, completing the proof.
\end{proof}

%
% induction of the B
%
\begin{proposition}\label{prop:TheBIndu}
Let $k$ be non-critical with $k,k^l \in \Lambda_{n,0}$.  If $l$ is even, then the induction of $\TheB{n,k}{l}$ is given by
\begin{subequations} \label{eq:IndB}
\begin{equation} \label{eq:IndBEven}
\Ind{\TheB{n,k}{l}} \simeq \Res{\TheB{n+2,k}{l}}.
\end{equation}
If $l=2i+1$ is odd, then the induction is instead given by
\begin{align} \label{eq:IndBOdd}
\Ind{\TheB{n,k}{l}} \simeq &
\begin{dlrcases*}
\bigoplus_{j=0}^{i} \Proj{n+1,k^{2j}-1}, & if $k-1$ is critical, \\
\TheB{n+1,k-1}{l+1}, & if $k-1$ is non-critical and $k^{l+1}=n+1$ or $n+2$, \\
\TheB{n+1,k-1}{l}, & otherwise
\end{dlrcases*}
\notag\\
&\oplus
\begin{lrcases*}
\TheB{n+1,k}{l+1}, & if $k^{l+1} = n+1$, \\
\TheB{n+1,k}{l}, & otherwise
\end{lrcases*}
\oplus
\begin{dlrcases*}
\bigoplus_{j=0}^{i-1} \Proj{n+1,k^{2j}+1}, & if $k+1$ is critical, \\
\TheB{n+1,k+1}{l}, & otherwise
\end{dlrcases*}
.
\end{align}
\end{subequations}
\end{proposition}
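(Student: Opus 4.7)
The plan is to proceed by induction on $l$, using the defining exact sequences of the $\TheB{n,k}{l}$ from \cref{prop:3point1} combined with the Tor long exact sequence \eqref{eq:TorES} for induction. The base cases are $l=0$, where $\TheB{n,k}{0}\simeq \Irre{n,k}$ and the claim reduces to $\Ind\Irre{n,k}\simeq \Res\Irre{n+2,k}$ from \cref{prop:IrreRest}, and $l=1$, where $\TheB{n,k}{1}\simeq \Cost{n,k}$ and the odd formula coincides with \cref{prop:CostIndu}. Throughout, I would use the identification $\Tor(\Res\Alg{n+1},\Mod{M})\simeq \Ext(\Mod{M},\Mod{I}^*)$ from \eqref{eq:TorExt}, where $\Mod{I}^*$ is the vector-space dual of the (right-module version of the) non-projective part of $\Res\Alg{n+1}$ described in \eqref{eq:AlgeRest}; the key structural fact is that the indices appearing in $\Mod{I}^*$ are always rightmost in their orbit, so the relevant Ext-groups with the $\TheB{n,k}{l}$ can be read off directly from \cref{prop:extensionsBT}.

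For the even case $l=2j$ with $j\geq 1$, I would induce the short exact sequence
$$\dses{\TheB{n,k}{2(j-1)}}{}{\TheB{n,k}{2j}}{}{\Stan{n,k^{2j-1}}},$$
obtaining a six-term exact sequence whose first three slots contain Tor-groups and whose last three are $\Ind\TheB{n,k}{2(j-1)}$, $\Ind\TheB{n,k}{2j}$ and $\Ind\Stan{n,k^{2j-1}}$. By the induction hypothesis and \cref{prop:StanRest}\ref{it:SResInd}, the first and last of these are $\Res\TheB{n+2,k}{2(j-1)}$ and $\Res\Stan{n+2,k^{2j-1}}$ respectively. Restricting the corresponding defining sequence for $\TheB{n+2,k}{2j}$ to $\Alg{n+1}$ produces a parallel short exact sequence with the same outer terms, so matching the two diagrams pins $\Ind\TheB{n,k}{2j}$ to $\Res\TheB{n+2,k}{2j}$ up to a Tor-correction. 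I would verify that this correction vanishes in all cases by showing the connecting map at the Tor-level is surjective onto the same subspace on both sides, using \cref{prop:extensionsBT}(b) to compute the Ext-groups controlling the Tor's of $\TheB{n,k}{2(j-1)}$ and $\TheB{n,k}{2j}$ with the irreducibles of rightmost index.

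For the odd case $l=2i+1$, I would work with the sequence
$$\dses{\TheB{n,k^{++}}{2i-1}}{}{\TheB{n,k}{2i+1}}{}{\Cost{n,k}}$$
(or the alternative from \cref{prop:3point1} when $k^{l}=k_R$) and apply the Tor long exact sequence again. The outer inductions are now known: $\Ind\TheB{n,k^{++}}{2i-1}$ by the induction hypothesis, and $\Ind\Cost{n,k}$ by \cref{prop:CostIndu}. After splitting by parity and $F_{n+1}$-eigenvalue (as in the proofs of \cref{prop:ProjRest,prop:CostIndu}), each summand of $\Ind\TheB{n,k}{2i+1}$ sits in a short exact sequence whose endpoints are determined by the induction hypothesis; Frobenius reciprocity together with \cref{prop:SIPHomo,prop:socRadHdBT} then rules out any splitting of these sequences except in the expected critical cases, and \cref{coro:StanUnique,lem:indecExtension} together with \cref{thm:classification} identify each summand as the unique indecomposable on the right-hand side of \eqref{eq:IndBOdd}. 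The various branches in the statement correspond exactly to the branches in $\Ind\Cost{n,k}$, with the extra case $k^{l+1}=n+1$ or $n+2$ arising because the underlying costandard summand is replaced by a $\TheB{}{2}$.

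The main obstacle is the bookkeeping of degenerate subcases, and in particular showing that the Tor-correction in the even case really does vanish. The delicate point is when $k^l$ or $k^{l+1}$ approaches the boundary of $\Lambda_{n,0}$ or reaches $n+1$ or $n+2$, since then the defining sequences on the induced side and on the restricted side are truncated differently and one must compare them more carefully. A parallel subtlety occurs when $k\pm 1$ is critical: the corresponding summands of $\Res\TheB{n+2,k}{l}$ become direct sums of critical projectives, and one must check that the same projective summands split off $\Ind\TheB{n,k}{l}$, which I would verify using Frobenius reciprocity as in \eqref{eq:1234} together with the vanishing of $\Ext({-},\Proj{})$ guaranteed by projectivity.
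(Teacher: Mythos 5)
Your overall structure --- induction on $l$ with base cases $l=0,1$, inducing the defining exact sequences from \cref{prop:3point1}, invoking the Tor long exact sequence \eqref{eq:TorES}, decomposing by parity and $F_{n+1}$-eigenvalue, and finishing with Frobenius reciprocity, non-splitting arguments, and \cref{lem:indecExtension} --- matches the paper's proof. However, the crucial Tor-vanishing step is misdirected.

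The map $\Ind{\TheB{n,k}{l-2}} \to \Ind{\TheB{n,k}{l}}$ in the Tor long exact sequence is injective precisely when the connecting morphism $\Tor(\Res{\Alg{n+1}},\Stan{n,k^{l-1}}) \to \Ind{\TheB{n,k}{l-2}}$ vanishes, which the paper arranges by showing $\Tor(\Res{\Alg{n+1}},\Stan{n,k^{l-1}}) = 0$ outright: by \eqref{eq:TorExt}, this Tor-group is isomorphic to $\Ext(\Stan{n,k^{l-1}},\Irre{})$, where $\Irre{}$ is a direct sum of irreducibles whose indices are rightmost in their orbits (by \eqref{eq:AlgeRest}), and $\Ext(\Stan{n,k^{l-1}},\Irre{n,m_R}) = 0$ by \cref{prop:SIExte} since the only nonvanishing entry would require $m_R = (k^{l-1})^-$, impossible for a rightmost index. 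Your proposal instead computes the Tor-groups of the submodule $\TheB{n,k}{l-2}$ and middle module $\TheB{n,k}{l}$ via \cref{prop:extensionsBT}(b); but even if both of these vanish, they do not control the connecting map emanating from $\Tor(\Res{\Alg{n+1}},\Stan{n,k^{l-1}})$. The Ext-group you actually need involves the \emph{quotient} $\Stan{n,k^{l-1}}$ and lives in \cref{prop:SIExte}, not \cref{prop:extensionsBT}. (For $l$ odd, the analogous vanishing $\Ext(\Cost{n,k},\Irre{n,m_R}) = 0$ follows from \cref{lem:trivialExtensions}, since the factors of $\Cost{n,k}$ lie well to the left of $m_R$.) Moreover, the proposed ``matching parallel diagrams'' idea cannot by itself identify $\Ind{\TheB{n,k}{l}}$: two short exact sequences sharing outer terms need not share their middle terms --- that ambiguity is precisely what the Ext-groups measure --- so once the quotient's Tor-group is dispatched, the identification must proceed through the decomposition-by-eigenvalue and Hom/non-splitting arguments that you sketch at the end, exactly as in the proof of \cref{prop:CostIndu}.
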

\begin{proof}
The proof is by induction on $l$, distinguishing the two parities. If $l=0$, then $\TheB{n,k}0=\Irre{n,k}$ and the result was given in \cref{prop:StanRest,prop:IrreRest}. If $l=1$, then $\TheB{n,k}1=\Cost{n,k}$ and the result was given in \cref{prop:CostIndu}.

So, let $l\ge2$ be an even integer. The long exact sequence obtained by inducing the defining sequence
\begin{equation} \label{es:BBS}
\dses{\TheB{n,k}{l-2}}{}{\TheB{n,k}{l}}{}{\Stan{n,k^{l-1}}}
\end{equation}
of \cref{prop:3point1} has $\Tor(\Res{\Alg{n+1}}, \Stan{n,k^{l-1}}) = 0$. Indeed, this torsion group is isomorphic to $\Ext(\Stan{n,k^{l-1}}, \Irre{})$, where $\Irre{}$ is a direct sum of non-critical irreducibles whose indices have the form $k'_R$, and $\Ext(\Stan{n,k^{l-1}}, \Irre{k'_R})$ is always zero, by \cref{prop:SIExte}. We therefore arrive at the short exact sequence
\begin{equation} \label{es:BBI}
\dses{\Ind{\TheB{n,k}{l-2}}}{}{\Ind{\TheB{n,k}{l}}}{}{\Ind{\Stan{n,k^{l-1}}}}.
\end{equation}
The submodule here is known, by the induction hypothesis, and the quotient is known by \cref{prop:StanRest}. The identification of $\Ind{\TheB{n,k}{l}}$ now follows the same arguments as in the proof of \cref{prop:CostIndu}.  In particular, \eqref{es:BBI} is decomposed into (at most) three exact sequences whose nature, split or non-split, is established through the presence of projectives or by calculating Hom-groups.  For example, if $k-1$ and $k$ are not critical, then
\begin{equation}
\Hom(\Ind{\TheB{n,k}l}, \TheB{n+1,k-1}{l-2})\simeq
\Hom(\TheB{n,k}l, \Res{\TheB{n+1,k-1}{l-2}})\simeq\Hom(\TheB{n,k}l,\TheB{n,k}{l-2})=0,
\end{equation}
as otherwise $\TheB{n,k}l$ would be decomposable.  \cref{lem:indecExtension,prop:3point1} then complete the identification.

One case with $l$ odd also fits into this induction argument, that with $k^l = k_R$.  Then, \eqref{es:BBS} is replaced by the defining exact sequence
\begin{equation}
\dses{\TheB{n,k}{l-1}}{}{\TheB{n,k}{l}}{}{\Stan{n,k^{l}}}
\end{equation}
and the subsequent analysis follows similar lines to that performed for $\Ind{\Cost{n,k}}$ (for these values of $k$).  This case leads to the result for $\Ind{\TheB{n,k}{l}}$, when $k^{l+1}=n+1$ or $n+2$, in \eqref{eq:IndBOdd}.

The induction process for $l=2i+1 \ge 3$ odd starts with the computation of the Tor-groups related to the defining sequence of $\TheB{n,k}{2i+1}$:
\begin{equation}
\dses{\TheB{n,k^{++}}{2i-1}}{}{\TheB{n,k}{2i+1}}{}{\Cost{n,k}}.
\end{equation}
\cref{lem:trivialExtensions} indicates that each $\Ext(\Cost{n,k},\Irre{k'_R})$ is always zero because the composition factors of the two modules are sufficiently separated in their orbit (for example, $k^{++}$ stands between them). The long exact sequence thus reduces to
\begin{equation}
\dses{\Ind{\TheB{n,k}{2i-1}}}{}{\Ind{\TheB{n,k}{2i+1}}}{}{\Ind{\Cost{n,k}}}
\end{equation}
and, from this point on, the proof closely follows that for $l$ even.
\end{proof}

Our final induction result does not require any new techniques, so we omit the proof.
%
% induction of the T
%
\begin{proposition}
Let $k$ be non-critical with $k,k^l \in \Lambda_{n,0}$.  If $l$ is odd, then the induction of $\TheT{n,k}{l}$ is given by
\begin{subequations}
\begin{equation}
\Ind{\TheT{n,k}{l}} \simeq \Res{\TheT{n+2,k}{l}}.
\end{equation}
If $l=2i$ is even, then the induction is instead given by
\begin{align}
\Ind{\TheT{n,k}{l}} \simeq &
\begin{dlrcases*}
\bigoplus_{j=0}^i \Proj{n+1,k^{2j}-1}, & if $k-1$ is critical, \\
\TheT{n+1,k-1}{l}, & otherwise
\end{dlrcases*}
\oplus
\begin{lrcases*}
\TheT{n+1,k}{l+1}, & if $k^{l+1} = n+1$, \\
\TheT{n+1,k}{l}, & otherwise
\end{lrcases*}
\notag \\
&\oplus
\begin{dlrcases*}
\bigoplus_{j=0}^{i-1} \Proj{n+1,k^{2j}+1}, & if $k+1$ is critical, \\
\TheT{n+1,k+1}{l+1}, & if $k+1$ is non-critical and $k^{l+1}=n+1$ or $n+2$, \\
\TheT{n+1,k+1}{l}, & otherwise
\end{dlrcases*}
.
\end{align}
\end{subequations}
\end{proposition}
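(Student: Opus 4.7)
The plan is to mirror the inductive strategy employed for $\Ind{\TheB{n,k}{l}}$ in \cref{prop:TheBIndu}, treating the two parities of $l$ separately. The base cases are handled by earlier results: for $l=0$, $\TheT{n,k}{0} \simeq \Irre{n,k}$ and the formula reduces to the combined content of \cref{prop:IrreRest,prop:StanRest} (the latter covering the subcase in which $\Stan{n+1,k\pm 1}$ must appear because the corresponding standard is itself irreducible); for $l=1$, $\TheT{n,k}{1} \simeq \Stan{n,k}$ and the claim $\Ind{\Stan{n,k}} \simeq \Res{\Stan{n+2,k}}$ is exactly \cref{prop:StanRest}\ref{it:SResInd}.

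For the inductive step, I would apply the induction functor to the defining non-split short exact sequences from \cref{prop:3point1}:
\begin{equation*}
\dses{\Cost{n,k^{2j-1}}}{}{\TheT{n,k}{2j}}{}{\TheT{n,k}{2(j-1)}}, \qquad \dses{\Stan{n,k}}{}{\TheT{n,k}{2j+1}}{}{\TheT{n,k^{++}}{2j-1}},
\end{equation*}
together with the obvious modifications when $k^l = k_R$. Each yields a long exact Tor sequence of the shape \eqref{eq:TorES}. The crucial step will be to verify that the leading Tor-group vanishes; by the isomorphism \eqref{eq:TorExt} combined with the decomposition \eqref{eq:AlgeRest}, this amounts to showing $\Ext(\Cost{n,k^{2j-1}},\Irre{n,k'_R}) = 0$ (respectively $\Ext(\Stan{n,k},\Irre{n,k'_R}) = 0$) for each rightmost-orbit index $k'_R$ contributed by the non-projective part of $\Res{\Alg{n+1}}$. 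For $\Stan{n,k}$ this is immediate from the table of \cref{prop:SIExte} since the only possibly non-zero case would force $k^-$ to be the rightmost of its orbit, which is impossible when $k \in \Lambda_{n,0}$; for $\Cost{n,k^{2j-1}}$ only a few marginal index coincidences could produce a non-zero group, and exactly as in \cref{prop:CostIndu,prop:TheBIndu} the long exact sequence still truncates in those cases to give the short exact sequence
\begin{equation*}
\dses{\Ind{\Cost{n,k^{2j-1}}}}{}{\Ind{\TheT{n,k}{2j}}}{}{\Ind{\TheT{n,k}{2(j-1)}}}
\end{equation*}
and analogously in the odd case.

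Knowing the outer terms from the induction hypothesis together with \cref{prop:CostIndu,prop:StanRest}, the remaining task will be to identify the middle term. I would decompose each of these short exact sequences according to $F_{n+1}$-eigenvalue (and parity, when $\Alg{} = \dtl{}$) into the three shorter sequences of the shape \eqref{es:3ICI}, decide for each whether it splits via a Hom-computation in the style of \eqref{eq:1234} (using Frobenius reciprocity together with \cref{prop:SIPHomo,prop:res_tmod}), and identify the non-split extensions uniquely using \cref{coro:StanUnique,prop:extensionRevealed} together with the one-dimensionality of the relevant $\Ext$-groups from \cref{prop:SIExte,prop:extensionsBT}.

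The main obstacle, as in \cref{prop:TheBIndu}, will be the careful tracking of the boundary cases that appear in the statement. When $k\pm 1$ is critical, the corresponding standards are projective and force the $\Proj{n+1,k^{2j}\pm 1}$ summands. When $k^{l+1} = n+1$ or $n+2$, the relevant induced standard acquires an extra composition factor at the edge of $\Lambda_{n+1}$, and the non-split extension is identified not as $\TheT{n+1,k+1}{l}$ but as the lengthened module $\TheT{n+1,k+1}{l+1}$. Each such boundary case must be enumerated individually, just as in the proofs of \cref{prop:CostIndu,prop:TheBIndu}, but no technique beyond those already developed is required, which is why the paper omits this proof.
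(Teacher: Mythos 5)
Your overall strategy — induct on $l$, apply the Tor long exact sequence \eqref{eq:TorES} to the defining short exact sequences of \cref{prop:3point1}, show it truncates, and then decompose the resulting short exact sequence by $F_{n+1}$-eigenvalue (and parity) and identify each piece — is exactly the argument the paper has in mind when it says no new techniques are required, and the base cases and the subsequent splitting/identification analysis are handled correctly.

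There is, however, a genuine error in the truncation step. In \eqref{eq:TorES}, applied to $0 \ra \Mod{N}'' \ra \Mod{N} \ra \Mod{N}' \ra 0$, the connecting homomorphism lands in $\Ind{\Mod{N}''}$ and originates from $\Tor(\Res{\Alg{n+1}},\Mod{N}')$, so the group whose vanishing guarantees a short exact sequence of induced modules is the one associated to the \emph{quotient} $\Mod{N}'$. For your two defining sequences the quotients are $\TheT{n,k}{2(j-1)}$ and $\TheT{n,k^{++}}{2j-1}$, but you instead check $\Ext(\Cost{n,k^{2j-1}},\Irre{n,k'_R})$ and $\Ext(\Stan{n,k},\Irre{n,k'_R})$, which correspond under \eqref{eq:TorExt} to the Tor-groups of the \emph{submodules}. (The confusion is understandable: in the proof of \cref{prop:TheBIndu}, the standard module sits as the quotient, so the small module is the right one to test there — but the duality flips sub and quotient, so for the $\TheT$'s the small module is the submodule, not the quotient.) The correct groups to examine are $\Ext(\TheT{n,k}{2(j-1)},\Irre{n,k'_R})$ and $\Ext(\TheT{n,k^{++}}{2j-1},\Irre{n,k'_R})$; by \cref{prop:extensionsBT} these equal $\Ext(\Irre{n,k'_R},\TheB{n,k}{2(j-1)})$ and $\Ext(\Irre{n,k'_R},\TheB{n,k^{++}}{2j-1})$, and they vanish because the only candidate indices there are of the form $k^{2i-1}$ (resp.\ $k^{2i+1}$) with $i \le j$ (resp.\ $i \le j-1$), all of which are forced to be strictly less than $k_R$ by the hypothesis $k^{l} \in \Lambda_{n,0}$. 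The only point where a non-zero Tor-group can arise is the modified sequence with $k^{2j+1}=k_R$, whose quotient $\TheT{n,k}{2j}$ does admit $\Ext(\TheT{n,k}{2j},\Irre{n,k_R}) \cong \CC$; there the five-term truncation argument from the proof of \cref{prop:CostIndu} is needed. So the marginal-coincidence caveat you raise is real, but it applies to a different module than the one you computed with. Once this is corrected, the rest of your plan goes through.
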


%%%%%%%%%%%
%
% a complete set of indecomposable modules II: Auslander-Reiten
%
%%%%%%%%%%%

\section{The Auslander-Reiten quiver for $\tl{n}$ and $\dtl{n}$}\label{sec:ARquiver}

\cref{sub:completeness} completed the classification of all indecomposable modules, up to isomorphism, over $\Alg{} = \tl{n}$ and $\dtl{n}$. The tools required included basic homological algebra and the representation theory of associative algebras: properties of injective and projective modules, extension groups and diagram chasing. The input, summed up in \cref{sec:standardProjective}, was the list of irreducible, projective and injective modules, their structures, and those of the standard and costandard modules, as captured either by short exact sequences or Loewy diagrams.

There are more advanced techniques to perform the same task. One of them is \AR{} theory. The method it underlies is purely algorithmic, at least in the case of $\tl{n}$ and $\dtl{n}$, and its input is again the data recalled in \cref{sec:standardProjective}.  The current section presents the application of this method to $\tl{n}$ and $\dtl{n}$.

First, we review the theoretical results of \AR{} theory. Then, we show how these abstract results may be applied algorithmically and carry this out on the simple case of an orbit $[k]$ of non-critical integers that contains only $3$ elements. We conclude by giving the result of applying this algorithm in the general case, providing only sketches of proofs, and thereby recover the classification of indecomposable $\Alg{}$-modules.

%
% Auslander-Reiten theory à la Jonathan
%
\subsection{The main results of \AR{} theory}

This subsection reviews the main ideas and results of \AR{} theory that we shall need to build a complete list of indecomposable modules of the algebras $\tl{n}$ and $\dtl{n}$. Our summary closely follows Chapter IV of \cite{Assem}, though the results are not necessarily presented in the same order that they are proved there.

\begin{center}
\emph{In this subsection, $\Alg{}$ stands for any finite-dimensional $\KK$-algebra, where $\KK$ is an algebraically closed field.}
\end{center}

Homological algebra studies modules through their Hom-groups. We first recall that a monomorphism (injective homomorphism) $f\colon\Mod{U}\to \Mod{V}$ is split if there exists $g\colon\Mod{V}\to \Mod{U}$ such that $gf$ is the identity on $\Mod{U}$. Similarly, an epimorphism (surjective homomorphism) $f\colon\Mod{U}\to \Mod{V}$ is split if there exists $g\colon\Mod{V}\to \Mod{U}$ such that $fg$ is the identity on $\Mod{V}$. A morphism is said to be split if it is either a split monomorphism or a split epimorphism. \AR{} theory studies refinements of these concepts.

\begin{definition}
	Let $f\colon\Mod{U} \to \Mod{V}$ be a morphism between two $\Alg{}$-modules.
	\begin{itemize}
		\item The morphism $f$ is \emph{left minimal almost split} if
		\begin{enumerate}
			\item $f$ is not a split monomorphism;
			\item for any morphism $g\colon \Mod{U} \to \Mod{W}$ that is not a split monomorphism, there exists a morphism $\bar{f}\colon \Mod{V} \to \Mod{W}$ such that $\bar{f} f = g$;
			\item for any $h \colon \Mod{V} \to \Mod{V}$, $hf=f$ implies that $h$ is an isomorphism.
		\end{enumerate}
		\item The morphism $f$ is \emph{right minimal almost split} if
		\begin{enumerate}
			\item $f$ is not a split epimorphism;
			\item for any morphism $g\colon \Mod{W} \to \Mod{V}$ that is not a split epimorphism, there exists a morphism $\bar{f}\colon \Mod{W} \to \Mod{U}$ such that $f \bar{f} = g$;
			\item for any $h \colon \Mod{U} \to \Mod{U}$, $fh=f$ implies that $h$ is an isomorphism.
		\end{enumerate}
		\item The morphism $f$ is \emph{irreducible} if
		\begin{enumerate}
			\item $f$ is not a split morphism;
			\item given any morphisms $f_1 \colon \Mod{U} \to \Mod{Z}$ and $f_2 \colon \Mod{Z} \to \Mod{V}$ satisfying $f = f_2 f_1$, then either $f_2$ is a split epimorphism or $f_1$ is a split monomorphism.
		\end{enumerate}
	\end{itemize}
\end{definition}
\begin{definition}
	A short exact sequence
	\begin{equation}
	\dses{\Mod{U}}{\iota}{\Mod{V}}{\pi}{\Mod{V}/\Mod{U}}
	\end{equation}
	is said to be an \emph{almost split exact sequence} if
	\begin{itemize}
		\item $\iota$ is left minimal almost split;
		\item $\pi$ is right minimal almost split.
	\end{itemize}
	In fact, it can be shown that either of these two conditions implies the other.
\end{definition}
It is useful to draw a parallel between irreducible morphisms and semisimple modules. Intuitively, a morphism is irreducible if it cannot be expressed as a composition of non-split morphisms, just as a module is semisimple if it cannot be expressed as a non-trivial extension of one module by another. Similarly, finite-dimensional modules can be studied through their Loewy diagrams in terms of extensions of simple modules, while morphisms between modules in a representation-finite algebra can be expressed as sums of compositions of irreducible morphisms. As for the left (right) minimal almost split morphisms, they turn out to be in one-to-one correspondence with the non-injective (non-projective) finite-dimensional indecomposable modules. In particular, the following proposition notes that the domain of every left minimal almost split morphism is indecomposable, while \cref{prop:weaving} below states that every non-injective finite-dimensional indecomposable module is the domain of a left minimal almost split morphism.
\begin{proposition}\label{prop:leftMinimal}
	Let $\Mod{U}$ and $\Mod{V}$ be $\Alg{}$-modules.  If $f\colon\Mod{U} \to \Mod{V}$ is a left minimal almost split morphism, then
	\begin{enumerate}
		\item $\Mod{U}$ is indecomposable;
		\item $f$ is irreducible;
		\item if $f'\colon\Mod{U} \to \Mod{V}'$ is also left minimal almost split, for some $\Alg{}$-module $\Mod{V}'$, then there exists an isomorphism $g\colon \Mod{V} \to \Mod{V}'$ such that $gf = f'$;
		\item a morphism $f'\colon\Mod{U} \to \Mod{V}'$ of $\Alg{}$-modules is irreducible if and only if $\Mod{V}' \neq 0$, $\Mod{V} \simeq \Mod{V}' \oplus \Mod{V}''$ for some module $\Mod{V}''$, and there exists $f''\colon\Mod{U} \to \Mod{V}''$ such that $f' \oplus f'' \colon \Mod{U} \to \Mod{V}' \oplus \Mod{V}''$ is left minimal almost split.  In particular, composing $f$ with the projection onto a (non-zero) direct summand always gives an irreducible morphism. \label{it:IrrDirSum}
	\end{enumerate}
\end{proposition}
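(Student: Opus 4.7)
The plan is to tackle the four assertions in order, since (i) feeds into (ii) and (iv), and (ii) feeds into (iv); throughout I use only the three defining properties of a left minimal almost split morphism. For (i), I will suppose $\Mod{U} \cong \Mod{U}_1 \oplus \Mod{U}_2$ with both summands non-zero and consider the projections $p_i \colon \Mod{U} \to \Mod{U}_i$. Neither $p_i$ is injective, so neither is a split monomorphism, and defining property (ii) furnishes $\bar{f}_i \colon \Mod{V} \to \Mod{U}_i$ with $\bar{f}_i f = p_i$. Assembling these along the direct-sum structure yields $\bar{f} \colon \Mod{V} \to \Mod{U}$ with $\bar{f} f = \id_{\Mod{U}}$, contradicting defining property (i).

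For (ii), given a factorisation $f = f_2 f_1$ with $f_1$ not a split monomorphism, I will apply defining property (ii) to $f_1$ to produce $\bar{f} \colon \Mod{V} \to \Mod{Z}$ with $\bar{f} f = f_1$; then $(f_2 \bar{f}) f = f$ and defining property (iii) force $f_2 \bar{f}$, and hence $f_2$, to be a split epimorphism. It remains to observe that $f$ is not itself split: defining property (i) rules out a split monomorphism, and part (i) rules out a split epimorphism, since the decomposition $\Mod{U} \cong \Mod{V} \oplus \ker f$ together with indecomposability of $\Mod{U}$ would otherwise force $\ker f = 0$ and $f$ to be an isomorphism, hence a split monomorphism. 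For (iii), I will feed $f$ into defining property (ii) of $f'$ to produce $g \colon \Mod{V}' \to \Mod{V}$ with $g f' = f$, and symmetrically obtain $g' \colon \Mod{V} \to \Mod{V}'$ with $g' f = f'$. The identities $(g g') f = f$ and $(g' g) f' = f'$, together with defining property (iii) applied to $f$ and $f'$, force both $g g'$ and $g' g$ to be isomorphisms, so $g'$ is the desired isomorphism $\Mod{V} \to \Mod{V}'$.

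For (iv), the forward direction takes $f'$ irreducible (hence not a split monomorphism) and feeds it into defining property (ii) of $f$ to produce $\bar{f} \colon \Mod{V} \to \Mod{V}'$ with $\bar{f} f = f'$; the factorisation $f' = \bar{f} f$, irreducibility of $f'$, and $f$ not being a split monomorphism together force $\bar{f}$ to be a split epimorphism, yielding $\Mod{V} \simeq \Mod{V}' \oplus \Mod{V}''$ and supplying $f''$ by choosing a splitting. For the converse, I will take any factorisation $f' = f_2 f_1$ and extend it via $\phi = (f_1, f'') \colon \Mod{U} \to \Mod{Z} \oplus \Mod{V}''$ and $\psi = f_2 \oplus \id_{\Mod{V}''} \colon \Mod{Z} \oplus \Mod{V}'' \to \Mod{V}' \oplus \Mod{V}''$, so that $f' \oplus f'' = \psi \phi$. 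Part (ii) shows $f' \oplus f''$ is irreducible, so either $\psi$ is a split epimorphism (whence $f_2$ is a split epimorphism by restricting the section to $\Mod{V}' \oplus 0$) or $\phi$ is a split monomorphism with retraction $(\sigma_1, \sigma_2)$, giving $\sigma_1 f_1 + \sigma_2 f'' = \id_{\Mod{U}}$. The main obstacle will be this last branch: by part (i), $\Mod{U}$ is indecomposable, so $\End(\Mod{U})$ is local and one of the two summands of $\id_{\Mod{U}}$ must be a unit; if $\sigma_2 f''$ were a unit then $(0, (\sigma_2 f'')^{-1} \sigma_2)$ would retract $f' \oplus f''$, contradicting defining property (i) for $f' \oplus f''$, so $\sigma_1 f_1$ is invertible and $f_1$ is a split monomorphism.
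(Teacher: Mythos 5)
The paper does not actually prove this proposition: as the opening of \cref{sec:ARquiver} explains, these are standard results quoted from Chapter~IV of \cite{Assem}, stated without proof. So there is no ``paper's own proof'' to compare against; your reconstruction must be judged on its own. Your arguments for (i) and (iii) are clean and correct, and (ii) and the forward direction of (iv) follow the standard line. However, there are two gaps worth flagging.

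First, a minor one in (ii): to show that $f$ is not a split epimorphism, you invoke $\Mod{U} \cong \Mod{V} \oplus \ker f$ and indecomposability of $\Mod{U}$ to force $\ker f = 0$. This only works when $\Mod{V} \neq 0$; if $\Mod{V} = 0$ (which genuinely occurs, namely when $\Mod{U}$ is simple injective, since then $\Mod{J}/\soc \Mod{J} = 0$), then $f \colon \Mod{U} \to 0$ is a split epimorphism and hence \emph{not} irreducible, so the claim fails. The correct statement (as in \cite{Assem}) carries the hypothesis $\Mod{V} \neq 0$, which the paper silently omits; you should make it explicit since your argument depends on it.

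Second, a more substantive gap in the converse of (iv). You verify the factorisation property — for $f' = f_2 f_1$ either $f_2$ splits as an epi or $f_1$ splits as a mono — but irreducibility also requires that $f'$ itself not be a split morphism, and you never check this. It is not automatic from the factorisation argument: if $f'$ were a split monomorphism, taking $f_1 = f'$ and $f_2 = \id$ gives a factorisation where your argument concludes nothing (it just returns ``$f_1$ is a split mono''). The fix is short: $f'$ is not a split mono, because a retraction $r \colon \Mod{V}' \to \Mod{U}$ of $f'$ would make $(r,0)$ a retraction of $f' \oplus f''$, contradicting defining property (i) of the left minimal almost split morphism; and $f'$ is not a split epi, because a section would give $\Mod{U} \cong \Mod{V}' \oplus \ker f'$, and then indecomposability of $\Mod{U}$ from part (i) together with the hypothesis $\Mod{V}' \neq 0$ forces $\ker f' = 0$, making $f'$ an isomorphism, hence a split mono — contradiction. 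Once these two sentences are added, your proof of (iv) is complete and the local-endomorphism-ring argument you use for the $\phi$-split-mono branch is a perfectly good way to finish.
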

\noindent A similar result holds for right minimal almost split morphisms.

Note that while irreducible and almost split morphisms are very useful, identifying and constructing them is far from trivial. Fortunately \AR{} theory provides a few irreducible morphisms to get one started (the next result), a way of building new ones out of old ones, and tools for classifying them.  First, however, it is convenient to introduce two more definitions.
\begin{definition}
	\leavevmode
	\begin{itemize}
		\item For two $\Alg{}$-modules $\Mod{U}$ and $\Mod{V}$, define the \emph{radical} of $\Hom_{\Alg{}}(\Mod{U},\Mod{V})$ to be the vector space $r(\Mod{U},\Mod{V})$ spanned by the morphisms $f\colon\Mod{U} \to \Mod{V}$ such that for all indecomposable $\Alg{}$-modules $\Mod{Z}$ and all morphisms $g\colon\Mod{Z} \to \Mod{U}$ and $h\colon\Mod{V} \to \Mod{Z}$, the composition $hfg$ is not an isomorphism.
		\item Define the \emph{second radical} of $\Hom_{\Alg{}}(\Mod{U},\Mod{V})$ to be the vector subspace $r^{2}(\Mod{U},\Mod{V}) \subseteq r(\Mod{U},\Mod{V})$ consisting of the morphisms $f\colon\Mod{U} \to \Mod{V}$ which may be factored as $f=gh$, where $g\in r(\Mod{Z},\Mod{V})$ and $h\in r(\Mod{U},\Mod{Z})$, for some $\Alg{}$-module $\Mod{Z}$.
	\end{itemize}
\end{definition}
\noindent We remark that if $\Mod{U}$ and $\Mod{V}$ are indecomposable, then $r(\Mod{U},\Mod{V})$ is the subspace of non-isomorphisms in $\Hom(\Mod{U},\Mod{V})$.
\begin{proposition}\label{prop:someIrrMorph}
	\leavevmode
	\begin{enumerate}
		\item If $\Proj{}$ is a non-simple projective indecomposable and $\Inje{}$ a non-simple injective indecomposable, then the canonical inclusion and projection,
		\begin{equation}
		\rad \Mod{P} \lira \Mod{P} \qquad \text{and} \qquad \Mod{J} \lsra \frac{\Mod{J}}{\soc \Mod{J}},
		\end{equation}
		are irreducible. Furthermore, these are the only irreducible morphisms with target $P$ and source $J$, respectively, up to rescaling. \label{it:RadSocIrr}
		\item If $\Mod{Q}$ is a non-simple projective and injective indecomposable with $\rad \Mod{Q} \neq \soc \Mod{Q}$, then the canonical projection and inclusion,
		\begin{equation}
		\rad \Mod{Q} \lsra \frac{\rad \Mod{Q}}{\soc \Mod{Q}} \qquad \text{and} \qquad
		\frac{\rad \Mod{Q}}{\soc \Mod{Q}} \lira \frac{\Mod{Q}}{\soc \Mod{Q}},
		\end{equation}
		are irreducible. \label{it:Rad/SocIrr}
		%D the new statement is obtained by projecting onto the direct summands of the almost split exact sequence below and applying \cref{prop:leftMinimal}\ref{it:IrrDirSum}.
%If $\Mod{Q}$ is a non-simple projective and injective indecomposable, then there is an almost split exact sequence
	%\begin{equation}
	%0 \lra \rad{\Mod{Q}}
	%  \xrightarrow{\begin{bsmallmatrix} \pi \\ \iota \end{bsmallmatrix}} \frac{\rad{\Mod{Q}}}{\soc{\Mod{Q}}} \oplus \Mod{Q}
	%  \xrightarrow{\begin{bsmallmatrix} -\iota' & \pi' \end{bsmallmatrix}} \frac{\Mod{Q}}{\soc{\Mod{Q}}}
	%  \lra 0,
	%\end{equation}
	%where $i$, $i'$, $\pi$ and $\pi'$ are the obvious inclusions and projections. \label{it:ESIrr}
%Y We're both happy with the rewriting. But leave the discussion (it contains the reasoning!).
		\item When $U$ and $V$ are indecomposable, $f\colon\Mod{U}\to \Mod{V}$ is irreducible if and only if $f\in r(\Mod{U},\Mod{V})$ but $f \notin r^2(\Mod{U},\Mod{V})$. \label{it:Irr=r-r^2}
	\end{enumerate}
\end{proposition}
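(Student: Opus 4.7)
The plan is to prove the three claims in the order (1), (3), (2): claim (1) is a direct diagram-chase, claim (3) rests on Krull--Schmidt and the locality of endomorphism rings, and claim (2) is most cleanly deduced from a standard almost split sequence combined with \cref{prop:leftMinimal}\ref{it:IrrDirSum}.

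For (1), I would first verify that the canonical inclusion $\iota\colon\rad\Mod{P}\ira\Mod{P}$ is not a split monomorphism, for otherwise $\Mod{P}\simeq\rad\Mod{P}\oplus\Mod{P}/\rad\Mod{P}$ with both summands non-zero (by non-simplicity of $\Mod{P}$), contradicting indecomposability. To establish irreducibility, I would take an arbitrary factorization $\iota=gf$ with $f\colon\rad\Mod{P}\to\Mod{Z}$ and $g\colon\Mod{Z}\to\Mod{P}$ and split into two cases: if $g$ is surjective, then it is a split epimorphism by the projectivity of $\Mod{P}$; otherwise, $\im g$ is a proper submodule of $\Mod{P}$, hence contained in the unique maximal submodule $\rad\Mod{P}$ (since $\Mod{P}/\rad\Mod{P}$ is simple for indecomposable projective $\Mod{P}$), so $g=\iota\bar g$ for some $\bar g\colon\Mod{Z}\to\rad\Mod{P}$, and then $\iota=\iota\bar gf$ together with the injectivity of $\iota$ forces $\bar gf=\id_{\rad\Mod{P}}$, making $f$ a split monomorphism. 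The dual assertion for $\Mod{J}\sra\Mod{J}/\soc\Mod{J}$ follows by vector-space duality. The uniqueness statements then come from applying this same factorization argument to any irreducible morphism with target $\Mod{P}$ (respectively source $\Mod{J}$).

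For (3), the direction $(\Rightarrow)$ begins with the observation that when $\Mod{U}$ and $\Mod{V}$ are indecomposable, $r(\Mod{U},\Mod{V})$ coincides with the set of non-isomorphisms (by Fitting's lemma and the locality of the endomorphism rings of $\Mod{U}$ and $\Mod{V}$); since an irreducible morphism is unsplit, it is in particular not an isomorphism, so $f\in r(\Mod{U},\Mod{V})$. To exclude $f\in r^{2}(\Mod{U},\Mod{V})$, I would suppose $f=gh$ with $h\in r(\Mod{U},\Mod{Z})$ and $g\in r(\Mod{Z},\Mod{V})$, decompose $\Mod{Z}=\bigoplus_i\Mod{Z}_i$ into indecomposables, and observe that the irreducibility of $f$ applied to the factorization $f=gh$ forces $g$ to be split epi or $h$ to be split mono; by Krull--Schmidt, such a splitting would force some component $g_i$ or $h_i$ to be an isomorphism, contradicting its membership in $r$. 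The converse $(\Leftarrow)$ reverses this argument: given any factorization $f=gh$ where neither map is split, Krull--Schmidt forces all components $g_i,h_i$ to lie in $r$, whence $f=\sum_i g_ih_i\in r^{2}(\Mod{U},\Mod{V})$.

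For (2), I would invoke the almost split sequence
\[
0 \lra \rad\Mod{Q} \lra \frac{\rad\Mod{Q}}{\soc\Mod{Q}}\oplus\Mod{Q} \lra \frac{\Mod{Q}}{\soc\Mod{Q}} \lra 0,
\]
whose non-trivial maps are the canonical projection paired with the canonical inclusion, and the canonical inclusion paired with the negative of the canonical projection. Given this sequence, \cref{prop:leftMinimal}\ref{it:IrrDirSum} yields that the projection onto and the inclusion from each direct summand of the middle term is irreducible, producing the two maps of the statement. The principal obstacle is justifying the almost split property: one must verify that $\Mod{Q}/\soc\Mod{Q}$ is indecomposable and non-projective (both follow from $\Mod{Q}$ being a non-simple projective--injective with simple head and simple socle, together with $\rad\Mod{Q}\neq\soc\Mod{Q}$) and identify the \AR{} translate $\tau(\Mod{Q}/\soc\Mod{Q})\simeq\rad\Mod{Q}$ valid for projective--injective $\Mod{Q}$. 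Alternatively, one may verify the universal property directly, showing that any non-section morphism with codomain $\Mod{Q}/\soc\Mod{Q}$ factors through the middle term using the projectivity of $\Mod{Q}$ for surjective candidates and the inclusion $\rad\Mod{Q}/\soc\Mod{Q}\ira\Mod{Q}/\soc\Mod{Q}$ otherwise; this case analysis is where the bulk of the technical work concentrates.
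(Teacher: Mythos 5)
The paper does not actually prove this proposition: \cref{prop:someIrrMorph} is stated as part of the survey in \cref{sec:ARquiver}, which the paper explicitly says "closely follows Chapter IV of \cite{Assem}" and leaves unproved. Your blind proof is therefore a standalone argument, and on the whole it is a correct one, matching the standard textbook development. Your treatment of (1) is fine: the case split on whether $g$ is surjective, using projectivity to split in the first case and $\im g\subseteq\rad\Mod{P}$ (from $\rad\Mod{P}$ being the unique maximal submodule) in the second, is exactly how this is proved. Your treatment of (3) is also correct; the key point that $r(\Mod{U},\Mod{V})$ consists of the non-isomorphisms when $\Mod{U}$, $\Mod{V}$ are indecomposable, combined with Krull--Schmidt and the locality of endomorphism rings, carries both directions.

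Two places deserve more care. First, in (1) the sentence "the uniqueness statements then come from applying this same factorization argument" underproves the claim: what the factorization argument gives directly is that every irreducible $h\colon\Mod{W}\to\Mod{P}$ factors as $h=\iota\bar h$ with $\bar h$ a split monomorphism onto a direct summand of $\rad\Mod{P}$; concluding that $h$ agrees with $\iota$ "up to rescaling" additionally requires that $\rad\Mod{P}$ be indecomposable (so $\bar h$ is an isomorphism) and that one work modulo $r^{2}$, i.e., interpret "uniqueness up to rescaling" as one-dimensionality of $r(\rad\Mod{P},\Mod{P})/r^{2}(\rad\Mod{P},\Mod{P})$. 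This is harmless in the present context because every $\rad\Proj{k}$ for $\tl{n}$ and $\dtl{n}$ is indecomposable, but the step should be made explicit rather than folded into "the same argument." Second, for (2) you correctly identify that the whole weight of the proof sits on the almost split sequence $\ses{\rad\Mod{Q}}{(\rad\Mod{Q}/\soc\Mod{Q})\oplus\Mod{Q}}{\Mod{Q}/\soc\Mod{Q}}$ for projective-injective $\Mod{Q}$ and you defer its justification; that deferral is reasonable as an outline, but as written the proof of (2) is a reduction to a fact you have not established, so a referee would ask you to either cite it (it is Proposition IV.3.11 of \cite{Assem}) or carry out the universal-property verification you gesture at.
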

\noindent Recall that the radicals of the projectives and the socles of the injectives are known for $\Alg{}=\tl{n}$ and $\dtl{n}$. Moreover, if $[k]$ is a non-critical orbit, then each projective $\Proj{k'}$, with $k' \in [k] \setminus \set{k_L}$, is also injective. The previous proposition therefore provides several irreducible morphisms for these algebras.

\begin{definition}
	Let $\Mod{U}$ be a left (or right) $\Alg{}$-module.
	\begin{itemize}
		\item The vector space dual $\VectD{\Mod{U}} \equiv \Hom_{\KK} \left( \Mod{U}, \KK \right)$ (introduced in \cref{sec:Dual} for $\KK = \CC$) is a right (or left) $\Alg{}$-module with action $(fa)(x) = f(ax)$ (or $(af)(x) = f(xa)$), for all $f \in \VectD{\Mod{U}}$ and $a \in \Alg{}$.
		\item The algebra dual $\AlgD{\Mod{U}} \equiv \Hom_{\Alg{}}\left(\Mod{U},\Alg{}\right)$ is a right (or left) $\Alg{}$-module with action $(fa)(x) = f(x)a$ (or $(af)(x) = a f(x)$), for all $f \in \AlgD{\Mod{U}}$ and $a \in \Alg{}$.
	\end{itemize}
\end{definition}
\noindent As with the twisted dual of \cref{sec:Dual}, the vector space dual defines a contravariant exact functor.  The functor for the algebra dual, on the other hand, is contravariant but only left-exact.

Recall that a projective presentation of a module $\Mod{U}$ is a short exact sequence $0 \ra \ker p \ra \Mod{P} \xrightarrow{p} \Mod{U} \ra 0$ in which $\Mod{P}$ is projective.  Replacing $\ker p$ by another projective $\Mod{Q}$, its projective cover for example, gives another exact sequence:  $\Mod{Q} \xrightarrow{q} \Mod{P} \xrightarrow{p} \Mod{U} \ra 0$.  This sequence is said to be a \emph{minimal} projective presentation of $\Mod{U}$ if $p \colon \Mod{P} \to \Mod{U}$ is a projective cover of $\Mod{U}$ and $q \colon \Mod{Q} \to \ker p$ is a projective cover of $\ker p$. Apply the functor $\AlgD{(\blank)}$ to obtain one last exact sequence:
\begin{equation}
0 \lra \AlgD{\Mod{U}} \overset{\AlgD{p}}{\lra} \AlgD{\Mod{P}} \overset{\AlgD{q}}{\lra} \AlgD{\Mod{Q}} \lra \Coker \AlgD{q} \lra 0.
\end{equation}
\begin{definition}
	The \emph{\AR{} transpose} of the left (right) $\Alg{}$-module $\Mod{U}$ is the right (left) $\Alg{}$-module
	\begin{equation}
	\ARtransp{\Mod{U}} = \Coker \AlgD{q}.
	\end{equation}
\end{definition}
\noindent We remark that the isomorphism class of the \AR{} transpose does not depend upon the choice of minimal projective presentation.

The algebra dual $\AlgD{\Mod{U}}$ may be quite different to the vector space dual $\VectD{\Mod{U}}$, even as vector spaces. Here is a simple example for $\Alg{}=\tl{n}$. Consider the irreducible left $\tl{n}$-module $\Irre{k_L}$ corresponding to the smallest integer in a non-critical orbit $[k]$ that contains at least two integers. Of course, $\VectD{\Irre{k_L}}$ is the corresponding irreducible right $\tl{n}$-module, a fact that we exploited in \cref{sec:CBTInd}. On the other hand, $\AlgD{\Irre{k_L}}$ is the space of all homomorphisms from $\Irre{k_L}$ to the left $\tl{n}$-module $\tl{n}$. However $\tl{n}$ is a direct sum of indecomposable projectives, none of which contain $\Irre{k_L}$ in their socles. Therefore, the only such homomorphism is zero and $\AlgD{\Irre{k_L}}=0$.

Computing $\AlgD{\Mod{U}}$ can thus be tricky. Fortunately, we shall only need to compute this dual when $\Mod{U}$ is projective and, in this case, the next result gives the answer.
\begin{proposition}\label{prop:ARtranspose}
	\leavevmode
	\begin{enumerate}
		\item If $\Mod{P}$ is a finitely generated projective $\Alg{}$-module, then so is $\AlgD{\Mod{P}}$. In particular, if $\Mod{P} = \Alg{} e$ for some idempotent $e \in \Alg{}$, then $\AlgD{\Mod{P}} \simeq e \Alg{}$. \label{it:ARTr-Proj}
		\item An indecomposable $\Mod{U}$ is projective if and only if $\ARtransp{\Mod{U}} = 0$; \label{it:ARTr-Vanishing}
		\item If $\Mod{U}$ is indecomposable and not projective, with minimal projective presentation $\rdses{\Mod{Q}}{q}{\Mod{P}}{p}{\Mod{U}}$, then  \label{it:ARTr-Involution}
		\begin{itemize}
			\item the exact sequence
			\begin{equation}
			\rdses{\AlgD{\Mod{P}}}{\AlgD{q}}{\AlgD{\Mod{Q}}}{}{\ARtransp{\Mod{U}}}
			\end{equation}
			is a minimal projective presentation of $\ARtransp{\Mod{U}}$;
			\item the module $\ARtransp{\Mod{U}}$ is indecomposable and not projective;
			\item $\ARtransp{(\ARtransp{\Mod{U}})} \simeq \Mod{U}$.
		\end{itemize}
	\end{enumerate}
\end{proposition}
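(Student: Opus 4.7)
My plan is to handle the three parts in sequence, deriving the converse of (ii) as a consequence of (iii), so that the substantive work lies in (i) and (iii).

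For (i), I would first establish the concrete identification $\Hom_{\Alg{}}(\Alg{}e, \Alg{}) \simeq e\Alg{}$ via $f \mapsto f(e)$. The computation $f(e) = f(e \cdot e) = e f(e)$ places the image in $e\Alg{}$; the inverse sends $x \in e\Alg{}$ to the map $ae \mapsto ax$, which is well defined because $(a-a')e = 0$ implies $(a-a')x = (a-a')ex = 0$. One then checks that the right $\Alg{}$-action $(fa)(y) = f(y)a$ corresponds to right multiplication on $e\Alg{}$, so the isomorphism is one of right $\Alg{}$-modules. The statement for an arbitrary finitely generated projective $\Mod{P}$ follows by writing $\Mod{P}$ as a direct summand of $\Alg{}^n \simeq \bigoplus_i \Alg{}e_i$, where $\{e_i\}$ is a complete set of primitive orthogonal idempotents, and invoking the additivity of $\AlgD{(-)}$. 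The forward direction of (ii) is then immediate: a projective indecomposable $\Mod{U}$ admits the trivial minimal projective presentation with $\Mod{Q}=0$ and $p = \id$, so $\ARtransp{\Mod{U}} = \Coker \bigl( \AlgD{0} \colon \AlgD{\Mod{U}} \to 0 \bigr) = 0$.

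For (iii), applying the left-exact contravariant functor $\AlgD{(-)}$ to the minimal presentation $\Mod{Q} \overset{q}{\lra} \Mod{P} \overset{p}{\lra} \Mod{U} \to 0$ produces the stated exact sequence, and the two middle terms $\AlgD{\Mod{P}}$ and $\AlgD{\Mod{Q}}$ are projective by (i). The central task is to verify minimality. I would use the identifications of (i) to describe a morphism $\alpha \colon \Alg{}f_j \to \Alg{}e_i$ between indecomposable projectives as ``right multiplication by $a_0 = \alpha(f_j) \in f_j\Alg{}e_i$'', while its dual $\AlgD{\alpha} \colon e_i\Alg{} \to f_j\Alg{}$ becomes ``left multiplication by the same $a_0$''. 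In either direction, the morphism has image in the radical of its target precisely when $a_0 \in f_j (\rad \Alg{}) e_i$. A matrix description of $q$ therefore pairs the condition $\im q \subseteq \rad \Mod{P}$ with $\im \AlgD{q} \subseteq \rad \AlgD{\Mod{Q}}$, and a symmetric analysis of kernels pairs $\ker q \subseteq \rad \Mod{Q}$ with $\ker \AlgD{q} \subseteq \rad \AlgD{\Mod{P}}$; together these are precisely the minimality conditions for the dualised presentation. The involution $\ARtransp{(\ARtransp{\Mod{U}})} \simeq \Mod{U}$ then follows by dualising once more and invoking the natural isomorphism $\AlgD{(\AlgD{\Mod{P}})} \simeq \Mod{P}$ for projective $\Mod{P}$, which restores the original presentation and hence its cokernel $\Mod{U}$. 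The remaining assertions of (iii) drop out: if $\ARtransp{\Mod{U}}$ were projective, then (ii) would force $\ARtransp{(\ARtransp{\Mod{U}})} = 0 \neq \Mod{U}$; and any non-trivial decomposition $\ARtransp{\Mod{U}} \simeq \Mod{A} \oplus \Mod{B}$ (with at least one summand non-projective, since $\ARtransp{\Mod{U}}$ itself is non-projective) would give, after applying $\ARtransp{(-)}$ and absorbing any projective summand into the presentation, a non-trivial decomposition of $\Mod{U}$, contradicting its indecomposability. The converse of (ii) is now automatic: if $\ARtransp{\Mod{U}} = 0$ and $\Mod{U}$ were non-projective, (iii) would supply a non-zero transpose.

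The principal obstacle is the minimality claim in (iii). All other steps are essentially formal once (i) is in hand, but the translation of the radical conditions through $\AlgD{(-)}$ relies on the technical fact that this functor, restricted to finitely generated projectives, preserves the radical in the precise matrix sense outlined above. This is the only point where the finite-dimensionality of $\Alg{}$ over the field $\KK$ plays a serious role, and it is the input one must package carefully before the rest of the argument collapses into a straightforward diagram chase.
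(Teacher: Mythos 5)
Your argument for the minimality of the dualised presentation in (iii) contains a genuine gap. You assert that a ``symmetric analysis of kernels pairs $\ker q\subseteq\rad\Mod{Q}$ with $\ker\AlgD{q}\subseteq\rad\AlgD{\Mod{P}}$'', but these two conditions are logically independent: given $\im q\subseteq\rad\Mod{P}$, the condition $\ker\AlgD{q}\subseteq\rad\AlgD{\Mod{P}}$ is equivalent to $\coker q=\Mod{U}$ having no projective direct summand, whereas $\ker q\subseteq\rad\Mod{Q}$ is instead equivalent to $\coker\AlgD{q}=\ARtransp{\Mod{U}}$ having no projective direct summand. A concrete failure: over $\Alg{}=\KK[x]/(x^2)$, take $\Mod{U}=\KK\oplus\Alg{}$ with minimal projective presentation $\Alg{}\xrightarrow{\,a\,\mapsto\,(ax,0)\,}\Alg{}\oplus\Alg{}\lra\Mod{U}\lra 0$; then $\ker q=(x)=\rad\Alg{}$, yet $\ker\AlgD{q}=(x)\oplus\Alg{}$ is not contained in $\rad\bigl(\Alg{}\oplus\Alg{}\bigr)=(x)\oplus(x)$. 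The correct proof that $\ker\AlgD{q}\subseteq\rad\AlgD{\Mod{P}}$ must therefore invoke the hypothesis that $\Mod{U}$ is indecomposable and non-projective (hence has no projective summand), which is an input to part (iii) and not a formal consequence of the minimality of $q$: explicitly, if $\Mod{P}=\Mod{P}_1\oplus\Mod{P}_0$ with $\im q\subseteq\Mod{P}_1$ and $\Mod{P}_0\neq 0$, then $\ker p\subseteq\Mod{P}_1$ forces $\Mod{U}\simeq p(\Mod{P}_1)\oplus\Mod{P}_0$, exhibiting a projective summand.

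The rest of your outline is sound and follows the standard treatment in Chapter IV of \cite{Assem}, which the paper cites rather than reproves (so there is no internal argument to compare against): the explicit isomorphism in (i), the observation giving the forward direction of (ii), the radical-entry argument for $\im q\subseteq\rad\Mod{P}\Leftrightarrow\im\AlgD{q}\subseteq\rad\AlgD{\Mod{Q}}$, and the derivation of the involution together with the indecomposability and non-projectivity of $\ARtransp{\Mod{U}}$ once minimality of the dualised presentation is secured. Only the asserted ``symmetric pairing'' needs replacing with the argument above.
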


Of course, the \AR{} transpose of a left module is a right module, when we are really only interested in classifying left modules.  We therefore modify this construction one last time.
\begin{definition}
	Let $\Mod{U}$ be an indecomposable left (right) module. The \emph{\AR{} translation} of $\Mod{U}$ is the left (right) module
	\begin{subequations}
	\begin{equation}
	\ARTau{\Mod{U}} = \VectD{(\ARtransp{\Mod{U}})}
	\end{equation}
	and the \emph{inverse \AR{} translation} of $\Mod{U}$ is the left (right) module
	\begin{equation}
	\iARTau{\Mod{U}} = \ARtransp{(\VectD{\Mod{U}})}.
	\end{equation}
	\end{subequations}
\end{definition}

Since $\dual{(\blank)}$ preserves indecomposability, \cref{prop:ARtranspose} shows that a non-projective (non-injective) indecomposable $\Mod{U}$ has an indecomposable translation $\ARTau{\Mod{U}}$ (inverse translation $\iARTau{\Mod{U}}$). This is a key feature of \AR{} translation because it allows one to construct new indecomposable modules from known ones. Another key feature is that it can also reveal new irreducible morphisms. The next proposition sums up these features.
%
% weaving
%
\begin{proposition}\label{prop:weaving}
	Let $\Mod{U}$ and $\Mod{V}$ be indecomposable $\Alg{}$-modules and denote the set of isomorphism classes of finite-dimensional indecomposable $\Alg{}$-modules by $\Omega$.
	\begin{enumerate}
		\begin{subequations}
		\item If $\Mod{V}$ is not projective, then
		\begin{itemize}
			\item $\ARTau{\Mod{V}}$ is indecomposable and not injective, with $\iARTau{(\ARTau{\Mod{V}})} \simeq \Mod{V}$;
			\item As vector spaces,
			$r(\Mod{U},\Mod{V})/r^{2}(\Mod{U},\Mod{V}) \simeq r(\Mod{\ARTau{\Mod{V}}},\Mod{U})/r^{2}(\Mod{\ARTau{\Mod{V}}},\Mod{U})$;
			\item there exists a unique, up to isomorphism, almost split short exact sequence
			\begin{equation} \label{es:CheckSum1}
			\dses{\ARTau{\Mod{V}}}{}{\bigoplus_{\Mod{M} \in \Omega} s(\Mod{M})\cdot \Mod{M}}{}{\Mod{V}},
			\end{equation}
			where $s(\Mod{M}) = \dim (r(\Mod{M},\Mod{V})/r^{2}(\Mod{M},\Mod{V}))$.
		\end{itemize}
		\item If $\Mod{U}$ is not injective, then
		\begin{itemize}
			\item $\iARTau{\Mod{U}}$ is indecomposable and not projective, with $\ARTau{(\iARTau{\Mod{U}})} \simeq \Mod{U}$;
			\item As vector spaces,
			$r(\Mod{U},\Mod{V})/r^{2}(\Mod{U},\Mod{V}) \simeq r(\Mod{V},\iARTau{\Mod{U}})/r^{2}(\Mod{V},\iARTau{\Mod{U}})$;
			\item there exists a unique, up to isomorphism, almost split short exact sequence
			\begin{equation} \label{es:CheckSum2}
			\dses{\Mod{U}}{}{\bigoplus_{\Mod{M} \in \Omega} t(\Mod{M})\cdot \Mod{M}}{}{\iARTau{\Mod{U}}},
			\end{equation}
			where $t(\Mod{M})= \dim (r(\Mod{U},\Mod{M})/r^{2}(\Mod{U},\Mod{M}))$.
		\end{itemize}
		\end{subequations}
	\end{enumerate}
\end{proposition}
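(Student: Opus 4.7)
The plan is to derive this essentially classical result of Auslander and Reiten by leveraging the transpose properties already established in \cref{prop:ARtranspose} together with standard facts about the vector space dual $\VectD{(\blank)}$. The proof naturally splits into three blocks, handled in parallel for parts (i) and (ii); I describe (i), as (ii) is dual.

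First, I would establish the structural claims about $\ARTau{\Mod V} = \VectD{(\ARtransp{\Mod V})}$. Since $\Mod V$ is indecomposable and not projective, \cref{prop:ARtranspose}\ref{it:ARTr-Involution} gives that $\ARtransp{\Mod V}$ is an indecomposable, non-projective right $\Alg{}$-module. The functor $\VectD{(\blank)}$ is a contravariant, exact, reflexive equivalence between finite-dimensional left and right $\Alg{}$-modules (the arguments from \cref{prop:DoubleDual,prop:DualExact} apply verbatim in the untwisted case) and a routine duality argument shows that it interchanges projectives and injectives. Applying $\VectD{}$ therefore yields that $\ARTau{\Mod V}$ is indecomposable and non-injective. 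The reflexivity
\begin{equation}
\iARTau{\brac{\ARTau{\Mod V}}} = \ARtransp{\brac{\VectD{\VectD{\ARtransp{\Mod V}}}}} \simeq \ARtransp{\brac{\ARtransp{\Mod V}}} \simeq \Mod V
\end{equation}
then follows by combining $\VectD{} \circ \VectD{} \simeq \id$ with \cref{prop:ARtranspose}\ref{it:ARTr-Involution}.

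Second, I would prove the isomorphism $r(\Mod U,\Mod V)/r^{2}(\Mod U,\Mod V) \simeq r(\ARTau{\Mod V},\Mod U)/r^{2}(\ARTau{\Mod V},\Mod U)$. The key is the Auslander-Reiten formula, which asserts a natural isomorphism $\VectD{\Ext^{1}(\Mod V,\Mod U)} \simeq \underline{\Hom}(\Mod U, \ARTau{\Mod V})$, where $\underline{\Hom}$ denotes the stable Hom group modulo morphisms factoring through projectives. The standard derivation starts from a minimal projective presentation $\rdses{\Mod Q}{q}{\Mod P}{p}{\Mod V}$, applies $\Hom(\blank,\Alg{})$ and then $\VectD{}$ to extract $\ARTau{\Mod V}$ via the definition of the transpose, and chases the resulting commutative diagram using the Nakayama functor $\VectD{} \circ \AlgD{(\blank)}$ (which sends indecomposable projectives to their injective hulls). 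Passing from the full Hom-Ext duality to the quotients $r/r^{2}$ requires the characterisation of irreducible morphisms as representatives of $r \setminus r^{2}$ between indecomposables (\cref{prop:someIrrMorph}\ref{it:Irr=r-r^2}), after which the symmetric role of $\Mod U$ and $\ARTau{\Mod V}$ in the AR formula yields the claimed isomorphism.

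Third, I would construct the almost split short exact sequence \eqref{es:CheckSum1}. The multiplicities $s(\Mod M)$ vanish for all but finitely many isomorphism classes (representation-finiteness, guaranteed in our application to $\tl{n}$ and $\dtl{n}$, makes this immediate; in general it follows from the finite-dimensionality of $\Hom(\Mod M,\Mod V)$ modulo the fact that only finitely many indecomposables admit irreducible maps into $\Mod V$). For each $\Mod M$ with $s(\Mod M)>0$, choose a basis $\{f_{1},\dots,f_{s(\Mod M)}\}$ of irreducible morphisms $\Mod M \to \Mod V$ modulo $r^{2}$, and assemble them into a single morphism $f \colon E \equiv \bigoplus_{\Mod M} s(\Mod M) \cdot \Mod M \to \Mod V$. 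One verifies that $f$ is right minimal almost split: by \cref{prop:leftMinimal}\ref{it:IrrDirSum} (in its right-sided form) each component is irreducible, so any non-split epimorphism $g \colon \Mod W \to \Mod V$ factors as $g = f \circ \bar f$ using the non-split factorisation property, and minimality follows from the choice of basis of $r/r^{2}$. The kernel is then identified with $\ARTau{\Mod V}$ by computing it directly from a projective presentation of $\Mod V$ and invoking the AR formula established in step two. Uniqueness up to isomorphism of almost split sequences is a purely formal consequence of the universal properties of minimal almost split morphisms.

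The main obstacle will be the second step: the AR formula is genuinely homological and requires the Nakayama functor together with careful diagram chasing with minimal projective presentations, and it is the only part of the argument whose content substantially exceeds the homological tools developed earlier in the paper. For our applications to $\tl n$ and $\dtl n$, however, representation-finiteness means that it suffices to defer this step to the standard reference \cite{Assem} and use the conclusion algorithmically, as the subsequent sections do.
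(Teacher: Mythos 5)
The paper does not prove this proposition at all: the whole of \cref{prop:weaving} (like the other results of \cref{prop:leftMinimal,prop:someIrrMorph,prop:ARtranspose,prop:ARisconnected}) is presented without proof as part of the review of \AR{} theory that "closely follows Chapter~IV of \cite{Assem}". So the honest comparison is that the paper simply cites the result, while you supply a sketch of how the proof would go and then likewise defer the hard part to \cite{Assem} — which is entirely consistent with the paper's treatment.

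Your sketch correctly isolates the three ingredients: (1) the structural properties of $\ARTau{}$ via \cref{prop:ARtranspose} and the exact contravariant duality $\VectD{(\blank)}$; (2) the Auslander-Reiten formula $\VectD{\Ext^{1}(\Mod V,\Mod U)} \cong \overline{\Hom}(\Mod U,\ARTau{\Mod V})$ and the Nakayama functor; (3) assembly of a right minimal almost split morphism from a basis of $r/r^{2}$. One small ordering issue: in the standard development the $r/r^{2}$ isomorphism is usually derived \emph{from} the existence of the almost split sequence (your step three), not proved independently before it — the \AR{} formula first guarantees existence of the almost split sequence, and then the equality of $\dim r/r^{2}$ on both sides falls out of the minimality of the middle term. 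You implicitly rely on step three's factorisation property inside step two, so the logical ordering would need straightening in a full write-up. Also note the stable Hom group in the \AR{} formula is modulo morphisms factoring through \emph{injectives} on the $\overline{\Hom}(\Mod U,\ARTau{\Mod V})$ side (equivalently, modulo projectives on the $\underline{\Hom}(\iARTau{\Mod U},\Mod V)$ side); your phrasing conflates the two. Neither issue affects the conclusion since the intent, as in the paper, is to invoke \cite{Assem} rather than give a self-contained argument.
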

\noindent As we shall see, the non-negative integers $s(\Mod{M})$ and $t(\Mod{M})$ usefully measure the ``sizes'' of the sets $r(\Mod{M},\Mod{V}) \setminus r^{2}(\Mod{M},\Mod{V})$ and $r(\Mod{U},\Mod{M}) \setminus r^{2}(\Mod{U},\Mod{M})$ of irreducible morphisms from $\Mod{M}$ to $\Mod{V}$ and from $\Mod{U}$ to $\Mod{M}$, respectively.\footnote{Recall that the set of irreducible morphisms does not form a vector space, see \cref{prop:someIrrMorph}\ref{it:Irr=r-r^2}, explaining the slightly awkward language employed here.} One of the key assertions of the previous proposition is then that the set of irreducible morphisms from $\Mod{U}$ to $\Mod{V}$ has the same size as that from $\ARTau{\Mod{V}}$ to $\Mod{U}$, if $\Mod{V}$ is not projective, and as that from $\Mod{V}$ to $\iARTau{\Mod{U}}$, if $\Mod{U}$ is not injective.

Starting with a single irreducible morphism $\Mod{U} \to \Mod{V}$, \AR{} translation therefore infers an iterative sequence of irreducible morphisms between indecomposable modules that may be composed to form a chain:
\begin{equation} \label{eq:ARChain}
\cdots \lra \ARTau{(\ARTau{\Mod{V}})} \lra \ARTau{\Mod{U}} \lra \ARTau{\Mod{V}} \lra \Mod{U} \lra \Mod{V} \lra \iARTau{\Mod{U}} \lra \iARTau{\Mod{V}} \lra \iARTau{(\iARTau{\Mod{U}})} \lra \cdots.
\end{equation}
This chain terminates if either $\PowerARTau{m}\Mod{V}$ or $\PowerARTau{m}\Mod{U}$ is projective or if $\PowerARTau{-m}\Mod{V}$ or $\PowerARTau{-m}\Mod{U}$ is injective. Note that it may also happen that $\PowerARTau m \Mod{U}\simeq \Mod{U}$ and $\PowerARTau m \Mod{V}\simeq \Mod{V}$, for some $m \in \ZZ$, in which case the chain \eqref{eq:ARChain} becomes a cycle of irreducible morphisms. The algebras $\tl{n}$ and $\dtl{n}$ will provide examples of both possibilities.

\cref{prop:weaving} gives more than just a way to construct new indecomposable modules and irreducible morphisms. It can also be used to verify the completeness of a set of indecomposable modules. To see this, suppose that a set $\Omega'$ of inequivalent finite-dimensional indecomposable modules has been identified, along with the irreducible morphisms between them.  We can test for its completeness as follows. For each non-projective module $\Mod{V} \in \Omega'$, one can list the known (linearly independent) irreducible morphisms whose targets are $\Mod{V}$.  If $s'(\Mod{M})$ is the number of these morphisms with source $\Mod{M}$, then one can count composition factors to check whether the sequence
\begin{equation}
\dses{\ARTau{\Mod{V}}}{}{\bigoplus_{\Mod{M} \in \Omega'} s'(\Mod{M})\cdot \Mod{M}}{}{\Mod{V}}
\end{equation}
could be exact.  If $\ARTau{\Mod{V}}$ and $\Mod{V}$ have too many composition factors, \eqref{es:CheckSum1} tells us that the set $\Omega'$ is not complete or that we have not found all the irreducible morphisms.

A second test, based on \eqref{es:CheckSum2}, checks if, for every non-injective $\Mod{U}$, the composition factors of $\Mod{U}$ and $\iARTau{\Mod{U}}$ match those of $\bigoplus_{\Mod{M} \in \Omega'} t'(\Mod{M})\cdot \Mod{M}$, where $t'(\Mod{M})$ is the number of known (linearly independent) irreducible morphisms with source $\Mod{U}$ and target $\Mod{M}$. Again, if this test fails for a single $\Mod{U}$, then either $\Omega'$ is not complete or some irreducible morphisms are missing.

Suppose however, that the algebra $\Alg{}$ is \emph{connected}, meaning that it cannot be written as a direct sum of more than one \emph{block} (non-trivial indecomposable two-sided ideal).  If both tests pass, for all non-injectives $\Mod{U}$ and non-projectives $\Mod{V}$, then $\Omega'$ indeed yields a complete set of isomorphism classes of indecomposable modules.  If $\Alg{}$ is not connected, then one simply restricts these tests to each block.

For clarity, the indecomposable modules and irreducible morphisms of $\Alg{}$ are often represented graphically as a quiver (graph).
\begin{definition}
	Let $\Alg{}$ be a finite-dimensional associative $\KK$-algebra. The \emph{\AR{} quiver} $\Gamma(\lmod)$ of the category $\lmod$ of finite-dimensional left $\Alg{}$-modules is defined as follows:
	\begin{itemize}
		\item The vertices of $\Gamma(\lmod)$ are the isomorphism classes $\sqbrac{\Mod{U}}$ of indecomposable modules $\Mod{U}$ in $\lmod$.
		\item The arrows $\sqbrac{\Mod{U}} \to \sqbrac{\Mod{V}}$, for $\Mod{U},\Mod{V}$ in $\lmod$, are in one-to-one correspondence with basis vectors of the $\KK$-vector space $r(\Mod{U},\Mod{V})/r^2(\Mod{U},\Mod{V})$.
	\end{itemize}
\end{definition}
\begin{proposition}\label{prop:ARisconnected}
	\leavevmode
	\begin{enumerate}
		\item If $\Alg{}$ is a connected finite-dimensional $\KK$-algebra, then $\Gamma(\lmod)$ is connected and the number of arrows between any two vertices is finite. \label{it:QuiverConnected}
		\item $\Alg{}$ is of finite-representation type if and only if $\Gamma(\lmod)$ is a finite quiver.
		\item If the algebras $\Alg{1}$ and $\Alg{2}$ have equivalent module categories, then their \AR{} quivers are identical.
		\item If $\Alg{}$ is not connected, then $\Gamma(\lmod)$ is the disjoint union of the \AR{} quivers of its blocks.
	\end{enumerate}
\end{proposition}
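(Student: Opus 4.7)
The plan is to address the four assertions in order, with most of the work going into part (1), since (2), (3) and (4) are comparatively formal. Throughout, the key observation is that the arrows of $\Gamma(\lmod)$ encode the quotient spaces $r(\Mod{U},\Mod{V})/r^{2}(\Mod{U},\Mod{V})$ which, being subspaces of subquotients of $\Hom_{\Alg{}}(\Mod{U},\Mod{V})$, are finite-dimensional over $\KK$ whenever $\Alg{}$ is finite-dimensional. This already yields the second half of (1): any two vertices are joined by only finitely many arrows.

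For the connectedness claim in (1), I would argue by contrapositive. Suppose the vertices of $\Gamma(\lmod)$ split into two non-empty sets $\Omega_{1}$ and $\Omega_{2}$ between which no arrows run. First, I would show that this implies $\Hom_{\Alg{}}(\Mod{U},\Mod{V}) = 0$ whenever $\Mod{U} \in \Omega_{1}$ and $\Mod{V} \in \Omega_{2}$ (and symmetrically). The key point is that any non-isomorphism between indecomposables lies in the radical $r(\Mod{U},\Mod{V})$, and by iterated use of \cref{prop:someIrrMorph}\ref{it:Irr=r-r^2} together with the almost split sequences of \cref{prop:weaving}, any element of $r$ can be factored as a sum of compositions of irreducible morphisms; since there is no irreducible morphism bridging $\Omega_{1}$ and $\Omega_{2}$, any such composition must vanish. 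Writing the regular module ${}_{\Alg{}}\Alg{}$ as the sum of its indecomposable projective summands, I would then partition these summands according to whether their isomorphism class lies in $\Omega_{1}$ or $\Omega_{2}$; the vanishing of the cross Hom-groups means the corresponding decomposition ${}_{\Alg{}}\Alg{} = \Mod{P}_{1} \oplus \Mod{P}_{2}$ is a decomposition of $\Alg{}$-bimodules, giving a non-trivial central idempotent and contradicting the connectedness of $\Alg{}$. The hard part here is rigorously justifying the factorisation of arbitrary morphisms through irreducibles; I would either invoke the standard result that the radical of the module category is generated as an ideal by irreducible morphisms (together with $r^{\infty} = 0$ for finite-dimensional algebras through a Fitting-type argument), or, when $\Alg{}$ is representation-finite, use a direct induction on the length of the target module using the almost split sequence ending at it.

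Part (2) then follows almost immediately: the vertex set of $\Gamma(\lmod)$ is by definition the set of isomorphism classes of finite-dimensional indecomposable modules, so this set is finite exactly when $\Alg{}$ is of finite-representation type. Combined with the finiteness of arrows between any pair of vertices shown above, finite representation type is equivalent to $\Gamma(\lmod)$ being a finite quiver. For (3), I would simply observe that every construction entering the definition of $\Gamma(\lmod)$, namely isomorphism classes of indecomposables and the radical quotients $r/r^{2}$, is defined purely in terms of the module category and is preserved under any equivalence of categories; hence equivalent module categories yield identical \AR{} quivers.

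Finally, for (4), if $\Alg{} = \Alg{1} \oplus \Alg{2}$, standard module theory shows that $\lmod \cong \Alg{1}\textup{-mod} \times \Alg{2}\textup{-mod}$, so every indecomposable $\Alg{}$-module is an indecomposable module over exactly one of the summands and the Hom-groups between modules over different summands vanish. Consequently both $r$ and $r^{2}$ vanish between classes coming from different blocks, so no arrows bridge them and $\Gamma(\lmod)$ is the disjoint union of $\Gamma(\Alg{1}\textup{-mod})$ and $\Gamma(\Alg{2}\textup{-mod})$; iterating this block decomposition gives the statement for an arbitrary finite number of blocks. The main obstacle in the entire proof is the factorisation step at the heart of part (1), that is showing that any morphism between indecomposables from different putative components must vanish; once that is in place, the rest of the proposition is essentially bookkeeping.
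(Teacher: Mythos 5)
This proposition is not actually proved in the paper: the subsection containing it is an explicitly-flagged review of Chapter IV of the cited textbook by Assem, Simson and Skowro\'nski, so there is no argument in the paper to compare against. Your treatments of (2), (3) and (4) are the standard ones and are sound, as is the finiteness of arrows in (1). The connectedness claim in (1), however, contains a genuine gap --- one which is in fact inherited from the proposition's statement as quoted. The factorisation of an arbitrary radical morphism as a finite sum of compositions of irreducible morphisms is \emph{not} a consequence of the finite-dimensionality of $\Alg{}$ alone; the claim ``$r^{\infty}=0$ for finite-dimensional algebras through a Fitting-type argument'' that you offer as your first justification is false. What the Harada-Sai lemma actually provides is the nilpotence $r^{N}=0$ once one knows that all indecomposables have length at most some bound $b$ (one may take $N=2^{b}-1$); by the first Brauer-Thrall theorem, this bounded-length hypothesis is equivalent to representation-finiteness, which is precisely your second alternative. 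Without that hypothesis, assertion (1) as stated is simply wrong: the path algebra $\KK\bigl[\,1 \rightrightarrows 2\,\bigr]$ of the Kronecker quiver is connected and finite-dimensional, yet its \AR{} quiver splits into mutually disjoint preprojective, preinjective and regular (tube) components. There are plenty of morphisms bridging these components, but they all lie in $r^{\infty}$, so the attempted factorisation through irreducibles never terminates, and the contrapositive you aim for breaks down.

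This imprecision does not harm the paper because (1) is only ever invoked for a block that has already been shown to admit a finite set of indecomposables, so representation-finiteness is available in context. Your proposed fix --- induction on the length of the target via the almost split sequence terminating at it --- is the correct way to close the gap, but the induction is well-founded only under the representation-finiteness hypothesis, which should be stated explicitly; the alternative appeal to $r^{\infty}=0$ in general should be discarded.
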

\noindent Of course, one can similarly study the \AR{} quiver $\Gamma(\rmod)$ of the category $\rmod$ of finite-dimensional right $\Alg{}$-modules.  For $\tl{n}$ and $\dtl{n}$, the left and right quivers are isomorphic.

%
% a block with three irreducibles
%
\subsection{A detailed example}\label{sec:ARexample}

As an example of the algorithmic construction afforded by the abstract results of the previous subsection, we compute the \AR{} quiver for a block of $\Alg{}$ corresponding to a non-critical orbit containing three elements $\{k_1=k_L, k_2, k_3=k_R\}$.  (We omit the case where $n$ is even and $\beta=0$, for $\Alg{}=\tl{n}$, deferring its study to the end of the next subsection.) In what follows, we shall use the notation $\TheB{k}{l}$ and $\TheT{k}{l}$ for every module that is not projective, injective or irreducible, slightly modified for brevity so that the index $k = k_i$ is replaced by the label $i$.  The same modification will be applied to the projectives, injectives and irreducibles.  Thus, $\Irre{2}$, $\Proj{3}$, $\TheB{2}{1}$ and $\TheT{1}{2}$ now stand for $\Irre{k_2}$, $\Proj{k_3}$, $\TheB{k_2}{1}$ and $\TheT{k_1}{2}$, respectively, whilst we shall prefer $\Proj{1}$ and $\Inje{1}$ over $\Stan{1} \cong \TheT{1}{1}$ and $\Cost{1} \cong \TheB{1}{1}$, respectively.  This notation extends in an obvious fashion to orbits of arbitrary length.

We break down the construction of the \AR{} quiver into three steps: The computation of the translation $\ARTau{}$ on indecomposable modules, the identification of irreducible morphisms, and the drawing and check of completeness of the \AR{} quiver (and therefore of the list of isomorphism classes of indecomposable $\Alg{}$-modules).

\medskip

\noindent\emph{The action of $\ARTau{}$ on indecomposable modules}\ ---\ We assume the results of \cref{sec:standardProjective}, that is the existence of the irreducible, standard, costandard, injective and projective modules, as well as their Loewy diagrams. The translation $\ARTau{}$ will be applied to all non-projectives of this list, and then on the new ones thus obtained, until the process does not introduce any new non-projective indecomposable modules.

Let us start with the (left) injective module $\Inje{1} = \Cost 1 = \TheB 11$. The first step is to construct a minimal projective presentation $\Mod{Q}\to \Mod{P}\to \Inje{1}\to 0$. It is
\begin{equation}
\begin{tikzpicture}[baseline={(l3.base)},scale=1/2]
\node (l2) at (2,2) [] {$\scriptstyle{3}$};
\node (l3) at (1,1) [] {$\scriptstyle{2}$};
\node (l4) at (2,0) [] {$\scriptstyle{3}$};
\draw [-] (l2) -- (l3);
\draw [-] (l3) -- (l4);
\end{tikzpicture}
\overset q\lra
\begin{tikzpicture}[baseline={(l3.base)},scale=1/2]
\node (l2) at (2,2) [] {$\scriptstyle{2}$};
\node (l3) at (1,1) [] {$\scriptstyle{1}$};
\node (l5) at (3,1) [] {$\scriptstyle{3}$};
\node (l4) at (2,0) [] {$\scriptstyle{2}$};
\draw [-] (l2) -- (l3);
\draw [-] (l2) -- (l5);
\draw [-] (l3) -- (l4);
\draw [-] (l5) -- (l4);
\end{tikzpicture}
\overset p\lra
\begin{tikzpicture}[baseline={(t.base)},scale=1/2]
\node (l2) at (2,2) [] {$\scriptstyle{2}$};
\node (l3) at (1,1) [] {$\scriptstyle{1}$};
\draw [-] (l2) -- node (t) {$\vphantom{\scriptstyle 2}$} (l3);
\end{tikzpicture}
\lra 0 .
\end{equation}%$
(We have replaced the names of the modules by their Loewy diagrams and each composition factor by its index $i$.  We have also omitted the arrowheads on the Loewy diagrams for clarity --- they all point down.) To see that this is a minimal projective presentation, note first that the projective cover of $\Inje{1}$ is $\Proj{2} \overset{p}{\ra} \Inje{1}$, by \cref{prop:ProjInj}\ref{it:IdentProj}, and that $\ker p \cong \TheB{2}{1}$, by \eqref{es:CPC}. Then, $\Mod{Q} = \Proj{}[\TheB{2}{1}] \cong \Proj{3}$, again by \cref{prop:ProjInj}\ref{it:IdentProj}. The next step is to identify the cokernel of $\AlgD{\Mod P} \overset{\AlgD{q}}\to \AlgD{\Mod Q}$. By \cref{prop:ARtranspose}\ref{it:ARTr-Proj}, the modules $\AlgD{\Proj 2}$ and $\AlgD{\Proj 3}$ are the (right) modules $\Proj 2^*$ and $\Proj 3^*$, respectively.  Moreover, the morphism $\AlgD{q}$ is non-zero because $\AlgD{q}=0$ would imply that $\ARtransp{\Inje{1}}$ was projective, by part \ref{it:ARTr-Involution} of the same proposition, contradicting part \ref{it:ARTr-Vanishing}. The cokernel $\Coker \AlgD{q}$ is thus easily identified:
\begin{equation}
\ARtransp{\Inje{1}} = \coker \Bigg[
\begin{tikzpicture}[baseline={(l3.base)},scale=1/2]
\node (l2) at (2,2) [] {$\scriptstyle{2}$};
\node (l3) at (1,1) [] {$\scriptstyle{1}$};
\node (l5) at (3,1) [] {$\scriptstyle{3}$};
\node (l4) at (2,0) [] {$\scriptstyle{2}$};
\draw [-] (l2) -- (l3);
\draw [-] (l2) -- (l5);
\draw [-] (l3) -- (l4);
\draw [-] (l5) -- (l4);
\end{tikzpicture}
\overset{\AlgD{q}}\lra
\begin{tikzpicture}[baseline={(l3.base)},scale=1/2]
\node (l2) at (2,2) [] {$\scriptstyle{3}$};
\node (l3) at (1,1) [] {$\scriptstyle{2}$};
\node (l4) at (2,0) [] {$\scriptstyle{3}$};
\draw [-] (l2) -- (l3);
\draw [-] (l3) -- (l4);
\end{tikzpicture}
\Bigg] \cong \Irre{3}^*.
\end{equation}%$
The \AR{} translation of $\Inje{1}$ is therefore $\ARTau{\Inje{1}}=\VectD{(\ARtransp{\Inje{1}})}\cong\Irre3$.  Replacing $\Inje{1}$ by $\Irre{3}$, a similar computation shows that $\ARTau{\Irre 3}=\VectD{(\ARtransp{\Irre 3})}\cong\Proj 1$. Since $\Proj 1$ is projective, the process started with $\Inje{1}$ stops here. Moreover, since $\Inje{1}$ is injective, we cannot use $\iARTau{}$ to construct any other new indecomposables. If we indicate $\Mod{M}_1=\ARTau{\Mod{M}_2}$
by $\Mod{M}_1\tauarrow \Mod{M}_2$, then these computations may be summarised as the following chain:
\begin{equation} \label{eq:Ex-t0}
\Proj{1} \tauarrow \Irre 3\tauarrow \Inje{1}. \tag{$t_0$}
\end{equation}%$
(The label \eqref{eq:Ex-t0} will be explained in the next subsection.)

The choice of arrow direction, following the action of $\ARTauSymbol^{-1}$ rather than $\ARTauSymbol$, was made so as to agree with the direction of the irreducible morphisms in the chain \eqref{eq:ARChain}.  Indeed, we may redraw this chain in a zigzag pattern, adding squiggly arrows representing $\ARTauSymbol^{-1}$ as follows:
\begin{equation} \label{eq:UltimateWeave}
\begin{tikzpicture}[baseline={(current bounding box.center)},scale=1.25]
	\node (um2) at (-4,1) {$\PowerARTau{2}{\Mod{U}}$};
	\node (um1) at (-2,1) {$\ARTau{\Mod{U}}$};
	\node (u0) at (-0,1) {$\Mod{U}$};
	\node (u1) at (2,1) {$\PowerARTau{-1}{\Mod{U}}$};
	\node (u2) at (4,1) {$\PowerARTau{-2}{\Mod{U}}$};
	\node at (-4.5,0.5) {$\cdots$};
	\node (dm2) at (-3,0) {$\PowerARTau{2}{\Mod{V}}$};
	\node (dm1) at (-1,0) {$\ARTau{\Mod{V}}$};
	\node (d0) at (1,0) {$\Mod{V}$};
	\node (d1) at (3,0) {$\PowerARTau{-1}{\Mod{V}}$};
	\node (d2) at (5,0) {$\PowerARTau{-2}{\Mod{V}}$};
	\node at (5.5,0.5) {$\cdots$ .};
	\draw[->] (um2) edge (dm2)
	          (dm2) edge (um1)
	          (um1) edge (dm1)
	          (dm1) edge (u0)
	          (u0) edge (d0)
	          (d0) edge (u1)
	          (u1) edge (d1)
	          (d1) edge (u2)
	          (u2) edge (d2);
	\draw[squig] (um2) -- (um1);
	\draw[squig] (um1) -- (u0);
	\draw[squig] (u0) -- (u1);
	\draw[squig] (u1) -- (u2);
	\draw[squig] (dm2) -- (dm1);
	\draw[squig] (dm1) -- (d0);
	\draw[squig] (d0) -- (d1);
	\draw[squig] (d1) -- (d2);
\end{tikzpicture}
\end{equation}%$
In this way, we see how to combine the $\ARTauSymbol$-orbits of $\Mod{U}$ and $\Mod{V}$, when we have prior knowledge of an irreducible morphism from $\Mod{U}$ to $\Mod{V}$, into a chain of irreducible morphisms.  We will refer to this combination process as \emph{weaving}.

We continue with the identification of the $\ARTauSymbol$-orbits of indecomposables, illustrating the method one more time by computing the repeated action of $\ARTau{}$ on $\Irre2$. The projective cover of $\Irre{2}$ is $\Proj{2}$ and the kernel of the covering map $p$ has head $\Irre{1} \oplus \Irre{3}$ (we know that $\ker p \cong \TheT{1}{2}$, by \eqref{es:DefAV} and \cref{lem:theAandV}, but we are only using the results of \cref{sec:standardProjective} here).  The projective cover of $\ker p$ is therefore $\Proj{1} \oplus \Proj{3}$ (in agreement with \cref{cor:BTProjInj}), so the minimal projective presentation of $\Irre 2$ is
\begin{equation}
\Mod{Q}=
\begin{tikzpicture}[baseline={(t.base)},scale=1/2]
\node (l2) at (1,2) [] {$\scriptstyle{1}$};
\node (l3) at (2,1) [] {$\scriptstyle{2}$};
\draw [-] (l2) -- node (t) {$\vphantom{\scriptstyle 1}$} (l3);
\end{tikzpicture}
\oplus
\begin{tikzpicture}[baseline={(l3.base)},scale=1/2]
\node (l2) at (2,2) [] {$\scriptstyle{3}$};
\node (l3) at (1,1) [] {$\scriptstyle{2}$};
\node (l4) at (2,0) [] {$\scriptstyle{3}$};
\draw [-] (l2) -- (l3);
\draw [-] (l3) -- (l4);
\end{tikzpicture}
\quad \overset{q}{\lra}\quad
\Mod{P}=
\begin{tikzpicture}[baseline={(l3.base)},scale=1/2]
\node (l2) at (2,2) [] {$\scriptstyle{2}$};
\node (l3) at (1,1) [] {$\scriptstyle{1}$};
\node (l5) at (3,1) [] {$\scriptstyle{3}$};
\node (l4) at (2,0) [] {$\scriptstyle{2}$};
\draw [-] (l2) -- (l3);
\draw [-] (l2) -- (l5);
\draw [-] (l3) -- (l4);
\draw [-] (l5) -- (l4);
\end{tikzpicture}
\quad \overset p\lra \quad
\Irre 2
\quad \lra\quad  0 .
\end{equation}%$
We know that the transpose $\ARtransp{\Irre 2}$ is indecomposable, by \cref{prop:ARtranspose}\ref{it:ARTr-Involution}, and is characterised by the following exact sequence (of right $\Alg{}$-modules):
\begin{equation}
\AlgD{\Mod{P}}=
\begin{tikzpicture}[baseline={(l3.base)},scale=1/2]
\node (l2) at (2,2) [] {$\scriptstyle{2}$};
\node (l3) at (1,1) [] {$\scriptstyle{1}$};
\node (l5) at (3,1) [] {$\scriptstyle{3}$};
\node (l4) at (2,0) [] {$\scriptstyle{2}$};
\draw [-] (l2) -- (l3);
\draw [-] (l2) -- (l5);
\draw [-] (l3) -- (l4);
\draw [-] (l5) -- (l4);
\end{tikzpicture}
\quad \overset{\AlgD{q}}{\lra}\quad
\AlgD{\Mod{Q}}=\begin{tikzpicture}[baseline={(t.base)},scale=1/2]
\node (l2) at (1,2) [] {$\scriptstyle{1}$};
\node (l3) at (2,1) [] {$\scriptstyle{2}$};
\draw [-] (l2) -- node (t) {$\vphantom{\scriptstyle 1}$} (l3);
\end{tikzpicture}
\oplus
\begin{tikzpicture}[baseline={(l3.base)},scale=1/2]
\node (l2) at (2,2) [] {$\scriptstyle{3}$};
\node (l3) at (1,1) [] {$\scriptstyle{2}$};
\node (l4) at (2,0) [] {$\scriptstyle{3}$};
\draw [-] (l2) -- (l3);
\draw [-] (l3) -- (l4);
\end{tikzpicture}
\quad \lra \quad
\ARtransp{\Irre 2}
\quad \lra\quad  0 .
\end{equation}%$
As before, $\AlgD{q} \neq 0$, so $\ARtransp{\Irre{2}}$ must have precisely three composition factors: $\Irre1^*$ and $\Irre3^*$ in its head and $\Irre2^*$ in its socle.  Indeed, the only other possibility is that $\AlgD{q}$ maps $\head \AlgD{\Mod{P}}$ onto $\soc \Proj{1}^*$, contradicting the indecomposability of $\ARtransp{\Irre{2}}$.  $\ARtransp{\Irre{2}}$ is therefore not one of the indecomposables considered in \cref{sec:standardProjective}. It must be, of course, the (right version of the) module $\TheT12$ introduced in \cref{sec:newFamilies}.  Because taking the (vector space) dual of a module exchanges its socle and head, the dual of the right module version of $\TheT12$ is the left module $\TheB12$ and, thus, $\ARTau{\Irre2}=\VectD{(\ARtransp{\Irre2})}\cong\TheB12$. Iterating the action of $\ARTauSymbol$, we obtain
\begin{equation} \label{eq:Ex-i2}
\cdots \tauarrow \Irre2\tauarrow \TheT12 \tauarrow \TheB12 \tauarrow \Irre2 \tauarrow\cdots \tag{$i_2$}
\end{equation}%$
Note that the \AR{} translation of $\TheT12$ is the irreducible $\Irre2$ that we started with: this sequence of translated modules forms a cycle. The computations for the translations of $\Irre{1}$ are similar to those detailed above and result in
\begin{equation} \label{eq:Ex-t2}
\cdots \tauarrow\Irre 1\tauarrow \TheT21\tauarrow \TheB21 \tauarrow \Irre 1 \tauarrow \cdots \tag{$t_2$}
\end{equation}%$

At this point, all of the indecomposable modules known from \cref{sec:standardProjective} (the ``input data") have appeared in either \eqref{eq:Ex-t0}, \eqref{eq:Ex-i2} or \eqref{eq:Ex-t2}, except for those that are both projective and injective:  $\Proj{2}$ and $\Proj{3}$. By \cref{prop:ARtranspose}\ref{it:ARTr-Vanishing}, $\ARTau{\Mod{M}}$ and $\iARTau{\Mod{M}}$ are both the zero module if $\Mod{M}$ is projective and injective. We have thus exhausted the possibility of constructing new indecomposables from the ones we know. It is not clear at this point whether the list of indecomposables that we have constructed,
\begin{equation}
\set{\Irre{1}, \Irre{2}, \Irre{3}, \Proj{1}, \Proj{2}, \Proj{3}, \Inje{1}, \TheB{1}{2}, \TheB{2}{1}, \TheT{1}{2}, \TheT{2}{1}},
\end{equation}
is complete. Proving completeness is the goal of the third step. But first, the irreducible morphisms between the known indecomposables must be counted.

\medskip

\noindent \emph{Irreducible morphisms and weaving}\ ---\ \cref{prop:someIrrMorph}\ref{it:RadSocIrr} gives some irreducible morphisms: including the radical in a projective indecomposable and projecting an injective indecomposable onto the quotient by its socle. In the present case, we obtain six irreducibles morphisms this way, conveniently summarised thus:
\begin{subequations} \label{eq:IrreMorph}
\begin{equation} \label{eq:radsoc}
\Inje1 \lra \Irre2 \lra \Proj1, \qquad \TheT12 \lra \Proj2 \lra \TheB12, \qquad \TheT21 \lra \Proj3 \lra \TheB21.
\end{equation}
When an indecomposable module is projective and injective, with its socle strictly contained in its radical, \cref{prop:someIrrMorph}\ref{it:Rad/SocIrr} gives further irreducible morphisms. Only the $\Proj{i}$, with $i>1$, have these properties.  We compute the quotients $\rad{\Proj2}/\soc{\Proj2}\cong\Irre1\oplus\Irre3$ and $\rad{\Proj3}/\soc{\Proj3}\cong\Irre2$, thereby adding six irreducible morphisms to those of \eqref{eq:radsoc}:
\begin{equation}\label{eq:rad/soc}
\TheT12 \lra \Irre1 \lra \TheB12, \qquad \TheT{2}{1} \lra \Irre{2} \lra \TheB{2}{1}, \qquad \TheT12 \lra \Irre3 \lra \TheB12.
\end{equation}
\end{subequations}
We remark that the decomposability of $\rad{\Proj2}/\soc{\Proj2}$ allowed us to construct four irreducible morphisms for $\Proj{2}$, instead of two, using \cref{prop:leftMinimal}\ref{it:IrrDirSum}.

The non-projective modules $\Inje1$ and $\Irre2$ from the first irreducible morphism of \eqref{eq:radsoc} belong to the distinct translation sequences \eqref{eq:Ex-t0} and \eqref{eq:Ex-i2}, respectively.  Weaving these sequences, as in \eqref{eq:UltimateWeave},
thus gives many other irreducible morphisms:
\begin{equation} \label{eq:FirstWeave}
\begin{tikzpicture}[baseline={(current bounding box.center)},scale=1]
	\node (um2) at (-4,1) {$\Proj{1}$};
	\node (um1) at (-2,1) {$\Irre{3}$};
	\node (u0) at (-0,1) {$\Inje{1}$};
	\node (u1) at (2,1) {$\Proj{1}$};
	\node (u2) at (4,1) {$\Irre{3}$};
	\node (u3) at (6,1) {$\Inje{1}$};
	\node at (-5.5,0.5) {$\cdots$};
	\node (dm3) at (-5,0) {$\Irre{2}$};
	\node (dm2) at (-3,0) {$\TheT{1}{2}$};
	\node (dm1) at (-1,0) {$\TheB{1}{2}$};
	\node (d0) at (1,0) {$\Irre{2}$};
	\node (d1) at (3,0) {$\TheT{1}{2}$};
	\node (d2) at (5,0) {$\TheB{1}{2}$};
	\node (d3) at (7,0) {$\Irre{2}$};
	\node at (7.5,0.5) {$\cdots$ .};
	\draw[->] (dm3) edge (um2)
	          (um2) edge (dm2)
	          (dm2) edge (um1)
	          (um1) edge (dm1)
	          (dm1) edge (u0)
	          (u0) edge (d0)
	          (d0) edge (u1)
	          (u1) edge (d1)
	          (d1) edge (u2)
	          (u2) edge (d2)
	          (d2) edge (u3)
	          (u3) edge (d3);
	\draw[squig] (um2) -- (um1);
	\draw[squig] (um1) -- (u0);
%	\draw[squig] (u0) -- (u1);
	\draw[squig] (u1) -- (u2);
	\draw[squig] (u2) -- (u3);
	\draw[squig] (dm3) -- (dm2);
	\draw[squig] (dm2) -- (dm1);
	\draw[squig] (dm1) -- (d0);
	\draw[squig] (d0) -- (d1);
	\draw[squig] (d1) -- (d2);
	\draw[squig] (d2) -- (d3);
\end{tikzpicture}
\end{equation}%$
Here, the top row is the translation chain \eqref{eq:Ex-t0} (repeated) and the bottom row is the translation cycle \eqref{eq:Ex-i2}.  The irreducible morphisms form the following cycle:
\begin{equation} \label{eq:weaving1}
\cdots \lra \Irre{3} \lra \TheB{1}{2} \lra \Inje{1} \lra \Irre{2} \lra \Proj{1} \lra \TheT{1}{2} \lra \Irre{3} \lra \cdots.
\end{equation}
Of these irreducible morphisms, $\Inje{1} \to \Irre{2} \to \Proj{1}$ and $\TheT{1}{2} \to \Irre{3} \to \TheB{1}{2}$ have appeared already in \eqref{eq:IrreMorph}, but $\TheB{1}{2} \to \Inje{1}$ and $\Proj{1} \to \TheT{1}{2}$ are new.

We continue weaving to find further irreducible morphisms.  Of those in \eqref{eq:radsoc}, the first two were analysed above and the remaining four involve an indecomposable that is both projective and injective, so give nothing new.  We therefore turn to the first four morphisms in \eqref{eq:rad/soc}, the last two having also appeared in the previous analysis.  From $\TheT{1}{2} \to \Irre{1}$, we weave the translation cycles \eqref{eq:Ex-i2} and \eqref{eq:Ex-t2} together and arrive at the following cycle of irreducible morphisms:
\begin{equation}\label{eq:weaving2}
\cdots \lra \Irre{2} \lra \TheB21 \lra \TheT12 \lra \Irre1 \lra \TheB12 \lra \TheT21 \lra \Irre2 \lra \cdots.
\end{equation}
These include the remaining morphisms of \eqref{eq:rad/soc} as well as $\TheB{2}{1} \to \TheT{1}{2}$ and $\TheB{1}{2} \to \TheT{2}{1}$, which are new.  Having exhausted our stock of irreducible morphisms, it is reasonable to conjecture that we have found a complete set.  Testing this is the content of the last step.

\medskip

\noindent \emph{The \AR{} quiver and completeness}\ ---\ All indecomposable modules constructed using
\AR{} translation and all irreducible morphisms obtained in the previous step are drawn in the quiver of \cref{fig:quiver}, along with the quivers for orbits of lengths $2$, $4$ and $5$. This quiver combines the morphisms of the cycles \eqref{eq:weaving1} and \eqref{eq:weaving2} with those of the chains in \eqref{eq:radsoc} involving the projective injectives $\Proj2$ and $\Proj3$.

\cref{prop:weaving} gives two sets of checks to perform: whether, for all non-projective indecomposables $\Mod{V}$, every irreducible morphism with target $\Mod{V}$ has been obtained and whether, for all non-injective indecomposables $\Mod{U}$, every irreducible morphism with source $\Mod{U}$ has been obtained. Here is one example of the many verifications required by these checks. The module $\TheB21$ is not projective, its translation is $\ARTau{\TheB21}=\TheT21$, and it is the target of at least two irreducible morphisms: $\Proj3\to \TheB21$ and $\Irre2\to\TheB21$. \cref{prop:weaving} now states that there is an (almost split) exact sequence of the form
\begin{equation}
\dses{\TheT21}{}{\Irre2\oplus \Proj3\oplus \text{?}}{}{\TheB21},
\end{equation}
where ``?'' will be non-zero if and only there are additional independent irreducible morphisms with target $\TheB21$. However, the composition factors of $\TheT21$ and $\TheB21$ together are $\Irre2$ and $\Irre3$, both appearing with multiplicity two, which precisely matches the factors of $\Irre2\oplus\Proj3$. Thus, ``?" is the zero module and a complete set of irreducible morphisms with target $\TheB12$ has been obtained. The remaining verifications are numerous, but the quiver makes them expeditious.

We remark that as $\Proj2$ and $\Proj3$ are both projective and injective, they escape the checks of \cref{prop:weaving}. However, \cref{prop:someIrrMorph}\ref{it:RadSocIrr} assures us that all irreducible morphisms with these modules as source or target have already appeared in \eqref{eq:radsoc}.  Thus, the top right quiver drawn in \cref{fig:quiver} depicts all the irreducible morphisms between those indecomposables that have been identified, thus far.

Our last task is to ascertain whether the list of (isomorphism classes of) indecomposable modules that has been obtained is complete. Since the list of projectives is complete, any missing indecomposable $\Mod{V}$ would have to be non-projective. \cref{prop:weaving} would then state the existence of an almost split exact sequence
\begin{equation}
\dses{\ARTau{\Mod{V}}}{}{\oplus_i \Mod{M}_i}{}{\Mod{V}},
\end{equation}
for some modules $\Mod{M}_i$, and more irreducible morphisms. None of these new morphisms could have any of the indecomposables already found as source or target, because we have already determined that the second quiver in \cref{fig:quiver} already contains all such irreducible morphisms.  A new irreducible morphism would therefore imply that the quiver for this block of $\Alg{}$ is disconnected, contradicting \cref{prop:ARisconnected}\ref{it:QuiverConnected}.  Thus, we have obtained a complete list of isomorphism classes of indecomposables.
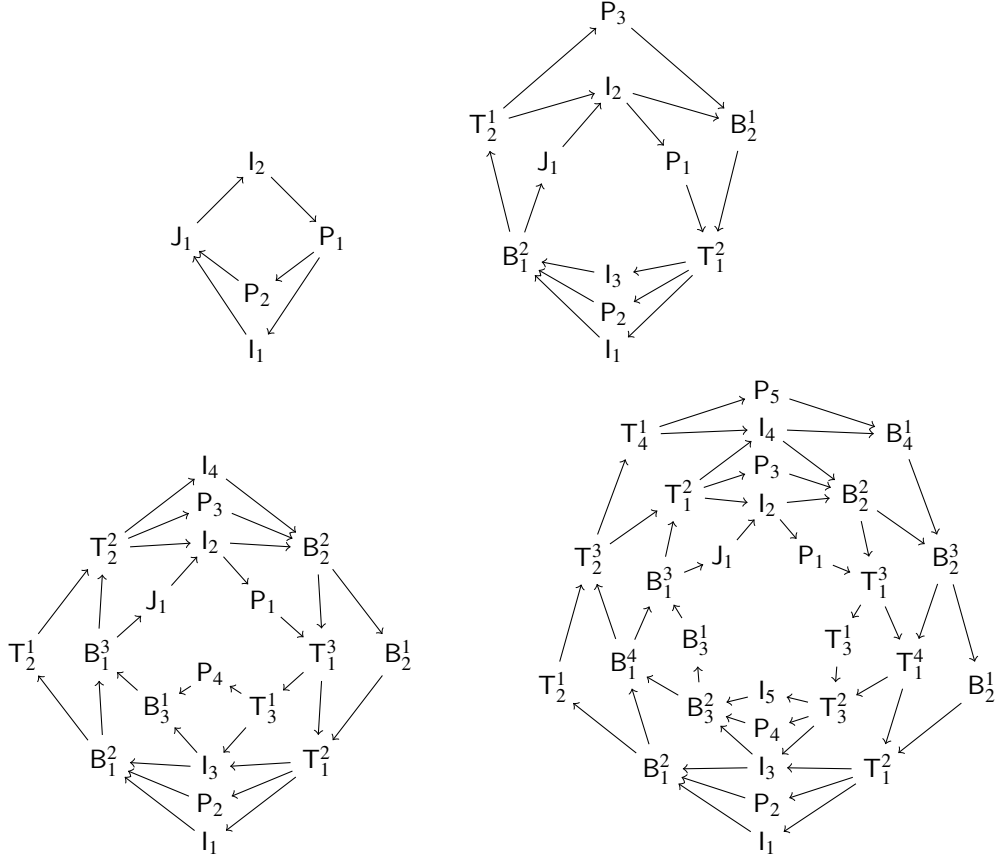
\begin{figure}
\begin{center}
\begin{tikzpicture}[->,every node/.style={circle, inner sep=0pt, outer sep=2pt}]
\node (j1) at (-1,0) [] {$\Inje1$};
\node (p1) at (1,0)  [] {$\Proj1$};
\node (i2) at (0,1)  [] {$\Irre2$};
\node (p2) at (0,-0.75) [] {$\Proj2$};
\node (i1) at (0,-1.5) [] {$\Irre1$};
\draw (p1) edge (p2)
      (p2) edge (j1)
      (j1) edge (i2)
      (i2) edge (p1)
      (p1) edge (i1)
      (i1) edge (j1);
\end{tikzpicture}
\qquad\qquad
%
% the quiver for s = 3
%
\begin{tikzpicture}[every node/.style={circle, inner sep=0pt, outer sep=2pt}]
\node (u1) at (-0.866,0.5) [] {$\Inje1$};
\node (i3) at (0,-1)       [] {$\Irre3$};
\node (p1) at (0.866,0.5)  [] {$\Proj1$};
\node (i2) at (0.*\rr,1.*\rr)      [] {$\Irre2$};
\node (b12)at (-0.866*\rr,-0.5*\rr)[] {$\TheB12$};
\node (t12)at (0.866*\rr,-0.5*\rr) [] {$\TheT12$};
\node (i1) at (0*\rrr,-1*\rrr)       [] {$\Irre1$};
\node (u2) at (0.866*\rrr,0.5*\rrr)  [] {$\TheB21$};
\node (s2) at (-0.866*\rrr,0.5*\rrr) [] {$\TheT21$};
\node (p2) at (0*\rr,-1*\rr) [] {$\Proj2$};
\node (p3) at (0,1*\rrrr)   [] {$\Proj3$};
\draw[->] (t12) -- (i3);
\draw[->] (i3) -- (b12);
\draw[->] (b12) -- (u1);
\draw[->] (u1) -- (i2);
\draw[->] (i2) -- (p1);
\draw[->] (p1) -- (t12);
\draw[->] (t12) -- (i1);
\draw[->] (i1) -- (b12);
\draw[->] (b12) -- (s2);
\draw[->] (s2) -- (i2);
\draw[->] (i2) -- (u2);
\draw[->] (u2) -- (t12);
\draw[->] (t12) -- (p2);
\draw[->] (p2) -- (b12);
\draw[->] (s2) -- (p3);
\draw[->] (p3) -- (u2);
\end{tikzpicture}\qquad \phantom{m}
\end{center}
%
% the quiver for s = 4
%
\begin{center}
\begin{tikzpicture}[every node/.style={circle, inner sep=0pt, outer sep=2pt}]
\node (u1) at (-0.7071,0.7071) [] {$\Inje1$};
\node (u3) at (-0.7071,-0.7071) [] {$\TheB31$};
\node (s3) at (0.7071,-0.7071) [] {$\TheT31$};
\node (p1) at (0.7071,0.7071) [] {$\Proj1$};
\node (i2) at (0,1*\rr) [] {$\Irre2$};
\node (b13) at (-1*\rr,0) [] {$\TheB13$};
\node (i3) at (0,-1*\rr) [] {$\Irre3$};
\node (t13) at (1*\rr,0) [] {$\TheT13$};
\node (t22) at (-0.7071*\rrr,0.7071*\rrr) [] {$\TheT22$};
\node (b12) at (-0.7071*\rrr,-0.7071*\rrr) [] {$\TheB12$};
\node (t12) at (0.7071*\rrr,-0.7071*\rrr) [] {$\TheT12$};
\node (b22) at (0.7071*\rrr,0.7071*\rrr) [] {$\TheB22$};
\node (i4) at (0,1*\rrrr) [] {$\Irre4$};
\node (s2) at (-1*\rrrr,0) [] {$\TheT21$};
\node (i1) at (0,-1*\rrrr) [] {$\Irre1$};
\node (u2) at (1*\rrrr,0) [] {$\TheB21$};
\node (p2) at (0,-1*\rrr) [] {$\Proj2$};
\node (p3) at (0,1*\rrr)  [] {$\Proj3$};
\node (p4) at (0,-1*\petitr) [] {$\Proj4$};
\draw[->] (t13) -- (s3);
\draw[->] (s3) -- (i3);
\draw[->] (i3) -- (u3);
\draw[->] (u3) -- (b13);
\draw[->] (b13) -- (u1);
\draw[->] (u1) -- (i2);
\draw[->] (i2) -- (p1);
\draw[->] (p1) -- (t13);
\draw[->] (t13) -- (t12);
\draw[->] (t12) -- (i3);
\draw[->] (i3) -- (b12);
\draw[->] (b12) -- (b13);
\draw[->] (b13) -- (t22);
\draw[->] (t22) -- (i2);
\draw[->] (i2) -- (b22);
\draw[->] (b22) -- (t13);
\draw[->] (u2) -- (t12);
\draw[->] (t12) -- (i1);
\draw[->] (i1) -- (b12);
\draw[->] (b12) -- (s2);
\draw[->] (s2) -- (t22);
\draw[->] (t22) -- (i4);
\draw[->] (i4) -- (b22);
\draw[->] (b22) -- (u2);
\draw[->] (t12) -- (p2);
\draw[->] (p2) -- (b12);
\draw[->] (t22) -- (p3);
\draw[->] (p3) -- (b22);
\draw[->] (s3) -- (p4);
\draw[->] (p4) -- (u3);
\end{tikzpicture}
\qquad\qquad
%
% the quiver for s = 5
%
\begin{tikzpicture}[every node/.style={circle, inner sep=0pt, outer sep=2pt}]
\node (u1) at (-0.587785, 0.809017)  [] {$\Inje1$};
\node (u3) at (-0.951057, -0.309017) [] {$\TheB31$};
\node (i5) at (0,-1)                 [] {$\Irre5$};
\node (s3) at (0.951057, -0.309017)  [] {$\TheT31$};
\node (p1) at (0.587785, 0.809017)   [] {$\Proj1$};
\node (i2)  at (0., 1.*\rr)                  [] {$\Irre2$};
\node (b13) at (-0.951057*\rr, 0.309017*\rr) [] {$\TheB13$};
\node (b32) at (-0.587785*\rr, -0.809017*\rr)[] {$\TheB32$};
\node (t32) at (0.587785*\rr, -0.809017*\rr) [] {$\TheT32$};
\node (t13) at (0.951057*\rr, 0.309017*\rr)  [] {$\TheT13$};
\node (t22) at (-0.587785*\rrr, 0.809017*\rrr)  [] {$\TheT12$};
\node (b14) at (-0.951057*\rrr, -0.309017*\rrr) [] {$\TheB14$};
\node (i3)  at (0,-1*\rrr)                      [] {$\Irre3$};
\node (t14) at (0.951057*\rrr, -0.309017*\rrr)  [] {$\TheT14$};
\node (b22) at (0.587785*\rrr, 0.809017*\rrr)   [] {$\TheB22$};
\node (i4)  at (0., 1.*\rrrr)                  [] {$\Irre4$};
\node (t23) at (-0.951057*\rrrr, 0.309017*\rrrr) [] {$\TheT23$};
\node (b12) at (-0.587785*\rrrr, -0.809017*\rrrr)[] {$\TheB12$};
\node (t12) at (0.587785*\rrrr, -0.809017*\rrrr) [] {$\TheT12$};
\node (b23) at (0.951057*\rrrr, 0.309017*\rrrr)  [] {$\TheB23$};
\node (s4) at (-0.587785*\rrrrr, 0.809017*\rrrrr)  [] {$\TheT41$};
\node (s2) at (-0.951057*\rrrrr, -0.309017*\rrrrr) [] {$\TheT21$};
\node (i1) at (0,-1*\rrrrr)                      [] {$\Irre1$};
\node (u2) at (0.951057*\rrrrr, -0.309017*\rrrrr)  [] {$\TheB21$};
\node (u4) at (0.587785*\rrrrr, 0.809017*\rrrrr)   [] {$\TheB41$};
\node (p2) at (0,-1.*\rrrr) [] {$\Proj2$};
\node (p3) at (0,1*\rrr) [] {$\Proj3$};
\node (p4) at (0,-1.*\rr) [] {$\Proj4$};
\node (p5) at (0,1*\rrrrr) [] {$\Proj5$};
\draw[->] (t13) -- (s3);
\draw[->] (s3) -- (t32);
\draw[->] (t32) -- (i5);
\draw[->] (i5) -- (b32);
\draw[->] (b32) -- (u3);
\draw[->] (u3) -- (b13);
\draw[->] (b13) -- (u1);
\draw[->] (u1) -- (i2);
\draw[->] (i2) -- (p1);
\draw[->] (p1) -- (t13);
\draw[->] (t13) -- (t14);
\draw[->] (t14) -- (t32);
\draw[->] (t32) -- (i3);
\draw[->] (i3) -- (b32);
\draw[->] (b32) -- (b14);
\draw[->] (b14) -- (b13);
\draw[->] (b13) -- (t22);
\draw[->] (t22) -- (i2);
\draw[->] (i2) -- (b22);
\draw[->] (b22) -- (t13);
\draw[->] (b23) -- (t14);
\draw[->] (t14) -- (t12);
\draw[->] (t12) -- (i3);
\draw[->] (i3) -- (b12);
\draw[->] (b12) -- (b14);
\draw[->] (b14) -- (t23);
\draw[->] (t23) -- (t22);
\draw[->] (t22) -- (i4);
\draw[->] (i4) -- (b22);
\draw[->] (b22) -- (b23);
\draw[->] (b23) -- (u2);
\draw[->] (u2) -- (t12);
\draw[->] (t12) -- (i1);
\draw[->] (i1) -- (b12);
\draw[->] (b12) -- (s2);
\draw[->] (s2) -- (t23);
\draw[->] (t23) -- (s4);
\draw[->] (s4) -- (i4);
\draw[->] (i4) -- (u4);
\draw[->] (u4) -- (b23);
\draw[->] (p2) -- (b12);
\draw[->] (p3) -- (b22);
\draw[->] (p4) -- (b32);
\draw[->] (p5) -- (u4);
\draw[->] (t12) -- (p2);
\draw[->] (t22) -- (p3);
\draw[->] (t32) -- (p4);
\draw[->] (s4) -- (p5);
\end{tikzpicture}
\end{center}
\caption{The \AR{} quivers of $\tl{n}$ and $\dtl{n}$ for orbits $[k]$ of lengths $2$, $3$, $4$ and $5$.  The quivers for $\tl{n}$, with $n$ even and $\beta = 0$, are different and are instead illustrated in \cref{fig:BadQuivers}.  Reflecting about vertical bisectors amounts to taking the twisted dual of \cref{sec:Dual}.}\label{fig:quiver}
\end{figure}

%
% quiver of TL_n
%
\subsection{Algorithmic construction of \AR{} quivers for $\tl{n}$ and $\dtl{n}$}

The goal of this last subsection is to present an algorithmic construction of the \AR{} quiver for the algebras $\tl{n}$ and $\dtl{n}$. A detailed proof of this result is long, but does not involve any argument not yet covered in the example of the previous subsection. We shall only indicate how to construct the quiver and sketch the proofs. As usual, we shall assume that the case of $\tl{n}$, with $n$ even and $\beta = 0$, is not under consideration, its study being deferred to the end of the section.

Let $s$ denote the length of the non-critical orbit of $k \in \Lambda_n$.  If $s=1$, then $\Proj{k} = \Inje{k} = \Irre{k}$ and this is the only indecomposable module for the corresponding block.  We shall therefore assume that $s\ge 2$. Order the integers in $[k]$ from the smallest to the largest and label them by $k_1=k_L$, $k_2$, \dots, $k_s=k_R$. Our goal is to construct the \AR{} quiver of the corresponding block.  Its vertices are precisely the indecomposable $\Alg{}$-modules which have the property that each of their composition factors is isomorphic to one of the irreducibles $\Irre 1$, $\Irre 2$, \dots, or $\Irre s$ (as in \cref{sec:ARexample}, we shall replace $k_j$ by $j$ for clarity). We will also find it convenient to write $s=2i$, if $s$ is even, and $s=2i+1$, if it is odd.

\medskip

\noindent\emph{The orbits of the translation $\ARTauSymbol$}\ ---\ As in the previous example, the first step is to characterise the $\ARTauSymbol$-orbits. Recall that the \AR{} translation $\ARTau{\Mod{M}}$ of an indecomposable module $\Mod{M}$ is zero if (and only if) $\Mod{M}$ is projective. The use of the word ``orbit'' for the action of the translation $\ARTauSymbol$ is thus somewhat abusive. We shall use the word regardless, qualifying it as a \emph{$\ARTauSymbol$-orbit} to avoid confusion with the orbit of a non-critical $k$.  We will also describe $\ARTauSymbol$-orbits as chains and cycles, as in \cref{sec:ARexample}.  We note that $\ARTauSymbol$-orbits are disjoint:  no indecomposable appears in more than one.

It is easy to show that there is always a $\ARTauSymbol$-orbit, denoted by $(t_0)$, that takes the form of the following chain:
\begin{equation}\label{t0}
\Proj1=\TheT11
\tauarrow \TheT31
\tauarrow \TheT51
\tauarrow \cdots
\left\{
\begin{matrix}
\TheT{2i-1}1\tauarrow \TheB{2i-1}1 \\
\TheT{2i-1}1\tauarrow\Irre{2i+1}\tauarrow\TheB{2i-1}1
\end{matrix}
\right\}\tauarrow
\cdots\tauarrow
\TheB51\tauarrow
\TheB31\tauarrow
\TheB11=\Inje{1},
\qquad \left\{\begin{matrix}
s\textrm{\ even},\\
s\textrm{\ odd}.\end{matrix}\right.
\tag{$t_0$}
\end{equation}%$
We recall that $\tauarrow$ denotes the action of $\ARTauSymbol^{-1}$. For $s=3$ ($i=1$), this chain indeed reduces to the chain \eqref{eq:Ex-t0} obtained in the previous subsection. Explicit examples of $\ARTauSymbol$-orbits are given in \cref{app:s6s7}.

The description of the other $\ARTauSymbol$-orbits requires a systematic use of the following ``trick". Let $\TheB kj$, with $k\in\mathbb Z$ and $j$ a positive odd integer, denote the zigzag represented graphically as
\begin{equation}
\begin{tikzpicture}[baseline={(l5.base)},scale=1/3]
\node (l1) at (0,0) [] {$\scriptstyle{k}$};
\node (l2) at (2,2) [] {$\scriptstyle{k+1}$};
\node (l3) at (4,0) [] {$\scriptstyle{k+2}$};
\node (l4) at (6,2) [] {$\phantom{\scriptstyle k+3}$};
\node (l5) at (7,1) [] {$\cdots$};
\node (l6) at (8,2) [] {$\phantom{\scriptstyle j+k-2}$};
\node (l7) at (10,0) [] {$\scriptstyle{j+k-1}$};
\node (l8) at (12,2) [] {$\scriptstyle{j+k}$};
\draw [-] (l1) -- (l2);
\draw [-] (l2) -- (l3);
\draw [-] (l3) -- (l4);
\draw [-] (l6) -- (l7);
\draw [-] (l7) -- (l8);
\end{tikzpicture}
\ .
\end{equation}%$
Of course, this need not be the Loewy diagram of an indecomposable module as, in general, the integers appearing in this zigzag might be smaller than $1$ or larger than $s$. In order to simplify the following discussion, we shall suppose that $j$ and $k$ satisfy
\begin{equation} \label{eq:Cond}
0 \le j \le 2s+1, \qquad -s \le k \le s, \qquad 1 \le j+k \le 2s+1.
\end{equation}
These conditions will be satisfied in the applications to come.  For all indices $m\le 0$ appearing in this zigzag, construct the pair $(m,r(m))$, where $r(m) =-m$ is the reflection of $m$ through a mirror at $0$. Similarly, for all $m\ge s+1$ in the zigzag, construct the pair $(m,r'(m))$, where $r'(m) = 2(s+1)-m$ is now the reflection of $m$ through a mirror at $s+1$. Finally, delete from the zigzag every integer that appears in one of the pairs $(m,r(m))$ or $(m,r'(m))$ and denote the resulting zigzag by $\rd{\TheB kj}$. The conditions \eqref{eq:Cond} ensure that the result is either empty or that it is the Loewy diagram of one of the indecomposable modules found in \cref{sec:newFamilies}.\footnote{It is clear that this reflection-deletion trick has its origins in a signed action of the affine Weyl group $\grp{A}_1^{(1)} = \ZZ_2 \ltimes \ZZ$ on the integers $\ZZ$ with ``fundamental alcove'' $\set{1, 2, \ldots, s}$ (see \cite{GoodWenzl93}).  This action reflects and translates each zigzag vertex into the fundamental alcove, picking up a formal sign with every reflection.  However, the resulting zigzag may have vertices appearing with negative coefficients in general, ruining its interpretation as the Loewy diagram of a module.  We have imposed the conditions \eqref{eq:Cond} in order to ensure that this does not happen.} In particular, the composition factors of this module have labels in the set $\{1, 2, \dots, s\}$. Using the nomenclature of \cref{sec:newFamilies}, this module can be of type $\TheB{}{}$, $\TheT{}{}$ or $\Irre{}$. We shall call the process of constructing $\rd{\TheB kj}$ from $\TheB kj$ the \emph{reflection-deletion trick}.

Here are two examples of this process. If $s=8$, then $\rd{\TheB{-3}9}$ is $\TheT42$ because the pairs $(0,0)$, $(-1,1)$, $(-2,2)$ and $(-3,3)$ are to be deleted, leaving only the composition factors labelled by $4, 5$ and $6$:
\begin{equation}
\begin{tikzpicture}[baseline={(t.base)},scale=1/3]
\node (l1) at (0,0) [] {$\scriptstyle{-3}$};
\node (l2) at (2,2) [] {$\scriptstyle{-2}$};
\node (l3) at (4,0) [] {$\scriptstyle{-1}$};
\node (l4) at (6,2) [] {$\scriptstyle{0}$};
\node (l5) at (8,0) [] {$\scriptstyle{1}$};
\node (l6) at (10,2) [] {$\scriptstyle{2}$};
\node (l7) at (12,0) [] {$\scriptstyle{3}$};
\node (l8) at (14,2) [] {$\scriptstyle{4}$};
\node (l9) at (16,0) [] {$\scriptstyle{5}$};
\node (l10) at (18,2) [] {$\scriptstyle{6}$};
\draw [-] (l1) -- (l2);
\draw [-] (l2) -- (l3);
\draw [-] (l3) -- (l4);
\draw [-] (l4) -- (l5);
\draw [-] (l5) -- (l6);
\draw [-] (l6) -- (l7);
\draw [-] (l7) -- (l8);
\draw [-] (l8) -- (l9);
\draw [-] (l9) -- node (t) {$\vphantom{\scriptstyle 2}$} (l10);
\draw [dashed,-] (6,2.5) -- (6,-0.5);
\draw [-] (-0.2,0.3) -- (0.6,-0.3);
\draw [-] (3.8,0.3) -- (4.6,-0.3);
\draw [-] (7.6,0.3) -- (8.4,-0.3);
\draw [-] (11.6,0.3) -- (12.4,-0.3);
\draw [-] (5.6,2.3) -- (6.4,1.7);
\draw [-] (1.8,2.3) -- (2.6,1.7);
\draw [-] (9.6,2.3) -- (10.4,1.7);
\end{tikzpicture}
\ .
\end{equation}%$
(The dashed lines indicate the mirror through which integers have been reflected.)  Similarly, $\rd{\TheB 06}=\TheT12$, if $s=4$, since the pairs to be deleted are now $(0,0)$, $(5,5)$ and $(6,4)$:
\begin{equation}
\begin{tikzpicture}[baseline={(t.base)},scale=1/3]
\node (l1) at (0,0) [] {$\scriptstyle{0}$};
\node (l2) at (2,2) [] {$\scriptstyle{1}$};
\node (l3) at (4,0) [] {$\scriptstyle{2}$};
\node (l4) at (6,2) [] {$\scriptstyle{3}$};
\node (l5) at (8,0) [] {$\scriptstyle{4}$};
\node (l6) at (10,2) [] {$\scriptstyle{5}$};
\node (l7) at (12,0) [] {$\scriptstyle{6}$};
\draw [-] (l1) -- (l2);
\draw [-] (l2) -- (l3);
\draw [-] (l3) -- (l4);
\draw [-] (l4) -- (l5);
\draw [-] (l5) -- (l6);
\draw [-] (l6) -- node (t) {$\vphantom{\scriptstyle 2}$} (l7);
\draw [dashed,-] (0,2.5) -- (0,-0.5);
\draw [dashed,-] (10,2.5) -- (10,-0.5);
\draw [-] (-0.4,0.3) -- (0.4,-0.3);
\draw [-] (11.6,0.3) -- (12.4,-0.3);
\draw [-] (7.6,0.3) -- (8.4,-0.3);
\draw [-] (9.6,2.3) -- (10.4,1.7);
\end{tikzpicture}
\ .
\end{equation}%$

We now construct the remaining $\ARTauSymbol$-orbits, beginning with those that we denote by \eqref{t2l}, where $1\le l\le \lfloor \frac{1}{2} (s-1) \rfloor$. First, for each $l$ in this range, construct the following sequence of $i$ or $i+1$ zigzags, according as to whether $s$ is even or odd, respectively:
\begin{equation} \label{eq:ZigZagt2l}
\underset{\text{if \(s=2i+1\) is odd}}{\underbrace{\TheB{2i+1-m}{2m+1} \tauarrow}} \TheB{2i-1-m}{2m+1} \tauarrow \cdots \tauarrow \TheB{3-m}{2m+1} \tauarrow \TheB{1-m}{2m+1}.
\end{equation}
Here, we write $m=2l$ for later convenience. Second, apply the reflection-deletion trick to each zigzag in \eqref{eq:ZigZagt2l}. Note that $\rd{\TheB{1-m}{2m+1}}=\TheT m2=\TheT{2l}{2}$ (except when $s$ is odd and $m=s-1$, in which case $\rd{\TheB{1-m}{2m+1}}=\TheT{m}{1}=\TheT{2l}{1}$), explaining our choice of label \eqref{t2l}. Note also that when $s$ is odd, $\rd{\TheB{2i+1-m}{2m+1}}$ is $\Irre{2i+1-m}$. Third, extend the reflected-deleted sequence to the left by adding the (twisted) duals of the $\rd{\TheB{j}{2m+1}}$ in reverse order (if $s$ is odd, the self-dual $\Irre{2i+1-m}$ is not repeated). The complete $\ARTauSymbol$-orbit is then obtained from this extended sequence by closing it to form a cycle:
\begin{equation}\label{t2l}
\begin{tikzpicture}[baseline={(I.base)}]
\node (I) at (-2,1) {$\Irre{2i+1-m}$};
\node (bmax) at (1,2) {$\rd{\TheB{2i-1-m}{2m+1}}$};
\node (bd) at (3.5,2) {$\cdots$};
\node (b3) at (6,2) {$\rd{\TheB{3-m}{2m+1}}$};
\node (b1) at (9,2) {$\rd{\TheB{1-m}{2m+1}}$};
\node (dbmax) at (1,0) {$\twdu{\rd{\TheB{2i-1-m}{2m+1}}}$};
\node (dbd) at (3.5,0) {$\cdots$};
\node (db3) at (6,0) {$\twdu{\rd{\TheB{3-m}{2m+1}}}$};
\node (db1) at (9,0) {$\twdu{\rd{\TheB{1-m}{2m+1}}}$};
\draw[squig] (bmax) -- (bd);
\draw[squig] (bd) -- (b3);
\draw[squig] (b3) -- (b1);
\draw[squig] (b1) -- (db1);
\draw[squig] (db1) -- (db3);
\draw[squig] (db3) -- (dbd);
\draw[squig] (dbd) -- (dbmax);
\draw[squig] (dbmax) -- node[right] {$\begin{smallmatrix} \text{if $s$ is} \\ \text{even} \end{smallmatrix}$} (bmax);
\draw[squig] (dbmax) -- node[below] {$\begin{smallmatrix} \text{is odd} \end{smallmatrix}$} (I);
\draw[squig] (I) -- node[above] {$\begin{smallmatrix} \text{if $s$} \end{smallmatrix}$} (bmax);
\end{tikzpicture}
\ . \tag{$t_{2l}$}
\end{equation}%$
It contains $s$ distinct indecomposables. Taking $s=3$ and $l=1$, hence $i=1$ and $m=2$, we note that $\rd{\TheB{-1}{5}} = \TheT{1}{2}$, thereby recovering the cycle \eqref{eq:Ex-t2} obtained in the previous subsection.

The remaining $\ARTauSymbol$-orbits will be denoted by $(i_{2l})$, where $1 \le l \le \lfloor \frac{1}{2} s \rfloor$, because they contain the irreducible module $\Irre{2l}$. Their construction is similar to that of the $(t_{2l})$. First, we have the following sequences of $i+1$ zigzags:
\begin{equation}
\TheB{2i+1-m}{2m-1} \tauarrow \dots \tauarrow \TheB{3-m}{2m-1} \tauarrow \TheB{1-m}{2m-1}.
\end{equation}
If $s$ is odd, then $m$ takes the values $m=2l$, for $1\le l\le i$. If it is even, then $m=2i+1-2l$ with, again, $1\le l\le i$. Second, apply the reflection-deletion trick to each zigzag of each sequence. Note that $\rd{\TheB{1-m}{2m-1}} = \Irre{m}$, which is $\Irre{2l}$, if $s$ is odd, and $\Irre{s+1-2l}$, if it is even. Moreover, $\rd{\TheB{2i+1-m}{2m-1}} =\Irre{2l}$, if $s$ is even. Third, we again extend the reflected-deleted sequence to the left by adding the (twisted) duals in reverse order.  When an endpoint of the sequence is irreducible, hence self-dual, it does not get repeated.  The result is the cycle
\begin{equation} \label{i2l}
\begin{tikzpicture}[baseline={(i.base)},yscale=1.25]
\node (bMax) at (-2,1.5) {$\rd{\TheB{2i+1-m}{2m-1}}$};
\node (I) at (1,1) {$\Irre{2l}$};
\node (bmax) at (1,2) {$\rd{\TheB{2i-1-m}{2m-1}}$};
\node (bd) at (3.5,2) {$\cdots$};
\node (b3) at (6,2) {$\rd{\TheB{5-m}{2m-1}}$};
\node (b1) at (9,2) {$\rd{\TheB{3-m}{2m-1}}$};
\node (i) at (11,1) {$\Irre{m}$};
\node (dbMax) at (-2,0.5) {$\twdu{\rd{\TheB{2i+1-m}{2m-1}}}$};
\node (dbmax) at (1,0) {$\twdu{\rd{\TheB{2i-1-m}{2m-1}}}$};
\node (dbd) at (3.5,0) {$\cdots$};
\node (db3) at (6,0) {$\twdu{\rd{\TheB{5-m}{2m-1}}}$};
\node (db1) at (9,0) {$\twdu{\rd{\TheB{3-m}{2m-1}}}$};
\draw[squig] (bmax) -- (bd);
\draw[squig] (bd) -- (b3);
\draw[squig] (b3) -- (b1);
\draw[squig] (b1) -- (i);
\draw[squig] (i) -- (db1);
\draw[squig] (db1) -- (db3);
\draw[squig] (db3) -- (dbd);
\draw[squig] (dbd) -- (dbmax);
\draw[squig] (dbmax) -- node[right] {$\begin{smallmatrix} \text{even} \end{smallmatrix}$} (I);
\draw[squig] (I) -- node[right] {$\begin{smallmatrix} \text{if $s$ is} \end{smallmatrix}$} (bmax);
\draw[squig] (dbmax) -- node[below] {$\begin{smallmatrix} \text{odd} \end{smallmatrix}$} (dbMax);
\draw[squig] (dbMax) -- node[left] {$\begin{smallmatrix} \text{$s$ is} \end{smallmatrix}$} (bMax);
\draw[squig] (bMax) -- node[above] {$\begin{smallmatrix} \text{if} \end{smallmatrix}$} (bmax);
\end{tikzpicture}
\ . \tag{$i_{2l}$}
\end{equation}%$
It likewise contains $s$ distinct indecomposables.

\begin{lemma}\label{thm:tauOrbits} Let $[k]$ be a non-critical orbit of length $s \ge 2$ (omitting the case of $\Alg{} = \tl{n}$, with $n$ even and $\beta = 0$) and write $s=2i$ or $2i+1$, according as to whether $s$ is even or odd, respectively.  Then, the $s$ $\ARTauSymbol$-orbits
\begin{equation}
(t_0), (t_2), \dots, (t_{2i-2}), \underbrace{(t_{2i}),}_{\mathclap{\text{if \(s\) is odd}}} (i_2), (i_4), \dots, (i_{2i})
\end{equation}
are all cycles, except for $(t_0)$ which is a chain.  They each contain $s$ distinct indecomposable modules and are disjoint in the sense that no indecomposable appears in more than one such $\ARTauSymbol$-orbit.
\end{lemma}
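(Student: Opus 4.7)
The plan is to verify the structure of each $\ARTauSymbol$-orbit by direct computation, using the minimal projective presentations recorded in \cref{prop:kerBT}, the identification $\AlgD{(\Alg{}e)} \simeq e\Alg{}$ from \cref{prop:ARtranspose}\ref{it:ARTr-Proj}, and the definition $\ARTau{\Mod{M}} = \VectD{(\ARtransp{\Mod{M}})}$. For the chain $(t_0)$, note first that $\TheT{k_1}{1} = \Stan{k_1} = \Proj{k_1}$ is projective (hence has no $\ARTauSymbol$-preimage) while $\TheB{k_1}{1} = \Cost{k_1} = \Inje{k_1}$ is injective (hence $\iARTau{\Inje{k_1}} = 0$), so any chain containing these modules must start at $\Proj{k_1}$ and terminate at $\Inje{k_1}$. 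For each intermediate standard module $\TheT{k_{2j-1}}{1}$ with $k_{2j-1} \neq k_L$, \cref{prop:kerBT} provides the minimal projective presentation
\begin{equation*}
\Proj{k_{2j-2}} \lra \Proj{k_{2j-1}} \lra \TheT{k_{2j-1}}{1} \lra 0,
\end{equation*}
whose transposition followed by vector space dualisation yields $\iARTau{\TheT{k_{2j-1}}{1}} \simeq \TheT{k_{2j+1}}{1}$ whenever $k_{2j+1} \in \Lambda$, with an analogous computation giving $\iARTau{\TheB{k_{2j+1}}{1}} \simeq \TheB{k_{2j-1}}{1}$. The midpoint depends on the parity of $s$: if $s = 2i$, then $\iARTau{\TheT{k_{2i-1}}{1}} \simeq \TheB{k_{2i-1}}{1}$ directly, whereas if $s = 2i+1$ the intermediate irreducible $\Irre{k_{2i+1}}$ appears with $\iARTau{\TheT{k_{2i-1}}{1}} \simeq \Irre{k_{2i+1}}$ and $\iARTau{\Irre{k_{2i+1}}} \simeq \TheB{k_{2i-1}}{1}$.

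For the cycles $(t_{2l})$ and $(i_{2l})$, I would proceed identically: starting from any convenient indecomposable on each cycle, apply $\ARTauSymbol^{-1}$ repeatedly using \cref{prop:kerBT} and verify that the result at each step matches the prescription of the reflection-deletion operation $\rdsymb$. The role of $\rdsymb$ is precisely to encode how the formulas of \cref{prop:kerBT} degenerate near the walls of $[k]$: summands $\TheB{\kappa}{j}$ or $\TheT{\kappa}{j}$ whose indices lie outside $\Lambda$ are absent from the minimal projective presentation, and summands whose indices are related to others by reflection through a wall contribute cancellations in the resulting cokernel, leaving exactly the reduced zigzag. After $s$ applications of $\ARTauSymbol^{-1}$ the cycle then closes back to its starting module.

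Disjointness and completeness follow from a counting argument. A direct tally using \cref{thm:classification} shows that the block of $\Alg{}$ associated to $[k]$ contains $s^2 + s - 1$ indecomposable modules, of which exactly $s$ are projective (namely $\Proj{k_2}, \ldots, \Proj{k_s}$ together with $\Proj{k_1} \simeq \TheT{k_1}{1}$), leaving $s^2 - 1$ non-projective indecomposables. On the other hand, $(t_0)$ contributes $s - 1$ non-projective modules, while each of the remaining $s - 1$ cycles contributes $s$ non-projective modules, totalling $(s-1) + s(s-1) = s^2 - 1$. Since $\ARTauSymbol$-orbits of indecomposables are intrinsically defined and either coincide or are disjoint, this matching count forces the listed orbits to be pairwise disjoint, to exhaust all non-projective indecomposables, and to contain no repeated modules within any single orbit.

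The main obstacle will be the careful execution of the reflection-deletion trick in the boundary subcases where a zigzag endpoint approaches $k_L$ or $k_R$: here one must track precisely how the maps in the minimal projective presentations of \cref{prop:kerBT} collapse, which summands of the kernel or cokernel vanish, and confirm that the resulting indecomposable is exactly the one prescribed by $\rdsymb$ rather than another module with the same composition factors. The differing midpoint behaviours for $s$ even versus $s$ odd are a direct manifestation of this boundary analysis.
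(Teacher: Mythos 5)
Your computational core — using the minimal projective presentations from \cref{prop:kerBT} together with \cref{prop:ARtranspose}\ref{it:ARTr-Proj}, then checking that the reflection-deletion rule $\rdsymb$ correctly captures the degenerations near $k_L$ and $k_R$ — is precisely the paper's route, and the remark that the parity-dependent midpoint of \eqref{t0} is a manifestation of this boundary analysis is exactly right. One small slip: transposing a minimal projective presentation and then taking $\VectD{(\blank)}$ computes $\ARTau{}$, not $\iARTau{}$ (the latter is $\ARtransp{} \circ \VectD{(\blank)}$, i.e.\ dualise \emph{then} transpose). So the method you describe yields $\ARTau{\TheT{k_{2j-1}}{1}} \simeq \TheT{k_{2j-3}}{1}$ rather than the displayed $\iARTau{\TheT{k_{2j-1}}{1}} \simeq \TheT{k_{2j+1}}{1}$; these are compatible as inverse statements and the orbit is the same either way, but the phrasing inverts the arrow. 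Also note that $\iARTau{\Proj{k_1}}$ at the start of $(t_0)$ requires an injective copresentation, not a projective one, so that step should be handled by computing $\ARTau{\TheT{k_3}{1}}$ instead.

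Where you genuinely depart from the paper is in the treatment of disjointness and lengths. The paper settles both ``by inspection'' of the explicit lists of modules; you propose a global count against \cref{thm:classification}. Two issues. First, the counting argument as stated has a gap: a matching total $s^2 = s^2$ is \emph{consistent with} but does not \emph{force} disjointness. If two of the listed $\ARTauSymbol$-orbits coincided, the $s$ sequences would cover fewer than $s^2$ distinct modules, the leftover modules would simply lie in some further unlisted $\ARTauSymbol$-orbit, and no contradiction results — unless one has already established that the listed orbits exhaust everything, which is the completeness statement (\cref{thm:weaving}) that comes \emph{after} this lemma. Either one first checks by inspection that the listed orbits cover all $\TheB{k}{l}$, $\TheT{k}{l'}$ and $\Irre{k'}$ (at which point the count adds nothing), or one checks distinctness and disjointness of the entries directly, which is what the paper does. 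Second, invoking \cref{thm:classification} undermines the declared purpose of \cref{sec:ARquiver}: the whole point of the Auslander-Reiten treatment is to re-derive the classification independently of \cref{sec:newFamilies}, so a proof of this lemma that appeals to the earlier classification theorem makes the ``alternative completeness proof'' circular. The paper's elementary inspection, tedious as it is, keeps the two derivations logically separate.
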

\begin{proof}[Sketch of proof]
First, the \AR{} translation of the modules that are ``far" from the boundary of the class $[k]$ are computed. For a module to be ``far enough", it is sufficient that neither the composition factor $k_L=k_1$ nor $k_R=k_s$ appear in the minimal projective presentation used to compute its \AR{} translation. This computation is then straightforward and does not depend on the parity of $s$. Second, the translation of the modules whose projective presentation involves $k_1$ or $k_s$ is studied. In most cases, the result depends on the parity of $s$. Consequently, the analysis involves many subcases (and is rather tedious). Third, when sufficiently many subcases have been computed, it is straightforward to check that the ``boundary cases'' in the $\ARTauSymbol$-orbits \eqref{t2l} and \eqref{i2l}, with $l >0$, are correctly predicted by the reflection-deletion trick. Finally, the disjointness, number and lengths of the $\ARTauSymbol$-orbits are obtained by inspection. Again, the parity of $s$ plays a role.
\end{proof}
\noindent It will turn out that every indecomposable $\Alg{}$-module, except the projective injectives $\Proj{j}$, $j>1$, will appear in one of the $\ARTauSymbol$-orbits \eqref{t2l} or \eqref{i2l}.  As was noted above, $\Proj{1}$ and $\Inje{1}$ both appear in the chain \eqref{t0}.

\medskip

\noindent\emph{Irreducible morphisms and the weaving of $\ARTauSymbol$-orbits}\ ---\ The second goal is to construct the irreducible morphisms between the indecomposable modules that have been constructed.

As $[k]$ contains $s$ elements, \cref{prop:someIrrMorph}\ref{it:RadSocIrr} gives $2s$ irreducible morphisms, namely $\rad(\Proj j)\ira \Proj j$ and $\Inje j\sra \Inje j/\soc(\Inje j)$, for $1\le j\le s$. Explicitly, these are
\begin{equation}\label{eq:godIrrMorphism}
\Irre2\lra \Proj1, \quad \Inje{1} \lra \Irre 2; \qquad
\TheT{j-1}2 \lra \Proj j \lra \TheB{j-1}2\quad \text{(\(1<j<s\));} \qquad
\TheT{s-1}1 \lra \Proj s \lra \TheB{s-1}1.
\end{equation}
Moreover, since the $\Proj j$, with $j>1$, are non-simple, projective and injective, part \ref{it:Rad/SocIrr} of the same proposition, combined with \cref{prop:leftMinimal}\ref{it:IrrDirSum}, gives another $4s-6$ irreducible morphisms:
\begin{equation}\label{eq:newIrrMorph}
\TheT{j-1}2\lra \Irre{j-1}\lra \TheB{j-1}2, \quad
\TheT{j-1}2\lra \Irre{j+1}\lra \TheB{j-1}2 \quad \text{(\(1<j<s\));} \qquad
\TheT{s-1}1\lra \Irre{s-1}\lra \TheB{s-1}1.
\end{equation}

The next step is to identify pairs of $\ARTauSymbol$-orbits that may be weaved together. This happens when there exists an irreducible morphism $U\to V$ from an indecomposable $U$ of one $\ARTauSymbol$-orbit to an indecomposable $V$ of another. Such pairs are then weaved together as in \eqref{eq:UltimateWeave}.

Suppose first that $s$ is odd.  Then, the sequence $(t_0)$ contains the irreducible $\Irre{2i+1} = \Irre{s}$ and the sequence $(i_2)$ contains the indecomposable $\rd{\TheB{s-2}{3}} = \TheB{s-2}{2}$ and its dual $\TheT{s-2}{2}$.  The irreducible morphisms $\TheT{s-2}{2} \to \Irre{s} \to \TheB{s-2}{2}$ of \eqref{eq:newIrrMorph} therefore allow us to weave $(t_0)$ and $(i_2)$ together.  Moreover, $(t_2)$ contains $\Irre{s-2}$ and \eqref{eq:newIrrMorph} includes $\TheT{s-2}{2} \to \Irre{s-2} \to \TheB{s-2}{2}$, hence we may also weave $(i_2)$ and $(t_2)$ together.  Continuing, we find that we can recursively
weave contiguous pairs in the following sequence, read from left to right:
\begin{equation} \label{eq:Contiguous}
(t_0)\llra (i_2) \llra (t_2) \llra (i_4) \llra (t_4)\llra \dots \llra (i_{2i}) \llra (t_{2i}).
\end{equation}

For $s$ even, we can also recursively weave contiguous pairs from \eqref{eq:Contiguous}, though the final $\ARTauSymbol$-orbit $(t_{2i})$ is omitted.  The justification for this weaving is slightly different to that for $s$ odd because the $(t_{2l})$ no longer contain any irreducibles.  To start, note that the irreducible morphisms $\TheT{s-1}{1} \to \Irre{s-1} \to \TheB{s-1}{1}$ of \eqref{eq:newIrrMorph} allow us to weave $(t_0)$ with $(i_2)$, as before.  To obtain the weave with $(t_2)$, we note that \cref{prop:weaving} implies the exactness of
\begin{equation}
0 \lra
\begin{tikzpicture}[baseline={(t.base)},scale=1/3]
\node (l2) at (2,2) [] {$\scriptstyle{s-3}$};
\node (l3) at (4,0) [] {$\scriptstyle{s-2}$};
\node (l4) at (6,2) [] {$\scriptstyle{s-1}$};
\node (l5) at (8,0) [] {$\scriptstyle{s}$};
\draw [-] (l2) -- node (t) {$\vphantom{\scriptstyle 1}$} (l3);
\draw [-] (l3) -- (l4);
\draw [-] (l4) -- (l5);
\end{tikzpicture}
\lra
\ ?\ \oplus
\begin{tikzpicture}[baseline={(t.base)},scale=1/3]
\node (l4) at (6,2) [] {$\scriptstyle{s-1}$};
\node (l5) at (8,0) [] {$\scriptstyle{s}$};
\draw [-] (l4) -- node (t) {$\vphantom{\scriptstyle 1}$} (l5);
\end{tikzpicture}
\lra \ \Irre{s-1} \ \lra 0,
\end{equation}%$
for some unknown module $?$, because $\Irre{s-1}$ is not projective, $\ARTau{\Irre{s-1}} \cong \TheT{s-3}{3}$, and there is an irreducible morphism from $\TheT{s-1}{1}$ to $\Irre{s-1}$.  The identity of the unknown module is now determined by the fact that its composition factors are $\Irre{s-3}$, $\Irre{s-2}$, $\Irre{s-1}$ and the fact that $? \oplus \TheT{s-1}{1}$ has a submodule isomorphic to $\TheT{s-3}{4}$: the only possible module is $? \cong \TheT{s-3}{2}$.  But now, \cref{prop:weaving} gives the existence of another irreducible morphism, this time from $\TheT{s-3}{2}$ to $\Irre{s-1}$.  This morphism allows us to weave $(i_2)$ and $(t_2)$.

The iteration for $s$ even now proceeds in the following fashion:  When weaving $(t_{2l}$) with $(i_{2(l+1)})$, the procedure is the same as that described in the $s$ odd case.  However, when weaving $(i_{2l})$ with $(t_{2l})$, the procedure follows that described in the previous paragraph.  We remark that this latter procedure would obviously fail when $s=2$; however, the only $\ARTauSymbol$-orbits in this case are $(t_0)$ and $(i_2)$, so no such weaving is required.

%
% lemma
%
\begin{lemma}\label{thm:weaving}
The only weaves between $\ARTauSymbol$-orbits are those between contiguous pairs of the sequence
\begin{equation}
(t_0)\llra (i_2) \llra (t_2) \llra (i_4) \llra (t_4)\llra \dots \llra (i_{2i}) \underbrace{\llra (t_{2i})}_{\text{if \(s\) is odd}}.
\end{equation}
Together with the morphisms \eqref{eq:godIrrMorphism}, the morphisms obtained from these weaves form a complete list of all irreducible morphisms between indecomposable modules. The list of indecomposable modules is likewise complete.
\end{lemma}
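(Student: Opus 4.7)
The plan is to verify, for every non-projective indecomposable $\Mod{V}$ appearing in the $\ARTauSymbol$-orbits of \cref{thm:tauOrbits}, the almost split exact sequence \eqref{es:CheckSum1} predicted by \cref{prop:weaving}, and symmetrically to verify \eqref{es:CheckSum2} for every non-injective $\Mod{U}$. For each such $\Mod{V}$, I would tally the irreducible morphisms with target $\Mod{V}$ already produced (from \eqref{eq:godIrrMorphism}, from iterated weaving of contiguous pairs in $(t_0) \llra (i_2) \llra (t_2) \llra \cdots$, and from the applications of \cref{prop:leftMinimal}\ref{it:IrrDirSum} implicitly used in the weaving of $(i_{2l})$ with $(t_{2l})$ in the $s$ even case). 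If the multiset of composition factors of the direct sum of the sources of these morphisms equals that of $\ARTau{\Mod{V}} \oplus \Mod{V}$, then \eqref{es:CheckSum1} cannot accommodate any additional irreducible morphism into $\Mod{V}$, and the enumeration is complete at $\Mod{V}$.

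First I would treat the ``interior'' case in which neither $\Mod{V}$ nor $\ARTau{\Mod{V}}$ is subject to a reflection in the reflection-deletion trick; then the weaving automatically provides two irreducible morphisms into $\Mod{V}$, one from each neighbouring orbit, and the composition factor count matches trivially because the four zigzags involved share matching top and bottom layers. Second, I would analyse the boundary cases: when $\Mod{V} \cong \Irre{j}$ sits at the centre of a cycle $(t_{2l})$ or $(i_{2l})$, when $\Mod{V}$ is adjacent in its $\ARTauSymbol$-orbit to such an irreducible, and when $\Mod{V}$ is one of $\rd{\TheB{j}{2m+1}}$ with $j+2m$ near the right end or $1-m$ near the left end of $\Lambda_{n,0} \cap [k]$. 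In these boundary cases the morphisms $\TheT{j-1}{2} \lra \Proj{j} \lra \TheB{j-1}{2}$ of \eqref{eq:godIrrMorphism} and the further morphisms of \eqref{eq:newIrrMorph} coming from $\rad \Proj{j}/\soc \Proj{j}$ contribute the extra arrows needed, and the composition factor count still matches once the appropriate summand $\Proj{j}$ is included.

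The main obstacle is the bookkeeping required when $\Mod{V}$ sits at the ``hinge'' of the weave between a $(t_{2l})$ and an adjacent $(i_{2l})$ or $(i_{2l+2})$; here the dimension $\dim r(\Mod{M},\Mod{V})/r^{2}(\Mod{M},\Mod{V})$ can a priori exceed $1$ and must be forced to equal $1$ by a composition factor count, and one must separately treat the parity of $s$ and the degenerate small cases $s=2,3$ (the latter was done explicitly in \cref{sec:ARexample}). The chain $(t_0)$ requires extra care at its two endpoints $\Proj{1}$ and $\Inje{1}$: $\Proj{1}$ is projective so is excluded from the check, while $\Inje{1}$ is injective and only needs to satisfy the dual check \eqref{es:CheckSum1}, which reduces to the irreducible morphism $\Inje{1} \lra \Irre{2}$ of \eqref{eq:godIrrMorphism} together with the weave contribution from $(i_{2})$.

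Once all these checks pass, it follows from \cref{prop:weaving} that the list of irreducible morphisms assembled in the proof is already complete at each indecomposable on the list, and hence in particular that no weaving can occur between non-contiguous $\ARTauSymbol$-orbits. Completeness of the list of indecomposable modules for the block then follows from \cref{prop:ARisconnected}\ref{it:QuiverConnected}: any further indecomposable would have to be connected by an irreducible morphism to the subquiver built from the listed indecomposables, contradicting the fact that all irreducible morphisms into and out of the listed modules have been exhausted.
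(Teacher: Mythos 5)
Your proposal takes essentially the same approach as the paper's proof sketch. The paper's two steps are: (i) verify that the irreducible morphism used to initiate each successive weave between contiguous $\ARTauSymbol$-orbits genuinely appears in the previously constructed weave, and (ii) verify the almost split sequences \eqref{es:CheckSum1} and \eqref{es:CheckSum2} by composition factor counting for every non-projective and non-injective indecomposable, then invoke connectedness (\cref{prop:ARisconnected}) to conclude. Your proposal reproduces step (ii) in detail, splitting into interior versus boundary cases and handling the endpoints of $(t_0)$, and you acknowledge step (i) implicitly through the remark about \cref{prop:leftMinimal}\ref{it:IrrDirSum} in the weaving of $(i_{2l})$ with $(t_{2l})$ for $s$ even. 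This is the same bookkeeping the paper defers to the reader's ``patience,'' so the approaches coincide.
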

\begin{proof}[Sketch of proof]
There are two steps to this proof. First, one has to check that the irreducible morphisms described in the above procedures do actually appear in the previously woven pair. Second, one has to check that the tests described after \cref{prop:weaving} are satisfied, for all non-projective and non-injective indecomposable modules.  While both steps are required to conclude that all the irreducible morphisms have been constructed, they each require nothing but patience. As in the example of \cref{sec:ARexample}, completeness follows from \cref{prop:ARisconnected} and the fact that blocks are connected, by definition.  The proof of completeness of the set of indecomposables also follows as in \cref{sec:ARexample}.
\end{proof}

\medskip

\noindent\emph{The complete \AR{} quiver of $\tl{n}$ and $\dtl{n}$}\ ---\ The last step of the algorithm is to put together the quivers for each block of $\Alg{}$.  Recall that a block corresponds to the partition into classes of critical and non-critical integers (see \cref{sub:basics}). Recall that if $k$ is critical or if its (non-critical) orbit has length $1$ ($[k]=\{k\}$), then all its extension groups are trivial and the only indecomposable with $\Irre k$ as a composition factor is $\Irre k$ itself.\footnote{In general, the case $\tl{2}$, with $\beta = 0$, provides the only counterexample to this statement.}

\begin{theorem}
The \AR{} quiver of $\Alg{}=\tl{n}$ or $\dtl{n}$ (omitting $\tl{n}$, when $n$ is even and $\beta=0$) is the disjoint union of $c$ connected graphs where $c$ is the number of distinct classes $[k]$, for $k \in \Lambda_n$.
\begin{enumerate}
\item If a class $[k]$ contains the single integer $k$, then the corresponding connected subgraph consists of a single vertex, labelled by the irreducible $\Irre{k}$, and has no arrows.
\item If a class $[k]$ contains $s\ge 2$ integers, then the connected subgraph associated with this class is constructed by computing the $\ARTauSymbol$-orbits, weaving them, and then adding the injective projectives and their irreducible morphisms, as described in \cref{thm:tauOrbits,thm:weaving}.
\end{enumerate}
\end{theorem}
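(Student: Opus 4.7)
The plan is to decompose the argument into three parts: the block decomposition of the \AR{} quiver, the analysis of singleton classes, and the analysis of classes of length $s \geq 2$. First, I would observe that the proposition following \cref{prop:StComplete}, combined with the discussion of the central element $F_n$ and (for $\dtl{n}$) of the idempotents $\eid, \oid$, shows that every finite-dimensional indecomposable $\Alg{}$-module has composition factors whose labels all lie in a single class $[k] \subset \Lambda_n$ and, in the dilute case, share a fixed parity. Applying this to the regular representation yields a two-sided ideal decomposition $\Alg{} = \bigoplus_{[k]} \alg{B}_{[k]}$ whose summands $\alg{B}_{[k]}$ are connected algebras, and the same separation argument (using \cref{lem:trivialExtensions}\ref{it:ExtByIrr} applied to mutually Hom- and Ext-orthogonal composition factors) ensures there are no morphisms or extensions between modules in distinct blocks. \Cref{prop:ARisconnected}(iv) then reduces the theorem to identifying the \AR{} quiver of each individual block $\alg{B}_{[k]}$.

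Second, I would dispatch the singleton case $[k] = \{k\}$. If $k$ is critical, then \cref{prop:Proj}(i) gives $\Irre{k} \cong \Stan{k} \cong \Proj{k}$, and by \cref{coro:Inje}\ref{it:InjCrit} this module is also injective. If instead $k$ is non-critical and alone in its orbit, so $k = k_L = k_R$, then \cref{prop:Standard}\ref{it:RadIrr} gives $\Radi{k} = 0$ and $\Stan{k} \cong \Irre{k}$, whereupon \cref{prop:Proj,prop:ProjInj} again identify $\Irre{k}$ with both $\Proj{k}$ and $\Inje{k}$. In either subcase, $\Irre{k}$ is simultaneously projective and injective, so \cref{prop:ARtranspose}\ref{it:ARTr-Vanishing} gives $\ARtransp{\Irre{k}} = 0$ and \cref{prop:someIrrMorph}\ref{it:RadSocIrr} yields no irreducible morphisms at $\Irre{k}$, since its radical and the quotient by its socle both vanish. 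The block's \AR{} quiver is therefore a single isolated vertex.

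Third, for a class $[k]$ of length $s \geq 2$, I would simply assemble the output of \cref{thm:tauOrbits,thm:weaving}. The first lemma lists the $\ARTauSymbol$-orbits $(t_0), (t_{2l}), (i_{2l})$, pairwise disjoint and together exhausting every indecomposable in $\alg{B}_{[k]}$ except the projective-injectives $\Proj{k_2}, \dots, \Proj{k_s}$; weaving contiguous pairs as prescribed by the second lemma produces all irreducible morphisms between the non-projective-injective indecomposables, and attaching the $\Proj{k_j}$ via the morphisms of \eqref{eq:godIrrMorphism} completes the graph. The completeness assertions in both lemmas, combined with the connectedness of $\alg{B}_{[k]}$ and \cref{prop:ARisconnected}\ref{it:QuiverConnected}, guarantee that no further vertices or arrows are possible, so the graph obtained coincides with the full component of the \AR{} quiver associated with $[k]$. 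The main obstacle is the block-decomposition step: all the combinatorial bookkeeping is already packaged into the two lemmas, but one must argue carefully that distinct classes truly index distinct blocks and that the connected finite graph built from these lemmas is a genuine connected component of the full \AR{} quiver rather than a proper subgraph thereof.
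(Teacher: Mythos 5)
Your proposal is correct and takes essentially the same route as the paper, which states this theorem as a direct assembly of \cref{thm:tauOrbits,thm:weaving} together with the block decomposition discussed in \cref{sub:basics}. You fill in the block-decomposition step and the singleton case in somewhat more detail than the paper, which merely recalls these facts, but the structure of the argument — blocks indexed by classes $[k]$, singleton classes yielding isolated projective-injective vertices, and length-$s$ classes handled by the two lemmas plus connectedness (\cref{prop:ARisconnected}) — is the same.
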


We have put aside the case of $\Alg{}=\tl{n}$, with $n$ even and $\beta=0$. However, the result for this case is now easily stated. For $\Alg{}=\tl{n}$, with $n$ even and $\beta=0$, every $k \in \Lambda_{n,0}$ is even and belongs to the same class $\{2, 4, \dots, n\}$. In the notation of this section, these integers are replaced by the labels $1, 2, \dots, n/2$. The quiver for this class is then identical to the quiver described above for a class of length $s=n/2$, except for the following changes: $\Proj1$ and $\Inje1$ are replaced by $\TheT11$ and $\TheB11$, respectively (in this case, neither are projective nor injective), a new $\Proj1$ is introduced in the center of the quiver (under $\Irre2$), and the irreducible morphisms $\TheB11\to\Proj1\to\TheT11$ are added. For completeness, we draw the quivers for $n=2$, $4$, $6$ and $8$ in \cref{fig:BadQuivers}.  Interestingly, they coincide with those of a zigzag algebra discussed in \cite{Huerfano}.

\begin{figure}
\hspace{\stretch{1}}
%
% the quiver for n = 2
%
\begin{tikzpicture}[->,every node/.style={circle, inner sep=0pt, outer sep=2pt}]
\node (p1) at (0,1) [] {$\Proj1$};
\node (i1) at (0,0) [] {$\Irre1$};
\draw[bend left] (p1) edge (i1)
      (i1) edge (p1);
\end{tikzpicture}
%
% the quiver for n = 4
%
\hspace{\stretch{1}}
\begin{tikzpicture}[->,every node/.style={circle, inner sep=0pt, outer sep=2pt},scale=0.6]
\node (b11) at (180:\rr) [] {$\TheB11$};
\node (t11) at (0:\rr)   [] {$\TheT11$};
\node (p2) at (-90:\rr)  [] {$\Proj2$};
\node (p1) at (90:\rr)   [] {$\Proj1$};
\node (i2) at (90:\rrrr)  [] {$\Irre2$};
\node (i1) at (-90:\rrrr) [] {$\Irre1$};
\draw (t11) edge (p2)
      (p2) edge (b11)
      (b11) edge (i2)
      (i2) edge (t11)
      (t11) edge (i1)
      (i1) edge (b11)
      (b11) edge (p1)
      (p1) edge (t11);
\end{tikzpicture}
\hspace{\stretch{1}}
%
% the quiver for n = 6
%
\begin{tikzpicture}[every node/.style={circle, inner sep=0pt, outer sep=2pt},scale=0.95]
\node (u1) at (150:\rr)   [] {$\TheB11$};
\node (i3) at (-90:\rr)   [] {$\Irre3$};
\node (p1) at (30:\rr)    [] {$\TheT11$};
\node (np1)at (90:\rr)    [] {$\Proj1$};
\node (b12)at (-150:\rr)  [] {$\TheB12$};
\node (t12)at (-30:\rr)   [] {$\TheT12$};
\node (i2) at (90:\rrr)   [] {$\Irre2$};
\node (p2) at (-90:\rrr)  [] {$\Proj2$};
\node (i1) at (-90:\rrrr) [] {$\Irre1$};
\node (u2) at (30:\rrrr)  [] {$\TheB21$};
\node (s2) at (150:\rrrr) [] {$\TheT21$};
\node (p3) at (90:\rrrr)  [] {$\Proj3$};
\draw[->] (t12) -- (i3);
\draw[->] (i3) -- (b12);
\draw[->] (b12) -- (u1);
\draw[->] (u1) -- (i2);
\draw[->] (i2) -- (p1);
\draw[->] (p1) -- (t12);
\draw[->] (t12) -- (i1);
\draw[->] (i1) -- (b12);
\draw[->] (b12) -- (s2);
\draw[->] (s2) -- (i2);
\draw[->] (i2) -- (u2);
\draw[->] (u2) -- (t12);
\draw[->] (t12) -- (p2);
\draw[->] (p2) -- (b12);
\draw[->] (s2) -- (p3);
\draw[->] (p3) -- (u2);
\draw[->] (u1) -- (np1);
\draw[->] (np1) -- (p1);
\end{tikzpicture}
\hspace{\stretch{1}}
%
% the quiver for n = 8
%
\begin{tikzpicture}[every node/.style={circle, inner sep=0pt, outer sep=2pt}]
\node (u1) at (135:\rr)     [] {$\TheB11$};
\node (u3) at (-135:\rr)    [] {$\TheB31$};
\node (s3) at (-45:\rr)     [] {$\TheT31$};
\node (p1) at (45:\rr)      [] {$\TheT11$};
\node (b13) at (180:\rr)    [] {$\TheB13$};
\node (t13) at (0:\rr)      [] {$\TheT13$};
\node (np1)at (90:\rr)      [] {$\Proj1$};
\node (p4) at (-90:\rr)     [] {$\Proj4$};
\node (i3) at (-90:\rrr)    [] {$\Irre3$};
\node (i2) at (90:\rrr)     [] {$\Irre2$};
\node (t22) at (135:\rrrr)  [] {$\TheT22$};
\node (b12) at (-135:\rrrr) [] {$\TheB12$};
\node (t12) at (-45:\rrrr)  [] {$\TheT12$};
\node (b22) at (45:\rrrr)   [] {$\TheB22$};
\node (s2) at (180:\rrrr)   [] {$\TheT21$};
\node (u2) at (0:\rrrr)     [] {$\TheB21$};
\node (p2) at (-90:\rrrr)   [] {$\Proj2$};
\node (p3) at (90:\rrrr)    [] {$\Proj3$};
\node (i4) at (90:\rrrrr)   [] {$\Irre4$};
\node (i1) at (-90:\rrrrr)  [] {$\Irre1$};
\draw[->] (t13) -- (s3);
\draw[->] (s3) -- (i3);
\draw[->] (i3) -- (u3);
\draw[->] (u3) -- (b13);
\draw[->] (b13) -- (u1);
\draw[->] (u1) -- (i2);
\draw[->] (i2) -- (p1);
\draw[->] (p1) -- (t13);
\draw[->] (t13) -- (t12);
\draw[->] (t12) -- (i3);
\draw[->] (i3) -- (b12);
\draw[->] (b12) -- (b13);
\draw[->] (b13) -- (t22);
\draw[->] (t22) -- (i2);
\draw[->] (i2) -- (b22);
\draw[->] (b22) -- (t13);
\draw[->] (u2) -- (t12);
\draw[->] (t12) -- (i1);
\draw[->] (i1) -- (b12);
\draw[->] (b12) -- (s2);
\draw[->] (s2) -- (t22);
\draw[->] (t22) -- (i4);
\draw[->] (i4) -- (b22);
\draw[->] (b22) -- (u2);
\draw[->] (t12) -- (p2);
\draw[->] (p2) -- (b12);
\draw[->] (t22) -- (p3);
\draw[->] (p3) -- (b22);
\draw[->] (s3) -- (p4);
\draw[->] (p4) -- (u3);
\draw[->] (u1) -- (np1);
\draw[->] (np1) -- (p1);
\end{tikzpicture}
\hspace{\stretch{1}}
\caption{The \AR{} quivers of $\tl{n}$, with $\beta = 0$, for $n=2$, $4$, $6$ and $8$ (left to right).  Reflecting about vertical bisectors again amounts to taking the twisted dual of \cref{sec:Dual}.} \label{fig:BadQuivers}
\end{figure}
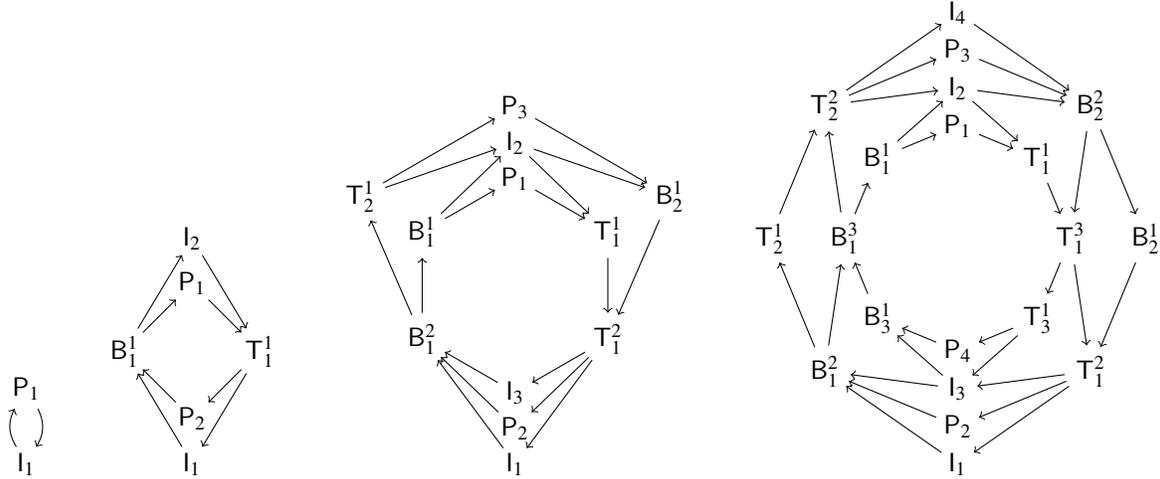

%%%%%%%%%%%
%
% appendices
%
%%%%%%%%%%%
\newpage
\appendix

\section{Some physical applications of indecomposable Temperley-Lieb modules}\label{app:physics}

As discussed in the Introduction, we shall detail here some of the uses of non-semisimple Temperley-Lieb modules in specific examples of physically important integrable lattice models. First, it is important to recall why non-semisimple modules appear in the first place. In a semisimple algebra, like the complex group algebras of the symmetric groups or the Temperley-Lieb algebras with $\beta$ generic, determining the structure of any given module is equivalent to determining its composition factors. In the non-semisimple case, this is not always so. A simple example is provided by the two-dimensional algebra $\tl 2(\beta)$ with $\beta = 0$. A basis for this algebra is $\{\id, e_1\}$ with unit $\id$ and the relation $e_1\cdot e_1=0$. The regular module, wherein $\tl2$ acts on itself, is given in this basis by
\begin{equation}
\id \mapsto \begin{pmatrix}1 & 0 \\ 0 & 1\end{pmatrix}\qquad\textrm{and}\qquad e_1\mapsto\begin{pmatrix}0 & 0 \\ 1 & 0 \end{pmatrix}.
\end{equation}
Clearly $\mathbb Ce_1$ is a submodule, but it does not have a complement that would be itself a submodule. This representation is indecomposable, even though it has a proper submodule. It has two (isomorphic) composition factors in which $\id$ and $e_1$ act as $(1)$ and $(0)$ respectively. This information is however insufficient to completely characterise the regular module. For non-semisimple algebras, one must also determine how these composition factors are ``glued together'' to form indecomposable summands.  In favourable cases, including indecomposable Temperley-Lieb modules for $\beta = q+q^{-1}$ with $q$ a root of unity, this can be achieved by determining how the composition factors are arranged in the Loewy diagram of the module.

In many physical problems, the analysis proceeds most efficiently and elegantly when one is able to precisely identify which module of the symmetry algebra is defined by the states of the system.  This not only facilitates computations of physically interesting quantities, it may also demonstrate otherwise obscured relationships with other physical problems, perhaps even some with known solutions.  However, if the algebra is not semisimple and the physically relevant module has many composition factors, correctly identifying it through the pattern of its Loewy diagram can be very difficult. Having a complete list of all the patterns that can appear is therefore a very valuable aide that reduces this identification problem to a (hopefully) simple process of elimination. In particular, for the Temperley-Lieb algebras, there is always a finite number of (isomorphism classes of) modules possessing any given finite set of composition factors.  Moreover, these modules are each characterised (up to isomorphism) by a finite number of properties, so we can verify which of these properties are satisfied and thereby perform the identification algorithmically.

\subsection{Example I: the XXZ spin chain}\label{sec:XXZ}

The open XXZ spin chain is a well known physical model on the $n$ site Hilbert space $\CC^{2n} = (\CC^2)^{\otimes n}$ that is described by the Hamiltonian
\begin{equation}
	H_{XXZ} = -\sum_{i=1}^{n-1} e_{i}, \qquad
	e_{i} = -\frac{1}{2}\left(
	\sigma^{x}_{i}\sigma^{x}_{i+1} + \sigma^{y}_{i}\sigma^{y}_{i+1}
+ \frac{q+q^{-1}}{2}(\sigma^{z}_{i}\sigma^{z}_{i+1} - 1_{\mathbb{C}^{\otimes 2n}})
+ \frac{q-q^{-1}}{2}(\sigma^{z}_{i} - \sigma^{z}_{i+1})\right),
%Y I'm pretty sure these two signs are wrong. It is the first line of my paper with Alexi arxiv:1502:01859 where we use beta = q + 1/q like here.
%J Your paper has a different sign in front of the e_{i}; I chose the sign convention used in arxiv:1705:07769 to match with  the twisted XXZ case. I believe that both notation are equivalent, but I would prefer to use the same one everywhere.
%-\frac{q+q^{-1}}{2}(\sigma^{z}_{i}\sigma^{z}_{i+1} - 1_{\mathbb{C}^{\otimes 2n}})  +\frac{q-q^{-1}}{2}(\sigma^{z}_{i} - \sigma^{z}_{i+1})\right),
\end{equation}
where $\sigma^x_i$, $\sigma^y_i$ and $\sigma^z_i$ denote the Pauli matrices acting on the $i$-th copy of $\mathbb{C}^{2}$ while acting as the identity on the remaining $n-1$ copies. One can verify using the standard Pauli relations that the $e_{i}$ generate a representation of $\tl{n}(\beta)$ with $\beta = q+q^{-1}$.  According to the general strategy outlined above, we should ask how to decompose it into indecomposable Temperley-Lieb modules.

Martin \cite{MartinSchurWeyl} was the first to tackle this problem. More recently, Gainutdinov and Vasseur \cite{GainutdinovVasseur} used their lattice fusion product to determine the explicit decomposition and Provencher and Saint-Aubin \cite{Provencher} confirmed it by computing the projections onto each indecomposable summand. If $\ell$ denotes the smallest positive integer such that $q^{2 \ell} = 1$, the result is
\begin{equation}
	\mathbb{C}^{\otimes 2n} \simeq \overset{n}{\underset{k=0}{\bigoplus}} \Mod{M}(k,n-k),
\end{equation}
where we set, for $r$ and $s$ non-negative integers and $i$ and $j$ integers satisfying $0 \le i,j < \ell$,
\begin{align}
	\Mod{M}(r \ell - 1 + i,s \ell - 1 + j) &\simeq
	\bigoplus_{\substack{t=\abs{r-s}+1 \\ \text{step}=2}}^{r+s-1} \ \bigoplus_{\substack{k=\mu(\ell-\abs{i-j}-1) \\ \text{step}=2}}^{\ell-\abs{i-j}-1} \Proj{t \ell - 1 + k}
\oplus
	\bigoplus_{\substack{t=\abs{r-s-\sign(i-j)}+1 \\ \text{step}=2}}^{r+s} \ \bigoplus_{\substack{k=\mu(\abs{i-j}-1) \\ \text{step}=2}}^{\abs{i-j}-1} \Proj{t \ell - 1 + k} \notag \\
	& \qquad \oplus \bigoplus_{\substack{k=\mu(i+j-\ell-1) \\ \text{step}=2}}^{i+j-\ell-1} \Proj{(r+s) \ell - 1 + k}
	\oplus \bigoplus_{\substack{k=\abs{i-j}+1 \\ \text{step}=2}}^{\ell-\abs{\ell-i-j}-1} \Stan{(r+s)\ell - 1 + k},
\end{align}
with $\mu(x) \in \set{0,1}$ defined to agree with $x$ mod $2$.  Here, it is understood that sums over empty sets correspond to the zero module.  We give a few examples for $n = 8$:
\begin{equation}
	\begin{aligned}
		(\ell &= 4 & &\text{so} & \beta &= \pm\sqrt{2}) &&&&& \mathbb{C}^{\otimes 16} &\simeq 6 \Proj{8} \oplus \Proj{6} \oplus 3\Proj{4} \oplus 3\Stan{8} \oplus 2\Stan{4}, \\
		(\ell &= 3 & &\text{so} & \beta &= \pm1) &&&&& \mathbb{C}^{\otimes 16} &\simeq 9 \Proj{8} \oplus 4 \Proj{6} \oplus \Proj{4} \oplus 3\Proj{2} \oplus 3\Stan{6}, \\
		(\ell &= 2 & &\text{so} & \beta &= 0) &&&&& \mathbb{C}^{\otimes 16} &\simeq 4 \Proj{8} \oplus 3 \Proj{6} \oplus 2 \Proj{4} \oplus \Proj{2} \oplus 5\Stan{8}.
	\end{aligned}
\end{equation}
Note that many indecomposable representations appear multiple times. The number of times that each module appears is the dimension of an irreducible module of the quantum group $U_q(\mathfrak{sl}_{2})$, a manifestation of the so-called \emph{quantum Schur-Weyl duality}.

\subsection{Example II: the Dimer model}

Whereas the XXZ model decomposes into standard and projective modules, the decomposition of the state space of our next example requires the less familiar indecomposables introduced in \cref{sec:newFamilies}.  The dimer model, see \cite{LieSol67} for example, is another physical model that may be formulated \cite{dimer2016} in terms of a spin chain, this time on $n-1$ sites, with Hamiltonian
\begin{equation}
	H_{\text{dimer}} = - \sum_{j=1}^{n-1}e_{j}, \qquad
	e_{j} = \sigma^{-}_{j-1}\sigma^{+}_{j} + \sigma^{+}_{j}\sigma^{-}_{j+1}, \qquad 2\sigma^{\pm}_{j} \equiv \sigma^{x}_{j} \mp i \sigma^{y}_{j}, \quad \sigma^{\pm}_{0} \equiv \sigma^{\pm}_{n} \equiv 0.
\end{equation}
Here, the $\sigma^{k}_{j}$ are defined as for the XXZ example above. One can again verify that the $e_{j}$ generate a representation of the Temperley-Lieb algebra $\tl{n}(\beta)$, this time with $\beta = 0$. It was proven in \cite{dimer2016} that this spin chain representation decomposes as
\begin{equation}
\mathbb{C}^{\otimes 2(n-1)} \simeq \underset{\nu = -\frac{1}{2} (n-1)}{\overset{\frac{1}{2} (n-1)}{\bigoplus}}\Mod{E}^{\nu}_{n},
\end{equation}
where one sets
\begin{align}\label{eq:dimer.structure}
	\Mod{E}^{\nu}_{n} \simeq
	\begin{cases}
		\TheB{2\nu +1}{\frac{1}{2} (n-1)-\nu} & \text{ if } \nu \ge \frac{1}{2}, \\
		\TheT{-2\nu +1}{\frac{1}{2} (n-1)+\nu} & \text{ if } \nu \le -\frac{1}{2},
	\end{cases},
\intertext{if $n$ is even, and}
	\Mod{E}^{\nu}_{n} \simeq \bigoplus_{i=0}^{\left\lfloor \frac{1}{4} (n-1-2 \abs{\nu}) \right\rfloor} \Proj{2 \abs{\nu} +1 + 4i},
\end{align}
if $n$ is odd.  For example, we have the following decompositions:
\begin{equation}
	\begin{aligned}
		(n &= 8) &&&&& \mathbb{C}^{\otimes 14} &\simeq  \Stan{6} \oplus \TheT{4}{2} \oplus \TheT{2}{3}  \oplus \TheB{2}{3} \oplus \TheB{4}{2} \oplus \Cost{6} \oplus 2 \Irre{8}, \\
		(n &= 9) &&&&& \mathbb{C}^{\otimes 16} &\simeq 5 \Proj{9} \oplus 4 \Proj{7} \oplus 3 \Proj{5} \oplus 2 \Proj{3} \oplus \Proj{1}, \\
		(n &= 10) &&&&& \mathbb{C}^{\otimes 18} &\simeq \Stan{8} \oplus \TheT{6}{2} \oplus \TheT{4}{3} \oplus \TheT{2}{4} \oplus \TheB{2}{4} \oplus \TheB{4}{3} \oplus \TheB{6}{2} \oplus \Cost{8} \oplus 2 \Irre{10}.
	\end{aligned}
\end{equation}

Note that this model can also be seen has a module for another algebra: the \emph{type B Temperley-Lieb} algebra\footnote{This name comes from the fact that it can be obtained from the type $B_{n}$ Hecke algebra in the same way that regular Temperley-Lieb is obtained from the type $A_{n}$ Hecke algebra.} $\tilde{\mathsf{b}}_{n}(q,\delta)$, also called the \emph{one-boundary Temperley-Lieb} algebra. This algebra is very similar to $\tl{n}(\beta)$ except that it has another generator $b$ and another parameter $\delta$ that satisfy the following relations:
\begin{equation}
	e_{1} b e_{1} = (q \delta^{-1} - q^{-1}\delta)e_{1},
   \qquad b^{2} = (\delta - \delta^{-1})b, \qquad
	e_{i}b = be_{i} \quad \text{for all}\ i \ge 2.
\end{equation}
In the dimer model, one simply considers the same model on $n$ sites (instead of $n-1$), defining $b = 2\ii e_{1}$ and $u_{i} = e_{i+1}$ for $i=1, \hdots, n-1$. This produces a representation of the one-boundary Temperley-Lieb algebra with $\delta = 1$ and $q = \ii$.  If one ``forgets'' the boundary operator $b$, meaning that we restrict to the $\tl{n+1}(0)$ subalgebra generated by the $u_i$, then one can verify that combining the formulae in \cref{sec:CBTRes} with \cref{eq:dimer.structure} results in the ordinary XXZ spin chain at $q = \ii$. From a physical point of view, this shows that the dimer model on $n$ sites is isomorphic to an XXZ spin chain on which one imposes somewhat unusual boundary conditions by adding the operator $b$.

\subsection{Example III: the twisted XXZ spin chain}
Our final example discusses a well known physical model for which the Temperley-Lieb symmetry is augmented to a significantly larger associative algebra.  The twisted XXZ spin chain is a closed version of the open spin chain discussed in section \ref{sec:XXZ}. Its Hilbert space is also isomorphic to $\mathbb{C}^{\otimes 2n}$, but its Hamiltonian is given by
\begin{equation}
	H_{XXZ}^{\text{tw.}} = - \sum_{i=1}^{n} e_{i},
\end{equation}
where this time the $e_i$ are defined by
\begin{equation}
	e_{i} =
	\begin{dcases*}
		-\frac{1}{2}\left( \sigma^{x}_{i}\sigma^{x}_{i+1} + \sigma^{y}_{i}\sigma^{y}_{i+1}
%Y Jonathan, how sure are you of this sign ?!
%J Its the one used in 1705.07769; the sign in front of the e_{i} is still different from the one you used. I checked with mathematica to be sure and it seems to work.
+ \frac{q+q^{-1}}{2}(\sigma^{z}_{i}\sigma^{z}_{i+1} - 1_{\mathbb{C}^{\otimes 2n}})
%J- \frac{q+q^{-1}}{2}(\sigma^{z}_{i}\sigma^{z}_{i+1} - 1_{\mathbb{C}^{\otimes 2n}})
+ \frac{q-q^{-1}}{2}(\sigma^{z}_{i} - \sigma^{z}_{i+1})\right), & for $i < n$, \\
		-\frac{1}{2} \ee^{-\ii \phi \sigma^{z}_{1} / 2} \left( \sigma^{x}_{n}\sigma^{x}_{1} + \sigma^{y}_{n}\sigma^{y}_{1}
%Y same question for this sign
 + \frac{q+q^{-1}}{2}(\sigma^{z}_{n}\sigma^{z}_{1} - 1_{\mathbb{C}^{\otimes 2n}})
%J- \frac{q+q^{-1}}{2}(\sigma^{z}_{n}\sigma^{z}_{1} - 1_{\mathbb{C}^{\otimes 2n}})
+ \frac{q-q^{-1}}{2}(\sigma^{z}_{n} - \sigma^{z}_{1})\right) \ee^{\ii \phi \sigma^{z}_{1} / 2}, & for $i=n$.
	\end{dcases*}
\end{equation}
Here, $\phi$ is a twist parameter and, again, the $\sigma^{k}_{i}$ are defined as in the open case. One can show that this defines a representation of the \emph{affine Temperley-Lieb} algebra $\mathsf{aTL}_{n}$ \cite{GL-Aff}. The defining relations of this algebra are
\begin{equation}
	\begin{gathered}
		e_{i}e_{i} = (q+q^{-1})e_{i}, \qquad e_{i}e_{i\pm 1}e_{i} = e_{i}, \qquad e_{i}e_{j} = e_{j} e_{i} \quad \text{if}\ \abs{i-j}>1, \\
		u e_{i} = e_{i+1}u , \qquad u^{2}e_{n-1} = e_{1} e_{2} \hdots e_{n-1},
	\end{gathered}
\end{equation}
where we understand
that $e_{0} \equiv e_{n}$, $e_{n+1} \equiv e_{1}$, and so on.  In this representation, the generator $u$ is given by
\begin{equation}
	u = \ii^{n} \ee^{-\ii \phi \sigma^{z}_{1} / 2} s_{1}s_{2}\cdots s_{n}, \qquad s_{i} \equiv \frac{1}{2}(1_{\mathbb{C}^{\otimes 2n}} + \sigma^{x}_{i}\sigma^{x}_{i+1} +\sigma^{y}_{i}\sigma^{y}_{i+1}+\sigma^{z}_{i}\sigma^{z}_{i+1}).
\end{equation}

The representation theory of the affine Temperley-Lieb algebra is significantly more complicated than in the regular case, so we simply mention that its standard modules $\Mod{W}_{k,z}$ and its irreducible modules $\chi_{k,z}$ are indexed by two parameters: an integer $0 \le k \le n$ and a non-zero complex parameter $z$. If $q$ is generic, then one can show \cite{braidtrans} that the Hilbert space decomposes as
\begin{equation} \label{eq:twXXZdecomp}
	\mathbb{C}^{\otimes 2n} \simeq \bigoplus_{\substack{j=-n \\ \text{step}=2}}^n	\Mod{M}_{j,z},
\end{equation}
where $z =(-1)^{n} \ee^{-\ii \phi / 2}$ and $\Mod{M}_{j,z}$ is the standard module $\Mod{W}_{j,z}$, if $j \ge 0 $, and is its dual $\dual{\Mod{W}}_{j,z}$, if $j < 0$.  If $\ee^{-\ii \phi}$ is not an integer power of $q$, then the model is semisimple; if it is an integer power of $q$, then each standard module has, in general, one or two composition factors. If $q$ is a root of unity, the decomposition \eqref{eq:twXXZdecomp} still holds but the identification of the $\Mod{M}_{j,z}$ is presently unknown in general.  In this case, the structure of the standard modules is significantly more complicated, so it is harder to identify them or their various subquotients.  For instance, \cref{fig:atlreg} shows the Loewy diagrams for some standard modules with $n = 16$, and $\ee^{-\ii \phi/2} = q^{2} $; in contrast,  \cref{fig:atl} shows the Loewy diagrams of the same modules if $\ell = 3$ (so $q^{6} = 1$).
 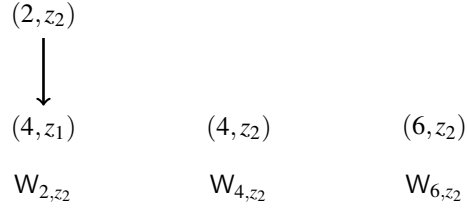
\begin{figure}
 	\begin{tikzpicture}[scale = 1/2, ->]
 		\node (top) at (0,6) {$(2, z_{2})$};
 		\node (bottom) at (0,3) {$(4, z_{1})$};
 		\draw[line width = 1pt, black]  (top) -- (bottom);
 		\node[anchor = north] (bottomlabel) at (0,2) {$\Mod{W}_{2,z_{2}}$};
 	\end{tikzpicture} \qquad \qquad
 	\begin{tikzpicture}[scale = 1/2, ->]
 		\node (top) at (0,3) {$(4,z_{2})$};
 		\node[anchor = north] (bottomlabel) at (0,2) {$\Mod{W}_{4,z_{2}}$};
 	\end{tikzpicture} \qquad \qquad
 	\begin{tikzpicture}[scale = 1/2, ->]
 		\node (top) at (0,3) {$(6,z_{2})$};
 		\node[anchor = north] (bottomlabel) at (0,2) {$\Mod{W}_{6,z_{2}}$};
 	\end{tikzpicture}
 	\caption{ Some Loewy diagrams for the standard modules $\Mod{W}_{j,z}$ for $n=16$ and $q$ generic. The notation $(j,z)$ denotes the composition factor isomorphic to the simple quotient of $\Mod{W}_{j,z}$.  We set $z_k = (-q)^k$ for convenience.}\label{fig:atlreg}
 \end{figure}
 \begin{figure}%J The calculations for the z are in comments; as a rule of thumb, (q)^{-k/2}z + (q)^{k/2}z^{-1} must be constant on all the composition factors
 	\begin{tikzpicture}[scale = 1/2,->]
		\node (top) at (0,12) {$(2,z_{2})$}; % (q)^{-k/2}z = (q)^{-2/2}(-q)^{2} = (-q)^{1}
		\node (left1) at (-3,9) {$(8,z_{-1})$};% (-q)^{4} = (-q)^{-8}, (-q)^{2}(-q)^{(8-2)/2} = (-q)^{5} = (-q)^{-1}
		\node (right1) at (3,9) {$(4,z_{1})$};% (-q)^{4} = (-q)^{4}, (-q)^{2} (-q)&{(2-4)/2} = (-q)^{1}
		\node (left2) at (-3,6) {$(10,z_{-2}$)};%(-q)^{2} = (-q)^{-10}, (-q)^{1}(-q)^{(10-4)/2} = (-q)^{4}= (-q)^{-2}
		\node (right2) at (3,6) {$(14,z_{2})$};% (-q)^{2} = (-q)^{14}, (-q)^{1}(-q)^{(4-14)/2} = (-q)^{-4} =  (-q)^{2}
		\node (bottom) at (0,3) {$(16,z_{1})$};% (-q)^{4} = (-q)^{16}, (-q)^{2}(-q)^{(14-16)/2} = (-q)^{1}
		\draw[line width = 1pt, black]  (top) -- (left1);
		\draw[line width = 1pt, black]  (top) -- (right1);
		\draw[line width = 1pt, black] (left1) -- (right2);
		\draw[line width = 1pt, black] (left1) -- (left2);
		\draw[line width = 1pt, black] (right1) -- (right2);
		\draw[line width = 3pt, white] (right1) -- (left2);
		\draw[line width = 1pt, black] (right1) -- (left2);
		\draw[line width = 1pt, black] (right2) -- (bottom);
		\draw[line width = 1pt, black] (left2) -- (bottom);
		\node[anchor = north] (bottomlabel) at (0,2) {$\Mod{W}_{2,z_{2}}$};
	\end{tikzpicture}\qquad\qquad
	\begin{tikzpicture}[scale = 1/2,->]
		\node (top) at (0,12) {$(4,z_{2})$};
		\node (mid1) at (0,9) {$(10,z_{-1})$};	% (-q)^{4} = (-q)^{10}, (-q)^{2}(-q)^{(4-10)/2} = (-q)^{-1}
		\node (mid2) at (0,6) {$(14,z_{1})$};	% (-q)^{-2} = (-q)^{-14}, (-q)^{-1}(-q)^{(14-10)/2} = (-q)^{1}
												% (-q)^{4} = (-q)^{-14}, (-q)^{2}(-q)^{(14-4)/2} = (-q)^{1}
		\node (bottom) at (0,3) {$(16,z_{2})$};	% (-q)^{2} = (-q)^{-16}, (-q)^{1}(-q)^{(16-14)/2} = (-q)^{2}
												% (-q)^{-2} = (-q)^{16}, (-q)^{-1}(-q)^{(10-16)/2} = (-q)^{2}
		\draw[line width = 1pt, black]  (top) -- (mid1);
		\draw[line width = 1pt, black]  (mid1) -- (mid2);
		\draw[line width = 1pt, black]  (mid2) -- (bottom);
		\node[anchor = north] (bottomlabel) at (0,2) {$\Mod{W}_{4,z_{2}}$};
	\end{tikzpicture}\qquad\qquad
		\begin{tikzpicture}[scale = 1/2,->]
		\node (top) at (0,12) {$(6,z_{2})$};
		\node (left1) at (-3,9) {$(8, z_{3})$};		%(-q)^{4} = (-q)^{-8}, (-q)^{2}(-q)^{(8-6)/2} = (-q)^{3}
		\node (right1) at (3,9) {$(10, z_{0})$};	%(-q)^{4} = (-q)^{10}, (-q)^{2}(-q)^{(6-10)/2}= (-q)^{0}
		\node (left2) at (-3,6) {$(12, z_{-2}$)};	% (-q)^{6} = (-q)^{12}, (-q)^{3}(-q)^{(10-12)/2}= (-q)^{-2}
		\node (right2) at (3,6) {$(12, z_{2})$};	% (-q)^{6} = (-q)^{-12}, (-q)^{3}(-q)^{(12-10)/2} = (-q)^{2}
		\node (left3) at (-3,3) {$(14, z_{3}$)};	% (-q)^{4} = (-q)^{-14}, (-q)^{2}(-q)^{(14-12)/2}= (-q)^{3}
		\node (right3) at (3,3) {$(16, z_{0})$};	% (-q)^{4} = (-q)^{16}, (-q)^{2}(-q)^{(12-16)/2} = (-q)^{0}
		\draw[line width = 1pt, black]  (top) -- (left1);
		\draw[line width = 1pt, black]  (top) -- (right1);
		\draw[line width = 1pt, black] (left1) -- (right2);
		\draw[line width = 1pt, black] (left1) -- (left2);
		\draw[line width = 1pt, black] (right1) -- (right2);
		\draw[line width = 3pt, white] (right1) -- (left2);
		\draw[line width = 1pt, black] (right1) -- (left2);
		\draw[line width = 1pt, black] (left2) -- (left3);
		\draw[line width = 1pt, black] (left2) -- (right3);
		\draw[line width = 1pt, black] (right2) -- (right3);
		\draw[line width = 3pt, white] (right2) -- (left3);
		\draw[line width = 1pt, black] (right2) -- (left3);
		\node[anchor = north] (bottomlabel) at (0,2) {$\Mod{W}_{6,z}$};
	\end{tikzpicture}
	\caption{ Some Loewy diagrams for the standard modules $\Mod{W}_{j,z}$ for $n=16$ and $\ell = 3 $ (so $q^{6}= 1$). The notation $(j,z)$ denotes the composition factor isomorphic to the simple quotient of $\Mod{W}_{j,z}$.  We set $z_k = (-q)^k$ for convenience.}\label{fig:atl}
 \end{figure}

%
% orbits that are too long for the main text
%
\section{$\ARTauSymbol$-orbits for $s=6$ and $7$}\label{app:s6s7}

The $\ARTauSymbol$-orbits for $s=6$ are
\begin{equation}
\begin{aligned}
&\Proj1 \tauarrow \TheT31 \tauarrow \TheT51 \tauarrow \TheB51 \tauarrow \TheB31 \tauarrow \Inje1, & &(t_0) \\
&\TheB22 \tauarrow \TheT15 \tauarrow \TheT32 \tauarrow \TheB32 \tauarrow \TheB15 \tauarrow \TheT22, & &(t_2) \\
&\TheB12 \tauarrow \TheT23 \tauarrow \TheT42 \tauarrow \TheB42 \tauarrow \TheB23 \tauarrow \TheT12, & &(t_4)
\end{aligned}
\qquad \qquad
\begin{aligned}
&\TheT13 \tauarrow \TheT33 \tauarrow \Irre5 \tauarrow \TheB33 \tauarrow \TheB13 \tauarrow \Irre2, & &(i_2) \\
&\TheB14 \tauarrow \TheT24 \tauarrow \Irre4 \tauarrow \TheB24 \tauarrow \TheT14 \tauarrow \Irre3, & &(i_4) \\
&\TheT21 \tauarrow \TheT41 \tauarrow \Irre6 \tauarrow \TheB41 \tauarrow \TheB21 \tauarrow \Irre1. & &(i_6)
\end{aligned}
\end{equation}
All $\ARTauSymbol$-orbits are cycles except $(t_0)$, that is, the $\ARTauSymbol$-translation of the leftmost module in each orbit is the rightmost one. Their weaves are as follows:
\begin{align}
\Proj1 \lra \TheT13 &\lra \TheT31 \lra \TheT33 \lra \TheT51 \lra \Irre5  \lra \TheB51 \lra \TheB33 \lra \TheB31 \lra \TheB13 \lra \Inje1 \lra \Irre2  \lra \Proj1, \tag{$(t_0)\leftrightarrow(i_2)$} \\
\TheB22 \lra \TheT13 &\lra \TheT15 \lra \TheT33 \lra \TheT32 \lra \Irre5  \lra \TheB32 \lra \TheB33 \lra \TheB15 \lra \TheB13 \lra \TheT22 \lra \Irre2  \lra \TheB22, \tag{$(i_2)\leftrightarrow(t_2)$}\\
\TheB22 \lra \TheB24 &\lra \TheT15 \lra \TheT14 \lra \TheT32 \lra \Irre3  \lra \TheB32 \lra \TheB14 \lra \TheB15 \lra \TheT24 \lra \TheT22 \lra \Irre4  \lra \TheB22, \tag{$(t_2)\leftrightarrow(i_4)$}\\
\TheB42 \lra \TheB24 &\lra \TheB23 \lra \TheT14 \lra \TheT12 \lra \Irre3  \lra \TheB12 \lra \TheB14 \lra \TheT23 \lra \TheT24 \lra \TheT42 \lra \Irre4  \lra \TheB42, \tag{$(i_4)\leftrightarrow(t_4)$}\\
\TheB42 \lra \TheB41 &\lra \TheB23 \lra \TheB21 \lra \TheT12 \lra \Irre1  \lra \TheB12 \lra \TheT21 \lra \TheT23 \lra \TheT41 \lra \TheT42 \lra \Irre6 \lra \TheB42. \tag{$(t_4)\leftrightarrow(i_6)$}
\end{align}

\bigskip

The $\ARTauSymbol$-orbits for $s=7$ are
\begin{equation}
\begin{aligned}
\Proj1 &\tauarrow \TheT31 \tauarrow \TheT51 \tauarrow \Irre7 \tauarrow \TheB51 \tauarrow \TheB31 \tauarrow \Inje1, & &(t_0) \\
\TheB22 &\tauarrow \TheT15 \tauarrow \TheT34 \tauarrow \Irre5 \tauarrow \TheB34 \tauarrow \TheB15 \tauarrow \TheT22, & &(t_2) \\
\TheB42 &\tauarrow \TheB25 \tauarrow \TheT14 \tauarrow \Irre3 \tauarrow \TheB14 \tauarrow \TheT25 \tauarrow \TheT42, & &(t_4) \\
\TheB61 &\tauarrow \TheB41 \tauarrow \TheB21 \tauarrow \Irre1 \tauarrow \TheT21 \tauarrow \TheT41 \tauarrow \TheT61, & &(t_6)
\end{aligned}
\qquad \qquad
\begin{aligned}
&\TheT13 \tauarrow \TheT33 \tauarrow \TheT52 \tauarrow \TheB52 \tauarrow \TheB33 \tauarrow \TheB13 \tauarrow \Irre2, & &(i_2) \\
&\TheB24 \tauarrow \TheT16 \tauarrow \TheT32 \tauarrow \TheB32 \tauarrow \TheB16 \tauarrow \TheT24 \tauarrow \Irre4, & &(i_4) \\
&\TheB43 \tauarrow \TheB23 \tauarrow \TheT12 \tauarrow \TheB12 \tauarrow \TheT23 \tauarrow \TheT43 \tauarrow \Irre6. & &(i_6)
\end{aligned}
\end{equation}
Again, all $\ARTauSymbol$-orbits are cyclic except $(t_0)$. Their weaves are:
\begin{align}
\Proj1 \ra \TheT13 &\ra \TheT31 \ra \TheT33 \ra \TheT51 \ra \TheT52 \ra \Irre7 \ra \TheB52  \ra \TheB51 \ra \TheB33 \ra \TheB31 \ra \TheB13 \ra \TheB11 \ra \Irre2  \ra \Proj1, \tag{$(t_0)\leftrightarrow(i_2)$} \\
\TheB22 \ra \TheT13 &\ra \TheT15 \ra \TheT33 \ra \TheT34 \ra \TheT52 \ra \Irre5  \ra \TheB52 \ra \TheB34 \ra \TheB33 \ra \TheB15 \ra \TheB13 \ra \TheT22 \ra \Irre2  \ra \TheB22, \tag{$(i_2)\leftrightarrow(t_2)$}\\
\TheB22 \ra \TheB24 &\ra \TheT15 \ra \TheT16 \ra \TheT34 \ra \TheT32 \ra \Irre5  \ra \TheB32 \ra \TheB34 \ra \TheB16 \ra \TheB15 \ra \TheT24 \ra \TheT22 \ra \Irre4  \ra \TheB22, \tag{$(t_2)\leftrightarrow(i_4)$}\\
\TheB42 \ra \TheB24 &\ra \TheB25 \ra \TheT16 \ra \TheT14 \ra \TheT32 \ra \Irre3  \ra \TheB32 \ra \TheB14 \ra \TheB16 \ra \TheT25 \ra \TheT24 \ra \TheT42 \ra \Irre4  \ra \TheB42, \tag{$(i_4)\leftrightarrow(t_4)$}\\
\TheB42 \ra \TheB43 &\ra \TheB25 \ra \TheB23 \ra \TheT14 \ra \TheT12 \ra \Irre3  \ra \TheB12 \ra \TheB14 \ra \TheT23 \ra \TheT25 \ra \TheT43 \ra \TheT42 \ra \Irre6 \ra \TheB42, \tag{$(t_4)\leftrightarrow(i_6)$}\\
\TheB61 \ra \TheB43 &\ra \TheB41 \ra \TheB23 \ra \TheB21 \ra \TheT12 \ra \Irre1  \ra \TheB12 \ra \TheT21 \ra \TheT23 \ra \TheT41 \ra \TheT43 \ra \TheT61 \ra \Irre6 \ra \TheB61. \tag{$(i_6)\leftrightarrow(t_6)$}
\end{align}

\raggedright
\singlespacing
\bibliography{indec}
\bibliographystyle{unsrt}

\end{document}